\theoremstyle{plain}
\newtheorem{theorem}{Theorem}
\theoremstyle{plain}
\newtheorem{lemma}{Lemma}
\theoremstyle{plain}
\newtheorem{sublemma}{Sublemma} 
\theoremstyle{plain}
\theoremstyle{remark}
\theoremstyle{conjecture}
\theoremstyle{observation}
\theoremstyle{definition}
\theoremstyle{corollary}
\newtheorem{corollary}{Corollary}
\theoremstyle{definition}
\theoremstyle{definition}
\newtheorem{definition}{Definition}
\theoremstyle{assumption}
\theoremstyle{definition}
\newtheorem*{theorem_num}{Theorem}
\theoremstyle{problem}
\theoremstyle{fact}
\begin{document}

%\preprint{MIT-CTP 4239}

\begin{abstract}
Recently, it has become apparent that the thermal stability of topologically ordered systems at finite temperature, as discussed in condensed matter physics, can be studied by addressing the feasibility of self-correcting quantum memory, as discussed in quantum information science. Here, with this correspondence in mind, we propose a model of quantum codes that may cover a large class of physically realizable quantum memory. The model is supported by a certain class of gapped spin Hamiltonians, called stabilizer Hamiltonians, with translation symmetries and a small number of ground states that does not grow with the system size. We show that the model does not work as self-correcting quantum memory due to a certain topological constraint on geometric shapes of its logical operators. This quantum coding theoretical result implies that systems covered or approximated by the model cannot have thermally stable topological order, meaning that systems cannot be stable against both thermal fluctuations and local perturbations simultaneously in two and three spatial dimensions. 
\end{abstract}

\begin{frontmatter}

\title{Feasibility of self-correcting quantum memory and thermal stability of topological order}

\author{Beni Yoshida}
\address{Center for Theoretical Physics, Massachusetts Institute of Technology, Cambridge, Massachusetts 02139, USA}
\ead{rouge@mit.edu}

\begin{keyword}
self-correcting quantum memory \sep thermal stability of topological order \sep quantum coding theory  \sep topological phase \sep stabilizer formalism \sep topological quantum field theory
\end{keyword}

\end{frontmatter}

\tableofcontents

\section{Introduction}\label{sec:introduction}

In recent years, ideas from quantum information science have become increasingly useful in condensed matter physics where quantum information theoretical viewpoints give an important insight on studies of many-body entanglement arising in ground states or quasi-particle excitations of correlated spin systems~\cite{Bravyi06, Vidal03, Fradkin06, Osborne02, Li08, Kitaev06, Levin06, Kitaev03, Eisert10, Vidal07, Gu08}. 
In particular, it has been realized that many interesting physical systems in condensed matter physics may be described in the language of quantum codes such as stabilizer codes and its generalization~\cite{Kitaev97, Raussendorf03, Hein04, Bombin09b, Bacon06}.
The emerging closeness between two fields has made it possible to study several problems concerning many-body correlated spin systems through quantum coding theoretical tools~\cite{Beni10b}. 

In the present paper, we shall explore such a capacity of quantum coding theory for studying problems in condensed matter physics further, by making use of the correspondence between the following two open problems:
\begin{enumerate}[(a)]
\item Feasibility of self-correcting quantum memory.
\item Thermal stability of topological order.
\end{enumerate}

Feasibility of self-correcting quantum memory is an important open problem in quantum information science concerning reliable storage of qubits. Thermal stability of topological order at finite temperature is an open problem of fundamental importance in condensed matter physics which concerns whether topological order may survive at finite temperature or not. While these two problems may look very different from each other, they are fundamentally akin to each other, in a sense that by searching for self-correcting quantum memory, one can search for topological ordered spin systems which are stable at finite temperature~\cite{Dennis02, Nussinov08, Bravyi09, Alicki09, Alicki10}. This surprising correspondence enable us to address a condensed matter theoretical question, thermal stability of topological order, by analyzing a quantum information theoretical question, feasibility of self-correcting quantum memory.

\begin{figure}[htb!]
\centering
\includegraphics[width=0.65\linewidth]{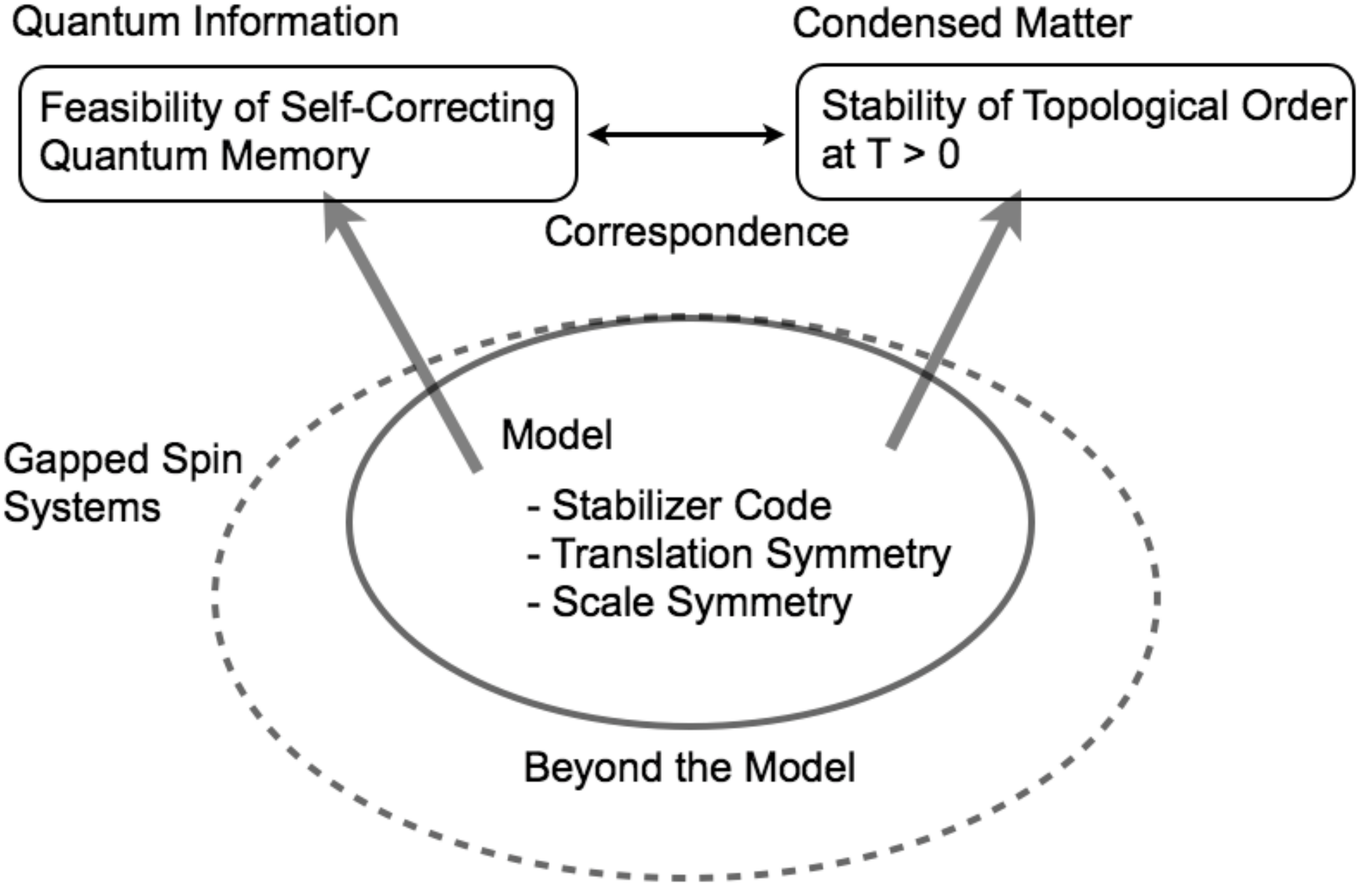}
\caption{The correspondence between the feasibility of self-correcting quantum memory and the thermal stability of topological order, and the plan of the paper. 
} 
\label{fig_plan}
\end{figure}

The main contributions of the present paper toward these two open problems are the followings;
\begin{itemize}
\item We propose a certain model of quantum codes which may cover a large class of physically realizable quantum codes, which is called a stabilizer code with translation and scale symmetries. 
\item We solve the $D$-dimensional model exactly by determining its coding properties completely and analyze the feasibility of self-correcting quantum memory for $D=1,2,3$. 
\item We establish the connection between self-correcting quantum memory and stable topological order at finite temperature, and analyze thermal stability of topological order arising in the model.  
\end{itemize}

The plan of the paper is schematically summarized in Fig.~\ref{fig_plan}. We begin the introduction of the paper by reminding why these two problems are important in each field. Then, we give brief summaries of our results in the rest of the introduction.

\textbf{Feasibility of self-correcting quantum memory:} Quantum entanglement decays easily. This underlying difficulty in quantum information science gave birth to the beautiful art of protecting qubits from decoherence; \emph{quantum coding theory}. The central idea of quantum error- correcting codes is to encode a qubit in many-body entangled states, and perform error-corrections so that encoded qubits are not lost. After discoveries of first examples of quantum codes~\cite{Shor95, Laflamme96, Bennett96, Calderbank96, Kitaev97} which culminated in stabilizer codes~\cite{Gottesman96}, a large number of quantum codes have been found. Now, quantum coding theory constitutes one of the most important building blocks for realizing fault-tolerant quantum computation~\cite{Shor96}.

Yet, there still remain important gaps between theoretical constructions of quantum codes and their physical realizations as quantum memory devices. First of all, most quantum coding schemes encode qubits \emph{dynamically} by applying a large number of logical gates, and encoding is discussed only in terms of the Hilbert space. However, since encoded qubits will be eventually lost in the presence of interactions with the external environment, it is desirable to store logical qubits \emph{statically} in some physical subspaces which are naturally protected from decoherence. 

One plausible approach may be to store logical qubits in the gapped ground space of some quantum many-body system. In this light, stabilizer codes constructed with geometrically local generators are promising candidates for physical realizations of quantum codes since such \emph{local stabilizer codes} can be realized as the ground space of gapped Hamiltonians by using their local generators as interaction terms. Indeed, from purely theoretical viewpoint, it is known that sufficiently accurate and frequent error-corrections can store logical qubits reliably~\cite{Dennis02}. However, one may still dream of having a quantum memory device which would work without active error-corrections, given the difficulties and inefficiencies of performing fast and accurate error-corrections in reality\footnote{To the best of our knowledge, there have been no convincing experimental demonstration of error-corrections. All the demonstrations are limited only to phase-type errors, and do not correct bit-type errors. Also, it seems very difficult to reach the frequency threshold for reliable storage of logical qubits~\cite{Dennis02}. Finally, it is extremely inefficient to keep performing error-corrections during the time one is storing a qubit if one stores a qubit for a long duration such as months or years.}. 

Self-correcting quantum memory is an ideal memory device which corrects errors by itself~\cite{Dennis02, Bacon06, Nussinov08, Bravyi09, Alicki09, Pastawski09}. Due to the large energy barrier separating degenerate ground states, natural thermal dissipation processes restore the system into the original encoded states by correcting errors automatically without any active error-correction. If such a memory device could exist, it will be a perfect quantum information storage device which may be used commercially in the future. Also, the reliable storage of qubits seems to be the starting point for building scalable quantum computers.

There has been significant progress toward construction of self-correcting quantum memory. It has been pointed out that the Toric code defined on a four-dimensional system ($D=4$) serves as self-correcting quantum memory~\cite{Dennis02, Takeda04}. Yet, such a four-dimensional code cannot be embedded in a three-dimensional system. While there have been several proposals for three-dimensional self-correcting memory~\cite{Bacon06, Hamma09, Haah11}, validities of none of these proposals have been verified yet\footnote{We shall discuss these proposals in detail in section~\ref{sec:result1}.}. In addition, it has been shown that a self-correcting local stabilizer code cannot exist in two-dimensional systems~\cite{Bravyi09, Kay08}. It seems that two-dimensional spin systems are not promising as resource for self-correcting quantum memory. Now, the question we want to address can be summarized as follows:
\begin{itemize}
\item Is it possible to have three-dimensional self-correcting quantum memory?
\end{itemize}

\textbf{Thermal stability of topological order:}
The feasibility of self-correcting quantum memory is closely related to another important open problem in condensed matter physics; the thermal stability of topological order. 

Studies of topologically ordered systems~\cite{Laughlin83, Wen90, Kitaev97, Misguich02, Kitaev03, Kane05, Levin06, Kitaev06b, Fu07} have been frontiers of researches in condensed matter physics community, as systems with topological order are beyond the description of the Landau's symmetry-breaking paradigm which was once considered almost as ``theory of everything'' for studies of many-body systems. Topologically ordered systems are also of practical importance in quantum information science since many-body entanglement arising in ground states and quasi-particle excitations of topologically ordered systems is a primary resource for realizing various quantum information processing tasks~\cite{Kitaev97, Kitaev03}. 

The notion of topological order was originally introduced in order to characterize the stability of ground states of many-body quantum systems against local perturbations~\cite{Wen90}. Loosely speaking, a system is said to have topological order when its ground state properties do not change significantly under any types of small, but finite local perturbations. This stability of ground states against local perturbations is also valuable for quantum information processing since topologically ordered spin systems can be used as good quantum codes with macroscopic code distances~\cite{Bravyi10}. 

However, the situation changes completely when one considers the effect of thermal fluctuations on topologically ordered systems. In fact, it is known that topological order in a two-dimensional Toric code is not stable at any finite temperature which may be quantitatively seen from the fact that topological order parameters such as topological entanglement entropy vanish at any non-zero temperature at the thermodynamic limit~\cite{Castelnovo07}. A similar result is obtained in a recent numerical work on topological entanglement entropy in a spin liquid model at finite temperature~\cite{Isakov11}. It seems that topological order in a two-dimensional system is not stable at finite temperature according to general studies on the ground state properties of two-dimensional frustration-free Hamiltonians~\cite{Bravyi09, Bravyi10}. Now, the question concerning the stability of topological order can be summarized as follows\footnote{The definition of topological order at finite temperature will be clearly stated in section~\ref{sec:topo}}:  
\begin{itemize}
\item Is there any system with thermally stable topological order, which is stable against both thermal fluctuations and local perturbations?
\end{itemize}

\textbf{Main results:} Interestingly, this condensed matter theoretical question on the stability of topological order can be addressed through quantum coding theory. In fact, it has been pointed out that when topological order in correlated spin systems is stable at finite temperature, such a system can be used as self-correcting quantum memory~\cite{Bravyi09, Castelnovo08, Nussinov09}. This \emph{correspondence} between self-correcting quantum memory and the thermal stability of topological order may be better understood by identifying thermal fluctuations as random errors acting on a quantum memory. Therefore, one may take a slight liberty and say that \emph{a search for self-correcting quantum memory is equivalent to a search for a novel quantum phase with topological order which is stable at finite temperature.} 

In this paper, with this correspondence in mind, we analyze coding and physical properties of a certain model of local stabilizer codes with physically reasonable constraints~\cite{Beni10b}. The model, which is called Stabilizer code with Translation and Scale symmetries (STS model), is constrained to the following physical conditions.
\begin{itemize}
\item Qubits are defined on a $D$-dimensional square (hypercubic) lattice with periodic boundary conditions. 
\item The Hamiltonian consists only of geometrically local interaction terms with translation symmetries.
\item The number of logical qubits does not grow with the system size (scale symmetries).
\end{itemize}

We show that the model has a certain dimensional duality on geometric shapes of logical operators, as summarized in the following informal theorem: 
\begin{theorem_num}[\textbf{Dimensional Duality}]
In a $D$-dimensional STS model ($D\leq 3$), $m$-dimensional and $(D-m)$-dimensional logical operators always form anti-commuting pairs where $m$ is an integer.
\end{theorem_num}

Based on this dimensional duality on logical operators, we give answers to open questions \textbf{(a)} and \textbf{(b)}, as summarized below:
\begin{enumerate}[(a)]
\item The three-dimensional STS model does not work as self-correcting quantum memory since the energy barrier is finite for the encoding with respect to a two-dimensional logical operator. 
\item  Systems covered or approximated by the model cannot have thermally stable topological order, meaning that systems cannot be stable against both thermal fluctuations and local perturbations simultaneously in two and three spatial dimensions. 
\end{enumerate}

\textbf{Code distance:}
Our results on STS models also provide a partial answer to a long-standing open problem in quantum coding theory concerning the upper bound on the code distance of \emph{local stabilizer codes}. The code distance $d$ is a measure of the robustness of quantum codes against errors, and one of the ultimate goals in quantum coding theory is to find a quantum code with a large code distance for a fixed system size $N$ (the total number of qubits). While an upper bound on the code distance of stabilizer codes is roughly known~\cite{Knill97}, the upper bound for local stabilizer codes is currently not known yet. In other words, despite the fact that the stabilizer formalism is a canonical framework in quantum coding theory, we still do not know how robust the best local stabilizer code can be !

For a long time, it had been believed that the code distance of local stabilizer codes with $N$ qubits is upper bounded by $O(\sqrt{N})$: $d \leq O(\sqrt{N})$ at $N \rightarrow \infty$ since all the examples of local stabilizer codes ever found satisfied this upper bound~\cite{Kitaev97, Dennis02}. Later, an example of a local stabilizer code whose code distance scales as $O(\sqrt{N}\log N)$ was found~\cite{Freedman02}. While the code distance of this local stabilizer code exceeds the previously believed upper bound $O(\sqrt{N})$ \emph{logarithmically}, an example of a local stabilizer code whose code distance exceeds $O(\sqrt{N})$ \emph{polynomially} has not been found yet. Therefore, the code distance of local stabilizer codes seemed to be upper bounded by $O(N^{\frac{1}{2}+\epsilon})$ at $N \rightarrow \infty$ where $\epsilon$ is an arbitrary small positive number, although no analytical result was known on the upper bound.

It was recently proven that the code distance of local stabilizer codes is upper bounded by $O(L^{D-1})$ where $L$ is a linear length of the system and $D$ is the spatial dimension ($N \sim O(L^{D})$)~\cite{Bravyi09}. While this work does not rule out the possibilities for the existence of a local stabilizer code whose code distance exceeds $O(N^{\frac{1}{2}})$ polynomially, this bound was proven to be tight only for $D = 1,2$. Thus, whether the tight upper bound is 
\begin{align}
d \ \leq \ O(N^{\frac{1}{2}+\epsilon}) \quad  \mbox{or} \quad d \ \leq \ O(L^{D-1}) \qquad \mbox{at} \quad N \ \rightarrow \ \infty
\end{align}
for $D >2$ seems to be one of the most important open questions concerning coding properties of local stabilizer codes.

Our analysis on STS models provides the following tight upper bound on the code distance of STS models:
\begin{itemize}
\item A three-dimensional STS model has a code distance which is tightly upper bounded by $O(L)$ where $L$ is the linear length of the system. Thus, in a three-dimensional system, the upper bound on code distances turns out to be more strict than  not only $O(L^{2})$, but also $O(\sqrt{N})$.
\end{itemize}

\textbf{Organization:} The paper is organized as follows. In section~\ref{sec:model}, we give a brief review of stabilizer codes and introduce STS models. In section~\ref{sec:result}, we discuss the feasibility of self-correcting quantum memory. In section~\ref{sec:topo}, we discuss the thermal stability of topological order at finite temperature. In section~\ref{sec:physics}, we present a certain universal property of logical operators shared among all the STS models. In section~\ref{sec:summary}, we discuss possible future problems. The definition of topological order at finite temperature is discussed further in~\ref{sec:topo_ap}. A possible relevance to topological quantum field theory is discussed in~\ref{sec:topology}. The proof of the dimensional duality of logical operators is presented in~\ref{sec:decomposition} and~\ref{sec:construction}. 

The main part of the paper is self-consistent and accessible to readers without previous knowledge on quantum coding theory and topological order. However, the proof part is rather technical, and relies heavily on theoretical tools developed previously in~\cite{Beni10b, Beni10} by the author. In particular, we owe a lot of arguments to~\cite{Beni10b} which introduced and solved the two-dimensional STS model originally. 

\section{Stabilizer code with physical constraints}\label{sec:model}

While local stabilizer codes are physically realizable as quantum memory devices in principle, realistic physical systems are often constrained to not only the locality of interaction terms, but also various physical symmetries. In this section, we give the definition of the STS model which are local stabilizer codes with translation and scale symmetries~\cite{Beni10b}. 

In section~\ref{sec:model1}, we give a brief review of stabilizer codes. In section~\ref{sec:model2}, we describe the definition of the STS model. 

\subsection{Stabilizer code}\label{sec:model1}

Here, we give a brief review of stabilizer codes which are quantum codes possessing Hamiltonians to support logical qubits in the ground space with a finite energy gap~\cite{Gottesman96}. Some notations which will be used throughout this paper are also fixed here. Note that we shall use the notations $\{\}$ for a \emph{set} and $\langle \rangle$ for a \emph{group}.

\textbf{Stabilizer formalism:} The main idea of stabilizer codes is to encode $k$ logical qubits into $N$ physical qubits ($N>k$) by using a subspace $V_{\mathcal{S}}$ spanned by states $|\psi\rangle$ that are invariant under the action of the \emph{stabilizer group} $\mathcal{S}$:
\begin{align}
V_{\mathcal{S}} \ = \ \Big\{ \ |\psi\rangle \ \in \ (\mathbb{C}^{2})^{\otimes N} \ : \ U|\psi\rangle \ = \ |\psi\rangle, \ \forall U \ \in \ \mathcal{S} \ \Big\}.
\end{align}
Here, the stabilizer group $\mathcal{S}$ is an arbitrary Abelian subgroup of the Pauli group 
\begin{align}
\mathcal{S} \ \subset \ \mathcal{P} \ = \ \Big\langle \ iI, X_{1},Z_{1},\dots , X_{N}, Z_{N} \ \Big\rangle
\end{align}
such that $-I \not\in \mathcal{S}$. The elements in $\mathcal{S}$ are called \emph{stabilizers}. The logical subspace $V_{\mathcal{S}}$ can be realized as the ground space of the following Hamiltonian (see Fig.~\ref{fig_stabilizer_summary})
\begin{align}
H \ = \ - \sum_{j} S_{j}, \qquad \mathcal{S} \ = \ \left\langle \ S_{1}, S_{2}, \cdots \ \right\rangle
\end{align}
since the energy eigenvalue is minimized for states satisfying $S_{j}|\psi\rangle = |\psi\rangle$ for all $j$. There are $k$ logical qubits encoded in $V_{\mathcal{S}}$ where $k \equiv N - G(\mathcal{S})$. Here, $G(\mathcal{S})$ represents the number of independent generators in $\mathcal{S}$. The ground space is separated from excited states by a finite energy gap since eigenstates are simultaneously diagonalized with respect to eigenvalues $\pm 1$ of $S_{j}$. 

\begin{figure}[htb!]
\centering
\includegraphics[width=0.35\linewidth]{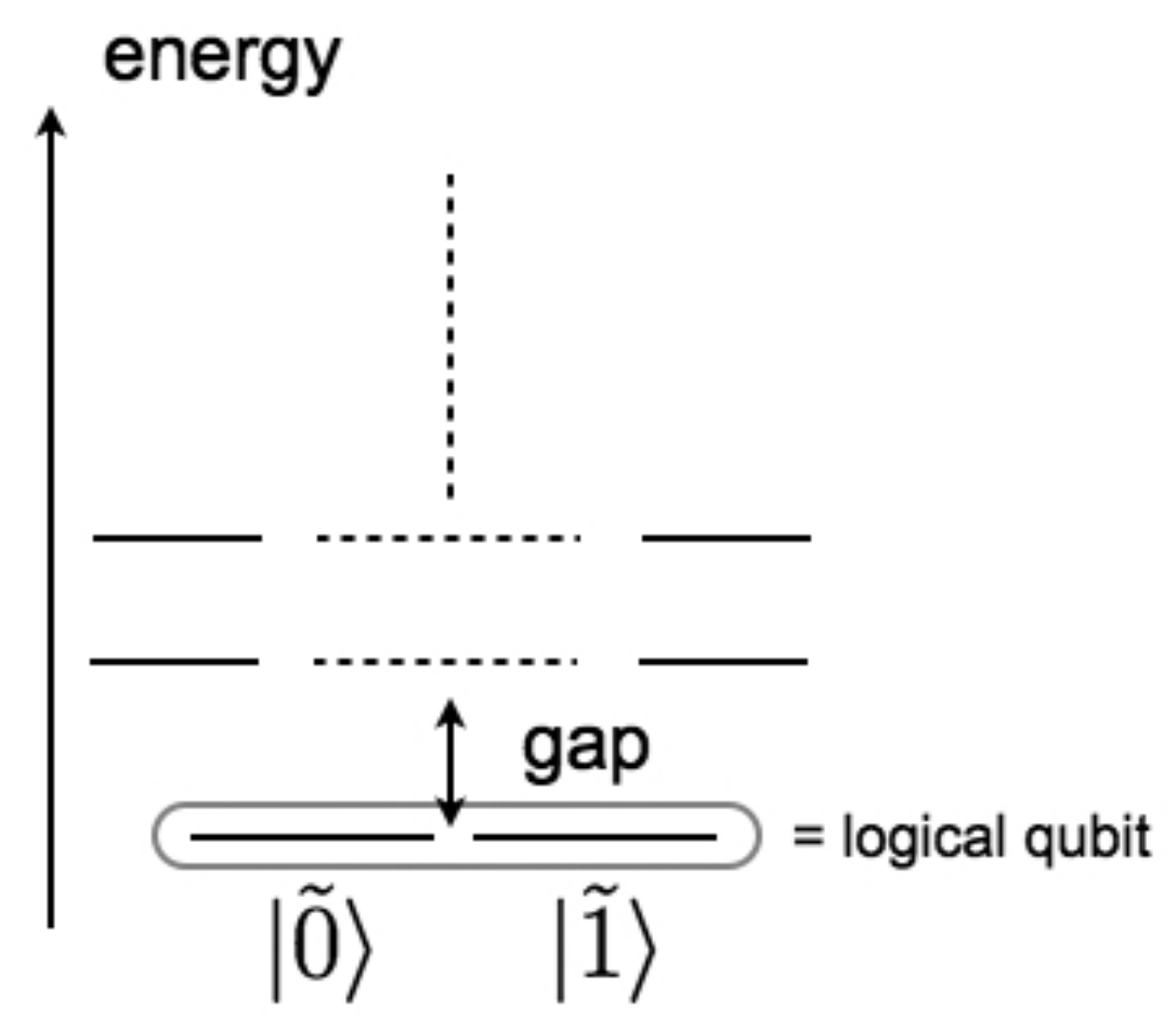}
\caption{Energy spectrum of a stabilizer Hamiltonian. Logical qubits are encoded in the degenerate ground space.
} 
\label{fig_stabilizer_summary}
\end{figure}

\textbf{Logical operators:}
In analyzing properties of logical qubits stored in the ground space, operators called \emph{logical operators} play central roles. Logical operators are Pauli operators which commute with the Hamiltonian, but not inside the stabilizer group $\mathcal{S}$. Logical operators can be found inside the centralizer group:
\begin{align}
\mathcal{C} \ = \ \Big\langle \ \Big\{ \ U \ \in \ \mathcal{P} \ : \ [U,S_{j}] \ = \ 0,\hspace{1ex} \mbox{for all}\ j \ \Big\} \ \Big\rangle
\end{align}
which is a group of Pauli operators commuting with all the stabilizers. Then, a set of logical operators is 
\begin{align}
\textbf{L} \ = \ \Big\{ \ U \ \in \ \mathcal{C} \ : \ U^{2} \ = \ I, \hspace{1ex} U \ \not \in \ \mathcal{S} \ \Big\}.
\end{align}
Logical operators may transform encoded qubits since they act non-trivially inside the ground space $V_{\mathcal{S}}$. 

\textbf{Equivalence relation:}
One may introduce an equivalence relation between logical operators by seeing how they act inside the ground space. Two logical operators $\ell$ and $\ell'$ are said to be \emph{equivalent} if and only if $\ell$ and $\ell'$ act in the same way inside the ground space:
\begin{align}
\ell \ \sim \ \ell' \  &\Leftrightarrow \ \ell|\psi\rangle \  = \  \ell' |\psi\rangle, \qquad \forall |\psi\rangle \  \in \  V_{\mathcal{S}}\\
&\Leftrightarrow \ \ell \ell' \ \in \ \mathcal{S}.
\end{align}
Therefore, logical operators remain equivalent under multiplications of stabilizers. 

\textbf{Canonical form:}
It is often convenient to represent a set of $2k$ independent logical operators in the following \emph{canonical} form~\cite{Beni10}:
\begin{align}
\left\{
\begin{array}{cccccc}
 \ell_{1}, & \cdots , &  \ell_{k}        \\
  r_{1}, & \cdots , &  r_{k} 
\end{array}
 \right\}.
\end{align}
Here, $\ell_{p}$ and $r_{p}$ are independent logical operators whose commutation relations are $\{ \ell_{p},r_{p}\}=0$, $[\ell_{p},r_{q}]=0$ for $p \not=q$, $[\ell_{p},\ell_{q}]=0$ and $[r_{p},r_{q}]=0$. Thus, only the operators in the same column anti-commute with each other. Note that choices of logical operators are not unique.

\textbf{Code distance:}
The code distance is a measure of the robustness of a quantum code, which is quantified by the minimal weight of logical operators:
\begin{align}
d \ = \ \min(w(U)) \qquad \mbox{where} \quad U \ \in \ \textbf{L}.
\end{align}
Here, $w(U)$ denotes the number of non-trivial Pauli operators constituting $U$. The code distance corresponds to a minimal number of single Pauli errors necessary to destroy an encoded qubit. Roughly speaking, a quantum code with a large code distance can securely protect logical qubits. 

\textbf{Bi-partition:} It is often convenient to split the entire system of qubits into two complementary subsets of qubits in studying coding properties of stabilizer codes~\cite{Fattal04, Wilde10}. Let us recall a useful formula to study geometric shapes of logical operators in stabilizer codes through a bi-partition. For a stabilizer code in a bi-partition, the following theorem is known to hold~\cite{Beni10} (Fig.~\ref{fig_stabilizer_summary_bipartition}). 

\begin{theorem}[Bi-partition]\label{theorem_partition}
For a stabilizer code with $k$ logical qubits, let the number of independent logical operators supported by a subset of qubits $R$ be $g_{R}$. Then, for an arbitrary bi-partition into two complementary subsets of qubits $R$ and $\bar{R}$, the numbers of logical operators supported by $R$ and $\bar{R}$ obey the following constraint:
\begin{align}
g_{R}+g_{\bar{R}} \ = \ 2k.
\end{align}
\end{theorem}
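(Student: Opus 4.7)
The plan is to reformulate the statement in the symplectic language of the Pauli group. Identify the $N$-qubit Pauli group (modulo phases) with the symplectic vector space $V = \mathbb{F}_{2}^{2N}$, where the symplectic inner product $\omega(u,v)$ records whether two Pauli operators anti-commute. Under this identification, the stabilizer group $\mathcal{S}$ corresponds to an isotropic subspace of dimension $N-k$, and the centralizer $\mathcal{C} = \mathcal{S}^{\perp}$ has dimension $N+k$. A bi-partition $R,\bar{R}$ determines subspaces $V_{R}, V_{\bar{R}} \subset V$ of dimensions $2|R|$ and $2|\bar{R}|$, and because Pauli operators on disjoint qubits commute while the symplectic form on $V_R$ itself is non-degenerate, one has $V_{R}^{\perp} = V_{\bar{R}}$.

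Next I would express $g_{R}$ in this language. The logical operators supported on $R$, taken modulo stabilizers, correspond to the quotient $(\mathcal{C} \cap V_{R})/(\mathcal{S} \cap V_{R})$, so
\begin{equation}
g_{R} \;=\; \dim(\mathcal{C} \cap V_{R}) \;-\; \dim(\mathcal{S} \cap V_{R}),
\end{equation}
using $\mathcal{S} \subset \mathcal{C}$. The heart of the argument is to rewrite $\dim(\mathcal{S} \cap V_{R})$ in terms of a quantity attached to the complementary region. Applying the standard symplectic identity $(A \cap B)^{\perp} = A^{\perp} + B^{\perp}$ together with $\mathcal{S}^{\perp} = \mathcal{C}$ and $V_{R}^{\perp} = V_{\bar{R}}$, one obtains $(\mathcal{S} \cap V_{R})^{\perp} = \mathcal{C} + V_{\bar{R}}$. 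Taking dimensions, using $\dim A + \dim A^{\perp} = 2N$, and applying inclusion--exclusion to $\dim(\mathcal{C} + V_{\bar{R}})$ yields
\begin{equation}
\dim(\mathcal{S} \cap V_{R}) \;=\; N - k - 2|\bar{R}| + \dim(\mathcal{C} \cap V_{\bar{R}}).
\end{equation}

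Substituting back gives $g_{R} = \dim(\mathcal{C} \cap V_{R}) - \dim(\mathcal{C} \cap V_{\bar{R}}) + 2|\bar{R}| - N + k$. Running the same computation with $R$ and $\bar{R}$ interchanged, the two terms $\dim(\mathcal{C} \cap V_{R})$ and $\dim(\mathcal{C} \cap V_{\bar{R}})$ cancel pairwise in the sum, leaving
\begin{equation}
g_{R} + g_{\bar{R}} \;=\; 2(|R| + |\bar{R}|) - 2N + 2k \;=\; 2k,
\end{equation}
as claimed. The main obstacle is not computational but conceptual: one must set up the symplectic framework carefully so that $V_{R}^{\perp} = V_{\bar{R}}$ becomes automatic and the identification of $g_{R}$ with $\dim(\mathcal{C} \cap V_{R}) - \dim(\mathcal{S} \cap V_{R})$ is unambiguous (in particular, verifying that $V_R$ itself carries a non-degenerate form so that the orthogonal complement behaves well). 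Once that is in place, the theorem reduces to a transparent dimension count via symplectic duality.
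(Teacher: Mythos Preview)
Your argument is correct. The symplectic setup is sound: the identification $g_{R} = \dim(\mathcal{C}\cap V_{R}) - \dim(\mathcal{S}\cap V_{R})$ is the right one (since $\mathcal{S}\subset\mathcal{C}$, this is precisely the dimension of the image of $\mathcal{C}\cap V_{R}$ in $\mathcal{C}/\mathcal{S}$, i.e.\ the number of independent logical classes with a representative on $R$), the duality $(\mathcal{S}\cap V_{R})^{\perp} = \mathcal{C} + V_{\bar{R}}$ follows from $(A\cap B)^{\perp}=A^{\perp}+B^{\perp}$, and the dimension count goes through exactly as you wrote.

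As for comparison: the present paper does not actually prove Theorem~\ref{theorem_partition}. It is quoted as a known result from~\cite{Beni10} (Yoshida--Chuang), where the proof proceeds by putting the stabilizer generators into a canonical form adapted to the bipartition $R\cup\bar{R}$ and then reading off the logical-operator counts directly from that normal form. Your route via symplectic orthogonality and inclusion--exclusion is a clean, basis-free alternative: it avoids the explicit canonical-form reduction and makes the duality between $R$ and $\bar{R}$ manifest through $V_{R}^{\perp}=V_{\bar{R}}$. The canonical-form approach, on the other hand, yields more structural information as a byproduct (e.g.\ it exhibits explicit paired logical representatives on $R$ and $\bar{R}$), which the paper exploits elsewhere. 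Both approaches are standard; yours is the more economical for the bare statement $g_{R}+g_{\bar{R}}=2k$.
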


\begin{figure}[htb!]
\centering
\includegraphics[width=0.20\linewidth]{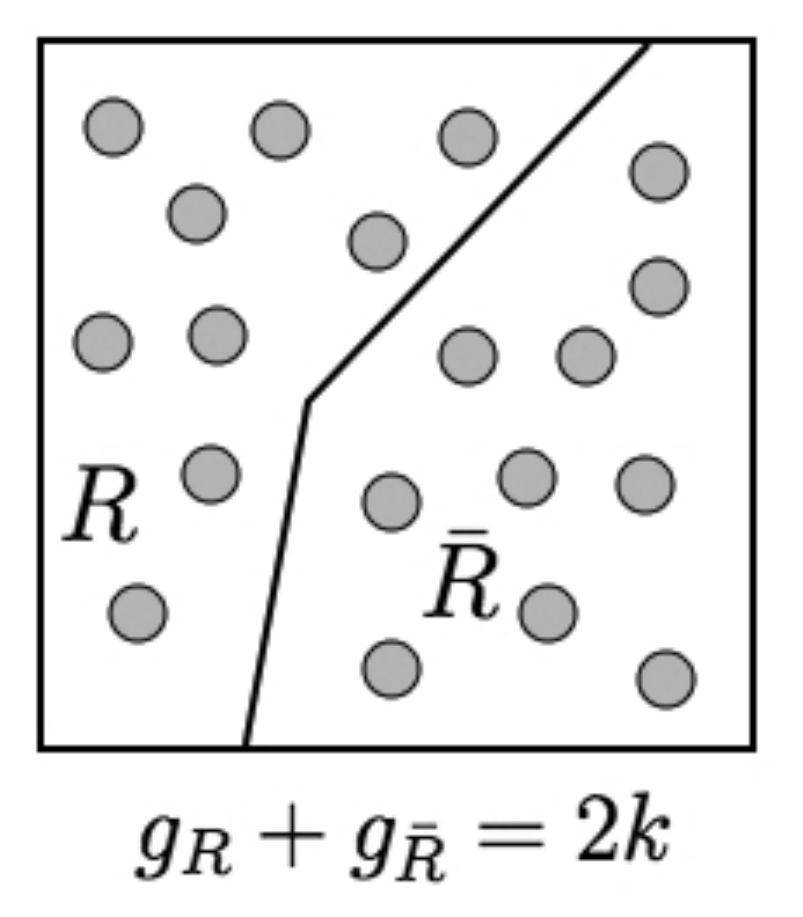}
\caption{A bi-partition of a stabilizer code. Each dot represents a qubit. 
} 
\label{fig_stabilizer_summary_bipartition}
\end{figure}

This bi-partition theorem is useful for analyzing geometric sizes and geometric shapes of logical operators. For example, if we find a region $R$ where there is no logical operator: $g_{R}=0$, we immediately know that all the logical operators can be supported inside $\bar{R}$ since $g_{\bar{R}}=2k$~\cite{Bravyi09}. Thus, one can restrict geometric regions of qubits where logical operators are supported. For an extension of the theorem to subsystem codes, see~\cite{Bravyi11, Haah10}.

\subsection{Stabilizer code with Translation and Scale symmetries}\label{sec:model2}

Here, we describe the definition of STS models, which are local stabilizer codes with translation and scale symmetries. 

\textbf{(1) Locality of interaction:}
Physically realistic systems must have only geometrically local interaction terms. To introduce the notion of locality to stabilizer codes, we consider a system of qubits defined on a $D$-dimensional square lattice (hypercubic lattice) which consists of  $N = L_{1} \times \cdots \times L_{D}$ qubits where $L_{m}$ is the total number of qubits in the $\hat{m}$ direction for $m = 1 ,\cdots ,D$. Therefore, qubits are distributed in the physical space with a \emph{metric}. 

Here, the entire system is separated into a collection of hypercubes which consists of $v = v_{1} \times \cdots \times v_{D}$ qubits without overlaps by assuming that $n_{m}\equiv L_{m}/ v_{m}$ are integer values (see Fig.~\ref{fig_STS}). We consider a block of $v = v_{1} \times \cdots \times v_{D}$ qubits as the single unit block which constitutes the entire system. In particular, we consider these unit blocks as single \emph{composite particles} with a larger Hilbert space $(\mathbb{C}^{2})^{\otimes v}$ (Fig.~\ref{fig_STS}). Thus, the entire system is viewed as a hypercubic lattice of $n_{1}\times \cdots \times n_{D}$ composite particles.

Now, we assume that interaction terms of the stabilizer Hamiltonian are defined \emph{locally}:
\begin{align}
H \ = \ - \sum_{j} S_{j} 
\end{align}
where $S_{j}$ are supported inside some regions with $2 \times \cdots \times 2$ composite particles (Fig.~\ref{fig_STS}). (Otherwise, we coarse-grain the system). In this paper, instead of qubits, we consider composite particles as the smallest building blocks of the system. 

\begin{figure}[htb!]
\centering
\includegraphics[width=0.70\linewidth]{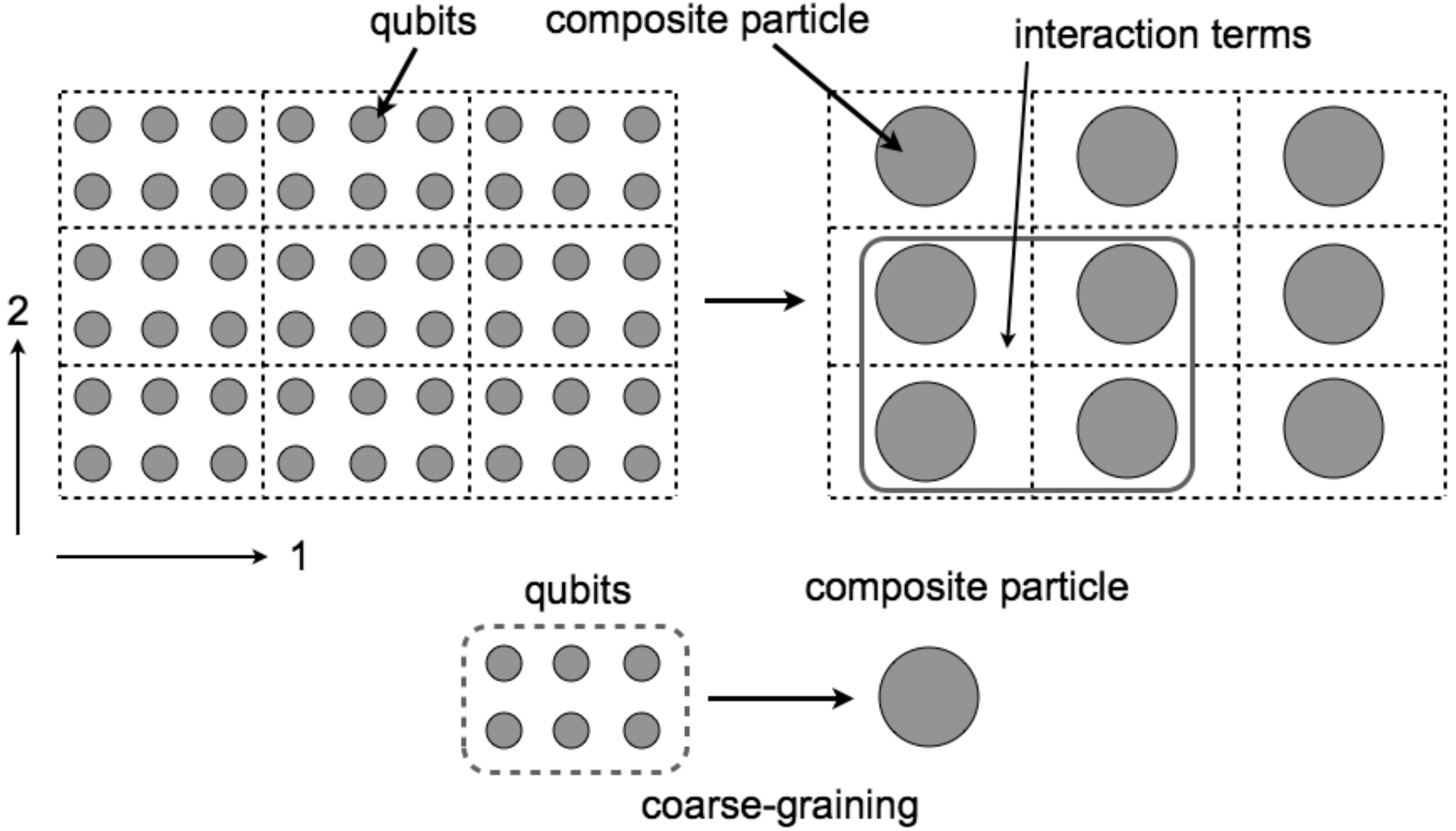}
\caption{An illustration of the STS model. A two-dimensional example is shown where a unit block of $3 \times 2$ qubits is considered as a composite particle with a larger Hilbert space. Interaction terms $S_{j}$ are defined locally inside a region of $2 \times 2$ composite particles. The Hamiltonian is invariant under unit translations of composite particles.
} 
\label{fig_STS}
\end{figure}

\textbf{(2) Translation symmetries:}
Physically realistic systems often have not only local interactions, but also some physical symmetries. Here, we assume that the stabilizer Hamiltonian possesses \emph{translation symmetries}:
\begin{align}
T_{m}(H)\ =\ H \qquad (m = 1, \cdots, D)
\end{align}
where $T_{m}$ represent \emph{unit translations of composite particles} in the $\hat{m}$ direction (Fig.~\ref{fig_STS}). 

For simplicity of discussion and in order to accommodate translation symmetries, we set the periodic boundary conditions. Then, the entire system may be viewed as a $D$-dimensional torus: $\textbf{T}^{D} = \textbf{S}^{1} \times \cdots \times \textbf{S}^{1}$
where $\textbf{S}^{1}$ is a circle. Thus, the entire system has a topologically non-trivial geometric shape \emph{a priori}.

\textbf{(3) Scale symmetries:}
In this paper, we are interested in coding properties at the limit where the system size goes to the infinity (in other words, at the \emph{thermodynamic limit}). So far, we have considered the cases where the system size $\vec{n} \equiv (n_{1}, \cdots, n_{D})$ is fixed. Here, we consider changes of the number of composite particles $n_{m}$ while keeping interaction terms $S_{j}$ the same. 

It is commonly believed that there is a tradeoff between the number of logical qubits $k$ and the code distance~\cite{Bravyi10, Haah10} where the code distance $d$ decreases as the number of logical qubits $k$ increases for a fixed $N$. Then, it may be legitimate to limit our considerations to the cases where the number of logical qubits $k$ remains small when the system size increases. 

We assume that stabilizer codes have \emph{scale symmetries} by requiring that the number of logical qubits $k_{\vec{n}}$ is independent of the system size $\vec{n}$:
\begin{align}
k_{\vec{n}} \ = \ k, \qquad \forall \vec{n}.
\end{align}
Here, we emphasize that, in a system with scale symmetries, the number of logical qubits $k$ remains constant under not only global scale transformations: $\vec{n} \rightarrow c \vec{n}$ where $c$ is some positive integer, but also arbitrary changes of $n_{m}$. 

One might think that scale symmetries are too strong as physical constraints. However, through appropriate coarse-graining, a fairly large class of local stabilizer codes with translation symmetries can be considered as the STS model. For example, let us consider the cases where the number of logical qubits $k_{\vec{n}}$ is small:
\begin{align}
k_{\vec{n}} \ \leq \ k_{0}, \qquad \forall \vec{n} \label{eq:small}
\end{align}
where $k_{\vec{n}}$ does not grow with the system size $\vec{n}$. Then, there always exists some finite coarse-graining such that a coarse-grained system has scale symmetries, as proven in~\cite{Beni10b}. As a result, by analyzing coding properties of stabilizer codes in the presence of scale symmetries, one can easily deduce coding properties of stabilizer codes which satisfy Eq.~(\ref{eq:small}). Therefore, solutions of STS models are sufficient to discuss coding properties of translation symmetric stabilizer codes with a small number of logical qubits. Note that STS models cover the Toric code and its generalizations to $D$-dimensional systems. 
 
\textbf{Translation equivalence of logical operators:}
There is a certain property of logical operators which emerges naturally as a result of translation and scale symmetries. For STS models, the following theorem holds~\cite{Beni10b}  (Fig.~\ref{fig_the_tools}).

\begin{theorem}[\textbf{Translation equivalence}]\label{theorem_TE}
For each and every logical operator $\ell$ in an STS model, a unit translation of $\ell$ with respect to composite particles in any direction is always equivalent to the original logical operator $\ell$:
\begin{align}
T_{m}(\ell) \ \sim \ \ell, \qquad \forall \ell \ \in \ \textbf{L}_{\vec{n}}  \qquad (m \ = \ 1, \cdots, D) 
\end{align}
where $\textbf{L}_{\vec{n}}$ is a set of all the logical operators for an STS model defined with the system size ${\vec{n}}$.
\end{theorem}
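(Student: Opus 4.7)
The plan is to reformulate the theorem as a statement about an induced automorphism of the logical algebra, and then use scale symmetry to force this automorphism to be trivial.

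First I would observe that translation symmetry $T_m(H)=H$ permutes the interaction terms of $H$, hence fixes the stabilizer group: $T_m(\mathcal{S})=\mathcal{S}$. Since the centralizer is defined purely by commutation with $\mathcal{S}$, it follows that $T_m(\mathcal{C})=\mathcal{C}$, and $T_m$ descends to an automorphism $\tau$ of the quotient $\mathcal{C}/\mathcal{S}\cong\mathbb{F}_{2}^{2k}$. Because $T_m$ preserves Pauli commutation, $\tau$ preserves the natural symplectic form on $\mathcal{C}/\mathcal{S}$ inherited from the Pauli bracket, so $\tau\in\mathrm{Sp}_{2k}(\mathbb{F}_{2})$. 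The theorem is equivalent to the assertion $\tau=\mathrm{id}$.

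Next I would impose two divisibility constraints on the order of $\tau$. Periodic boundary conditions give $T_m^{n_m}=\mathrm{id}$ literally on the Pauli group, so the order of $\tau$ must divide $n_m$. On the other hand, $\tau$ lives in the finite group $\mathrm{Sp}_{2k}(\mathbb{F}_{2})$ of order $M(k)$ that depends only on $k$, and by scale symmetry $k$ is uniform across all $\vec{n}$. Choosing $n_m$ to be any prime not dividing $M(k)$ (such primes exist since $M(k)$ has only finitely many prime divisors) forces the order of $\tau$ to divide $\gcd(n_m,M(k))=1$, giving $\tau=\mathrm{id}$ for that particular system size.

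The main obstacle, and what I expect to be the bulk of the technical work, is extending $\tau=\mathrm{id}$ from these favorable system sizes to an arbitrary $\vec{n}$. My plan is to leverage locality: once $\ell\cdot T_m(\ell)$ has been written as a product of local stabilizer generators at a good size, the same local decomposition should port to any other sufficiently large size because the interaction terms themselves are identical and translation invariant across all scales, as emphasized in the discussion around Eq.~(\ref{eq:small}). The bi-partition theorem (Theorem~\ref{theorem_partition}) will likely be used in tandem to constrain, on any subregion, how many independent logical operators can be supported, thereby ruling out non-trivial translation orbits at least for compactly supported logical operators. The genuinely delicate case is that of logical operators whose support wraps the torus non-trivially in the $\hat{m}$-direction; I expect to handle these by decomposing them into localized fragments whose equivalence under $T_m$ can be witnessed by a bounded collection of generators, following the constructions already developed in~\cite{Beni10b}.
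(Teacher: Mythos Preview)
Your symplectic reformulation is correct and in fact sharpens the paper's own intuition: the paper only argues heuristically that ``there are only $2k$ independent logical operators, so there must be a finite $a_m$ with $T_m^{a_m}(\ell)\sim\ell$,'' whereas you make this precise by locating $\tau$ inside $\mathrm{Sp}_{2k}(\mathbb{F}_2)$ and extracting the divisibility constraint $\mathrm{ord}(\tau)\mid\gcd(n_m,M(k))$. Up to this point the two approaches coincide, and the paper itself does not give a full proof here but defers to~\cite{Beni10b}.

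The genuine gap is exactly where you flag it: the passage from ``$\tau=\mathrm{id}$ at system sizes $n_m$ coprime to $M(k)$'' to ``$\tau=\mathrm{id}$ for all $\vec{n}$.'' Your proposed mechanism---porting a local stabilizer decomposition of $\ell\cdot T_m(\ell)$ from a good size to an arbitrary size---does not work as stated, because the object $\ell$ is itself size-dependent. A logical operator at size $n_m=p$ is a specific Pauli word on $p$ columns; at size $n_m=2$ the logical operators are entirely different Pauli words, and for operators that wrap the torus there is no canonical way to identify them across sizes. So there is nothing to ``port'': the relation $\ell_p\cdot T_m(\ell_p)\in\mathcal{S}_p$ is a statement about $\ell_p$, not about any $\ell_n$. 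The bi-partition theorem alone cannot bridge this, since it only counts logical operators within a fixed system.

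What the argument in~\cite{Beni10b} actually does (and what the paper's sketch hints at with ``due to scale symmetries one can prove $a_m=1$ for any $\vec{n}$'') is run the size-variation in the opposite direction: rather than proving $\tau=\mathrm{id}$ at special sizes and extending, one shows that a nontrivial $\tau$ at \emph{some} size would force the number of logical qubits to vary with $n_m$, contradicting scale symmetry directly. Concretely, if $T_m^{a_m}(\ell)\sim\ell$ with minimal $a_m>1$, then changing $n_m$ while keeping the other $n_j$ fixed alters whether $a_m\mid n_m$, and this controls whether certain translation orbits of centralizer elements close up into genuine logical operators---forcing $k_{\vec{n}}$ to jump. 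Your framework can be adapted to this, but the argument has to compare $k$ across sizes rather than transport stabilizer identities across sizes.
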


\begin{figure}[htb!]
\centering
\includegraphics[width=0.55\linewidth]{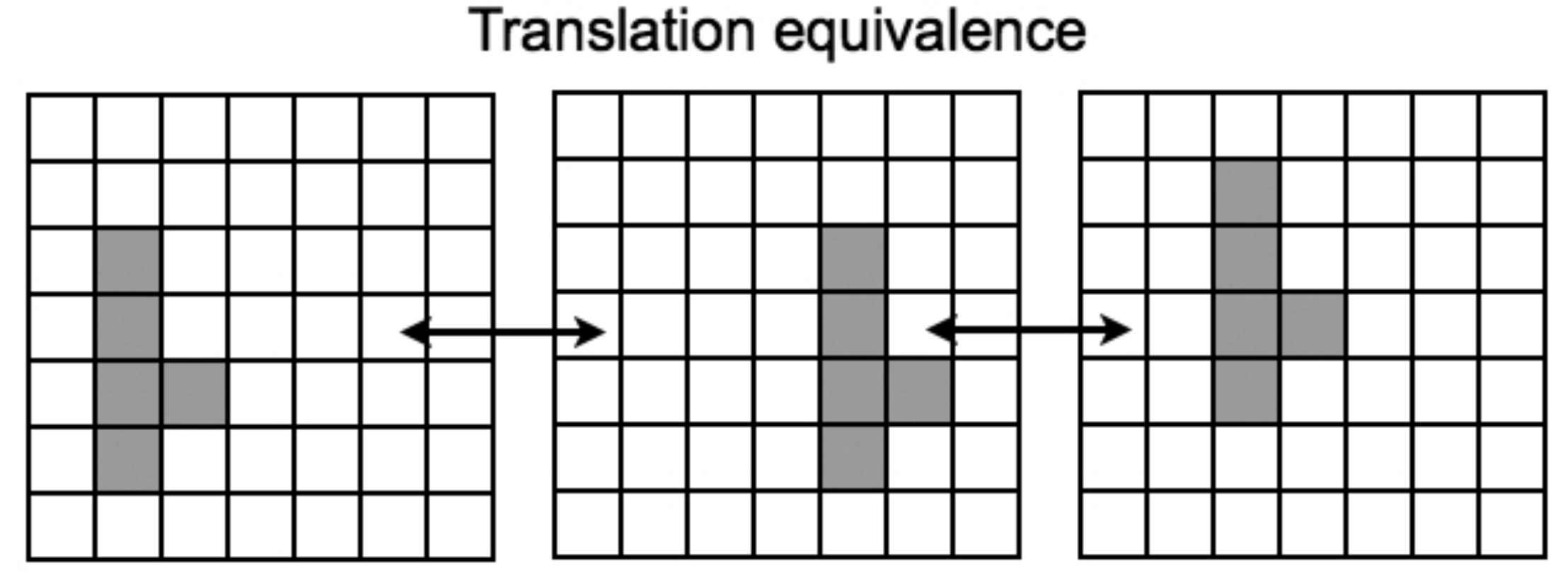}
\caption{The translation equivalence of logical operators. Each square represents a composite particle. Shaded regions represent translated logical operators which are equivalent to each other.
} 
\label{fig_the_tools}
\end{figure}

Here, we only give an intuition on why this theorem holds. Let us consider the case where the system size $\vec{n}$ is large. Then, since the number of logical qubits $k$ does not depend on the system size, $k$ is relatively small compared with the system size $\vec{n}$. Now, due to the translation symmetries of the system Hamiltonian, translations of a given logical operator $\ell$ are also logical operators. However, there are only $2k$ independent logical operators. Then, there must be a finite integer $a_{m}$ such that $\ell \sim T_{m}^{a_{m}}(\ell)$ for all the logical operators $\ell$. (Otherwise, there would be so many independent logical operators). It turns out that $a_{m}=1$ for any $\ell$ and $m$. While we have used only the condition that the number of logical operators $k$ is small, due to scale symmetries ($k$ is constant), one can prove the above theorem by showing $a_{m}=1$ for any $\ell$, $m$ and $\vec{n}$.

\section{Feasibility of self-correcting quantum memory}\label{sec:result}

In this section, we present one of our main results in this paper, concerning coding the feasibility of self-correcting quantum memory. In section~\ref{sec:result1}, we begin by giving brief discussion on the relation between the qubit storage time and the energy barrier. In section~\ref{sec:result2}, we describe geometric shapes of logical operators in STS models, and discuss whether a three-dimensional STS model works as self-correcting quantum memory or not. 

\subsection{Self-correction, energy barrier and storage time}\label{sec:result1}

In subsection, we review how self-correcting quantum memory works by establishing the connection between the energy barrier and the qubit storage time.

\textbf{Self-correcting classical memory:} For simplicity of discussion, let us start by analyzing an example of self-correcting \emph{classical} memory. Consider two-dimensional Ising model:
\begin{align}
H\ =\ - \sum_{i,j}Z_{i,j}Z_{i+1,j} - \sum_{i,j}Z_{i,j}Z_{i,j+1} 
\end{align}
which consists of $L \times L$ qubits with periodic boundary conditions. The model works as a classical code since one can encode a classical bit in the ground space by labeling $|0\cdots0\rangle$ as $0$ and $|1\cdots1\rangle$ as $1$. 

Now, let us see why this model works as self-correcting classical memory. Assume that the system is originally $|0\cdots0\rangle$. Then, in order for errors to change a ground state $|0\cdots0\rangle$ into another ground state $|1\cdots1\rangle$, errors must flip all the spins from $|0\rangle$ to $|1\rangle$. However, during these spin flips, the excitation energy becomes at least $O(L)$ because there is a domain wall separating the regions with $|0\rangle$s and $|1\rangle$s (Fig.~\ref{fig_self}). In other words, ground states $|0\cdots0\rangle$ and $|1\cdots1\rangle$ are separated by a \emph{large energy barrier}. Then, before errors accumulate, natural thermal dissipation processes restore the system into the original encoded state\footnote{Precisely speaking, the system does not return to the original state, but return to a state which is sufficiently close to the original state with a probability approaching to unity at the thermodynamic limit. Therefore, one can reliably read out the encoded bit from such a state.}. Therefore, the system corrects errors by itself.

One may estimate the bit storage time of two-dimensional Ising model by using the so-called Arrhenius law:
\begin{align}
\tau \ \sim \ \exp(\Delta E /T) 
\end{align}
up to some polynomial corrections where $\tau$ is the storage time, $\Delta E$ is an energy barrier and $T$ is the temperature. According to the Arrhenius law, the bit storage time can be estimated as $\tau \sim \mbox{EXP}(L)$ since the energy barrier is $\Delta E \sim O(L)$. While it should be noted that this is an empirical law, it is commonly believed that the law correctly estimates the bit or qubit storage time of spin systems (for systems with well defined energy barrier) below a critical temperature. Indeed, the law is rigorously verified in various models of classical and quantum memory~\cite{Brey96, Nussinov08, Alicki09, Alicki10, Chesi10}. However, it should be noted that verification of the Arrhenius law for classical and quantum memory is usually a very difficult task, involving an evaluation of time evolution of the system in a presence of interactions with an external environment. 

Next, let us consider an example of a classical code which is not self-correcting; one-dimensional Ising model:
\begin{align}
H\ =\ - \sum_{j}Z_{j}Z_{j+1} .
\end{align}
One may see that this model does not work as self-correcting classical memory since one can change from $|0\cdots0\rangle$ to $|1\cdots1\rangle$ by costing only a finite energy. In other words, one can create a kink by costing only a finite energy:
\begin{align}
|0 0 0 0 \cdots 0 \rangle \ \rightarrow \ |0 1 1 1 \cdots 1 \rangle \ \rightarrow \ |0 0 1 1 \cdots 1 \rangle \ \rightarrow \ \cdots \rightarrow \ |1 1 1 1 \cdots 1 \rangle
\end{align}
which may propagate the lattice freely without costing any extra energy. Then, according to the Arrhenius law, the bit storage time is $\tau \sim O(1)$ which is independent of the system size since the energy barrier is $\Delta E \sim O(1)$.

\begin{figure}[htb!]
\centering
\includegraphics[width=0.60\linewidth]{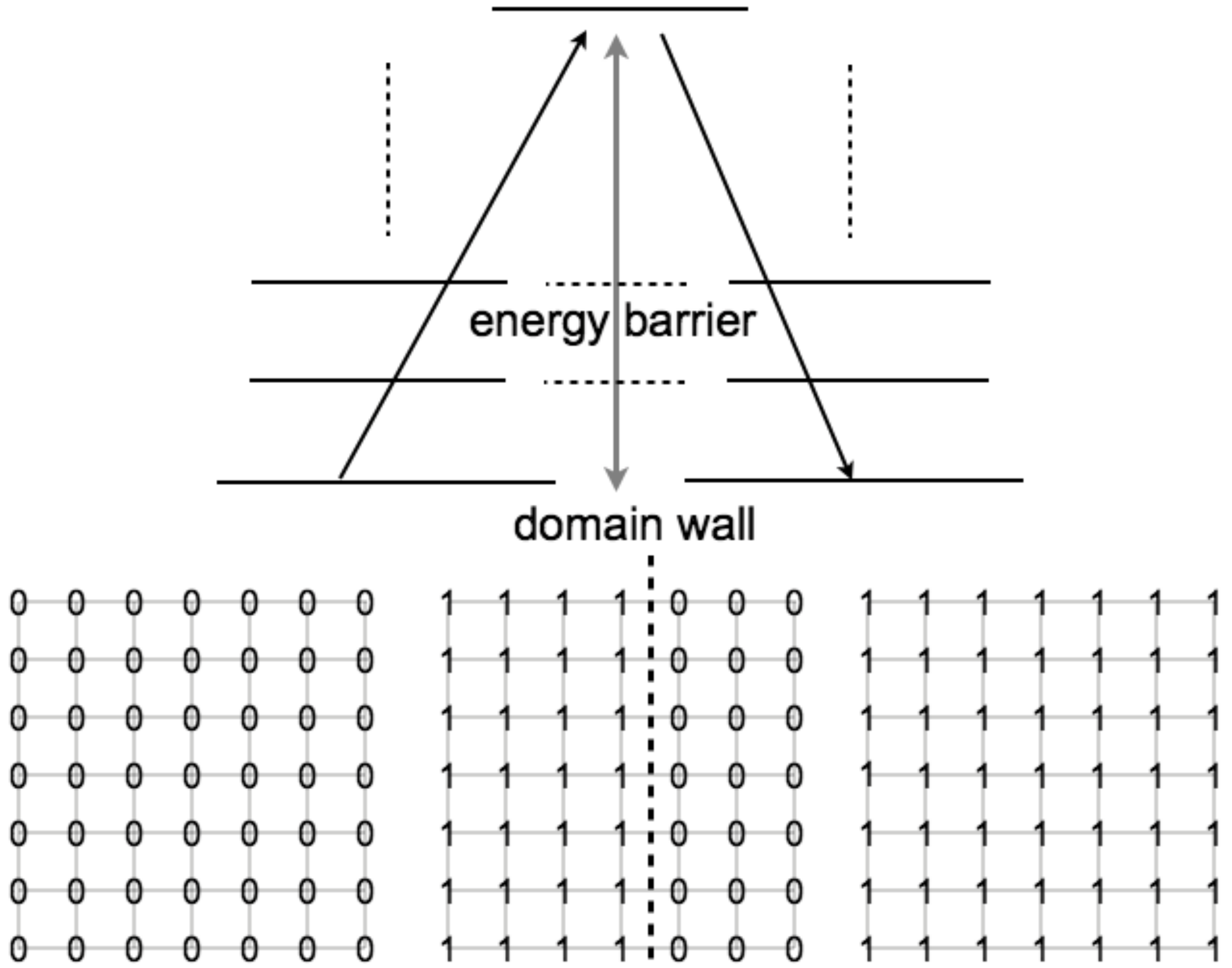}
\caption{How self-correcting \emph{classical} memory works in two-dimensional Ising model.
} 
\label{fig_self}
\end{figure}

\textbf{Energy barrier and logical operators:}
One can associate the self-correcting property of two-dimensional Ising model with geometric shapes of logical operators by viewing the model as a stabilizer code. Note that $D$-dimensional Ising model satisfies the definition of STS models since interaction terms are local and translation symmetric, and there is a single logical qubit ($k=1$ and $d=1$) regardless of the system size. Two-dimensional Ising model has the following pair of zero-dimensional and two-dimensional logical operators:
\begin{align}
\ell\ = \ Z_{1,1},\qquad r\ =\ \prod_{i,j}X_{i,j}.
\end{align}
Then, a classical bit is encoded in eigenstates of a zero-dimensional logical operator $\ell$. In order to change the encoded bit, one needs to apply a two-dimensional logical operator $r$ since $|1\cdots1\rangle = r|0\cdots0\rangle$. Then, an intermediate state during the change from $|0\cdots0\rangle$ to $|1\cdots1\rangle$ may be represented as $r^{*}|0\cdots0\rangle$ where $r^{*}$ is some ``subpart'' of the original two-dimensional logical operator $r$ (see Fig~\ref{fig_self2}(a)). Since interaction terms anti-commute with Pauli operators at the boundary of $r^{*}$, the excitation energy associated with $r^{*}|0\cdots0\rangle$ is proportional to the perimeter of $r^{*}$. Thus, during the change from $|0\cdots0\rangle$ to $|1\cdots1\rangle$, the excitation energy must become $O(L)$ since the perimeters of subparts of a two-dimensional logical operator $r$ are always one-dimensional. 

One can also understand why one-dimensional Ising model does not work as self-correcting classical memory through geometric shapes of logical operators. One-dimensional Ising model has the following pair of zero-dimensional and one-dimensional logical operators: $\ell = Z_{1}$ and $r = \prod_{i}X_{i}$. Since a subpart $r^{*}$ of a one-dimensional logical operator $r$ is always zero-dimensional, the energy barrier is $\Delta E \sim O(1)$ (see Fig~\ref{fig_self2}(b)). 

\begin{figure}[htb!]
\centering
\includegraphics[width=0.55\linewidth]{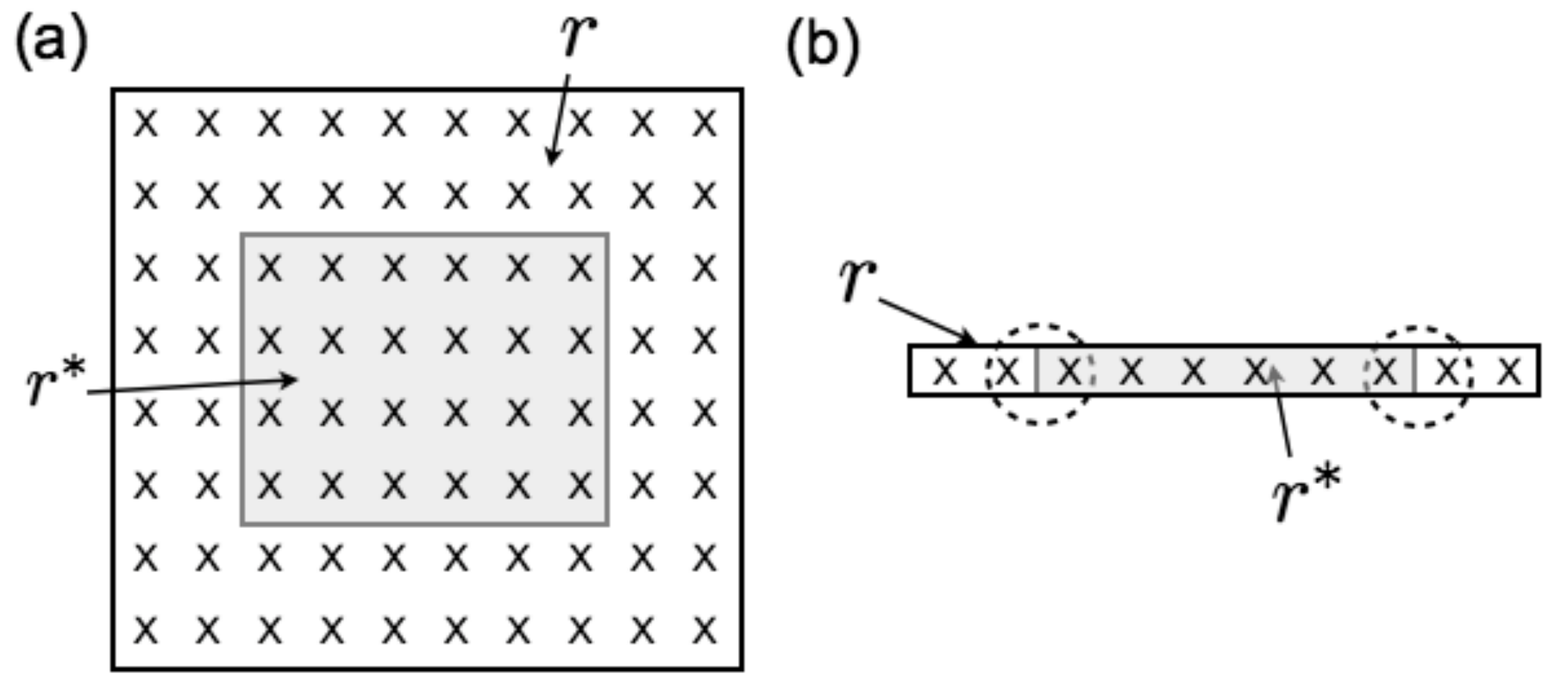}
\caption{(a) A subpart of a two-dimensional logical operator. The excitation energy is $O(L)$. (b) A subpart of a one-dimensional logical operator. The excitation energy is $O(1)$.
} 
\label{fig_self2}
\end{figure}

\textbf{Self-correcting quantum memory:}
Next, let us discuss how self-correcting quantum memory works. It is known that many of good local stabilizer codes with macroscopic code distances do not have self-correcting properties since they have string-like logical operators which lead to $O(1)$ energy barrier. As an example, let us consider two-dimensional Toric code:
\begin{align}
H_{Toric} \ = \ - \sum_{s} A_{s} - \sum_{p} B_{p} 
\end{align}
where qubits live on edges of $L \times L$ square lattice. Interaction terms $A_{s}$ and $B_{p}$ are shown in Fig.~\ref{fig_Toric_code}. Two-dimensional Toric code can be viewed as an STS model since the number of logical qubits is constant: $k=2$ regardless of the system size. 

\begin{figure}[htb!]
\centering
\includegraphics[width=0.55\linewidth]{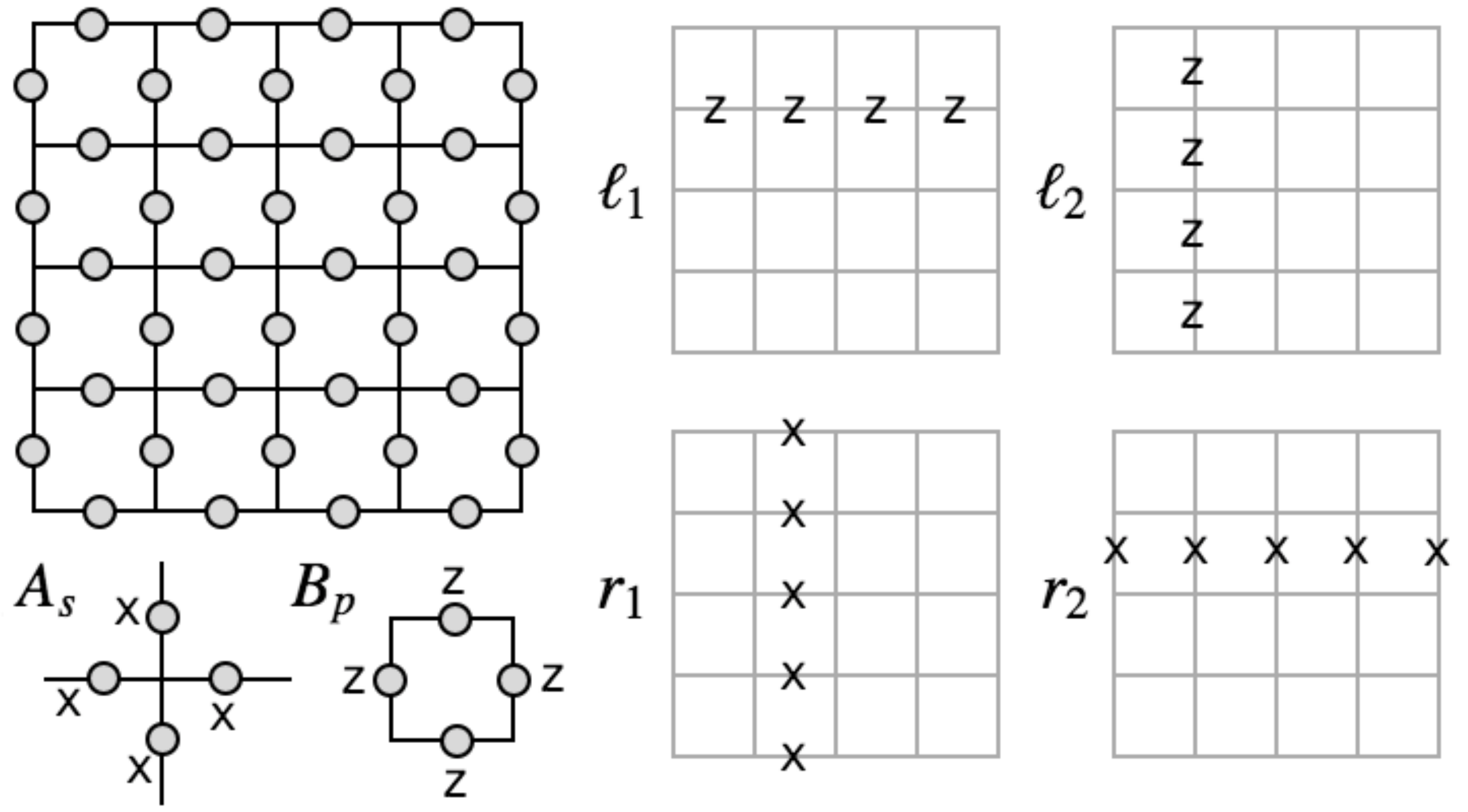}
\caption{(a) A subpart of a two-dimensional logical operator. The excitation energy is $O(L)$. (b) A subpart of a one-dimensional logical operator. The excitation energy is $O(1)$.
} 
\label{fig_Toric_code}
\end{figure}

One can understand why two-dimensional Toric code does not work as self-correcting quantum memory through geometric shapes of logical operators. Since the Toric code has pairs of one-dimensional logical operators as shown in Fig.~\ref{fig_Toric_code}, the energy barrier is $\Delta E \sim O(1)$. As a result, the qubit storage time is $\tau \sim O(1)$ according to the Arrhenius law. A similar discussion holds for three-dimensional Toric code too. 

As two-dimensional and three-dimensional Toric codes have $O(1)$ qubit storage time, they are not reliable quantum memories. Then, what kinds of quantum codes may serve as reliable quantum memory devices? It is worth noting that two-dimensional Toric code works as a reliable quantum memory device if one perform sufficiently frequent and accurate error-corrections. In particular, it has been shown~\cite{Dennis02} that the qubit storage time can be exponentially long: $\tau\sim \mbox{EXP($L$)}$ in the presence of active error-corrections; however, it seems very difficult to reach such accuracy and frequency which are necessary for reliably storing qubits. 

In the same paper~\cite{Dennis02}, it has been also pointed out that four-dimensional Toric code may have exponentially long storage time $\tau \sim \mbox{EXP($L$)}$ below the critical temperature since the model has a large energy barrier which scales as $O(L)$. This remarkable insight has been further investigated in~\cite{Takeda04}, and later, rigorously verified in~\cite{Alicki10}. One can associate the self-correcting property of four-dimensional Toric code with geometric shapes of logical operators, as the model has pairs of two-dimensional logical operators which lead to $O(L)$ energy barrier. Yet, since one cannot embed four-dimensional Toric code in a three-dimensional space, one hopes to have three-dimensional self-correcting quantum memory whose qubit storage time grows as the system size increases. 

\textbf{Previous proposals:}
So far, there have been several interesting proposals of three-dimensional self-correcting quantum memory. First, a certain model of three-dimensional subsystem codes~\cite{Bacon06}, called three-dimensional quantum compass model, was proposed as a candidate of self-correcting quantum memory. Subsystem codes may be considered as generalizations of stabilizer codes which are supported by Hamiltonians whose interaction terms are Pauli operators, but may anti-commute with each other. While the model opened a new possibility of quantum memory devices supported by frustrated Hamiltonians and initiated studies of Hamiltonian realizations of subsystem codes, its validity as self-correcting quantum memory has not been verified since it is difficult to solve the frustrated Hamiltonian, and its properties are hard to determine. Also, it seems difficult for such frustrated systems to have the thermal stability since the model supports gapless excitations. Furthermore, its validity has been denied in several literature including~\cite{Nussinov08}. 

Later, an interesting model of the mixture of two-dimensional Toric code and three-dimensional Bosonic gas, called the Toric-Boson model, was proposed~\cite{Hamma09} where Bosonic gas induces confining potential between anyonic excitations. While the model opened new capacities of quantum codes constructed in the so called mixed-dimensional configurations, which are of particular interest in ultracold atom physics community, the model itself has two serious drawbacks as a candidate of self-correcting quantum memory. First, its storage time is only polynomial in $L$: $\tau \sim \mbox{POLY}(L)$ since an effective energy barrier is: $\Delta E \sim \mbox{LOG($L$)}$. However, it is known that it takes at least $O(d)$ gate operations, which cannot be implemented simultaneously, to read out the encoded qubit, where $d$ is the code distance of the system. Then, since the code distance $d$ scales polynomially in $L$ in the model, polynomially long storage time is not sufficient for a model to work as efficient quantum memory device~\cite{Alicki09}. Second, the model needs a very strong coupling between the Toric code and Bosonic gas whose strength scales polynomially in $L$. Therefore, the model has a problem in the scalability. 

Recently, an interesting proposal of three-dimensional spin glass models~\cite{Haah11} has been made whose relaxation dynamics is very slow due to the existence of a large number of energy local minima. The model seems to have a polynomially long storage time: $\tau \sim \mbox{POLY($L$)}$ with logarithmically large energy barrier: $\Delta E \sim \mbox{LOG($L$)}$, though it needs verifications (so, it could be longer or shorter). The system undergoes a phase transition at $T=0$, which mat imply a potential thermal instability of the system properties. Finally, it seems difficult o find an efficient decoding scheme for the model. Here, it should be noted that the model does not have scale symmetries since the number of logical qubits is highly sensitive to the system size, and there is no finite upper bound on it. See section~\ref{sec:summary} for discussion on stabilizer codes without scale symmetries too.

At this moment, validities of none of these proposals of three-dimensional self-correcting quantum memory has been verified yet, and the feasibility of self-correcting quantum memory remains open. Below, we summarize the qubit storage time and the energy barrier in these proposals of self-correcting quantum memory.

\begin{center}
\begin{tabular}{cccc}
\mbox{Code} & \mbox{Energy Barrier} & \mbox{Storage Time} & \mbox{Comment} \\ \hline
\mbox{2-dim Toric Code} & \mbox{O(1)} & \mbox{O(1)} &      \\
\mbox{3-dim Toric Code} & \mbox{O(1)} & \mbox{O(1)} &      \\
\mbox{2-dim Toric Code} & \mbox{O(1)} & \mbox{EXP(L)} & \mbox{with error-correction}\\
\mbox{4-dim Toric Code} & \mbox{O(L)} & \mbox{EXP(L)} & \mbox{self-correcting} \\
\mbox{3-dim subsystem code} & \mbox{?} & \mbox{?} & \mbox{gapless excitation?} \\
\mbox{Toric-Boson model} & \mbox{LOG(L)} & \mbox{POLY(L)} & \mbox{POLY(L) coupling} \\
\mbox{Spin glass model} & \mbox{LOG(L)} & \mbox{POLY(L)} & \mbox{$T=0$ phase transition} 
\end{tabular}
\end{center}

\subsection{Dimensional duality of logical operators}\label{sec:result2}

In this subsection, we describe geometric shapes of logical operators in STS models, and discuss the feasibility of self-correcting quantum memory.

\textbf{One-dimension:} We begin by reviewing geometric shapes of logical operators in one-dimensional STS models. In a one-dimensional chain of $n_{1}$ composite particles, we denote $j$th composite particle as $P_{j}$. Then, the following theorem holds~\cite{Beni10b}.

\begin{theorem}[\textbf{Dimensional Duality}]\label{theorem_1dim}
For a one-dimensional STS model, there exists a canonical set of logical operators:
\begin{align}
\left\{
\begin{array}{cccccc}
 \ell_{1}, & \cdots , &  \ell_{k}      \\
  r_{1}, & \cdots , &   r_{k}
\end{array}
 \right\}
\end{align}
whose pair of anti-commuting operators $\ell_{j}$ and $r_{j}$ has one of the following property ($j=1,\cdots,k$).
\begin{itemize}
\item $\ell_{j}$ is a zero-dimensional logical operator defined inside $P_{1}$, while $r_{j}$ is a one-dimensional logical operator defined in a periodic way: $T_{1}(r_{j})=r_{j}$.
\end{itemize}
\end{theorem}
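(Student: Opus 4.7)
The plan is to combine the bi-partition theorem (Theorem~\ref{theorem_partition}) with the translation equivalence theorem (Theorem~\ref{theorem_TE}) and scale symmetry to manufacture the canonical form in three stages: (i) produce $k$ independent zero-dimensional logical operators $\ell_j$ supported entirely inside $P_1$; (ii) produce anti-commuting partners $r_j$ by a symplectic Gram--Schmidt in the centralizer modulo the stabilizer group; (iii) re-gauge each $r_j$ by a stabilizer to turn it into a literally $T_1$-invariant representative.

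First I would apply Theorem~\ref{theorem_partition} to the bi-partition $R=P_1$, $\bar R=P_2\cup\cdots\cup P_{n_1}$, obtaining $g_{P_1}+g_{\bar P_1}=2k$. In the minimal case $n_1=2$, the bi-partition reduces to $P_1$ versus $P_2$, and translation symmetry forces $g_{P_1}=g_{P_2}$, hence both equal $k$. To propagate $g_{P_1}=k$ to arbitrary $n_1$, I would use the fact that a Pauli supported in $P_1$ must commute only with those local generators whose support meets $P_1$, namely the stabilizers on $P_1\cup P_2$ and (for $n_1\ge 3$) on $P_{n_1}\cup P_1$; by translation invariance the latter are $T_1$-translates of the former, so the local commutation conditions collapse into the same system that already arose at $n_1=2$. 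This gives $g_{P_1}\ge k$ for every $n_1$, and scale symmetry $k_{\vec n}=k$ together with the bi-partition equation upgrades the inequality to equality. The cleaning lemma built into Theorem~\ref{theorem_partition} then yields $k$ independent logical representatives $\ell_1,\dots,\ell_k$ with support inside $P_1$, which are zero-dimensional by construction. Standard symplectic Gram--Schmidt in the quotient $\mathcal{C}/\mathcal{S}$ supplies logical operators $r_1,\dots,r_k$ satisfying $\{\ell_j,r_j\}=0$, $[\ell_i,r_j]=0$ for $i\neq j$, and $[\ell_i,\ell_j]=[r_i,r_j]=0$.

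The final and most delicate step is to replace each $r_j$ by an honestly periodic representative. By Theorem~\ref{theorem_TE} we have $T_1(r_j)\sim r_j$, so there is a stabilizer $s_j$ with $T_1(r_j)=r_j\,s_j$. Iterating and using the ring closure $T_1^{n_1}(r_j)=r_j$ yields the cocycle constraint
\begin{equation}
\prod_{b=0}^{n_1-1} T_1^b(s_j)\ =\ I,
\end{equation}
so $s_j$ is a $1$-cocycle for the cyclic group $\mathbb{Z}/n_1$ acting on $\mathcal{S}$ by translation. I would solve the coboundary equation $t_j\,T_1(t_j)=s_j$ for a stabilizer $t_j$ and set $r_j':=r_j\,t_j$; a direct computation (using that Pauli squares are trivial and that $s_j,t_j$ lie in $\mathcal{S}$) then gives $T_1(r_j')=r_j'$, and $r_j'\sim r_j$ preserves all canonical anti-commutation relations. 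The hard part will be showing that the coboundary equation is always solvable: $H^1(\mathbb{Z}/n_1,\mathcal{S})$ need not vanish a priori, and the argument likely requires a case analysis on the parity of $n_1$ together with the detailed combinatorial structure of STS stabilizer groups developed in~\cite{Beni10b}. Intuitively, the scale symmetry and translation equivalence leave no room for a genuine obstruction, because a non-trivial cocycle would force an extra logical degree of freedom at some system size $n_1$ and violate the constancy of $k_{\vec n}$; formalizing this heuristic into a clean solution of the coboundary equation is the main obstacle.
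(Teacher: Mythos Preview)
Your plan has the right shape, but step~(iii) is a genuine gap and your own diagnosis of it is correct: you have reduced the problem to solving the coboundary equation $t_j\,T_1(t_j)=s_j$ inside $\mathcal{S}$, and then stopped. The heuristic you offer---that a nonzero class in $H^1(\mathbb{Z}/n_1,\mathcal{S})$ would ``force an extra logical degree of freedom''---does not hold up: a nontrivial cocycle is a statement about the $T_1$-action on the \emph{stabilizer} group and has no direct bearing on the size of $\mathcal{C}/\mathcal{S}$, so there is no evident mechanism by which it would contradict the constancy of $k_{\vec n}$. Without an actual argument here the proof is incomplete. There is also a softer gap in step~(i): the centralizer condition on $P_1$ is indeed $n_1$-independent, so $\mathcal{C}|_{P_1}$ is the same abstract subgroup of the single-site Pauli group for every $n_1\ge 2$; but $g_{P_1}$ also depends on which of those elements lie in $\mathcal{S}$, and you have not shown that $\mathcal{C}|_{P_1}\cap\mathcal{S}$ is independent of $n_1$.

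The paper does not prove Theorem~\ref{theorem_1dim} in this text (it cites~\cite{Beni10b}), but the method is visible in the two- and three-dimensional analogues treated in Appendices~\ref{sec:decomposition} and~\ref{sec:construction}. The key difference from your approach is that the paper never tries to gauge-fix an arbitrary $r_j$ into a periodic one; instead it \emph{constructs} the periodic representative directly. The elementary move is that when $n_1$ is odd, the symmetrization $\prod_{b=0}^{n_1-1}T_1^b(r_j)$ is literally $T_1$-invariant and, being a product of an odd number of translates each equivalent to $r_j$, is still $\sim r_j$ (this trick appears verbatim in the proof of Corollary~\ref{corollary_decomposition}). More generally one works at a carefully chosen system size, uses the ``identity generating'' combinatorics (the one-dimensional analogue of Lemma~\ref{lemma_decomposition}) to decompose any logical operator as a product of a short centralizer and a genuinely periodic centralizer, and then transports the resulting single-site data to arbitrary $n_1$ via scale symmetry---the defining relations for those single-site building blocks are local and hence $n_1$-independent. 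This construction-first route bypasses the cohomological obstruction entirely rather than attempting to compute it.
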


It is worth presenting geometric shapes of logical operators graphically (Fig.~\ref{fig_1D_logical}). The code distance of one-dimensional models is upper bounded by $v$ which is the inner dimension of a composite particle. Since there always exist zero-dimensional logical operators, such systems cannot have topological order.

\begin{figure}[htb!]
\centering
\includegraphics[width=0.30\linewidth]{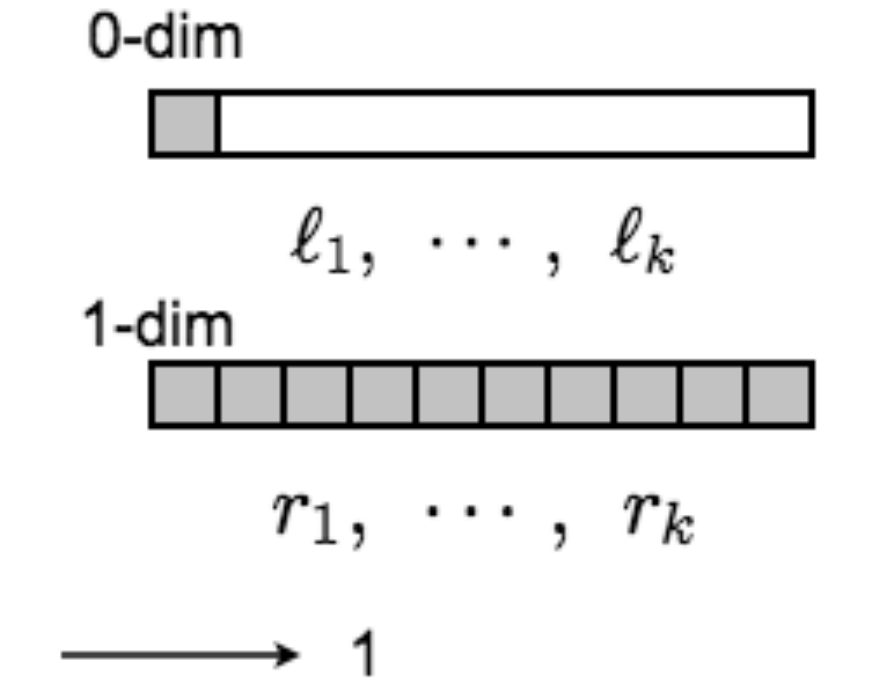}
\caption{Logical operators in a one-dimensional system.} 
\label{fig_1D_logical}
\end{figure}

\textbf{Two-dimension:} Next, we analyze geometric shapes of logical operators for two-dimensional STS models. Let us first introduce some regions of composite particles in order to define geometric shapes of logical operators concisely. A square region of $x_{1}\times x_{2}$ composite particles is denoted as $P(x_{1},x_{2})$ (Fig.~\ref{fig_2D_logical}(a)):
\begin{align}
P(x_{1},x_{2}) \ \equiv \ \Big\{ \ P_{r_{1},r_{2}} \ : \ 1 \ \leq \ r_{1} \ \leq \ x_{1}, \ 1 \ \leq \ r_{2} \ \leq \ x_{2} \ \Big\}
\end{align}
where $1 \leq x_{1} \leq n_{1}$ and $1 \leq x_{2} \leq n_{2}$. Note that a composite particle at the position $(r_{1},r_{2})$ is denoted as $P_{r_{1},r_{2}}$. Then, the following theorem holds~\cite{Beni10b}.

\begin{theorem}[\textbf{Dimensional Duality}]\label{theorem_2dim}
For a two-dimensional STS model, there exists a canonical set of logical operators:
\begin{align}
\left\{
\begin{array}{cccccc}
 \ell_{1}, & \cdots , &  \ell_{k}      \\
  r_{1}, & \cdots , &   r_{k}
\end{array}
 \right\}
\end{align}
whose pair of anti-commuting operators $\ell_{j}$ and $r_{j}$ has one of the following two properties ($j=1,\cdots,k$).
\begin{itemize}
\item $\ell_{j}$ is a zero-dimensional logical operator defined inside $P(1,2v)$, while $r_{j}$ is a two-dimensional logical operator defined in a periodic way: $T_{1}(r_{j})=r_{j}$ and $T_{2}(r_{j})=r_{j}$.
\item $\ell_{j}$ is a one-dimensional logical operator defined inside $P(1,n_{2})$ in a periodic way: $T_{2}(\ell_{j})=\ell_{j}$, while $r_{j}$ is a one-dimensional logical operator defined inside $P(n_{1},1)$ in a periodic way: $T_{1}(r_{j})=r_{j}$.
\end{itemize}
\end{theorem}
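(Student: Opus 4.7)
The plan is to exploit the two structural tools already in hand, the bi-partition identity $g_R + g_{\bar R} = 2k$ from Theorem \ref{theorem_partition} and the translation equivalence $T_m(\ell) \sim \ell$ from Theorem \ref{theorem_TE}, and use them alternately to iteratively ``clean'' logical operators into successively smaller regions until their geometric dimension becomes visible. Once a complete set of small-support representatives is in hand, standard symplectic Gram--Schmidt on the logical group will produce the canonical pairing, and the dimensional duality will follow by inspection of which geometric shapes can support an anti-commuting pair.

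First I would set up a vertical strip bi-partition, taking $R_w = P(w, n_2)$ and $\bar R_w = P(n_1 - w, n_2)$, and track $g_{R_w}$ as $w$ grows. Translation invariance implies that cyclic shifts of the partition yield identical counts, so the jump $g_{R_{w+1}} - g_{R_w}$ is controlled by a single added column. Combined with translation equivalence, this gives a cleaning lemma: for any $\ell \in \textbf{L}$, since $T_1(\ell) \sim \ell$, the product $\ell \cdot T_1(\ell)$ lies in $\mathcal{S}$, and a suitable stabilizer decomposition lets me replace $\ell$ by an equivalent representative whose support is pushed out of any chosen column, unless the bi-partition count forces an obstruction there. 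Iterating this in both directions, every equivalence class admits a representative supported in a strip of width at most $v$ along some axis. Applying the cleaning a second time in the perpendicular direction then forces a dichotomy: either $T_2(\ell)\sim \ell$ can be realized inside the thin strip, making $\ell$ a one-dimensional periodic loop wrapping the torus, or the perpendicular cleaning collapses the support into a small box $P(1, 2v)$, in which case $\ell$ is zero-dimensional. The factor $2v$ reflects the fact that interaction terms span $2 \times 2$ composite particles, so a width-$v$ buffer must be retained when truncating.

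Dually, swapping axes and applying the same procedure produces anti-commuting partners as either two-dimensional operators periodic in both directions or one-dimensional loops running transverse to the first family. The pairing structure then follows from a geometric intersection argument: a zero-dimensional operator confined to $P(1,2v)$ can anti-commute with an operator $r$ only if $r$'s support meets every translate of that box, which, combined with $T_1(r)\sim r$ and $T_2(r)\sim r$, forces $r$ to be two-dimensional; a one-dimensional loop wrapping direction $\hat 2$ can have odd intersection parity with another one-dimensional loop only when the latter wraps direction $\hat 1$. Counting via Theorem \ref{theorem_partition} applied to strip partitions confirms that the total number of generators in each family sums to $2k$, so no third type can appear. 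Applying symplectic Gram--Schmidt on these reduced representatives delivers the canonical form stated in the theorem.

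The main obstacle I anticipate is the cleaning step. Translation equivalence gives $T_m(\ell)\sim \ell$, but decomposing the intertwining stabilizer into pieces that truncate the support of $\ell$ without creating uncontrolled ``tails'' at the periodic boundary requires a careful inductive bookkeeping, essentially a quantitative version of the Bravyi--Terhal cleaning argument adapted to the toric geometry. In particular, pinning down the width bound $2v$ for the zero-dimensional case, rather than some larger polynomial in $v$, requires tracking exactly which stabilizer generators are needed at each truncation step and verifying that the residual box cannot be shrunk further without destroying the anti-commutation with the partner $r_j$. Once this quantitative cleaning lemma is established, the rest of the argument reduces to counting and standard linear algebra over $\mathbb{F}_2$.
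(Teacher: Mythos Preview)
Your overall architecture is reasonable, but the core mechanical step --- the ``cleaning lemma'' --- does not work as you describe it, and the paper's route is genuinely different from what you sketch.

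You write that since $T_1(\ell)\sim\ell$, the product $\ell\cdot T_1(\ell)\in\mathcal{S}$, and a ``suitable stabilizer decomposition'' then pushes support out of a chosen column. But multiplying $\ell$ by that stabilizer simply returns $T_1(\ell)$: you have translated the support, not cleaned it. There is no reason the stabilizer $\ell\cdot T_1(\ell)$ admits a local factorization that truncates a column without spawning tails elsewhere, and Bravyi--Terhal cleaning only applies when the region being cleaned satisfies $g_R=0$, which a single column generically does not. So the inductive shrinking you envision has no engine.

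The paper's argument (carried out in the cited companion~\cite{Beni10b} and reprised for the three-dimensional case in the appendices) uses a different idea. One first confines $\ell$ to a one-row strip $P(n_1,1)$ and writes it as $\ell=[U_1,\ldots,U_x]$ with each $U_j$ a Pauli on one composite particle. Since the single-site Pauli group has only $2v$ independent generators, whenever $x>2v$ there is a nonzero binary vector $B$ with $\prod_j U_j^{B_j}=I$. Forming the product $\prod_j T_1^{x-j}(\ell^{B_j})$ --- equivalent to $\ell$ by translation equivalence when $B$ has odd weight --- produces an operator with an identity in its $x$th column, which therefore splits into two disjoint centralizer pieces of length at most $x-1$. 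Translating one piece back and multiplying recovers an equivalent logical operator of strictly smaller width. Iterating drives the width down to $2v$; what remains is the zero-dimensional part, and the complementary periodic piece is the one- or two-dimensional partner. The bound $2v$ thus arises from counting Pauli generators, not from interaction range, and special system sizes such as $n_1=2\cdot 2^{2v}!$ are used to guarantee the periodicities divide evenly before extending to arbitrary sizes via scale symmetry. This ``identity generating'' trick is the missing ingredient in your proposal; without it, the shrinking step does not go through.
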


It is worth presenting geometric shapes of logical operators graphically (Fig.~\ref{fig_2D_logical}(b)). There is a dimensional duality on geometric shapes of logical operators as follows:
\begin{align}
\left\{
\begin{array}{cc}
\mbox{0 dim}, & \mbox{1 dim} \\
\mbox{2 dim}, &  \mbox{1 dim}
\end{array}
\right\}.
\end{align}
Logical operators are periodic in the directions in which they are stretched. The code distance is upper bounded by $O(L)$ which is consistent with the upper bound $O(L^{D-1})$ obtained in~\cite{Bravyi09}. The existence of one-dimensional logical operators gives rise to topological order.

\begin{figure}[htb!]
\centering
\includegraphics[width=0.65\linewidth]{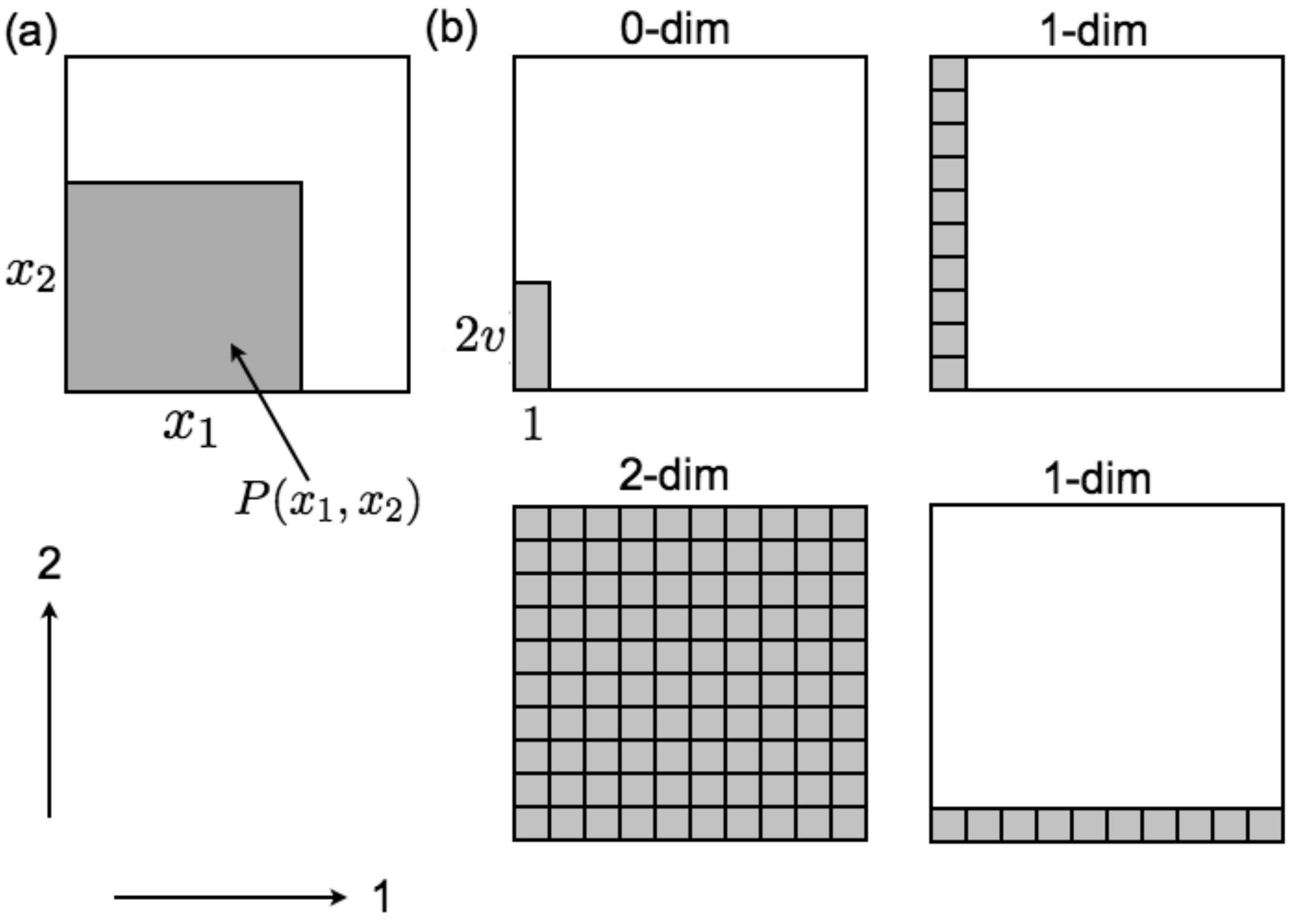}
\caption{Dimensional duality in a two-dimensional system. (a) A region of $x_{1}\times x_{2}$ composite particles is denoted as $P(x_{1},x_{2})$. (b) Geometric shapes of logical operators in a two-dimensional STS model.
} 
\label{fig_2D_logical}
\end{figure}

\textbf{Three-dimension:} Finally, let us proceed to coding properties of a three-dimensional STS model. A region with $x_{1}\times x_{2} \times x_{3}$ composite particles is denoted as $P(x_{1},x_{2},x_{3})$:
\begin{align}
P(x_{1},x_{2},x_{3}) \ \equiv \ \Big\{ \ P_{r_{1},r_{2},r_{3}} \ : \ 1 \ \leq \ r_{m} \ \leq \ x_{m},\ \ m \ = \ 1,2,3 \ \Big\}
\end{align}
where $P_{r_{1},r_{2},r_{3}}$ represents a composite particle at $(r_{1},r_{2},r_{3})$. Then, for logical operators in a three-dimensional STS model, the following theorem holds.

\begin{theorem}[\textbf{Dimensional Duality}]\label{theorem_3dim}
For a three-dimensional STS model, there exists a canonical set of logical operators:
\begin{align}
\left\{
\begin{array}{cccccc}
 \ell_{1}, & \cdots , &  \ell_{k}      \\
  r_{1}, & \cdots , &  r_{k}
\end{array}
 \right\}
\end{align}
whose pair of anti-commuting operators $\ell_{j}$ and $r_{j}$ has one of the following four properties.
\begin{itemize}
\item $\ell_{j}$ is a zero-dimensional logical operator defined inside $P(1,2v,(2v)^{2})$, while $r_{j}$ is a three-dimensional logical operator defined in a periodic way: $T_{1}(r_{j})=r_{j}$, $T_{2}(r_{j})=r_{j}$ and $T_{3}(r_{j})=r_{j}$.
\item $\ell_{j}$ is a one-dimensional logical operator defined inside $P(n_{1},2v,1)$ in a periodic way: $T_{1}(\ell_{j})=\ell_{j}$, while $r_{j}$ is a two-dimensional logical operator defined inside $P(1,n_{2},n_{3})$ in a periodic way: $T_{2}(r_{j})=r_{j}$ and $T_{3}(r_{j})=r_{j}$.
\item $\ell_{j}$ is a one-dimensional logical operator defined inside $P(1,n_{2},2v)$ in a periodic way: $T_{2}(\ell_{j})=\ell_{j}$, while $r_{j}$ is a two-dimensional logical operator defined inside $P(n_{1},1,n_{3})$ in a periodic way: $T_{1}(r_{j})=r_{j}$ and $T_{3}(r_{j})=r_{j}$.
\item $\ell_{j}$ is a one-dimensional logical operator defined inside $P(2v,1,n_{3})$ in a periodic way: $T_{3}(\ell_{j})=\ell_{j}$, while $r_{j}$ is a two-dimensional logical operator defined inside $P(n_{1},n_{2},1)$ in a periodic way: $T_{1}(r_{j})=r_{j}$ and $T_{2}(r_{j})=r_{j}$.
\end{itemize}
\end{theorem}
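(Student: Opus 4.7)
The strategy is to reduce the three-dimensional statement to the two-dimensional Dimensional Duality (the preceding theorem) by a slab-compression argument, using the Bi-partition Theorem and Translation Equivalence as the core tools. The organizing idea is that every logical operator has a well-defined \emph{dimensional profile} $S \subseteq \{1,2,3\}$ recording the directions in which it is genuinely extended, and that anti-commuting partners must carry complementary profiles, so that the symplectic $2k$-dimensional space of equivalence classes block-diagonalizes into the four claimed pair types.

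First I would separate operators according to their $\hat{1}$-extent. Let $\bar{R}$ be a slab of thickness $2v$ composite particles perpendicular to $\hat{1}$, and $R$ its thick complement. Using Translation Equivalence to slide any logical operator in $\hat{1}$, together with scale symmetry to keep $k$ fixed as $n_1$ varies, the Bi-partition Theorem $g_R + g_{\bar{R}} = 2k$ splits the equivalence classes into those confinable to a slab of thickness at most $2v$ and those that genuinely span $\hat{1}$ (and can therefore be chosen $T_1$-invariant). Inside a thin slab I would treat the stacked $2v$ layers as a single composite particle of enlarged inner dimension $v' = (2v)v$ and apply the two-dimensional theorem to the resulting effective $(\hat{2},\hat{3})$ STS model. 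That produces a further splitting: zero-dimensional operators compressed into $P(1,2v')$ inside the slab --- which becomes $P(1,2v,(2v)^2)$ when re-expressed in three-dimensional coordinates --- paired with fully $(\hat{2},\hat{3})$-periodic operators, together with one-dimensional operators periodic in either $\hat{2}$ or $\hat{3}$ and thin in the transverse direction.

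Next, the Bi-partition Theorem forces each operator with profile $S$ to be paired with a partner that extends in every direction of $\{1,2,3\}\setminus S$, for otherwise a bi-partition separating them would enforce commutation. The four cases of $(\ell_j,r_j)$ in the statement correspond exactly to the four complementary pairs of profiles $\emptyset \leftrightarrow \{1,2,3\}$, $\{1\}\leftrightarrow\{2,3\}$, $\{2\}\leftrightarrow\{1,3\}$, $\{3\}\leftrightarrow\{1,2\}$. A symplectic Gram--Schmidt within each profile class, analogous to the construction of the canonical form in the two-dimensional case, then delivers the stated pairing.

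The main technical obstacle is the precise bookkeeping of the asymmetric support sizes such as $P(1,2v,(2v)^2)$, $P(n_1,2v,1)$, and $P(1,n_2,2v)$. Each slab-compression step only guarantees confinement up to a factor of $2v$ in one direction, and iterating compressions in two successive directions can compound these factors to $(2v)^2$. Making this iteration rigorous requires tracking how stabilizer multiplications interact with translation-equivalence shifts and verifying that the process terminates at the claimed bounds after finitely many steps. A related subtlety is ensuring that within the three one-dimensional profile classes $\{1\},\{2\},\{3\}$ the canonical pairs do not become entangled across different axes --- one must show, for instance, that a $\hat{1}$-extended one-dimensional operator is always paired with a $(\hat{2},\hat{3})$-extended two-dimensional operator, which follows from the fact that no bi-partition along $\hat{2}$ or $\hat{3}$ alone can separate the two.
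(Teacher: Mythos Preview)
Your strategy identifies the right organizing principle --- the dimensional profile $S \subseteq \{1,2,3\}$ and the complementary pairing --- but there is a genuine gap at the step where you ``apply the two-dimensional theorem to the resulting effective $(\hat{2},\hat{3})$ STS model.'' A slab $\bar R = P(2v, n_2, n_3)$ sitting inside a larger three-dimensional system is \emph{not} itself a two-dimensional STS model: the stabilizers that straddle the slab boundary in the $\hat{1}$ direction are not available as stabilizers of the slab system, so the centralizer/stabilizer structure on the slab does not coincide with that of any standalone code, and the slab-confined logical operators need not satisfy the hypotheses of the two-dimensional theorem. The only legitimate way to manufacture a two-dimensional STS model from the three-dimensional one is to collapse an \emph{entire} direction into a tube --- for instance, to regard each $P(1, n_2, 1)$ column as a single composite particle, producing a genuine two-dimensional $(n_1, n_3)$ STS model. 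But then the effective inner dimension is $n_2 v$, not $(2v)v$, and the two-dimensional theorem only confines the ``zero-dimensional'' operators to width $2 n_2 v$ in $\hat 1$, which grows with the system size. Reducing this $n_2$-dependent width to the constant $2v$ is precisely the hard part of the proof, and your proposal does not supply a mechanism for it.

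The paper's argument attacks this reduction head-on with a shrinkage lemma: a logical operator supported in $P(x, n_2, 1)$ with $x > 2v$ can always be deformed to one supported in $P(x-1, n_2, 1)$. The proof is not bookkeeping. It introduces \emph{identity generating matrices} --- binary $x \times n_2$ matrices recording which products of translated columns of $\ell$ multiply to the identity --- and shows that whenever $x > 2v$ there exists an \emph{odd} such matrix, which, via translation equivalence, lets one split $\ell$ into two shorter centralizer pieces and recombine them into an equivalent operator of width $x-1$. Establishing the existence of an odd identity generating matrix for general $n_2 = 2^m$ requires assigning to each column operator a \emph{characteristic vector} in $\{0,1\}^m$ encoding its symmetry under dyadic $\hat 2$-translations, proving a multiplication law for these vectors (including a combinatorial sublemma on the parity of binomial coefficients), and running an iterative update procedure that monotonically raises the characteristic vectors until a contradiction appears. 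This machinery is the technical heart of the three-dimensional theorem and has no analogue in your outline; without it, the compression terminates at a width proportional to $n_2$ rather than at $2v$, and the claimed constant supports $P(1, 2v, (2v)^2)$, $P(n_1, 2v, 1)$, $P(1, n_2, 2v)$, $P(2v, 1, n_3)$ do not follow.
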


This is the main technical result of the present paper. We present the proof of the theorem in appendices~\ref{sec:decomposition} and~\ref{sec:construction}. It is worth representing geometric shapes of logical operators graphically (Fig.~\ref{fig_3D_logical}). Note that logical operators are periodic in the directions in which they are stretched. There is a dimensional duality on geometric shapes of logical operators as follows:
\begin{align}\left\{
\begin{array}{cc}
\mbox{0 dim}, &  \mbox{1 dim} \\
\mbox{3 dim}, &  \mbox{2 dim}
\end{array}\right\}
\end{align}

\begin{figure}[htb!]
\centering
\includegraphics[width=0.90\linewidth]{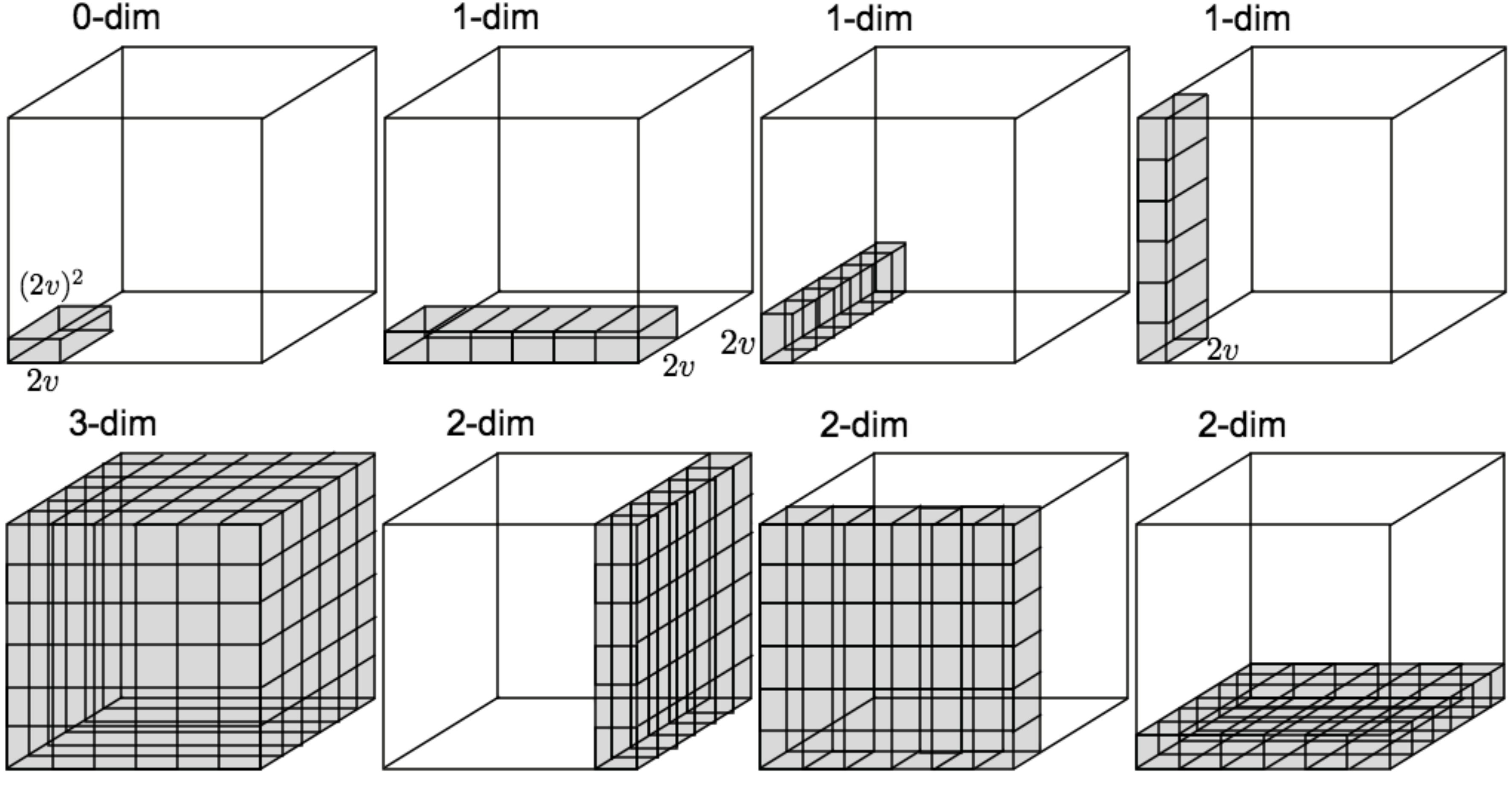}
\caption{Dimensional duality in a three-dimensional system. 
} 
\label{fig_3D_logical}
\end{figure}

As a result of this dimensional duality, one may find the upper bound on the code distance. When $n_{1}=n_{2}=n_{3}=L$, the code distance of a three-dimensional STS model is upper bounded as follows:
\begin{align}
d \ \leq \ 2vL \ \sim \ O(L).
\end{align}
Note that this bound is tight for the three-dimensional Toric code. The system may have topological order when it is free from zero-dimensional logical operators.

\textbf{Higher-dimensions:}
Though our primary interests are in coding properties of three-dimensional STS models, it is worth extending the analysis to higher dimensions. For a $D$-dimensional STS model ($D \geq 4$), we make the following \emph{conjectures}:

\begin{itemize}
\item In a $D$-dimensional system, $m$-dimensional and $(D-m)$-dimensional logical operators form anti-commuting pairs where $m$ is an integer.
\item The code distance is tightly upper bounded by $O(L^{\frac{D}{2}})$ when $D$ is even and by $O(L^{\frac{D-1}{2}})$ when $D$ is odd.
\end{itemize}

Note that generalizations of the Toric code to $D$-dimensional systems have the above dimensional duality for arbitrary integer $m$. See~\ref{sec:topology2} for constructions of $D$-dimensional Toric code. Such a dimensional duality arising in geometric shapes of logical operators may be viewed as a manifestation of Poincar\'{e} duality on coding properties of stabilizer codes supported on a $D$-torus, as very briefly discussed in~\ref{sec:topology}. 

However, we report that straightforward generalizations of analysis tools presented in this paper did not work for STS models with $D \geq 4$, and the conjecture above needs to be verified.

\textbf{Feasibility for a three-dimensional STS model:}
Finally, let us show that three-dimensional STS models do not work as self-correcting quantum memory. First, in order for the system to work as a quantum code, there must be an anti-commuting pair of one-dimensional and two-dimensional logical operators. We denote them as $\ell$ and $r$ respectively. Let us assume that the system is initially in the eigenstate of $\ell=1$ denoted as $|\psi (\ell=1)\rangle$. Then, $|\psi (\ell=-1)\rangle = r |\psi (\ell=1)\rangle$ and $r$ is a two-dimensional logical operator. Since the boundary of a subpart of $r$ is one-dimensional, the excitation energies associated with intermediate states are $O(L)$. Thus, the encoding with respect to $\ell$ is self-correcting. 

Next, let us assume that the system is initially in the eigenstate of $r=1$ denoted as $|\psi (r=1)\rangle$. Then, $|\psi (r=-1)\rangle = \ell |\psi (r=1)\rangle$ and $\ell$ is a two-dimensional logical operator. Since the boundary of a subpart of $\ell$ is zero-dimensional, the excitation energies associated with intermediate states are $O(1)$. Thus, the encoding with respect to $r$ is not self-correcting, and the qubit storage time is $\tau \sim O(1)$, independent of the system size. Therefore, while such a system is a good quantum code with the code distance $O(L)$, it works only as self-correcting \emph{classical} memory, with the bit storage time $\tau \sim \mbox{EXP}(L)$. However, in order for the system to work as self-correcting quantum memory, there must be a pair of anti-commuting two-dimensional logical operators. 

\textbf{Summary and discussion:}
We summarize coding properties of STS models based on dimensions of pairs of logical operators:
\begin{align}
\begin{array}{ccccccc}
\mbox{Spatial dim} & \mbox{Logical operators}  & \mbox{Code distance} & \mbox{Bit storage time} & \mbox{Qubit storage time} & \mbox{Self-correction} \\ \hline
\mbox{1 dim} & \mbox{0 dim + 1 dim} & O(1)     & O(1)         & O(1)&  \\
\mbox{2 dim} & \mbox{0 dim + 2 dim} & O(1)     &\mbox{EXP}(L) & O(1)&\mbox{classical}    \\
\mbox{2 dim} & \mbox{1 dim + 1 dim} & O(L)     & O(1)         & O(1)&    \\
\mbox{3 dim} & \mbox{0 dim + 3 dim} & O(1)     &\mbox{EXP}(L) & O(1)& \mbox{classical}    \\
\mbox{3 dim} & \mbox{1 dim + 2 dim} & O(L)     &\mbox{EXP}(L) & O(1)&\mbox{classical}     \\
\mbox{4 dim} & \mbox{2 dim + 2 dim} & O(L^{2}) &\mbox{EXP}(L) & \mbox{EXP}(L)&\mbox{quantum}
\end{array}\notag
\end{align}
where, for $D=4$, we presented coding properties of four-dimensional Toric code. 

\section{Thermal stability of topological order}\label{sec:topo}

In this section, we present another main result of the present paper, concerning the thermal stability of topological order at finite temperature by establishing the connection between the feasibility of self-correcting quantum memory and the stability of topological order. 

In section~\ref{sec:topo1}, we begin by establishing the connection between self-correcting classical memory and the thermal stability of ferromagnetic order at finite temperature. In section~\ref{sec:topo2}, we establish the connection between quantum codes and topological order at zero temperature. In section~\ref{sec:topo3}, we establish the connection between self-correcting quantum memory and topological order at finite temperature, and analyze the thermal stability of topological order arising in STS models. Note that discussion in this section is rather heuristic, and more rigorous treatment follows in~\ref{sec:topo_ap}

\subsection{Classical equivalence}\label{sec:topo1}

In this subsection, we establish the connection between self-correcting classical memory and thermal stability of ferromagnetic order at finite temperature:
\begin{align}
\begin{array}{ccc}
\mbox{Classical code} & \leftrightarrow & \mbox{Ferromagnetic order at $T=0$}\\
\mbox{Self-correcting classical memory} & \leftrightarrow & \mbox{Ferromagnetic order at $T>0$}
\end{array}\notag
\end{align}

\textbf{Two-dimensional ferromagnet:} We have seen that two-dimensional Ising model works as self-correcting classical memory since the energy barrier separating two degenerate ground states is $O(L)$. This self-correcting property of two-dimensional Ising model is closely related to the thermal stability of ferromagnetic order as seen from the following thermodynamic argument. 

Consider two-dimensional Ising model:
\begin{align}
H(\epsilon) \ = \ - \sum_{i,j}Z_{i,j}Z_{i+1,j} - \sum_{i,j}Z_{i,j}Z_{i,j+1} - \epsilon \sum_{i,j} Z_{i,j}
\end{align}
with an initial bias (symmetry-breaking field); $-\epsilon \sum_{i,j} Z_{i,j}$ for $\epsilon \geq 0$. As a result of an initial bias, the system is not degenerate anymore, and the ground state is $|0\cdots0\rangle$. The thermal stability of ferromagnetic order can be analyzed through the expectation value of the total magnetization $m$:
\begin{align}
m \ = \  \frac{1}{N} \sum_{i,j} Z_{i,j}
\end{align}
at finite temperature, which can be computed as follows:
\begin{align}
\langle m \rangle \ = \ \lim_{\epsilon \rightarrow 0} \frac{1}{\beta}\frac{\partial \log Z }{\partial \epsilon}
\end{align}
where the partition function is: $Z(\beta, \epsilon)  =  \text{Tr} e^{- \beta H(\epsilon)}$. Here, we evaluate the expectation value of $m$ at the limit where $\epsilon \rightarrow 0$ after we take the limit where $N$ goes to infinity. (Otherwise, there is no use of introducing an initial bias). Then, we have:
\begin{align}
\left\langle  \ \frac{1}{N} \sum_{i,j} Z_{i,j} \ \right\rangle_{\epsilon \rightarrow 0} \ &= \ 1 \qquad (T \ = \ 0) \\
1 \ > \ \left\langle  \ \frac{1}{N} \sum_{i,j} Z_{i,j} \ \right\rangle_{\epsilon \rightarrow 0} \ &> \ 0 \qquad ( T_{c} \ > T \ > \ 0) \\
\left\langle  \ \frac{1}{N} \sum_{i,j} Z_{i,j} \ \right\rangle_{\epsilon \rightarrow 0} \ &= \ 0 \qquad ( T \ > \ T_{c} ) 
\end{align}
where $T_{c}$ is some finite transition temperature, as plotted in Fig.~\ref{fig_topo_2}(a). 

One may notice that the total magnetization has some non-zero value as long as the temperature is below the transition temperature $T_{c}$. This indicates that the system is stable against thermal fluctuations for $T<T_{c}$. In particular, the system properties for $T_{c} > T > 0$ are close to the ground state properties at $T=0$. However, the system properties change drastically at $T=T_{c}$, and for $T > T_{c}$, the total magnetization vanishes. This indicates that the system undergoes a phase transition at $T=T_{c}$, and is unstable against thermal fluctuations for $T > T_{c}$.

\begin{figure}[htb!]
\centering
\includegraphics[width=0.70\linewidth]{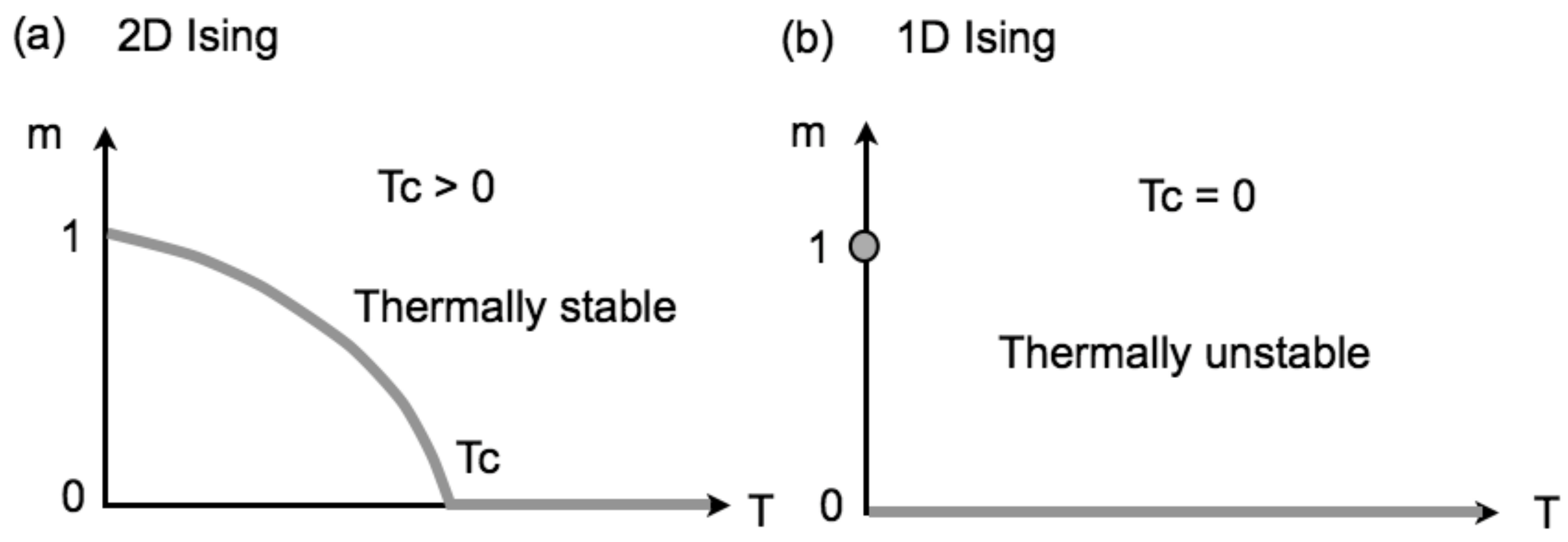}
\caption{Total magnetizations. (a) Two-dimensional Ising model. (b) One-dimensional Ising model.
} 
\label{fig_topo_2}
\end{figure}

Now, one can establish the connection between self-correcting properties and the thermal stability. In a coding theoretical language, the computation of the magnetization at finite temperature may be interpreted as follows. We first apply an initial bias and encode a bit $0$ to a ground state $|0\cdots\rangle$. Then, we let the system interact with the external environment at finite temperature, and wait until the system reaches the equilibrium while slowly turning off the initial bias. Finally, we decode the initially encoded bit by measuring the total magnetization. Therefore, self-correcting property of two-dimensional Ising model with exponentially long bit storage time is a direct consequence of the thermal stability of ferromagnetic order at finite temperature. 

In general, the total magnetization we have used above is called an \emph{order parameter} in condensed matter physics since it can be used to distinguish different phases of matters. On the other hand, quantum coding theoretically, one may view the total magnetization as a symmetric summation of a zero-dimensional logical operator $Z_{i,j}$ with which a classical bit is encoded. Later, we shall see that expectation values of logical operators can be used as order parameters to characterize the stability of topological order.

\textbf{One-dimensional ferromagnet:} Next, let us see that one-dimensional Ising model, which does not work as self-correcting classical memory, is not stable against thermal fluctuations. Consider one-dimensional Ising model:
\begin{align}
H(\epsilon) \ = \ - \sum Z_{j}Z_{j+1} - \epsilon \sum Z_{j}
\end{align}
with an initial bias, and compute the expectation values of the total magnetization. Then, we have
\begin{align}
\left\langle  \ \frac{1}{N} \sum_{j} Z_{j} \ \right\rangle_{\epsilon \rightarrow 0} \ = \ 1 \qquad ( T \ = \ 0) \\
\left\langle  \ \frac{1}{N} \sum_{j} Z_{j} \ \right\rangle_{\epsilon \rightarrow 0} \ = \ 0 \qquad ( T \ > \ 0)
\end{align}
as plotted in Fig.~\ref{fig_topo_2}(b). One may notice that the total magnetization becomes zero at any finite temperature. This indicates that the system is not stable against thermal fluctuations at any finite temperature, and the system undergoes a phase transition at $T=0$ as seen from the sudden change of the magnetization at $T=0$. The system properties for $T=0$ and for $T>0$ are significantly different, and the ground state properties are not stable against thermal fluctuations. This is the underlying reason why one-dimensional Ising model has $O(1)$ bit storage time, and does not work as self-correcting classical memory. 

\textbf{Summary of the equivalence:}
With observations above, one may notice that a large energy barrier, which is essential to self-correcting properties, is the key to the thermal stability of ferromagnetic order. In Fig.~\ref{fig_topo_3}, we give a summary of the equivalence concerning classical memories. However, it is not clear whether a large energy barrier is sufficient for the thermal stability of the system or not. We address this issue briefly in~\ref{sec:topo_ap3}.

\begin{figure}[htb!]
\centering
\includegraphics[width=0.70\linewidth]{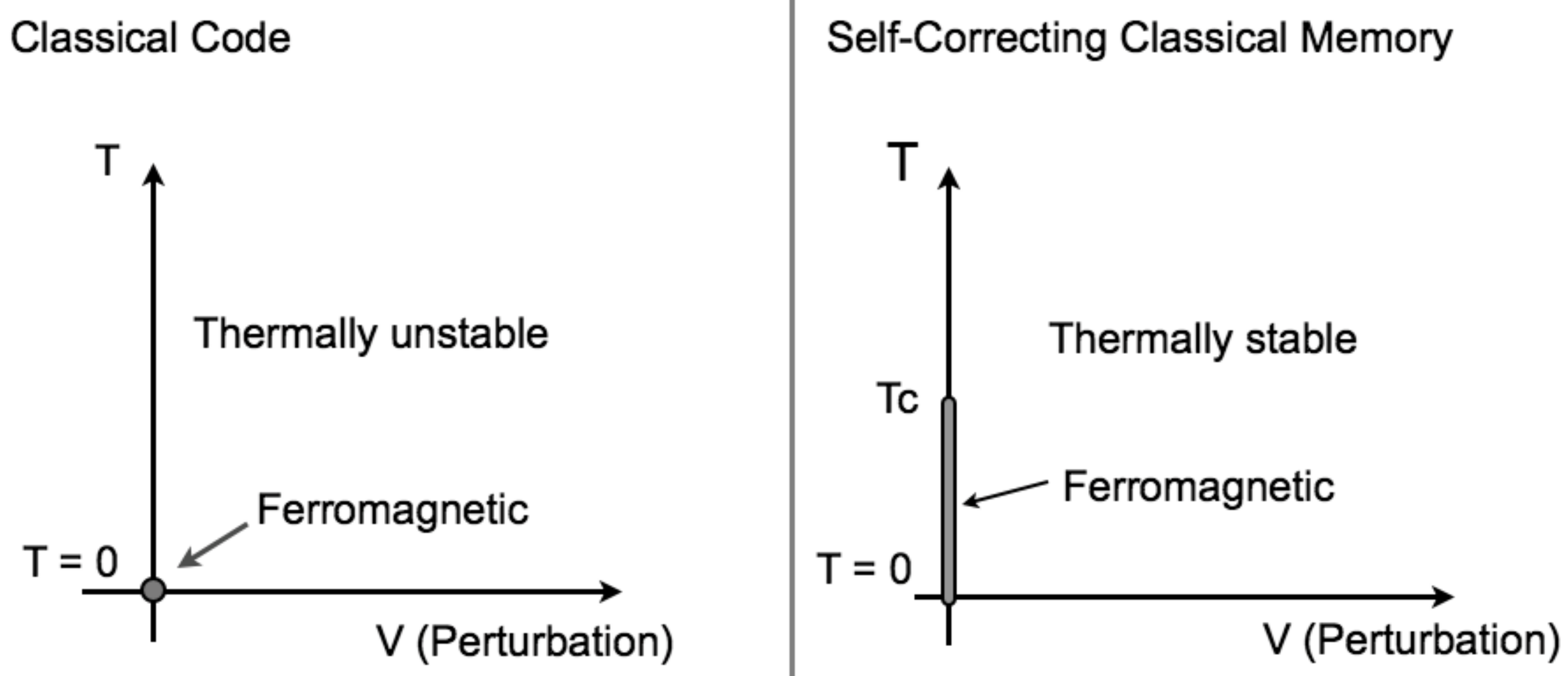}
\caption{A summary of the classical equivalence.
} 
\label{fig_topo_3}
\end{figure}

Here, it is worth noting that ferromagnetic systems, such as one-dimensional and two-dimensional Ising model are not stable against perturbations, as described in Fig.~\ref{fig_topo_3}. If one chooses $V$ as the initial bias, one can easily break the ground state degeneracy, and thus, the ground state property is not stable against perturbations. This indicates that ferromagnetic systems are not topologically ordered (see section~\ref{sec:topo2} too).

\subsection{Existence of topological order and quantum codes}\label{sec:topo2}

Now, we return to the main discussion in this section, concerning the stability of topological order at finite temperature. In this subsection and the next, we establish the connection between self-correcting quantum memory and the thermal stability of topological order:
\begin{align}
\begin{array}{ccc}
\mbox{Quantum code} & \leftrightarrow & \mbox{Topological order at $T=0$}\\
\mbox{Self-correcting quantum memory} & \leftrightarrow & \mbox{Topological order at $T>0$}
\end{array}\notag
\end{align}
In this subsection, we describe the definition of topological order in spin systems on a lattice, and argue that quantum codes have topological order at zero temperature. Note that our discussion closely follows pioneering works~\cite{Nussinov09, Bravyi10b}.

\textbf{Phenomenological definition:}
Let us first begin by describing the phenomenological definition of topological order which is commonly used in physics community. A system is said to have topological order when its ground state properties do not change significantly under any types of small perturbations~\cite{Wen90}. 

\begin{definition}[Stability against perturbations]\label{def:topo}
Consider a degenerate spin system defined on some closed geometric manifold governed by a Hamiltonian $H$. The system is said to be topologically ordered at $T=0$ if and only if the system satisfies the following conditions:
\begin{itemize}
\item There exists some finite positive number $\delta$ such that, for any perturbations $V$:
\begin{align}
H' \ = \ H + V, \qquad V \ = \ \sum_{j} V_{j}
\end{align}
where $V_{j}$ are locally defined and $|V_{j}|\leq \delta$, the ground state degeneracy is not broken at the thermodynamic limit (Fig.~\ref{fig_topo_def}). 
\item The perturbed ground space $G$ can be approximated by the original ground space $G_{0}$ through some local unitary transformation:
\begin{align}
UG_{0}U^{\dagger} \ \approx \ G
\end{align}
which can be represented as
\begin{align}
U \ = \ \int_{0}^{1} \exp(- i \hat{h} t)dt
\end{align}
where $\hat{h}$ is some hermitian operator which is a summation of local terms with finite amplitudes.
\end{itemize}
\end{definition}
 
\begin{figure}[htb!]
\centering
\includegraphics[width=0.70\linewidth]{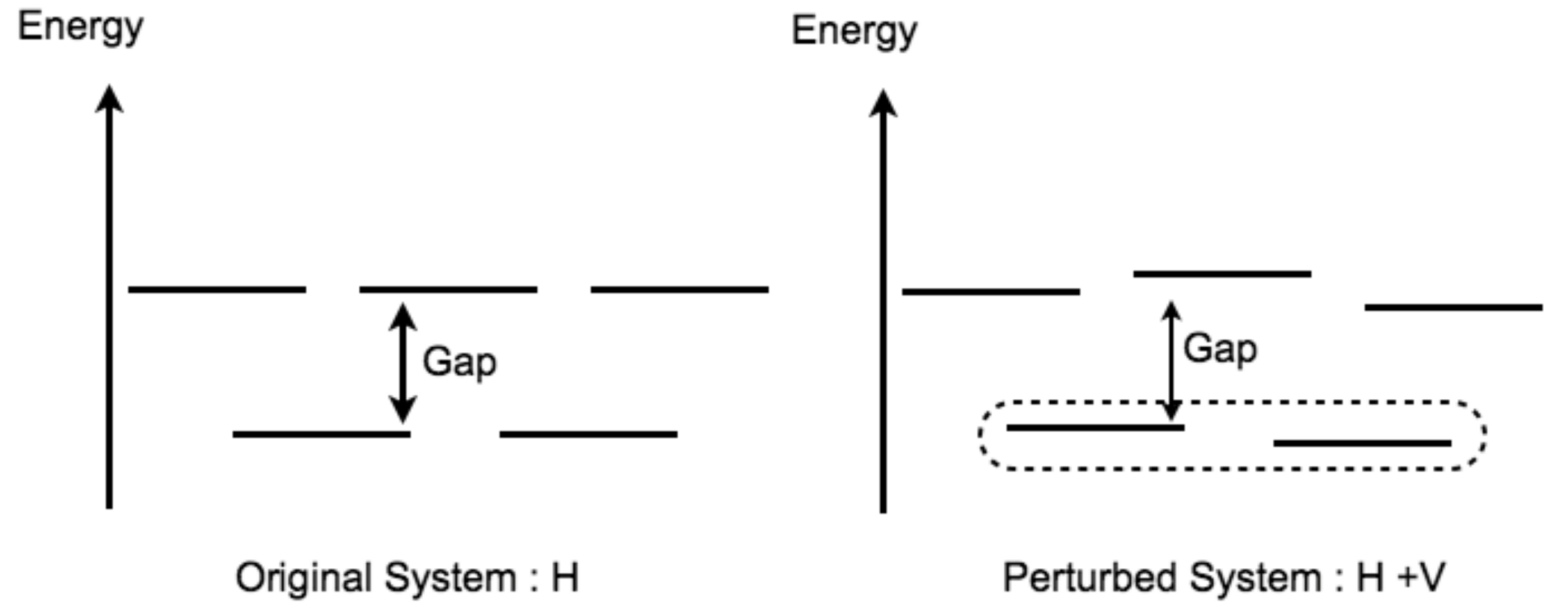}
\caption{The stability of the ground state properties against local perturbations. 
} 
\label{fig_topo_def}
\end{figure}
 
It is worth noting that a classical ferromagnet (Ising model) is not topologically ordered while it works as self-correcting classical memory. First of all, the ground state degeneracy of a classical ferromagnet is broken if a local magnetic field term $Z_{j}$ is added as a perturbation:
\begin{align}
V \ = \ - \epsilon \sum_{j} Z_{j}.
\end{align}
While the original Hamiltonian has $|0\cdots0\rangle$ and $|1\cdots1\rangle$ as ground states, the perturbed Hamiltonian has $|0\cdots0\rangle$ as a single ground state for positive $\epsilon$. Second, the ground state property significantly changes as a result of perturbations. While a classical ferromagnet has $\frac{1}{\sqrt{2}}(|0\cdots0\rangle + |1\cdots1\rangle)$ as a ground state, the perturbed ground state is $|0\cdots0\rangle$, and there is no local unitary transformation which transforms $\frac{1}{\sqrt{2}}(|0\cdots0\rangle + |1\cdots1\rangle)$ into $|0\cdots0\rangle$~\cite{Bravyi06}. As a straightforward extension of this discussion, one may notice that STS models with zero-dimensional logical operators are not topologically ordered, and topological order may exist only in two or higher-dimensional systems due to the dimensional duality of logical operators.

\textbf{Coding theoretical definition:}
While the definition above captures the stability of ground state properties in topologically ordered systems, one may wonder what kinds of systems are actually topologically ordered. Also, one may wonder if there is a universal feature commonly shared among all the topologically ordered systems. 

Somewhat surprisingly, a quantum coding theoretical viewpoint gives a hint to answer these questions by providing an alternative definition of topological order which may capture universal properties of topologically ordered systems~\cite{Bravyi06}. In fact, a system of spins on a lattice may be said to be topologically ordered when the ground space is separated from excited states by a finite energy gap at the thermodynamic limit and the ground space realizes a quantum code whose code distance $d$ is comparable to the system size~\cite{Bravyi10b}. One may rewrite this characterization of topological order through coding properties more explicitly in the following way.\footnote{Strictly speaking, this definition is not complete. See~\cite{Bravyi10b} for a complete definition.} 

\begin{definition}[Quantum code]\label{def:red}
Consider a degenerate spin system defined on some closed geometric manifold. The system is said to be topologically ordered when it satisfies the following condition: 
\begin{itemize}
\item Let the degenerate ground state space be $G$, which is protected by a finite energy gap. Consider reduced density matrices of ground states for a region $R$ which is contractable to a single point: 
\begin{align} 
\rho_{R}(|\psi\rangle) \ \equiv \ Tr_{\bar{R}} (|\psi\rangle\langle\psi|).
\end{align}
So, $R$ is a topologically trivial \emph{zero-dimensional region} (Fig.~\ref{fig_reduce}). Then, all the reduced density matrices of degenerate ground states are the same at the thermodynamic limit
\begin{align}
\rho_{R}(|\psi\rangle) \ = \ \rho_{R}(|\psi'\rangle) \qquad \mbox{for all} \quad |\psi\rangle, \ |\psi'\rangle \ \in \ G \quad \mbox{at}\quad N \ \rightarrow \ \infty.
\end{align}
\end{itemize}
\end{definition}

\begin{figure}[htb!]
\centering
\includegraphics[width=0.60\linewidth]{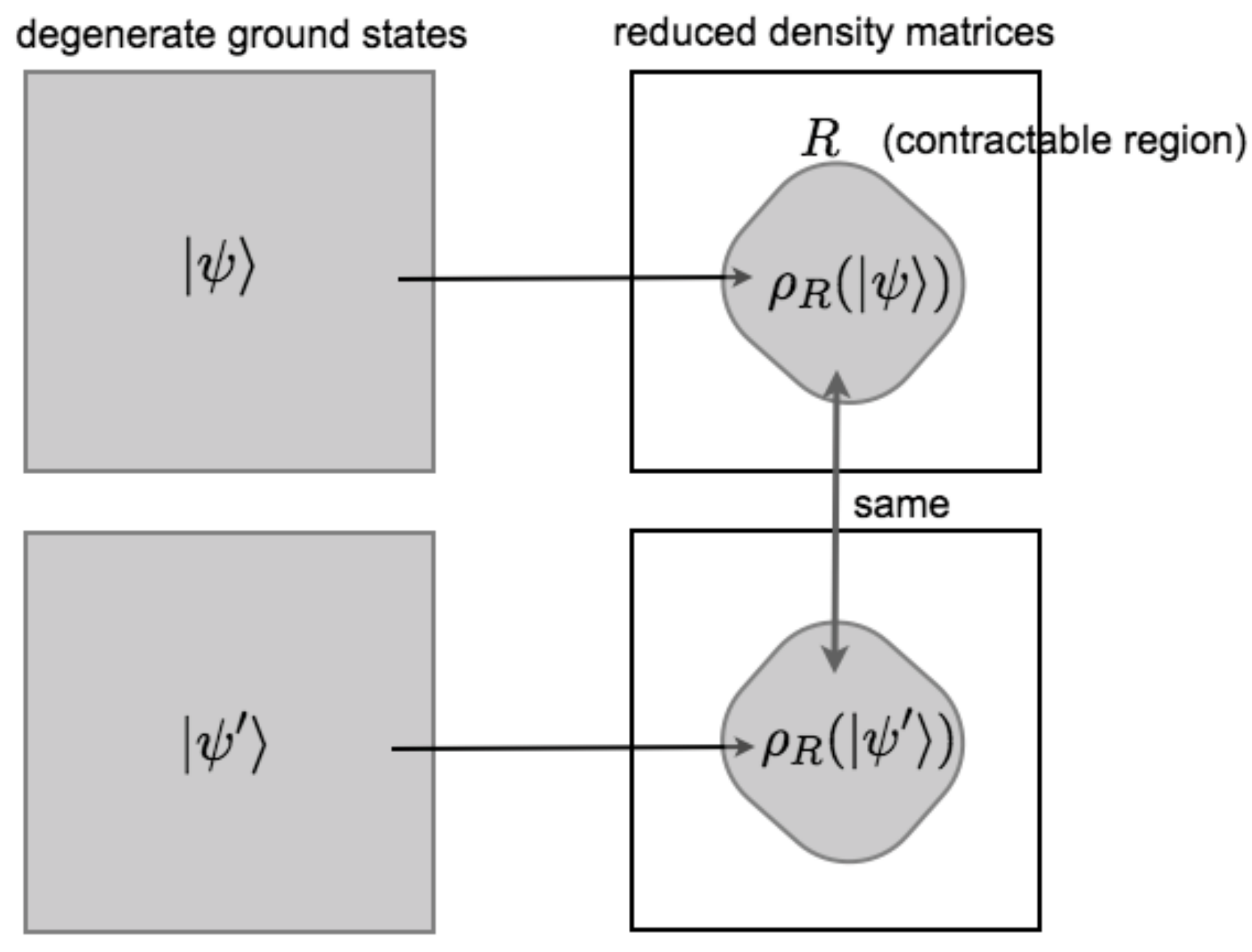}
\caption{A coding theoretical definition of topological order.
} 
\label{fig_reduce}
\end{figure}

Let us see why the condition above ensures that the system serves as a quantum code with a macroscopic code distance. When all the reduced density matrices are the same for all the topologically trivial regions, any errors acting inside such regions cannot change the encoded logical qubits. In order to change logical qubits, some global operator acting on a region with a topologically non-trivial geometric shape is necessary. Such a global operator has a weight equal to the code distance. Therefore, the ground space of such a system works as a good quantum code with a macroscopic code distance.\footnote{The definition used here is slightly different from the one used in~\cite{Nussinov09} where the existence of the energy gap is not required. The difference comes apparent whether a two-dimensional Bacon's subsystem code~\cite{Bacon06} is classified as topologically ordered system or not. The ground state space of this subsystem code has the same reduced density matrix for any zero-dimensional regions. However, it is not likely to have a finite energy gap since it is highly frustrated, and thus, its coding properties are likely to be unstable against perturbations.}

\textbf{Equivalence of two definitions:}
Now, let us see that definition~\ref{def:red} implies definition~\ref{def:topo}. Under a small perturbation $V$ which is a summation of local terms, we perform a perturbative analysis. Since the ground states are separated by excited states by a finite energy gap, only the coupling between degenerate ground states is dominant. Since there is no local operator to couple two ground states, one needs to consider $O(L)$th order perturbation to have a coupling between degenerate ground states. Since such a coupling will be exponentially suppressed by a factor $\exp(- L / L_{0})$ where $L_{0}$ is some finite length scale, and the energy splitting is exponentially suppressed and goes to zero as $L$ goes to infinity. Thus, the ground state degeneracy is protected at the thermodynamic limit. When the ground state degeneracy is protected and the ground space is separated from excited states by a finite energy gap, one can approximate the ground space from the original unperturbed ground space according to the theory of adiabatic continuation~\cite{Bravyi10b, Hastings05}. A complete mathematical proof was presented in~\cite{Bravyi10b}. However, it is not known if definition~\ref{def:topo} implies definition~\ref{def:red} or not. 

Here, we briefly mention the usage of the expression ``topological order'' since it is sometimes used in a fuzzy and elusive way. In the present paper, by topological order, we mean the ground state properties of topologically ordered systems. By its definition, topological order is stable against any types of small local perturbations, meaning that the ground state degeneracy is protected and perturbed ground states can be approximated through local unitary transformations. Therefore, by the ground state properties, we mean any global features of ground states which are free from length scales and are not affected by local unitary transformations. We will return to discussion on the meaning of ``global features'' later.

\textbf{Topological order in STS models:}
Let us conclude this subsection on the definition of topological order by seeing that two-dimensional and three-dimensional STS models can have topological order according to definition~\ref{def:red}. Consider the case when two-dimensional STS model has $k$ pairs of one-dimensional logical operators. Let us split the entire system into two complementary regions $P(n_{1}-1,n_{2}-1)$ and $\overline{P(n_{1}-1,n_{2}-1)}$ where $P(n_{1}-1,n_{2}-1)$ is a region with $n_{1}-1 \times n_{2}-1$ composite particles. Note that $P(n_{1}-1,n_{2}-1)$ is topologically trivial zero-dimensional region since one can shrink it into a single point continuously. Since there are $2k$ logical operators defined inside $\overline{P(n_{1}-1,n_{2}-1)}$, we have $g_{\overline{P(n_{1}-1,n_{2}-1)}}=2k$. However, this means that $g_{P(n_{1}-1,n_{2}-1)}=0$ from theorem~\ref{theorem_partition}, and there is no logical operator defined inside $P(n_{1}-1,n_{2}-1)$. Then, all the reduced density matrices are the same, and such a two-dimensional STS model is topologically ordered since it satisfies definition~\ref{def:red}, and thus, definition~\ref{def:topo}. A similar discussion holds for a three-dimensional STS model with pairs of one-dimensional and two-dimensional logical operators. 

\subsection{Quantum equivalence}\label{sec:topo3}

In this subsection, we establish the connection between self-correcting quantum memory and topological order at finite temperature, and analyze the thermal stability of topological order arising in STS models.

\textbf{Two-dimensional Toric code:} Two-dimensional Toric code is topologically ordered since it has a macroscopic code distance: $d \sim O(1)$. Yet, it does not work as self-correcting quantum memory since the energy barrier is $\Delta E \sim O(1)$. This is closely related to the thermal instability of topological order as we shall see below. Let us perform a thermodynamic analysis on two-dimensional Toric code in a way similar to Ising model by adding an initial bias:
\begin{align}
H_{\ell}(\epsilon) \ = \ H_{Toric} - \epsilon \sum_{j=1}^{L} T_{2}^{j}(\ell) \\
H_{r}(\epsilon) \ = \ H_{Toric} - \epsilon \sum_{i=1}^{L} T_{1}^{i}(r)
\end{align}
where $\ell$ and $r$ are one-dimensional logical operators extending in the $\hat{1}$ and $\hat{2}$ directions respectively. Here, we define the following ``normalized logical operators'':
\begin{align}
m_{\ell} \ &= \ \frac{1}{L}\sum_{j=1}^{L} T_{2}^{j}(\ell) \\
m_{r} \ &= \ \frac{1}{L}\sum_{i=1}^{L} T_{1}^{i}(r)
\end{align}
by taking symmetric summations of logical operators, as we did for Ising model. Then, we have
\begin{align}
\left\langle   m_{\ell} \right\rangle_{\epsilon \rightarrow 0} \ = \ \left\langle   m_{r} \right\rangle_{\epsilon \rightarrow 0} \ = \ 1 \qquad ( T \ = \ 0) \\
\left\langle   m_{\ell} \right\rangle_{\epsilon \rightarrow 0} \ =  \ \left\langle   m_{r} \right\rangle_{\epsilon \rightarrow 0} \ = \ 0 \qquad ( T \ > \ 0)
\end{align}
as plotted in Fig.~\ref{fig_topo_4}(a) where $\langle m_{\ell}\rangle$ is evaluated for $H_{\ell}(\epsilon)$, and $\langle m_{r}\rangle$ is evaluated for $H_{r}(\epsilon)$. This indicates that the system is not stable against thermal fluctuations at any finite temperature, meaning that topological order arising in two-dimensional Toric code is thermally unstable. This implies that one cannot read out initially encoded qubit by measuring $m_{\ell}$ and $m_{r}$.

\textbf{Three-dimensional Toric code:} Next, let us consider thermodynamic properties of three-dimensional Toric code. While three-dimensional Toric code does not work as self-correcting quantum memory, it works as self-correcting \emph{classical} memory. This is because it has pairs of one-dimensional and two-dimensional logical operators, and as a result, the bit storage time is $\tau \sim \mbox{EXP($L$)}$ while the qubit storage time is $\tau \sim O(1)$. These coding properties are closely related to thermodynamic properties of three-dimensional Toric code, as seen from expectation values of logical operators:
\begin{align}
m_{\ell} \ &= \ \frac{1}{L}\sum_{z=1}^{L} T_{3}^{z}(\ell) \\
m_{r} \ &= \ \frac{1}{L^{2}}\sum_{x,y=1}^{L} T_{1}^{x}T_{2}^{y}(r)
\end{align}
where $\ell$ is a two-dimensional logical operator extending in the $\hat{1}$ and $\hat{2}$ directions, while $r$ is a one-dimensional logical operator extending in the $\hat{3}$ direction. Their expectation values behave as follows:
\begin{align}
\left\langle   m_{\ell} \right\rangle_{\epsilon \rightarrow 0} \ = \ 1 \qquad ( T \ = \ 0) \\
\left\langle   m_{\ell} \right\rangle_{\epsilon \rightarrow 0} \ = \ 0 \qquad ( T \ > \ 0)
\end{align}
and
\begin{align}
\left\langle  m_{r} \right\rangle_{\epsilon \rightarrow 0} \ &= \ 1 \qquad (T \ = \ 0) \\
1 \ > \ \left\langle m_{r} \right\rangle_{\epsilon \rightarrow 0} \ &> \ 0 \qquad ( T_{c} \ > T \ > \ 0) \\
\left\langle m_{r} \right\rangle_{\epsilon \rightarrow 0} \ &= \ 0 \qquad ( T \ > \ T_{c} ) 
\end{align}
where $T_{c}$ is some finite transition temperature, as plotted in Fig.~\ref{fig_topo_4}(b). This implies that three-dimensional Toric code undergoes phase transitions both at $T=0$ and $T=T_{c}$, and the ground state properties are not completely stable against thermal fluctuations at any finite temperature. Yet, the ground state properties partially survive at finite temperature as a direct consequence of being self-correcting classical memory. 

\begin{figure}[htb!]
\centering
\includegraphics[width=0.70\linewidth]{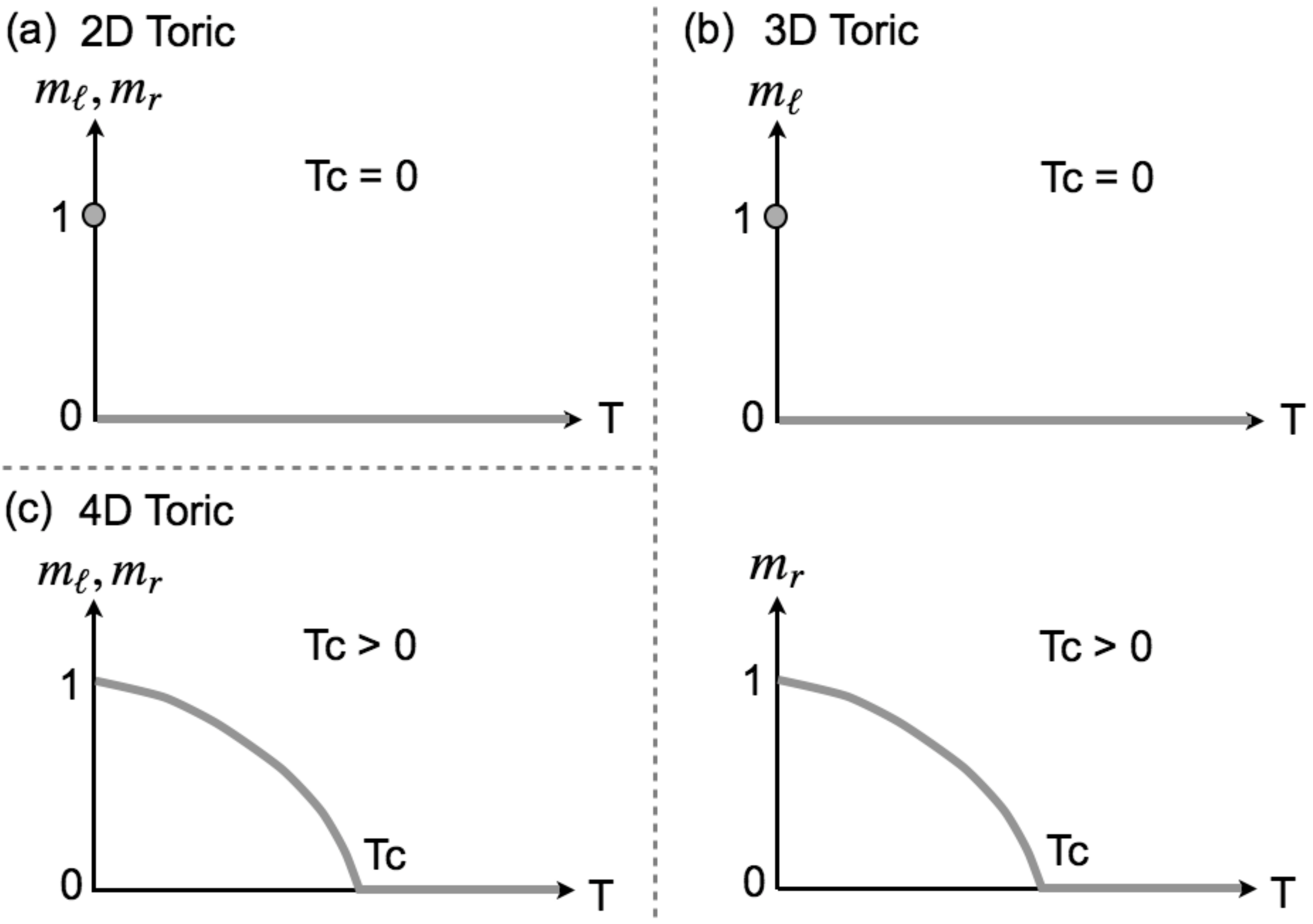}
\caption{Expectation values of logical operators. (a) Two-dimensional Toric code. (b) Three-dimensional Toric code. (c) Four-dimensional Toric code. 
} 
\label{fig_topo_4}
\end{figure}

\textbf{Four-dimensional Toric code:} Finally, let us see that four-dimensional Toric code, which works as self-correcting quantum memory, is stable against thermal fluctuations. Here, we define 
\begin{align}
m_{\ell} \ = \ \frac{1}{L^{2}}\sum_{z,w=1}^{L} T_{3}^{z}T_{4}^{w}(\ell) \\
m_{r} \ = \ \frac{1}{L^{2}}\sum_{x,y=1}^{L} T_{1}^{x}T_{2}^{y}(r)
\end{align}
where $\ell$ is a two-dimensional logical operator extending in the $\hat{1}$ and $\hat{2}$ directions, and $r$ is a two-dimensional logical operator extending in the $\hat{3}$ and $\hat{4}$ directions. Then, we have
\begin{align}
\left\langle  m_{\ell} \right\rangle_{\epsilon \rightarrow 0} \ &= \ \left\langle  m_{r} \right\rangle_{\epsilon \rightarrow 0} \ = \ 1 \qquad (T \ = \ 0) \\
1 \ > \ \left\langle  m_{\ell} \right\rangle_{\epsilon \rightarrow 0} \ &= \ \left\langle m_{r} \right\rangle_{\epsilon \rightarrow 0} \ > \ 0 \qquad ( T_{c} \ > T \ > \ 0) \\
\left\langle  m_{\ell} \right\rangle_{\epsilon \rightarrow 0} \ &= \ \left\langle m_{r} \right\rangle_{\epsilon \rightarrow 0} \ = \ 0 \qquad ( T \ > \ T_{c} ) 
\end{align}
where $T_{c}$ is some finite transition temperature, as plotted in Fig.~\ref{fig_topo_4}(c). This implies that the ground state properties are stable against thermal fluctuations and topological order arising in four-dimensional Toric code is stable at finite temperature. 

\textbf{Summary of the equivalence:}
With these observations, one may notice that large energy barrier inside the ground space, which is essential to self-correcting properties, is the key to the thermal stability of topological order. In Fig.~\ref{fig_topo_7}, we give a summary of the equivalence concerning quantum memory. 

\begin{figure}[htb!]
\centering
\includegraphics[width=0.70\linewidth]{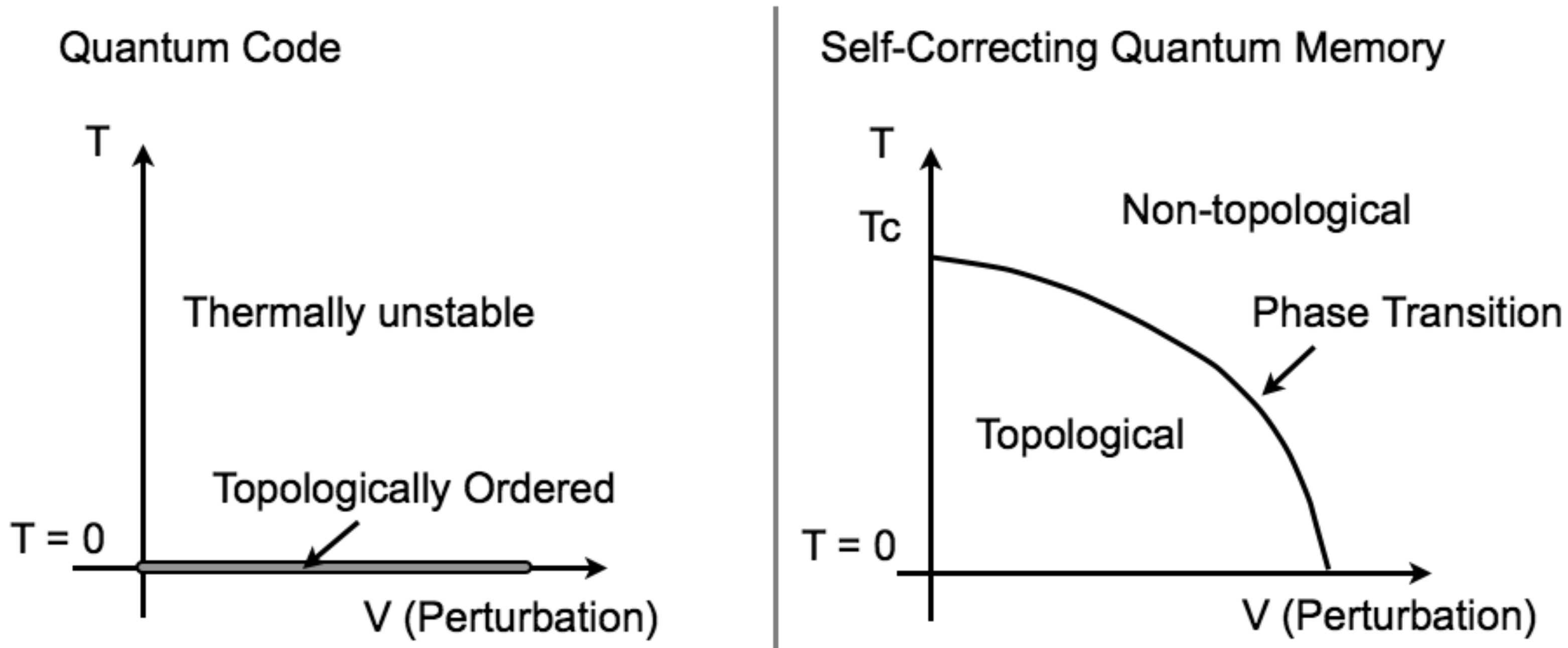}
\caption{The quantum equivalence.
} 
\label{fig_topo_7}
\end{figure}

With this connection between the feasibility of self-correcting quantum memory and the thermal stability of topological order, we conclude that topological order arising in STS models is not stable against thermal fluctuations. In other words, for $D \leq 3$, there is no system which is stable against both thermal fluctuations and local perturbations simultaneously. While discussion here is rather heuristic, we give a more rigorous treatment on the definition of the thermal stability of topological order in~\ref{sec:topo_ap}. We summarize physical properties of STS models based on dimensions of pairs of logical operators:
\begin{align}
\begin{array}{ccccccc}
\mbox{Spatial dim} & \mbox{Logical operators}  & \mbox{Local perturbations} & \mbox{Thermal fluctuations} & \mbox{Memory property}  \\ \hline
\mbox{1 dim} & \mbox{0 dim + 1 dim} &               &               & \mbox{Classical code}    \\
\mbox{2 dim} & \mbox{0 dim + 2 dim} &               & \mbox{stable} & \mbox{Classical self-correction}    \\
\mbox{2 dim} & \mbox{1 dim + 1 dim} & \mbox{stable} &               & \mbox{Quantum code}    \\
\mbox{3 dim} & \mbox{0 dim + 3 dim} &               & \mbox{stable} & \mbox{Classical self-correction}    \\
\mbox{3 dim} & \mbox{1 dim + 2 dim} & \mbox{stable} &               & \mbox{Quantum code}     \\
\mbox{4 dim} & \mbox{2 dim + 2 dim} & \mbox{stable} & \mbox{stable} & \mbox{Quantum self-correction}
\end{array}\notag
\end{align}
where, for $D=4$, we presented coding properties of four-dimensional Toric code. 

\section{Emergence of topology in logical operators}\label{sec:physics}

So far, we have addressed specific questions concerning coding and physical properties of gapped spin systems on a lattice. While these two questions are of particular importance in quantum information science and condensed matter physics, the ultimate goal is to find universal properties of arbitrary gapped spin systems on a lattice and develop a unified theoretical framework to discuss their coding and physical properties. In particular, it would be very beautiful if arbitrary gapped spin systems can be analyzed universally through some unified theoretical tool which is yet to be discovered. 

In this section, as a necessary first step toward this goal, we make an attempt to find universal properties of logical operators which are commonly shared among all the STS models. Logical operators in STS models have a certain interesting topological property which is a direct consequence of physical constraints we have imposed on stabilizer Hamiltonians. Here, we show that one can always deform geometric shapes of logical operators continuously while keeping them equivalent by applying appropriate stabilizers as long as one does not change geometric shapes in a topologically non-trivial way. We prove this continuous deformability of logical operators in STS models for $D=1,2,3$ with various examples of continuous deformations of logical operators. Some of these results were also presented in our previous papers~\cite{Beni10, Beni10b}, especially for $D=2$.

The program of finding a universal theory of gapped spin systems is continued in~\ref{sec:topology} where the role of topology in analyzing coding and physical properties of stabilizer codes is further examined, while we concentrate on demonstrating topological properties of logical operators in this section. 

\textbf{One-dimension:}
We begin by illustrating a topological property of logical operators for one-dimensional STS models. Recall that $P(x)$ represents a region of $x$ composite particles, and $g_{P(1)} = k$ in one-dimensional systems regardless of the system size since there always exist $k$ zero-dimensional logical operators. Then, as a result of the bi-partition theorem (theorem~\ref{theorem_partition}), we have 
\begin{align}
g_{P(1)} \ = \ g_{\overline{P(1)}} \ = \ k.
\end{align}

\begin{figure}[htb!]
\centering
\includegraphics[width=0.45\linewidth]{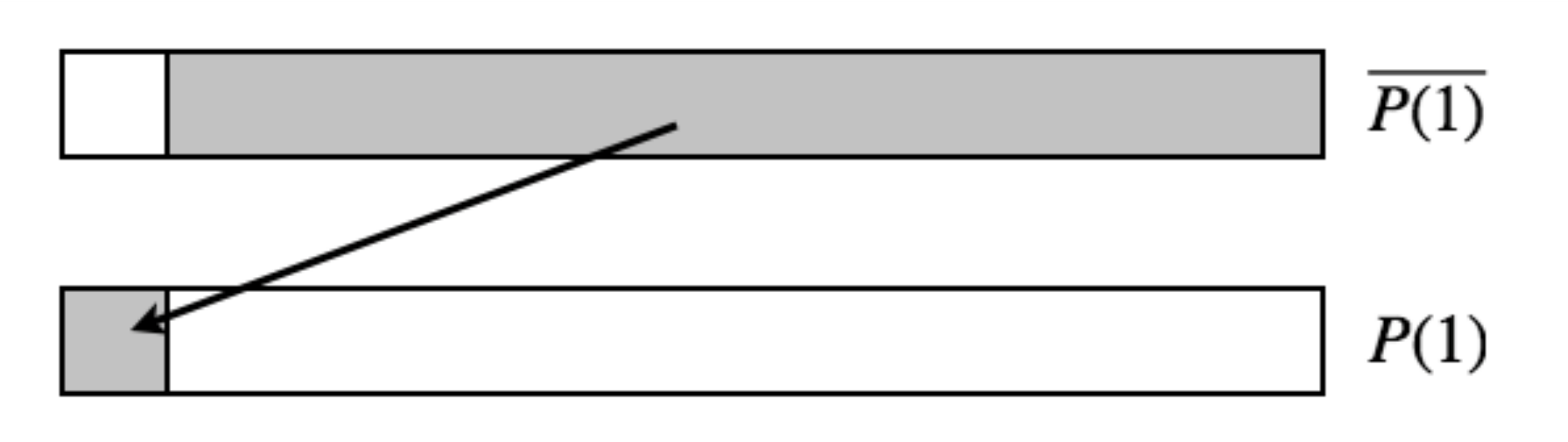}
\caption{A deformation of a zero-dimensional logical operator.
} 
\label{deform_1dim}
\end{figure}

Now, let us interpret the equation above from a geometric viewpoint. Consider some logical operator $\ell$ defined inside $\overline{P(1)}$. Then, since $g_{P(1)}=g_{\overline{P(1)}}=k$, there must exist some equivalent logical operator $\ell' \sim \ell$ which is defined inside $P(1)$. In other words, if there exists a logical operator defined inside $\overline{P(1)}$, one can shrink its geometric shape into $P(1)$ by applying some stabilizer (Fig.~\ref{deform_1dim}):
\begin{align}
\overline{P(1)} \ \rightarrow \ P(1).
\end{align}
Here, one may notice that $\overline{P(1)}$ and $P(1)$ are topologically equivalent regions since one can deform $\overline{P(1)}$ into $P(1)$ continuously. On the other hand, if a logical operator is a one-dimensional logical operator defined all over the lattice (i.e. defined inside $P(n)$ where $n$ is the linear length of the system), one may not be able to deform the logical operator into $P(1)$. 

\textbf{Two-dimensions:} Next, we analyze a topological property of logical operators for two-dimensional STS models. For the convenience of presentation, we assume that zero-dimensional logical operators defined inside $P(1,2v)$ can be actually defined inside $P(1,1)$ in a two-dimensional STS model. This may be done through some appropriate coarse-graining. 

We first list regions which serve as references to classify geometric shapes of logical operators in a two-dimensional system. We define \emph{topological unit regions} as follows (Fig.~\ref{fig_2D_unit}(a)):
\begin{align}
Q(0,0) \equiv P(1,1), \quad Q(1,0) \equiv P(n_{1},1), \quad Q(0,1) \equiv P(1,n_{2}), \quad Q(1,1) \equiv P(n_{1},n_{2}).
\end{align}
``$1$'' and ``$0$'' represent whether a region extends in the corresponding direction or not. For example, $Q(1,0)$ and $Q(0,1)$ are one-dimensional unit regions which extend in the directions of $\hat{1}$ and $\hat{2}$ respectively. $Q(0,0)$ is a zero-dimensional unit region with a single composite particle. $Q(1,1)$ is a two-dimensional unit region which consists of all the composite particles in the system. These topological unit regions are shown graphically in Fig.~\ref{fig_2D_unit}(a). We also denote a union of all the $m$-dimensional topological unit regions as $R_{m}$:
\begin{align}
R_{0} \equiv Q(0,0), \quad R_{1} \equiv Q(1,0) \cup Q(0,1), \quad R_{2} \equiv Q(1,1).
\end{align}
We call $R_{m}$ \emph{$m$-dimensional concatenated topological unit regions}. All the concatenated unit regions are graphically shown in Fig.~\ref{fig_2D_unit}(b). 

\begin{figure}[htb!]
\centering
\includegraphics[width=0.70\linewidth]{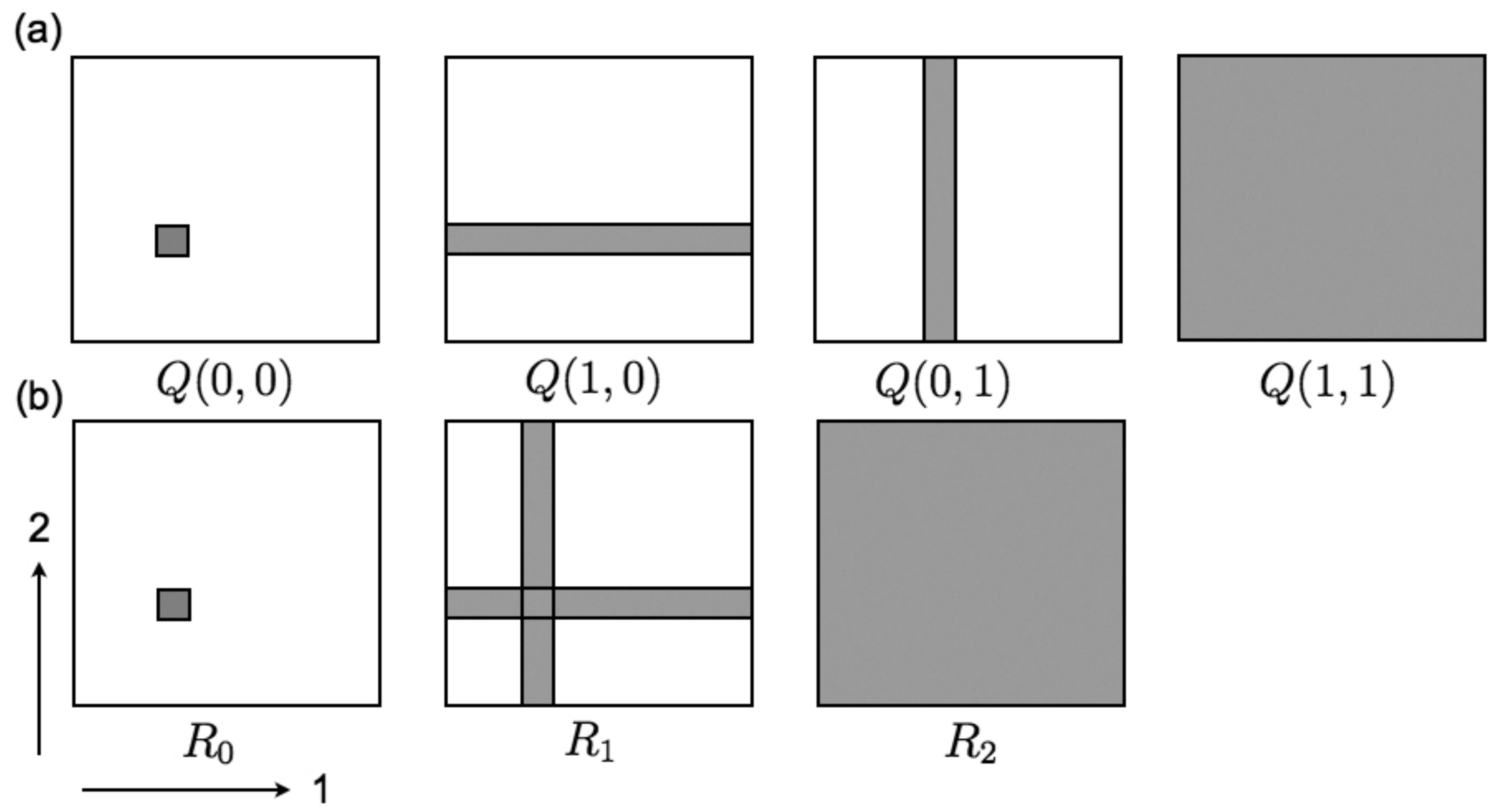}
\caption{Reference regions. (a) Topological unit regions. (b) Concatenated unit regions.
} 
\label{fig_2D_unit}
\end{figure}

In a two-dimensional system, there are five different unions of topological unit regions: $R_{0}$, $Q(1,0)$, $Q(0,1)$, $R_{1}$ and $R_{2}$. We call these regions, except $R_{2}$, \emph{reference regions}, whose set is denoted as $R_{ref}$:
\begin{align}
R_{ref} = \{ R_{0}, Q(1,0), Q(0,1), R_{1} \}.
\end{align}
Then, one can introduce equivalence relations between these reference regions and their complements in terms of continuous deformations. For example, as shown in Fig.~\ref{fig_deformation}(a), $\overline{R_{0}}$ can be continuously deformed into $R_{1}$ by enlarging the hole of $\overline{R_{0}}$ gradually. Also, as shown in Fig.~\ref{fig_deformation}(b), $\overline{R_{1}}$ can be continuously deformed into $R_{0}$ since both $\overline{R_{1}}$ and $R_{0}$ are zero-dimensional regions without any winding around the torus. Finally, as shown in Fig.~\ref{fig_deformation}(c), $\overline{Q(1,0)}$ can be deformed into $Q(1,0)$ since both regions have a winding in the $\hat{1}$ direction. In summary, we have the following equivalence relations between reference regions and their complements:
\begin{align}
\overline{R_{0}} \simeq R_{1}, \qquad \overline{R_{1}}\simeq R_{0}, \qquad \overline{Q(1,0)}\simeq  Q(1,0), \qquad \overline{Q(0,1)} \simeq  Q(0,1).
\end{align}

\begin{figure}[htb!]
\centering
\includegraphics[width=0.70\linewidth]{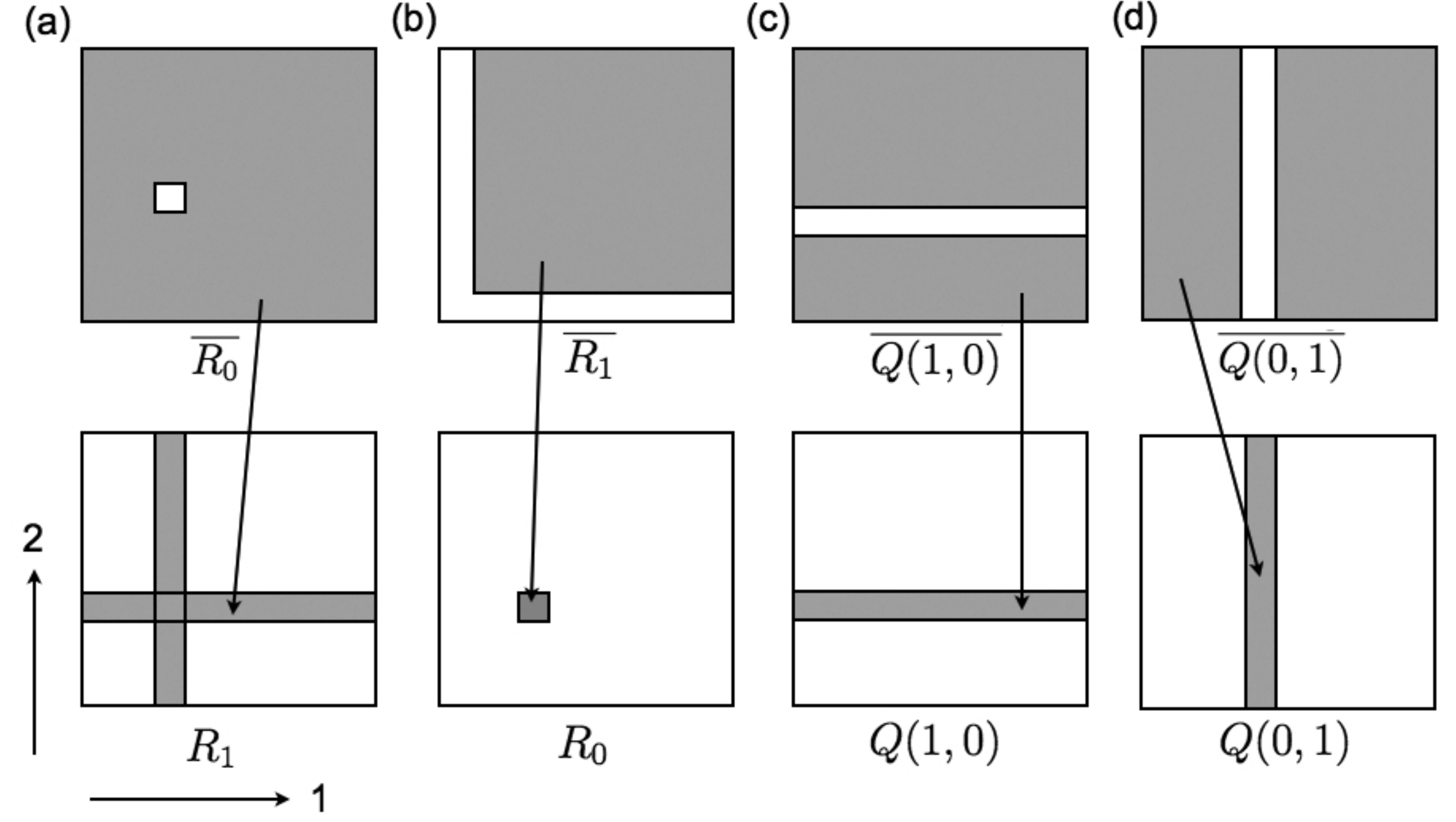}
\caption{The topological deformations of logical operators.} 
\label{fig_deformation}
\end{figure}

Now, we discuss how geometric shapes of logical operators can be determined. A useful observation regarding geometric shapes of logical operators can be obtained by considering the number of independent logical operators defined inside a region $R$. Let the number of independent logical operators inside $R$ be $g_{R}$. Here, we consider the case where we have two regions $R$ and $R'$ where $R$ is larger than $R'$, meaning that $R$ includes all the composite particles inside $R'$ (Fig.~\ref{fig_deform_intuition}). Then, if $g_{R}=g_{R'}$, $R$ and $R'$ support the same logical operators since all the logical operators defined inside $R$ have equivalent representations which are supported inside $R'$. This means that, for a given logical operators $\ell$ defined inside $R$, one can always find another equivalent logical operator $\ell'$ defined inside $R'$. In other words, one can \emph{deform} the geometric shape of $\ell$ into $\ell'$ by applying some appropriate stabilizer (Fig.~\ref{fig_deform_intuition}). Thus, by finding two connected regions $R$ and $R'$ where $R$ is larger than $R'$ and $g_{R}=g_{R'}$, one can conclude that logical operators defined inside $R$ can be deformed into $R'$. 

\begin{figure}[htb!]
\centering
\includegraphics[width=0.40\linewidth]{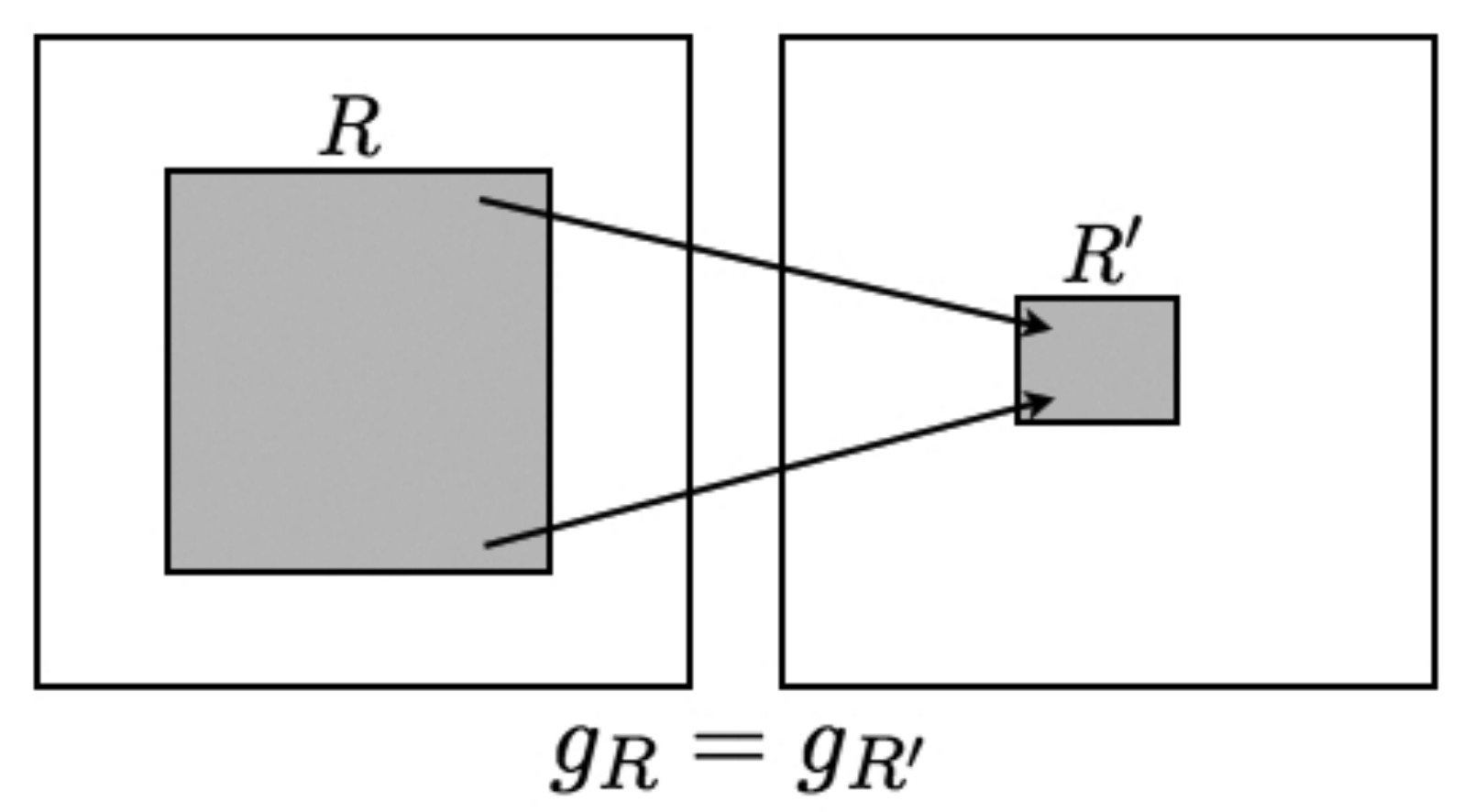}
\caption{A shrinkage from $R$ to $R'$ when $g_{R}=g_{R'}$.} 
\label{fig_deform_intuition}
\end{figure}

With the above observation on geometric shapes of logical operators and deformations in mind, let us describe a topological property of geometric shapes of logical operators. The following lemma summarizes the topological property of logical operators in two-dimensional STS models.

\begin{lemma}[Topological shrinkage]\label{lemma_2dim}
In two-dimensional STS models, the following equations hold:
\begin{align}
g_{\overline{R_{0}}} = g_{R_{1}}, \quad g_{\overline{R_{1}}} = g_{R_{0}}, \quad g_{Q(1,0)} = g_{\overline{Q(1,0)}} =  k, \quad g_{Q(0,1)} = g_{\overline{Q(0,1)}} = k.
\end{align}
\end{lemma}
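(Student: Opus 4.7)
The plan is to prove all four identities together by combining the canonical form of Theorem~\ref{theorem_2dim}, the translation equivalence of Theorem~\ref{theorem_TE}, and the bi-partition identity $g_{R}+g_{\bar{R}}=2k$ of Theorem~\ref{theorem_partition}. Fix a canonical set of logical operators in which $k_{0}$ pairs are of type I (a zero-dimensional $\ell_{j}$ inside $P(1,1)$ anti-commuting with a two-dimensional $r_{j}$) and $k_{1}=k-k_{0}$ pairs are of type II (a one-dimensional $\ell_{j}$ inside $P(1,n_{2})$ anti-commuting with a one-dimensional $r_{j}$ inside $P(n_{1},1)$). My strategy in each case is to exhibit enough canonical representatives on a region $R$ to obtain a lower bound on $g_{R}$, establish a matching lower bound on $g_{\bar{R}}$, and then use bi-partition to force both inequalities to be equalities.

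For $g_{Q(1,0)}=g_{\overline{Q(1,0)}}=k$, I would lower-bound $g_{Q(1,0)}$ as follows: each of the $k_{0}$ zero-dimensional $\ell_{j}$'s can be translated by $T_{2}$ onto the row $Q(1,0)=P(n_{1},1)$, while each of the $k_{1}$ type II horizontal $r_{j}$'s is already of the form $P(n_{1},1)$ and $T_{1}$-periodic and sits on $Q(1,0)$ directly. This places $k_{0}+k_{1}=k$ independent canonical representatives on $Q(1,0)$. The same construction on any horizontal row contained in $\overline{Q(1,0)}$ (available whenever $n_{2}\geq 2$) yields $g_{\overline{Q(1,0)}}\geq k$. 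Bi-partition then forces $g_{Q(1,0)}=g_{\overline{Q(1,0)}}=k$, and the analogous statement for $Q(0,1)$ follows by interchanging $1\leftrightarrow 2$.

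For the remaining two identities I would compute $g_{R_{0}}$, $g_{R_{1}}$, and $g_{\overline{R_{1}}}$ explicitly. Since $R_{0}\subset Q(1,0),Q(0,1)\subset R_{1}$, placing the $k_{0}$ type I $\ell_{j}$'s inside $R_{0}$, the $k_{1}$ vertical type II $\ell_{j}$'s inside $Q(0,1)$, and the $k_{1}$ horizontal type II $r_{j}$'s inside $Q(1,0)$ shows $g_{R_{1}}\geq k_{0}+2k_{1}=2k-k_{0}$. The complement $\overline{R_{1}}$ is a nonempty block of $(n_{1}-1)(n_{2}-1)$ composite particles, and translation equivalence lets me transport each type I $\ell_{j}$ to some composite particle inside $\overline{R_{1}}$, so $g_{\overline{R_{1}}}\geq k_{0}$. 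Bi-partition then saturates both inequalities: $g_{R_{1}}=2k-k_{0}$ and $g_{\overline{R_{1}}}=k_{0}$. Finally, $g_{R_{0}}\geq k_{0}$ from the same placement of type I $\ell_{j}$'s, and $g_{R_{0}}\leq g_{\overline{R_{1}}}=k_{0}$ because any representative supported on the single composite particle $R_{0}$ can be translated by Theorem~\ref{theorem_TE} into $\overline{R_{1}}$. Hence $g_{R_{0}}=k_{0}=g_{\overline{R_{1}}}$, and correspondingly $g_{R_{1}}=2k-k_{0}=g_{\overline{R_{0}}}$, which is exactly the content of the lemma.

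The main conceptual point to check is that the split of the canonical set into $k_{0}$ type I and $k_{1}$ type II pairs is not an artifact of choice but is pinned down by intrinsic quantities; the chain of inequalities above certifies $k_{0}=g_{R_{0}}=g_{\overline{R_{1}}}$, which is the delicate step. Apart from that, the only ingredients used are Theorems~\ref{theorem_partition}, \ref{theorem_TE}, and~\ref{theorem_2dim}, together with the elementary fact that translating a logical representative to a new support region preserves its logical equivalence class and hence its contribution to $g_{R}$.
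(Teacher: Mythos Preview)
Your proof is correct and follows essentially the same strategy as the paper: introduce the counts $k_{0},k_{1}$ from the canonical form of Theorem~\ref{theorem_2dim}, place canonical representatives in the relevant regions, and use the bi-partition identity of Theorem~\ref{theorem_partition}. The paper's own proof is terser---it simply asserts $g_{R_{0}}=k_{0}$, $g_{R_{1}}=2k_{1}+k_{0}$, and $g_{Q(1,0)}=k$ and then reads everything else off bi-partition---whereas you supply the matching upper bounds explicitly via the squeeze argument using Theorem~\ref{theorem_TE} to move zero-dimensional representatives into $\overline{R_{1}}$ and into $\overline{Q(1,0)}$. That extra care is a genuine improvement: the paper's bare assertion that, e.g., $g_{R_{0}}=k_{0}$ tacitly assumes that no one-dimensional canonical operator admits a zero-dimensional representative, which is exactly what your inequality $g_{R_{0}}\leq g_{\overline{R_{1}}}$ establishes.
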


In other words, one can shrink geometric shapes of logical operators by applying some appropriate stabilizers in the following ways:
\begin{align}
\overline{R_{0}} \rightarrow R_{1}, \qquad \overline{R_{1}}\rightarrow \ R_{0}, \qquad \overline{Q(1,0)}\rightarrow \ Q(1,0), \qquad \overline{Q(0,1)} \rightarrow \ Q(0,1).
\end{align}
Note that these shrinkages preserve topological properties of geometric shapes of logical operators. 

\begin{proof}
Let us begin with the proof of $g_{\overline{R_{0}}} = g_{R_{1}}$. Let $k_{0}$ be the numbers of pairs of anti-commuting zero-dimensional and two-dimensional logical operators. Let $k_{1}$ be the number of pairs of anti-commuting one-dimensional logical operators. Then, we notice that $g_{R_{1}}=2k_{1}+k_{0}$ since $R_{1}$ supports both zero-dimensional and one-dimensional logical operators. On the other hand, since $g_{\overline{R_{0}}} = 2k - g_{R_{0}}$ from theorem~\ref{theorem_partition} and $g_{R_{0}}=k_{0}$, we have $g_{\overline{R_{0}}} = g_{R_{1}}$. If we use theorem~\ref{theorem_partition} to $g_{\overline{R_{0}}} = g_{R_{1}}$, we readily obtain $g_{R_{0}} = g_{\overline{R_{1}}}$ too. Next, let us show $g_{\overline{Q(1)}} = g_{Q(1)}$. Note that $g_{Q(1)}=k$. Then, we have $g_{\overline{Q(1)}}=k$, and have $g_{\overline{Q(1)}} = g_{Q(1)}$.
\end{proof}

\textbf{Three-dimensions:} Let us continue our analysis on STS models for higher-dimensional cases ($D>2$). Below, we analyze the topological property of logical operators in three-dimensional STS models. We assume that zero-dimensional logical operators and one-dimensional logical operators in a three-dimensional STS model can be defined inside $P(1,1,1)$, $P(n_{1},1,1)$, $P(1,n_{2},1)$ and $P(1,1,n_{3})$. 

We begin by finding reference regions for $D$-dimensional systems. Reference regions for $D>2$ can be defined from topological unit regions in a way similar to two-dimensional cases. Let $\vec{d}$ be an arbitrary binary $D$ component vector $\vec{d} = (d_{1},\cdots,d_{D})$ with $d_{m}=0,1$. Then, topological unit regions are:
\begin{align}
Q(\vec{d}) \equiv P(\vec{x}), \qquad \mbox{where} \quad x_{m}=n_{m}^{d_{m}}.
\end{align}
For example, $Q(1,1,0)=P(n_{1},n_{2},1)$ for $D=3$. We denote the weight of the binary vector $\vec{d}$ as $w(\vec{d}) \equiv \sum_{m=1}^{D}d_{m}$, which represents the dimension of $Q(\vec{d})$. Then, concatenated unit regions are defined as follows:
\begin{align}
R_{m} \equiv \bigcup_{w(\vec{d})=m} Q(\vec{d}).
\end{align}
A set of reference regions can be obtained by considering all the possible unions of $Q(\vec{d})$, which is denoted as $R_{ref}$. One may notice that topological unit regions $Q(\vec{d})$ are like independent generators for $m$th homology group for a $D$-torus: $H_{m}(T^{D})$.

\begin{figure}[htb!]
\centering
\includegraphics[width=0.55\linewidth]{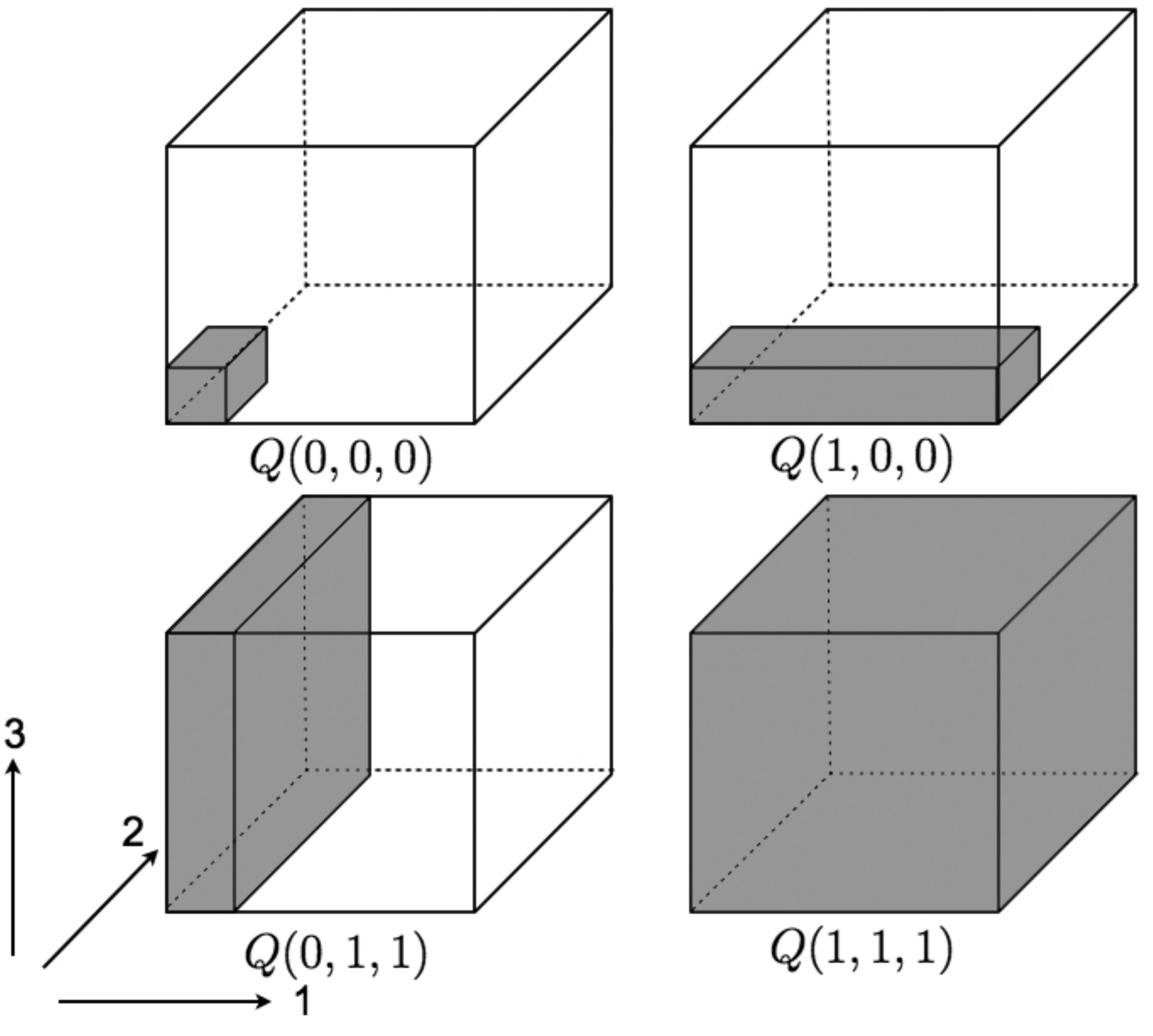}
\caption{Topological unit regions in a three-dimensional system. Recall that we set the periodic boundary conditions.
} 
\label{fig_3D_unit}
\end{figure}

It is worth presenting some examples here. In a three-dimensional system ($D=3$), we have the following topological unit regions. 
\begin{equation}
\begin{split}
\mbox{0 dim:} \qquad &Q(0,0,0)\\
\mbox{1 dim:} \qquad &Q(1,0,0),\ Q(0,1,0),\ Q(0,0,1)\\
\mbox{2 dim:} \qquad &Q(1,1,0),\ Q(0,1,1),\ Q(1,0,1)\\
\mbox{3 dim:} \qquad &Q(1,1,1).
\end{split}
\end{equation}
Some examples are shown in Fig.~\ref{fig_3D_unit}. Also, concatenated topological unit regions are:
\begin{equation}
\begin{split}
R_{0} &= Q(0,0,0)\\
R_{1} &= Q(1,0,0) \cup Q(0,1,0) \cup Q(0,0,1)\\ 
R_{2} &= Q(1,1,0)\cup Q(0,1,1)\cup Q(1,0,1)\\
R_{3} &= Q(1,1,1)
\end{split}
\end{equation}
which are described in Fig.~\ref{fig_3D_R}.

\begin{figure}[htb!]
\centering
\includegraphics[width=0.45\linewidth]{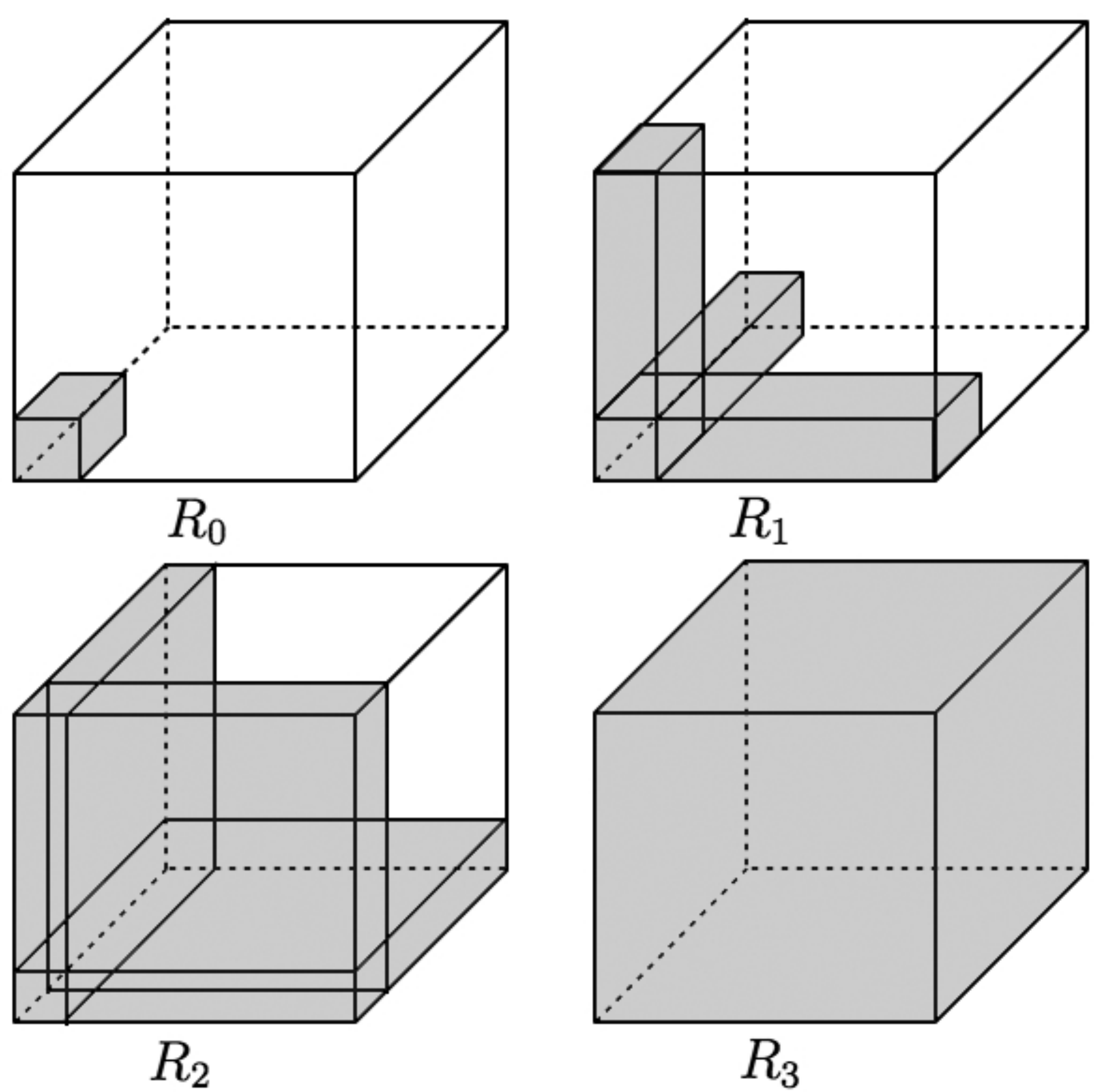}
\caption{Concatenated unit regions in a three-dimensional system. 
} 
\label{fig_3D_R}
\end{figure}

One can introduce the equivalence relations in terms of reference regions. Equivalence relations among them are shown as follows:
\begin{equation}
\begin{split}\label{eq:list}
R_{0} &\simeq \overline{R_{2}} \\
Q(1,0,0) &\simeq \overline{Q(1,1,0)\cup Q(1,0,1)}\\
Q(0,1,0) &\simeq \overline{Q(1,1,0)\cup Q(0,1,1)}\\
Q(0,0,1) &\simeq \overline{Q(1,0,1)\cup Q(0,1,1)}\\
Q(1,0,0) \cup Q(0,1,0)&\simeq \overline{Q(1,1,0)\cup Q(0,0,1)}\\
Q(0,1,0) \cup Q(0,0,1)&\simeq \overline{Q(0,1,1)\cup Q(1,0,0)}\\
Q(0,0,1) \cup Q(1,0,0)&\simeq \overline{Q(1,0,1)\cup Q(0,1,0)}\\
R_{1}  &\simeq \overline{R_{1}}\\
Q(1,1,0)&\simeq \overline{Q(1,1,0)}\\
Q(0,1,1) &\simeq \overline{Q(0,1,1)}\\
Q(1,0,1)&\simeq \overline{Q(1,0,1)}.
\end{split}
\end{equation}
Then, we have the following theorem.

\begin{lemma}[Topological shrinkage]\label{theorem_Ddim}
For $D$-dimensional STS models ($D=1,2,3$), let $R$ and $R'$ be reference regions: $R,R' \in R_{ref}$. When $\overline{R'}\simeq R$, one can deform geometric shapes of logical operators continuously from $\overline{R'}$ to $R$:
\begin{align}
g_{R} = g_{\overline{R'}} \qquad \mbox{for}\quad R \simeq \overline{R'}.
\end{align}
\end{lemma}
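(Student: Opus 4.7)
The plan is to reduce the lemma to an explicit counting argument built on the canonical form provided by the Dimensional Duality theorems (Theorems~\ref{theorem_1dim}--\ref{theorem_3dim}) combined with the bi-partition theorem (Theorem~\ref{theorem_partition}). The key initial reformulation is that, since $g_{\overline{R'}} = 2k - g_{R'}$ by bi-partition, the target identity $g_R = g_{\overline{R'}}$ is equivalent to the purely numerical identity $g_R + g_{R'} = 2k$, which must be verified for each pair $(R, R')$ on the list (\ref{eq:list}).

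First I would parametrize the canonical logical content. For $D=3$, Theorem~\ref{theorem_3dim} says every canonical pair $(\ell_j, r_j)$ has one of four geometric types: one type with $\ell_j$ zero-dimensional inside $Q(0,0,0)$ and partner $r_j$ three-dimensional, plus three types with $\ell_j$ one-dimensional along $\hat{\alpha}$ ($\alpha=1,2,3$) and partner $r_j$ the complementary two-dimensional operator. Writing $k_0, k_1, k_2, k_3$ for the number of pairs of each type, so that $k = k_0 + k_1 + k_2 + k_3$, the count $g_R$ for any reference region becomes a linear combination of these numbers: a canonical generator contributes to $g_R$ exactly when its intrinsic unit region sits inside $R$. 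For example, $g_{Q(0,0,0)} = k_0$, $g_{Q(1,0,0)} = k_0 + k_1$, $g_{Q(1,1,0)} = k_0 + k_1 + k_2 + k_3 = k$, and the remaining formulas follow by permuting the three spatial directions.

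With these formulas in hand, each equivalence in (\ref{eq:list}) collapses to a small arithmetic check. The 1D case is immediate from $\overline{R_0} \simeq R_0$ and $g_{R_0} = k$; the 2D case is precisely Lemma~\ref{lemma_2dim}; so the main new content is the eleven 3D identities, each of which boils down to the combinatorial observation that for every canonical pair the two partner unit regions $Q(\vec{d})$ and $Q(\vec{1}-\vec{d})$ distribute themselves exactly once between $R$ and $R'$. Once the numerical equality $g_R = g_{\overline{R'}}$ is secured, the geometric ``shrinkage'' of logical operators from $\overline{R'}$ to $R$ is the standard consequence illustrated in Fig.~\ref{fig_deform_intuition}: any logical operator defined in $\overline{R'}$ admits a stabilizer-equivalent representative in $R$ because both regions carry the same number of independent logical operators.

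The main obstacle is justifying that $g_R$ truly equals the count of canonical generators fitting in $R$, rather than being strictly larger because some product of canonical generators fits inside $R$ while no single generator does. I expect to rule this out by a two-sided application of the bi-partition theorem: one inequality comes from each canonical generator supported in $R$ contributing independently to $g_R$, and the matching upper bound comes from $g_R + g_{\bar R} = 2k$ together with the analogous lower bound on $g_{\bar R}$, all of which rely crucially on the tight geometric localization of the canonical form asserted by Theorem~\ref{theorem_3dim} (after the coarse-graining noted just before the statement). This is the same delicate step that had to be handled in the 2D proof of Lemma~\ref{lemma_2dim}, and I expect the 3D extension to proceed by an analogous but more elaborate region-decomposition argument handled in Appendices~\ref{sec:decomposition} and~\ref{sec:construction}.
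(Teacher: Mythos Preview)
Your approach is correct and matches the paper's: the paper invokes Theorem~\ref{theorem_partition} and Theorem~\ref{theorem_3dim} and illustrates only the case $R_1 \simeq \overline{R_1}$ via exactly the two-sided inequality you describe (lower-bound each of $g_{R_1}$ and $g_{\overline{R_1}}$ by $k$ using the canonical generators that fit, then use $g_{R_1}+g_{\overline{R_1}}=2k$ to force equality). Your ``obstacle'' is therefore already resolved by that sandwich argument and needs nothing further from the appendices beyond the statement of Theorem~\ref{theorem_3dim} itself.
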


\begin{proof}
The proof of the lemma is straightforward from theorem~\ref{theorem_partition} and theorem~\ref{theorem_3dim}, so we present a proof only for $R_{1} \simeq \overline{R_{1}}$. Since all the zero-dimensional and one-dimensional logical operators can be supported inside $R_{1}$, we have $g_{R_{1}}\geq k$. Similarly, we have $g_{\overline{R_{1}}}\geq k$. Then, since $g_{R_{1}} + g_{\overline{R_{1}}}=2k$, we have $g_{R_{1}} = g_{\overline{R_{1}}}=k$.
\end{proof}

\textbf{Topological deformations of logical operators:}
So far, our discussion on a topological property of logical operators has been limited only to regions generated by $m$-dimensional unit regions while some connected regions (such as examples presented in Fig.~\ref{topo_deform}) cannot be generated by taking unions of unit regions. Then, a naturally arising question is whether one can perform similar continuous deformations for arbitrary connected regions of composite particles or not. Here, we provide a complete description on topological properties of logical operators in STS models by extending the notion of continuous deformations to arbitrary connected regions.

It turns out that one can continuously deform a geometric shapes of a logical operator defined on any connected region of composite particles continuously for STS models with $D=1,2,3$. Here, we summarize topological properties of logical operators in STS models as follows\footnote{Mathematically inclined readers may want to have more precise definitions of geometric shapes and continuous deformations. Here, we make some comments on this issue to make our discussion more rigorous. For a $D$-torus with $n_{1}\times \cdots \times n_{D}$ composite particles, we split the entire system into $n_{1}\times \cdots \times n_{D}$ hypercubic regions where each hypercube contains each composite particle, and cover the entire system completely, but without any overlap. For a given set of composite particles $R$, we define its geometric shape as a union of hypercubes which contain all the composite particles inside $R$. After assigning geometric shapes to sets of composite particles in this way, one can define a continuous map between two sets $R$ and $R'$, and introduce an equivalence relation between them in a straightforward way.}:

\begin{theorem}[Continuous deformation]\label{theorem_continuous_deformation}
Consider STS models for $D=1,2,3$. Consider two connected regions of composite particles $R$ and $R'$ which are topologically equivalent: $R \ \simeq \ R'$. Then, we have
\begin{align}
g_{R} \ = \ g_{R'}  \qquad \mbox{for}\quad R \ \simeq \ R'.
\end{align}
Therefore, for any given logical operator $\ell$ defined inside $R$, one can always find an equivalent logical operator defined inside $R'$ as long as $R$ can be continuously deformed into $R'$. 
\end{theorem}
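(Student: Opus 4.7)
The plan is to reduce the statement to the already-established Lemma (Topological shrinkage) via a two-step ``classify and compare'' argument, using the Bi-partition theorem and Translation equivalence as the workhorses.

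Step 1 (Classification by reference regions). I first show that every connected region $R$ of composite particles on the $D$-torus ($D = 1, 2, 3$) is topologically equivalent to a canonical reference region $R_{*} \in R_{ref}$ from the list in Lemma~\ref{theorem_Ddim}. Up to homotopy in $\mathbf{T}^{D}$, a connected subset is completely determined by which of the elementary $1$-cycles of the torus it wraps around (and, for $D=3$, which pairs of cycles it wraps jointly as a $2$-cycle); these possibilities are in bijection with the reference regions listed in Eq.~(\ref{eq:list}). This step is essentially a combinatorial enumeration that can be carried out case-by-case for each small $D$.

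Step 2 (Equality $g_{R} = g_{R_{*}}$). I would establish two inequalities. For $g_{R} \ge g_{R_{*}}$: pick the canonical logical operators supported on $R_{*}$ provided by the Dimensional Duality theorems. Each such logical operator $\ell$ has a small bounded ``transverse width'' in the directions in which $R_{*}$ does not extend, while being periodic in the cycles $R_{*}$ wraps. By the Translation equivalence theorem, every unit translate of $\ell$ is equivalent to $\ell$, so I may slide $\ell$ freely; since $R$ wraps the same cycles as $R_{*}$ and is connected, a suitable translate lies inside $R$. This yields $g_{R_{*}}$ independent equivalent logical operators supported on $R$. For $g_{R} \le g_{R_{*}}$: apply the Bi-partition theorem to both $R$ and $R_{*}$, so that $g_{R} = 2k - g_{\bar{R}}$ and $g_{R_{*}} = 2k - g_{\bar{R_{*}}}$. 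Since $\bar{R} \simeq \bar{R_{*}}$ (complements of homotopy-equivalent subsets on the torus land in the same homotopy class by the involution $Q(\vec{d}) \leftrightarrow \overline{Q(\vec{d})}$ already used in Eq.~(\ref{eq:list})), the same translation argument gives $g_{\bar{R}} \ge g_{\bar{R_{*}}}$, whence $g_{R} \le g_{R_{*}}$. Combining gives equality, and then $g_{R} = g_{R_{*}} = g_{R'}$ by transitivity since $R \simeq R' \simeq R_{*}$.

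The hard part will be Step 2's claim that a suitable translate of each canonical $\ell$ actually fits inside $R$ when $R$ is a genuinely irregular connected region rather than a block shape. A naive unit translate can straddle the boundary of $R$. To overcome this, I would exploit that the canonical logical operators have transverse width at most a few composite particles (e.g.\ $2v$ in the two-dimensional Dimensional Duality theorem, and $2v$ or $(2v)^{2}$ in the three-dimensional one), together with the freedom to multiply by stabilizers. Concretely, I would cover $R$ by a finite family of translates of a thickened reference region, then use translation equivalence and local stabilizer applications to glue together an equivalent representative of $\ell$ supported in $R$. For the three-dimensional case with two-dimensional logical operators, where one must place a $2$-cycle inside an arbitrary connected $3$-region that wraps the appropriate pair of torus cycles, this fitting argument is the technical heart and will likely require a careful induction on the number of composite particles added beyond the bare reference region.
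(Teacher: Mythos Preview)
The paper omits the proof, indicating only that it follows from scale symmetries, the bi-partition theorem, and translation equivalence. Your proposal invokes the latter two and is in the right spirit, but has two genuine gaps.

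First, Step~1's classification claim is false: a connected subregion of $T^D$ is \emph{not} determined up to homotopy by which torus cycles it wraps. In $D=2$, the complement of two small disjoint disks is connected and wraps both generating cycles, yet it is a twice-punctured torus with $H_1 \cong \mathbb{Z}^3$, not homotopy equivalent to $R_1$ (which has $H_1 \cong \mathbb{Z}^2$) or to any reference region in the finite list. So the reduction to a canonical $R_{*}$ does not cover all homotopy classes the theorem must handle. Second, in Step~2 you assert $\bar R \simeq \bar{R_{*}}$ via the complement involution on reference regions, but that involution is defined only on the reference list itself, not for a general $R$ --- and as the example shows, $\bar R$ need not even be connected. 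The forward inequality's ``fitting'' step is also dangerously close to circular: rigid translation cannot place a periodic canonical logical operator inside an irregular $R$, and the stabilizer-gluing you sketch to remedy this is precisely the continuous-deformation property being proved.

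A more robust route avoids global classification altogether: show directly that $g_R$ is preserved by an \emph{elementary} move adding or removing a single composite particle without changing the homotopy type of $R$. Locality of the stabilizers, the bi-partition identity, and translation equivalence (which already encodes the scale symmetry hypothesis) are exactly what let you push a logical operator off one particle at a time; chaining such moves then yields the general statement for arbitrary $R \simeq R'$.
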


\begin{figure}[htb!]
\centering
\includegraphics[width=0.65\linewidth]{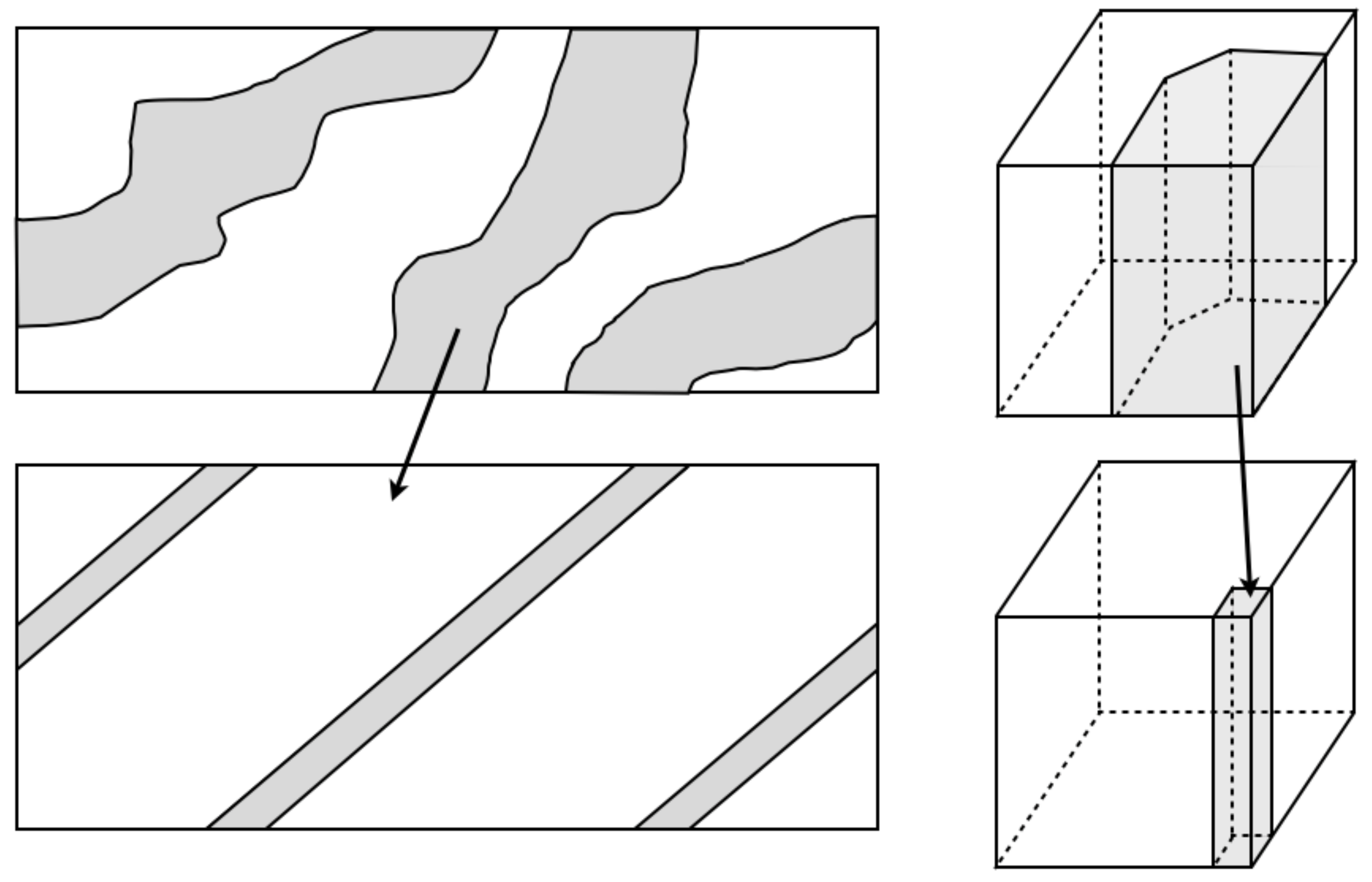}
\caption{Examples of continuous deformations. Note that systems have periodic boundary conditions.} 
\label{topo_deform}
\end{figure}

For clarity of presentation, we skip the proof of theorem~\ref{theorem_continuous_deformation}. One may verify it by using scale symmetries, a bi-partition theorem (theorem~\ref{theorem_partition}), and the translation equivalence of logical operators (theorem~\ref{theorem_TE}). We present some examples of continuous deformations in Fig.~\ref{topo_deform}.

\section{Summary and open questions}\label{sec:summary}

In this paper, we have established the connection between the feasibility of self-correcting quantum memory and the thermal stability of topological order, and provided partial answers to these two open problems by solving a model that may cover a large class of physically realizable quantum codes. Our discussion is limited to stabilizer codes with translation and scale symmetries, and these two problems still remain open for an even larger class of gapped spin systems on a lattice. Yet, we hope that our analysis will provide an important insight and a useful guidance on studies of coding and physical properties of gapped spin systems on a lattice. We also hope that our work will contribute to stimulating the use of quantum coding theoretical concepts in studying many-body quantum systems further. 

Below, we discuss possible future problems and make some comments on them.

\textbf{Frustration-free systems with scale symmetries:}
While our discussion is limited to stabilizer codes, any gapped spin systems with degenerate ground states can be used as quantum memory devices in principle. It seems that, for an arbitrary gapped spin system defined on a lattice, there exists some frustration-free Hamiltonian which approximates the original system and serve as its low energy effective Hamiltonian (except some subtle properties such as chirality). In fact, this claim has been rigorously proven for one-dimensional gapped spin systems~\cite{Hastings06} by showing that any ground state of one-dimensional gapped spin Hamiltonians can be efficiently simulated through the matrix product state formalism. Now, this claim seems to be widely believed among the condensed matter physics community, and it is at the heart of recent progress in classifications of quantum phases in gapped spin systems~\cite{Chen10, Beni10b, Chen11}. Therefore, the analyses on coding properties of arbitrary frustration-free Hamiltonians with translation and scale symmetries may provide a useful insight on questions concerning coding properties and quantum phases arising in arbitrary gapped spin systems on a lattice. 

To the best of our knowledge, all the examples of frustration-free Hamiltonians with translation and scale symmetries, such as the quantum double model~\cite{Kitaev03} and the string-net model~\cite{Levin05}, have the dimensional duality on geometric shapes of certain operators which may be considered as generalizations of logical operators. Also, these operators seem to have the continuous deformability in a way similar to logical operators in STS models. With these observations, we feel that our results are universal for all the frustration-free Hamiltonians with translation and scale symmetries, and thus, effectively true for arbitrary gapped spin systems defined on a lattice with a small number of ground states. Also, these observations imply that any gapped spin systems with scale symmetries (or a small number of ground states) may be described through TQFT, as discussed in~\ref{sec:topology}. Yet, the connection between TQFT and coding properties of gapped spin systems on a lattice with scale symmetries must be further established.  

\textbf{Beyond scale symmetries:}
While our treatments in the present paper are limited to quantum codes with scale symmetries, there are several interesting models of quantum codes which do not have scale symmetries~\cite{Chamon05, Newman99, Bravyi11b, Haah11}. Of particular interests are models proposed in~\cite{Newman99} and in~\cite{Haah11} which are classical and quantum memories respectively with partially self-correcting properties. These models do not have scale symmetries since the number of logical qubits $k_{\vec{n}}$ is highly sensitive to the system size $\vec{n}$, and there is no constant upper bound on $k_{\vec{n}}$. These models are known to have logarithmically large energy barrier $\Delta E \sim \mbox{LOG($L$)}$ with a large number of energy local minima. As a result, these models seem to have POLY($L$) relaxation time with slow dynamics, which may result in POLY($L$) bit or qubit storage time. 

Whether these models with broken scale symmetries are useful as classical and quantum memories is a complicated problem. First of all, for a system to be a efficient quantum memory device, it is desirable to have exponentially long qubit storage time: $\tau \sim \mbox{EXP($L$)}$, since it needs at least $d$ gate operations to write or readout a logical qubit, and these encoding and decoding processes takes at least polynomially long time $\tau \sim \mbox{POLY($L$)}$. Also, it seems difficult to find an efficient decoding algorithm with an efficient error-correction for these models. Finally, these models undergo phase transitions at $T=0$, which may imply their potential thermal instability. At this moment, there are many issues to be analyzed concerning implementability of quantum codes with broken scales symmetries from engineering viewpoints. 

\textbf{Beyond TQFT:}
Stabilizer codes with broken scale symmetries are remarkable examples which may be beyond the description of the standard TQFT. Such models do not have continuously deformable logical operators, and are not expected to be characterized by topological properties of logical operators as in STS models. This observation is consistent with the fact that systems described by TQFT, in a sense of the axiomatic formulation developed by Atiyah, are allowed to have only a finite number of degenerate ground states. Thus, finding an effective field theoretical description for stabilizer codes with broken scale symmetries may be an interesting future problem which may lead to discoveries of novel quantum phases that are currently unknown and are governed by some deeper mathematical formalism than topology.

To the best of our knowledge, all the examples of stabilizer codes with broken scale symmetries still possess certain discrete scale symmetries. STS models have continuous scale symmetries since $k_{\vec{n}}=k$ for all the system sizes $\vec{n}$, while stabilizer codes with broken scale symmetries seem to have discrete scale symmetries since the number of logical qubits $k_{\vec{n}}$ has good scaling properties under global scale transformations: $(n_{1},n_{2},\cdots)\rightarrow (cn_{1},cn_{2},\cdots)$ where $c$ is some appropriate integer with behaviors such as $k_{c\vec{n}}=c k_{\vec{n}}$. The distinction between continuous and discrete scale symmetries becomes particularly important when one considers effects of RG transformations. As is pointed out in~\cite{Wilson71}, systems with continuous scale symmetries correspond to fixed points of RG transformations, while systems with discrete scale symmetries correspond to limit cycles of RG transformations. Physical realizations of systems with discrete scale symmetries include the Effimov effect which is recently of particular interest in the ultracold atom physics community. This observation implies that stabilizer codes with discrete scale symmetries correspond to limit cycles of RG transformations. Yet, the connection between limit cycles and discrete scale symmetries for lattice systems must be further established since discussion in~\cite{Wilson71} is given primary for continuum systems. 

\textbf{Spin glass behaviors and translation symmetry breaking:}
Another interesting property of stabilizer codes with broken scale symmetries is a glassy behavior with slow relaxation dynamics. This glassy behavior may be understood as a direct consequence of broken scale symmetries, and associated broken translation symmetries in the ground space. 

In general, even if a Hamiltonian possesses translation symmetries, its ground states may not possess translation symmetries. Let us assume that a stabilizer Hamiltonian is translation invariant under unit translations of composite particles: $T_{m}(H)=H$ for all $m$. Then, ground states of this stabilizer Hamiltonian may break translation symmetries: $T_{m}(|\psi\rangle)\not=|\psi\rangle$. Some examples of systems with broken translation symmetries are presented in~\cite{Wen02, Kitaev06b, Beni10b}, and translation symmetry breaking of ground states arising in topologically ordered systems are studied in~\cite{Wen02, Kitaev06b}. 

An interesting connection between scale symmetries and translation symmetries is that when a stabilizer code has scale symmetries: $k_{\vec{n}}=k$, all the ground states are invariant under unit translations of composite particles: $T_{m}(|\psi\rangle)=|\psi\rangle$ regardless of the system size, as proven in~\cite{Beni10b}. In other words, if a system is coarse-grained such that scale symmetries are satisfied, translation symmetries are protected inside the ground space. On the other hand, if a system does not have scale symmetries, translation symmetries may be broken inside the ground space. In particular, when the system has a large number of logical qubits, translation symmetries are strongly broken such that there is no finite translation which keep ground states equivalent. 

This strong breaking of translation symmetries seems to be the key to the glassy behavior observed in stabilizer codes with broken scale symmetries. In conventional spin glass models, a Hamiltonian consists of mutually commuting terms whose signs are random, and translations symmetries of the Hamiltonian are initially broken due to the randomness. This randomness in the Hamiltonian gives rise to spin configurations of ground states which are not uniform over real space, and leads to a complicated and slow relaxation dynamics in spin glasses. On the other hand, in stabilizer codes with broken scale symmetries, stabilizer Hamiltonians possess translation symmetries initially. However, spin configurations are not uniform since translation symmetries are strongly broken inside the ground space due to broken scale symmetries. Thus, systems with broken scale symmetries are expected to exhibit spin glass behaviors with slow relaxation dynamics and a large number of local minima. Yet, the connection between the glassy behavior and broken scale symmetries must be further established since the observation given here is very heuristic.

%\bibliographystyle{model1-num-names}
%\bibliography{myref_nyaan_2011_Apr}

\appendix

\section{Topological order at finite temperature}\label{sec:topo_ap}

In the main part of the paper, we have discussed the thermal stability of topological order by using expectation values of logical operators. In this appendix, we verify the use of expectation values of logical operators as topological order parameters, and define the thermal stability of topological order more rigorously. 

In~\ref{sec:topo_ap1}, we begin by verifying that expectation values of logical operators can be used for topological order parameter at $T=0$. In~\ref{sec:topo_ap2}, we define topological order at finite temperature by extending the definition of topological order at zero temperature to finite temperature. In~\ref{sec:topo_ap3}, we discuss whether the existence of a large energy barrier is sufficient for the thermal stability of the system or not.

\subsection{Topological order parameters and logical operators}\label{sec:topo_ap1}

While the ground state properties change only slightly under small perturbations in topologically ordered systems, the original ground state properties will be eventually lost under large perturbations. One may see whether topological order is lost by checking whether a new ground state can be approximated by the original ground state through a local unitary transformation. However, this naive approach will require a substantial amount of computations and cannot capture changes of coding and physical properties of the ground state under perturbations. 

Fortunately, the loss of the ground state properties under perturbations can be quantitatively characterized by some physical quantities, called topological order parameters. Here, we characterize the stability of topological order against perturbations through topological order parameters.

\textbf{Topological order parameter:}
Topological order parameters are physical quantities with which one can judge if topological order is lost under perturbations or not. In this light, one may deduce the necessary properties of topological order parameters. By its definition, a topological order parameter must be a global function of the system which does not change under local unitary transformations. Also, a topological order parameter must change only slightly when topological order is protected, while it undergoes some non-analytic change when topological order is lost as a result of perturbations.  

There have been several proposals for such topological order parameters. The number of ground states serves as a topological order parameter since the ground state degeneracy is protected for topologically ordered systems at the thermodynamic limit~\cite{Wen90}. Another interesting proposal is to use a certain entanglement measure, called topological entanglement entropy~\cite{Kitaev06, Levin06}, which corresponds to a constant correction to the entanglement area law. Since topological entanglement entropy is a non-local quantity which involves a large number of spins, it does not change under local unitary transformations. 

Below, we analyze the stability of topological order thorough expectation values of logical operators in order to further build the connection between topological order and quantum codes. 

\textbf{Expectation values of logical operators:}
Let us first consider two-dimensional Toric code and denote a pair of anti-commuting logical operators as $\ell$ and $r$ where $\ell$ extends in the $\hat{1}$ direction and $r$ extends in the $\hat{2}$ direction. Then, one may use the expectation values of the summations of translations of logical operators as order parameters:
\begin{align}
U(\ell) \ = \ \frac{1}{L} \sum_{y} T_{2}^{y}(\ell), \qquad U(r) \ = \ \frac{1}{L} \sum_{x} T_{1}^{x}(r)
\end{align}
where $L$ is the linear length of the system. Here, we took the summation of logical operators in a symmetric way. 

Now, let us discuss if topological order is stable under a perturbation $V$ by considering the following perturbed Hamiltonian with an initial bias:
\begin{align}
H_{\ell}(\epsilon) \ = \ H_{STS} - \epsilon U(\ell) + V.
\end{align}
Note that, when $V=0$, the small bias breaks the ground state degeneracy and
\begin{align}
\langle U(\ell)\rangle_{\epsilon \rightarrow 0} \ = \ 1.
\end{align}
Recall that, in evaluating the expectation value of $U(\ell)$ in a presence of $V$, one must take the limit where $\epsilon$ goes to zero after taking the limit where $L$ goes to infinity. Since the ground space is separated from excited states by a finite energy gap, one may neglect the effect of excited states. So, let us examine the effect of $\epsilon U(\ell)$ and $V$ on the ground space. As a result of $\epsilon U(\ell)$, the ground state space is split into two sectors with $\ell = 1$ and $\ell = -1$ where the energy splitting between two sectors is $\epsilon$. On the other hand, the energy splitting induced by a perturbation $V$ is exponentially suppressed by a factor $\exp(- L / L_{0})$. Then, in considering the effect of $V$ on a sector with $\ell =1$, one may neglect the effect of a sector with $\ell = -1$ when $\epsilon$ is $O(1)$. Now, we take the limit of $L \rightarrow \infty$. Then, the ground space corresponds to a sector with $\ell=1$, and we have 
\begin{align}
\langle U(\ell)\rangle_{\epsilon \rightarrow 0} \ > \ 0
\end{align}
even when $\epsilon$ goes to zero. However, if $V$ is large enough, the ground state properties will be lost and $\langle U(\ell)\rangle_{\epsilon \rightarrow 0}$ will be close to the value for the ground state of $V$. 
In a similar way, one may consider the following initial bias:
\begin{align}
H_{r}(\epsilon) \ = \ H_{STS} - \epsilon U(r) + V
\end{align}
and have 
\begin{align}
\langle U(r)\rangle_{\epsilon \rightarrow 0} \ > \ 0.
\end{align}
Therefore, the stability of topological order can be characterized by expectation values of $U(\ell)$ and $U(r)$.

This argument implies that the loss of the ground state properties may be captured through the stability of logical qubits encoded in the ground space. The initial bias of $\epsilon U(\ell)$ represents the initial encoding of a logical qubit. The expectation value of $U(\ell)$ under a perturbation $V$ represents how much information is protected in a presence of perturbations. 

\textbf{Stability of topological order:}
Based on these observations, let us formulate the stability of topological order in terms of logical operators. Let us first generalize the definition of logical operators. We call operators $U$ which satisfy the following conditions \emph{generalized logical operators}:
\begin{align}
&[H_{0},U] \ = \ 0 \\ 
&U \ \not= \ e^{i\theta}I \quad \mbox{and} \quad |U|_{\mbox{max}} \ = \ 1 \qquad \mbox{in the ground space}
\end{align}
where the action of $U$ is defined inside the ground space of the original unperturbed Hamiltonian $H_{0}$. In other words, generalized logical operators are any operators which does not change the energy of the system, but acts non-trivially inside the ground space. Generalized logical operators are not $c$-numbers inside the ground space. For example, a scaled summation of translations of logical operators is a generalized logical operator, while projectors onto the ground space are not generalized logical operators.

Based on generalized logical operators, we define the stability of topological order in the following way.

\begin{definition}[Stability of topological order]\label{def:order}
The system has topological order when there exists a pair of generalized logical operators $U(\ell)$ and $U(r)$ which satisfy the following conditions:
\begin{itemize}
\item $U(\ell)$ and $U(r)$ do not commute with each other: $[U(\ell),U(r)]\not =0$ inside the ground space. This implies the existence of the ground state degeneracy, and $U(\ell)$ and $U(r)$ may characterize a logical qubit or qudit.
\item For perturbed Hamiltonians:
\begin{align}
H_{\ell} \ = \ H_{0} - \epsilon U(\ell) + V, \qquad H_{r} \ = \ H_{0} - \epsilon U(r) + V
\end{align}
with any types of local perturbations $V$,
\begin{align}
\langle U(\ell) \rangle_{\epsilon \rightarrow 0, \ V \rightarrow 0} \ &= \ 1 \qquad \mbox{for $H_{\ell}$} \\
\langle U(r) \rangle_{\epsilon \rightarrow 0, \ V \rightarrow 0} \ &= \ 1    \qquad \mbox{for $H_{r}$}.
\end{align}
\end{itemize}
\end{definition}

In evaluating $\langle U(\ell) \rangle_{\epsilon \rightarrow 0, V \rightarrow 0}$, we first take the thermodynamic limit, and then, take the limit of $V \rightarrow 0$, and finally, take the limit of $\epsilon \rightarrow 0$. It should be emphasized that we consider any types of perturbations $V$ in analyzing the stability of topological order\footnote{The definition used here is similar to the one used in~\cite{Nussinov09} except for taking the limit of $V \rightarrow 0$. This is because the definition in~\cite{Nussinov09} is applicable to systems with gapless energy spectrum, while our definition is applicable only to systems with a finite energy gap.}. 

Let us look at some examples here. For a one-dimensional classical ferromagnet, let $\ell$ and $r$ be a zero-dimensional and a one-dimensional logical operators:
\begin{align}
\ell\ = \ Z_{1},\qquad r\ =\ \prod_{j}X_{j}.
\end{align}
Then, for $V = t \sum_{j}Z_{j}$, we have 
\begin{equation}
\begin{split}
H_{\ell} \ &= \ - \sum Z_{j}Z_{j+1}- \epsilon \Big(\frac{1}{L}\sum_{j}Z_{j}  \Big) + t \sum_{j} Z_{j} \\
           &= \ - \sum Z_{j}Z_{j+1} + \Big(t - \frac{\epsilon}{L} \Big) \sum_{j} Z_{j}.
\end{split}
\end{equation}
At the thermodynamic limit, we have 
\begin{align}
\langle U(\ell) \rangle \ &= \ -1     \qquad \mbox{for $H_{\ell}$}
\end{align}
regardless of $t>0$ and $\epsilon>0$. Then, taking the limit of $t \rightarrow 0$ and $\epsilon \rightarrow 0$, we have 
\begin{align}
\langle U(\ell) \rangle_{\epsilon \rightarrow 0, \ V \rightarrow 0} \ &= \ -1    \qquad \mbox{for $H_{\ell}$}.
\end{align}
Therefore, the system is not topologically ordered. 

\textbf{Topological order parameters and local unitary:} Strictly speaking, expectation values of logical operators cannot be used as topological order parameters. In fact, while logical operators are non-locally defined for topologically ordered systems, their expectation values may change under local unitary transformations. However, topological order must be characterized by some physical quantities or objects which are not affected by local unitary transformations. In fact, both the ground state degeneracy and topological entanglement entropy do not change under local unitary transformations. In this light, one might think that the above definition of the stability of topological order is not legitimate. Here, we make some comments on this issue briefly.

Despite the fact that expectation values of logical operators are not topological order parameters, we can formulate the stability of topological order through them. This is because we considered not only a specific type of local perturbations but all the types of local perturbations. Here, we demonstrate that, if we consider only one type of perturbations, expectation values of logical operators may fail to capture quantum phase transitions. In two-dimensional Toric code, let us consider a one-dimensional logical operator $\ell$ which consists only of Pauli $X$ operators. Then, consider the following perturbation:
\begin{align}
V \ = \ - t \sum_{r} X_{r},
\end{align}
where $t$ is some positive parameter and $r$ represents the position of each qubit. Then, when we increase $t$, there must be a quantum phase transition and the ground state properties will be close to the ones for $V$. However, such a perturbation may not change the value of $\langle U(\ell) \rangle$ since $V$ commutes with the logical operator $\ell$. In general, if one carefully choose the types of perturbations, expectation values of logical operators remain unchanged. This implies that if one tries to study a quantum phase transition only through $\langle U(\ell) \rangle$, one may fail to detect the transition.\footnote{While expectation values of logical operators cannot be used as topological order parameters, geometric shapes of dressed logical operators~\cite{Beni10b}, which are operators characterizing the transformations inside the ground space of the perturbed Hamiltonians~\cite{Hastings05}, may be used. Since continuous deformations between topologically distinct logical operators are not allowed, geometric shapes of dressed logical operators must undergo some discontinuous change during the transition. Thus, some non-analyticities will be induced in ground states during quantum phase transitions between Hamiltonians with topologically distinct logical operators.}

\subsection{Topological order at finite temperature}\label{sec:topo_ap2}

So far, we have focused on the ground state properties of topologically ordered systems at zero temperature and seen that topological order characterizes the ground state properties which are stable against any types of small perturbations. Here, we address the thermal stability of topological order.

The stability of topological order against perturbations can be discussed by seeing if the perturbed ground state can be reached by local unitary transformations or not. However, the stability of topological order against thermal fluctuations cannot be formulated in a similar way since there is no unitary transformation which connects a pure state and a statistical ensemble. Here, we define the stability of topological order against thermal fluctuations by the changes of topological order parameters. 

\textbf{Stability against thermal fluctuations:}
Below, we give the definition of the stability of topological order through expectation values of logical operators. 

\begin{definition}[Topological order at finite temperature]
The system is said to have topological order which is stable at finite temperature if and only if there exists a pair of non-commuting generalized logical operators $U(\ell)$ and $U(r)$, and some finite transition temperature $T_{c}$ which satisfy the following conditions:
\begin{itemize}
\item Consider perturbed Hamiltonians with initial biases
\begin{align}
H_{\ell} \ = \ H_{0} - \epsilon U(\ell) + V, \qquad H_{r} \ = \ H_{0} - \epsilon U(r) + V.
\end{align}
For $T=0$, 
\begin{align}
\langle U(\ell) \rangle_{\epsilon \rightarrow 0, \ V \rightarrow 0} \ &= \ 1 \qquad \mbox{for $H_{\ell}$} \\
\langle U(r) \rangle_{\epsilon \rightarrow 0, \ V \rightarrow 0} \ &= \ 1    \qquad \mbox{for $H_{r}$},
\end{align}
and for $T_{c}>T>0$, 
\begin{align}
\langle U(\ell) \rangle_{\epsilon \rightarrow 0, \ V \rightarrow 0} \ &> \ 0 \qquad \mbox{for $H_{\ell}$} \\
\langle U(r) \rangle_{\epsilon \rightarrow 0, \ V \rightarrow 0} \ &> \ 0    \qquad \mbox{for $H_{r}$}
\end{align}
for any types of $V$.
\end{itemize}
\end{definition}

One may see the connection between self-correcting quantum memory and the stability of topological order. In particular, if topological order in a spin system is stable at finite temperature, such a system works as self-correcting quantum memory. A logical qubit encoded with respect to $U(\ell)$ and $U(r)$ can be read out by measuring $U(\ell)$ and $U(r)$ even at finite temperature. This implies that encoded logical qubits are not lost in a presence of the interaction with the external environment, and there must be some self-correcting thermal dissipation processes. 

It should be noted that the expectation values of logical operators $\ell$ and $r$ usually vanishes at any finite temperature as one may see from a direct calculation. Only the symmetric summations of logical operators may give rise to non-vanishing expectation values of generalized logical operators. 

\textbf{Comment on the definition of stability:}
Our definition of the stability of topological order relies on topological order parameters associated with a pair of non-commuting logical operators $\ell$ and $r$. The definition of the stability of topological order in the present paper is motivated purely from quantum information theoretical viewpoints. However, the definition we have used in the present paper may be too strict. For example, if one is interested only in the ground state properties associated to the expectation value of $\ell$, it makes perfect sense to claim that a three-dimensional STS model has topological order which is stable at finite temperature. In fact, once the system is held at finite temperature, the system properties are stable against any types of small perturbations. 

Similarly, the stability of the ground state properties depends crucially on types of perturbations. For example, while there is no topological order in a one-dimensional system in a strict sense, the ground state properties of a one-dimensional classical ferromagnet are stale against small perturbations if one declares to consider only perturbations which are products of $X$ operators. Similarly, the ground state properties of the one-dimensional AKLT model is stable against small perturbations which do not break the time-reversal symmetry. Therefore, in discussing the stability of topological order, one needs to specify types of symmetries which are of interest. 

\subsection{Thermal stability of topological order and energy barrier}\label{sec:topo_ap3}

We have seen that a large energy barrier in classical or quantum memory is the key to the thermal stability of ferromagnetic order or topological order at finite temperature. Yet, it is not clear if the existence of a large energy barrier is sufficient for the thermal stability or not. 

While STS models always have energy barrier $O(L^{a})$ where $a$ is an integer ($a \geq 0$) due to the dimensional duality of logical operators, there are several examples of stabilizer codes with broken scale symmetries which have logarithmic energy barrier: $\Delta E \sim \mbox{LOG($L$)}$. Let us look at an example of classical memory with broken scale symmetries~\cite{Newman99}. The model is constructed on a square lattice with $L\times L$ qubits:
\begin{align}
H \ = \ - \sum_{i,j}Z_{i,j}Z_{i+1,j}Z_{i,j+1}.
\end{align}
The model does not have scale symmetries, and there is no upper bound on the number of logical qubits (bits) $k$ since $k$ is highly sensitive to $L$. The model has logarithmic energy barrier $\Delta E \sim \mbox{LOG($L$)}$, and as a result, the bit memory time is expected to be $\tau \sim \mbox{POLY($L$)}$ if one trusts the Arrhenius law. 

Despite a large energy barrier which scales with the system size, the model is known to be thermally unstable. This may be easily verified by computing the partition function: $Z(\beta)=\text{Tr}e^{-\beta H}$. Since $N-k$ stabilizers are independent, we have
\begin{align}
(e^{-\beta} + e^{\beta})^{N-k}(e^{-\beta})^{k} \ \leq \ Z(\beta) \ \leq \ (e^{-\beta} + e^{\beta})^{N-k} (e^{\beta})^{k}
\end{align}
and, we have
\begin{align}
\lim_{N \rightarrow \infty} \frac{1}{N} \log Z(\beta)\ = \ \log (e^{-\beta} + e^{\beta}).
\end{align}
Therefore, the thermodynamic property of this model is equivalent to a single qubit in a magnetic field, and the model does not have the thermal stability. A similar discussion holds for stabilizer codes with $N$ qubits, supported by $N$ interaction terms, which include one-dimensional Ising model, two-dimensional Toric code, models proposed in~\cite{Chamon05, Haah11}. Note that $M > N$ in two-dimensional Ising model and four-dimensional Toric code where $M$ is the number of interaction terms. 

It is not clear whether polynomial energy barrier is necessary for thermal stability of ferromagnetic order or not. Another important characteristic of models with logarithmic energy barrier is the existence of a logarithmically large number of local minima. As a result, at finite temperature, the entropic term easily dominates the free energy function. So, it seems that the thermal instability results from both the logarithmic energy barrier and the logarithmically large number of local minima.

It is worth mentioning that there is an interacting spin model with a logarithmic energy barrier, but is not defined on a lattice. The model is called two-dimensional XY model, which is known to undergo a thermal phase transition at finite temperature. The model is of particular interest since it was not expected to possess any thermal phase transitions as a result of the Mermin-Wagner theorem which states that continuous symmetries cannot be spontaneously broken at finite temperature for $D \leq 2$. The reason why two-dimensional XY model undergoes a thermal phase transition at finite temperature is because the transition, known as Kosterlitz-Thouless transition, is induced by topological defects which are not directly related to continuous symmetries of the model. In two-dimensional XY model, it takes logarithmic energy to create a vortex: $\Delta E \sim \log r$ where $r$ is the size of a vortex. As a result, at low temperature, a configuration without vortices is favored, while at high temperature, a configuration with a large number of vortices is favored. Between these configurations, a phase transition occurs. 

From the observations above, it seems that the logarithmic energy barrier may lead to the thermal stability, while the existence of local minima may break the thermal stability. Yet, at this moment, the connection between the energy barrier and the thermal stability has not been completely established.  

\section{Topology and quantum codes}\label{sec:topology}

In section~\ref{sec:physics}, we have shown that one can deform geometric shapes of logical operators continuously while keeping them equivalent. This continuous deformability of logical operators implies that one can introduce the notion of topology in analyzing and classifying coding and physical properties of STS models. In this appendix, we further discuss the role of topology in analyzing stabilizer codes. 

Many interesting properties of STS models, such as the dimensional duality of logical operators, naturally appear as corollaries of the continuous deformability of logical operators. In~\ref{sec:topology1}, we begin by showing that the dimensional duality of logical operators can be derived only by assuming the continuous deformability of logical operators. In particular, we show that, for stabilizer codes with continuously deformable logical operators, $m$-dimensional and $D-m$-dimensional logical operators always form anti-commuting pairs where $m$ is an arbitrary positive integer ($m\leq D$). As an example of $D$-dimensional stabilizer codes with continuously deformable logical operators, we present generalizations of the Toric code to $D$-dimensional systems with anti-commuting pairs of $m$-dimensional and $D-m$-dimensional logical operators for arbitrary positive integers $D$ and $m$.

The topological property of logical operators naturally leads us to consider a possible relevance to another well-celebrated theoretical framework equipped with the notion of topology; called topological quantum field theory (TQFT)~\cite{Witten89, Birmingham91, Nayak08}. In~\ref{sec:topology2}, we establish the connection between quantum codes with the continuous deformability and TQFT further by demonstrating that the braiding of anyonic excitations in a $D$-dimensional stabilizer code is characterized by a topological invariant in a $D+1$-dimensional system. This implies that such a $D$-dimensional stabilizer code can be effectively described by $D+1$-dimensional TQFT. 

\subsection{Dimensional duality as a corollary of continuous deformability}\label{sec:topology1}

The continuous deformability of logical operators implies that topology is the essential notion in analyzing coding and physical properties of STS models. Then, a naturally arising question concerns the role of topology in determining coding and physical properties of stabilizer codes. Here,  we show that the dimensional duality of logical operators is a universal property for all the stabilizer codes with continuously deformable logical operators. We also give concrete examples of $D$-dimensional stabilizer codes which have continuous deformability of logical operators and the dimensional duality by generalizing the Toric code to $D$-dimensional systems.

\textbf{Dimensional duality from topological deformation:}
Let us begin by counting the number of independent logical operators defined inside $m$-dimensional regions. Recall that $m$-dimensional concatenated unit regions are obtained by taking unions of all the $m$-dimensional topological unit regions. Then, one may call logical operators which can be defined inside $R_{m}$, but cannot be defined inside $R_{m-1}$, \emph{$m$-dimensional logical operators}. 

\begin{definition}
$m$-dimensional logical operators are logical operators which have representations supported inside $R_{m}$, but do not have representations supported inside $R_{m-1}$.
\end{definition}

Now, let us denote the number of independent $m$-dimensional logical operators as $g_{m} \equiv g_{R_{m}} - g_{R_{m-1}}$ where $g_{0} \equiv g_{R_{0}}$ by setting $g_{R_{-1}}\equiv0$. Then, there exists an interesting relation among the numbers of $m$-dimensional logical operators. In particular, the following lemma holds.

\begin{lemma}\label{lemma_duality}
There are the same number of $m$-dimensional and $D-m$-dimensional logical operators:
\begin{align}
g_{m} = g_{D-m} \qquad \mbox{for} \quad m = 0,\cdots,D.
\end{align}
\end{lemma}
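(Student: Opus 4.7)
The plan is to reduce the statement to a linear recursion among the cumulative counts $G_m \equiv g_{R_m}$, adopting the conventions $G_{-1} \equiv 0$ and $G_D \equiv g_{R_D} = 2k$ (the full system supports all $2k$ independent logical operators). By definition $g_m = G_m - G_{m-1}$, so the lemma $g_m = g_{D-m}$ is equivalent to the symmetry $G_m + G_{D-1-m} = 2k$. The proof rests on three ingredients already established: the bi-partition theorem (Theorem~\ref{theorem_partition}), the continuous deformability of logical operators (Theorem~\ref{theorem_continuous_deformation}), and a purely topological identity $\overline{R_m} \simeq R_{D-1-m}$.

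The key topological input is that the complement of the $m$-skeleton $R_m$ of the natural cell decomposition of $T^D$ deformation-retracts onto the dual $(D-1-m)$-skeleton $R_{D-1-m}$. For $D \leq 3$ this is already embedded in the list of equivalences recorded in Eq.~(\ref{eq:list}): the extreme cases $\overline{R_0} \simeq R_{D-1}$ (the torus minus a point retracts onto its codimension-one skeleton) and $\overline{R_{D-1}} \simeq R_0$ (the complement of the codimension-one skeleton is a single open $D$-cell) are clear by inspection, and the intermediate cases interpolate by the same gradual enlargement procedure illustrated in Fig.~\ref{fig_deformation}.

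Granting this identity, I would apply Theorem~\ref{theorem_continuous_deformation} to obtain $g_{\overline{R_m}} = g_{R_{D-1-m}} = G_{D-1-m}$, and combine with the bi-partition theorem $g_{R_m} + g_{\overline{R_m}} = 2k$ to conclude
\begin{align}
G_m + G_{D-1-m} \ = \ 2k \qquad \mbox{for all } m.
\end{align}
The dimensional duality then follows from a one-line manipulation:
\begin{align}
g_{D-m} \ = \ G_{D-m} - G_{D-m-1} \ = \ (2k - G_{m-1}) - (2k - G_m) \ = \ G_m - G_{m-1} \ = \ g_m.
\end{align}

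The main obstacle is the topological step, not the algebra. One has to verify $\overline{R_m} \simeq R_{D-1-m}$ as a chain of the cubical continuous deformations permitted by the footnote to Theorem~\ref{theorem_continuous_deformation}, and take care at the boundary values $m = -1$ and $m = D$ so that the relation $G_m + G_{D-1-m} = 2k$ holds at the endpoints as well. For $D \in \{2,3\}$ the explicit tables of equivalences in Eq.~(\ref{eq:list}) make this immediate; for arbitrary $D$ the identity is the cellular shadow of Poincar\'e duality on the $D$-torus, and a general proof would have to invest its main effort in implementing this duality at the lattice level.
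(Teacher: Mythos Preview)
Your proof is correct and follows essentially the same route as the paper: both combine the bi-partition identity $g_{R_m}+g_{\overline{R_m}}=2k$ with the topological equivalence $\overline{R_m}\simeq R_{D-1-m}$ to obtain the relation $\sum_{j=0}^{m}g_j+\sum_{j=0}^{D-m-1}g_j=2k$, and then extract $g_m=g_{D-m}$ by differencing consecutive values of $m$. Your packaging via the cumulative counts $G_m$ and the one-line telescoping is a clean way to make that last step explicit, but the ingredients and logic are identical to the paper's argument.
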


The proof of this lemma can be obtained through a simple algebra by combining theorem~\ref{theorem_partition} and the deformability of logical operators.

\begin{proof}
Consider a bi-partition of the entire system into $R_{m}$ and $\overline{R_{m}}$. From the topological deformation of logical operators, we have 
\begin{align}
g_{\overline{R_{m}}} = g_{R_{D-m-1}}
\end{align}
since $\overline{R_{m}} \simeq R_{D-m-1}$. Thus, $R_{m}$ and $\overline{R_{m}}$ support the following logical operators:
\begin{align}
R_{m}            &: \quad \mbox{$0$-dim}, \ \mbox{$1$-dim}, \ \cdots, \ \mbox{$m$-dim} \notag \\
\overline{R_{m}} &: \quad \mbox{$0$-dim}, \ \mbox{$1$-dim}, \ \cdots, \ \mbox{$D-m-1$-dim}. \notag
\end{align}
Therefore, we have
\begin{align}
g_{R_{m}} = \sum_{j=0}^{m} g_{j}, \qquad g_{\overline{R_{m}}} = \sum_{j=0}^{D-m-1}g_{j}.
\end{align}

Recall that $g_{R}+g_{\bar{R}}=2k$ as presented in theorem~\ref{theorem_partition}. Using this formula for $R = R_{m}$, we have
\begin{align}
g_{R_{m}} + g_{\overline{R_{m}}} = 2k.
\end{align}
Then, we have
\begin{align}
\sum_{j=0}^{m} g_{j} + \sum_{j=0}^{D-m-1}g_{j} = 2k, \qquad \mbox{for} \quad m = 0,\cdots,D.
\end{align}
Since the total number of independent logical operators is $\sum_{j=0}^{D}g_{j}=2k$, we have $g_{m} = g_{D-m}$ for all $m$.
\end{proof}

This lemma implies the existence of a dimensional duality in geometric shapes of logical operators. One may notice that this lemma is just a manifestation of Poincar\'{e} duality in a $D$-torus where the $m$th and $D-m$th Betti numbers are equal. 

To completely establish relations between each logical operator with different dimensions, let us analyze their commutation relations. We have the following theorem.

\begin{theorem}[Dimensional duality]\label{theorem_duality}
One can choose a set of $2k$ independent logical operators of $D$-dimensional systems with the deformability of logical operators in the following way:
\begin{align}
\left\{
\begin{array}{cccccc}
 \ell_{1}, & \cdots , &  \ell_{k}        \\
  r_{1}, & \cdots , &  r_{k} 
\end{array}
 \right\}.
\end{align}
where $\ell_{p}$ are $m_{p}$-dimensional logical operators and $r_{p}$ are $D-m_{p}$-dimensional logical operators for some integer $m_{p}$ ($0 \leq m_{p} \leq D$) for any $p=1,\cdots,k$.
\end{theorem}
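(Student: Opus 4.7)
The plan is to derive Theorem~\ref{theorem_duality} from Lemma~\ref{lemma_duality} combined with the bi-partition theorem (Theorem~\ref{theorem_partition}) and the continuous deformability of logical operators (Theorem~\ref{theorem_continuous_deformation}). The strategy is to equip the space of logical operators with a symplectic structure that respects the dimensional filtration coming from the reference regions $R_0\subseteq R_1\subseteq\cdots\subseteq R_D$, and then extract a symplectic basis compatible with that filtration.

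First I would pass to the $2k$-dimensional $\mathbb{F}_2$-vector space $L$ of logical operators modulo stabilizers; commutation equips $L$ with a non-degenerate symplectic form $P$ (with $P(A,B)=1$ iff $A,B$ anti-commute). For any region $R$, let $V_R\subseteq L$ denote the subspace of classes representable inside $R$. Operators with disjoint supports commute automatically, so $V_{\bar R}\subseteq V_R^\perp$; combining the bi-partition identity $g_R+g_{\bar R}=2k$ with the symplectic identity $\dim V_R+\dim V_R^\perp=2k$ forces equality $V_R^\perp=V_{\bar R}$. Setting $V_m:=V_{R_m}$ and using the deformation equivalence $\overline{R_m}\simeq R_{D-m-1}$ guaranteed by Theorem~\ref{theorem_continuous_deformation}, one obtains the key identity $V_m^\perp=V_{D-m-1}$ for the filtration $0\subseteq V_0\subseteq V_1\subseteq\cdots\subseteq V_D=L$.

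Next I would pass to the graded pieces. The identity $V_m^\perp=V_{D-m-1}$ implies both $V_{m-1}\subseteq V_{D-m}^\perp$ and $V_{D-m-1}\subseteq V_m^\perp$, so $P$ descends to a well-defined bilinear pairing
\[
\tilde P\colon V_m/V_{m-1}\times V_{D-m}/V_{D-m-1}\to\mathbb{F}_2.
\]
Non-degeneracy is immediate: if $\ell\in V_m$ pairs trivially with all of $V_{D-m}$, then $\ell\in V_{D-m}^\perp=V_{m-1}$, so its class is zero. The dimension equality $g_m=g_{D-m}$ from Lemma~\ref{lemma_duality} is exactly what a perfect pairing demands for $m\ne D/2$, while for $m=D/2$ one gets a non-degenerate alternating form, which automatically forces $g_{D/2}$ to be even. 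I would then build the canonical basis by induction on $k$: pick $\ell\in V_m\setminus V_{m-1}$ for some $m$; non-degeneracy of $\tilde P$ yields $r\in V_{D-m}$ with $P(\ell,r)=1$, and $r$ cannot lie in $V_{D-m-1}=V_m^\perp$ (else it would commute with $\ell$), so $\dim r=D-m$. The hyperbolic plane $\langle\ell,r\rangle$ splits off symplectically as $L=\langle\ell,r\rangle\oplus\langle\ell,r\rangle^\perp$, and restricting the filtration via $V_m':=V_m\cap\langle\ell,r\rangle^\perp$ reproduces the same hypotheses with $k$ replaced by $k-1$; iterating $k$ times yields the desired basis.

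The main obstacle is precisely this inductive step: checking that the restricted filtration on $\langle\ell,r\rangle^\perp$ genuinely satisfies $V_m'{}^\perp=V_{D-m-1}'$ (with perp now taken inside $\langle\ell,r\rangle^\perp$), and that the graded dimensions drop by exactly one at levels $m$ and $D-m$. This is a delicate but standard symplectic-linear-algebra bookkeeping exercise, requiring careful choice of the lifts of $\ell$ and $r$ modulo $V_{m-1}$ and $V_{D-m-1}$ so that the orthogonal decomposition is compatible with the filtration. The case $m=D/2$ needs slightly more care because $\ell$ and $r$ sit in the same graded piece and the hyperbolic pair must be extracted directly from the alternating form $\tilde P$ on $V_{D/2}/V_{D/2-1}$ itself. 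Once this step is justified, the resulting basis $\{(\ell_p,r_p)\}_{p=1}^k$ automatically satisfies $\dim\ell_p+\dim r_p=D$, which is exactly the canonical form claimed.
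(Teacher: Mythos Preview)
Your proposal is correct and arrives at the same endpoint as the paper, but you reach the key identity $V_m^\perp=V_{D-m-1}$ by a genuinely different route. The paper proves a single geometric lemma: if $m+m'<D$ then $m$-dimensional and $m'$-dimensional logical operators commute, because one can translate $R_m$ so that it is disjoint from $R_{m'}$ and then invoke the translation equivalence of logical operators (Theorem~\ref{theorem_TE}). This one observation gives $V_{D-m-1}\subseteq V_m^\perp$ directly, and the dimension count from Lemma~\ref{lemma_duality} upgrades the inclusion to equality. Your path instead factors through the bi-partition identity $V_{\bar R}=V_R^\perp$ (a clean general fact you extract from Theorem~\ref{theorem_partition} plus symplectic non-degeneracy) and then invokes the continuous deformability $\overline{R_m}\simeq R_{D-m-1}$ from Theorem~\ref{theorem_continuous_deformation} to identify $V_{\overline{R_m}}$ with $V_{R_{D-m-1}}$ as subspaces. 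Both arguments are valid; the paper's is shorter and uses only the lightweight translation-equivalence theorem, while yours is more structural and makes the symplectic picture explicit, at the cost of leaning on the heavier deformability result (whose own proof the paper only sketches). Your inductive extraction of hyperbolic pairs, including the careful handling of the filtration on $\langle\ell,r\rangle^\perp$ and the $m=D/2$ case, is substantially more detailed than the paper's, which simply asserts that once the commutation lemma is in hand ``the proof of Theorem~\ref{theorem_duality} is immediate.''
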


In other words, one can choose logical operators such that the summation of dimensions of pairs of anti-commuting logical operators is always $D$. Theorem~\ref{theorem_duality} follows immediately from the following lemma.

\begin{lemma}
$m$-dimensional and $m'$-dimensional logical operators commute with each other if $m + m' <D$.
\end{lemma}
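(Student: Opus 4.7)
The plan is to show that when $m+m'<D$, an $m$-dimensional logical operator $\ell$ and an $m'$-dimensional logical operator $r$ admit representatives with disjoint supports, which then commute trivially as Pauli operators.

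First, by the definition of an $m$-dimensional (resp.\ $m'$-dimensional) logical operator, I can choose representatives with $\mathrm{supp}(\ell)\subseteq R_m$ and $\mathrm{supp}(r)\subseteq R_{m'}$. Next, by the translation equivalence of logical operators (Theorem~\ref{theorem_TE}), applied coordinate-by-coordinate, the shifted operator $r':=T_1^{t_1}\cdots T_D^{t_D}(r)$ is equivalent to $r$ for every translation vector $\vec{t}=(t_1,\ldots,t_D)$, and since commutation of Pauli operators is invariant under multiplication by stabilizers, it suffices to show that $\ell$ and $r'$ commute for \emph{some} convenient choice of $\vec{t}$. I would take $\vec{t}=(1,1,\ldots,1)$.

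The heart of the argument is the geometric claim $R_m\cap(R_{m'}+\vec{t})=\emptyset$. Recall that a topological unit region $Q(\vec{d})$ with $w(\vec{d})=m$ is the ``slice''
\begin{align}
Q(\vec{d})\ =\ \{(r_1,\ldots,r_D)\ :\ r_i=1 \text{ for every }i \text{ with } d_i=0\},
\end{align}
while after translation by $\vec{t}=(1,\ldots,1)$ the slice becomes $\{r_i=2 \text{ for every }i \text{ with } d'_i=0\}$. A point in the intersection $Q(\vec{d})\cap(Q(\vec{d'})+\vec{t})$ must simultaneously satisfy $r_i=1$ at every index with $d_i=0$ and $r_i=2$ at every index with $d'_i=0$; whenever there exists an index $i^*$ with $d_{i^*}=d'_{i^*}=0$ and $n_{i^*}\geq 2$, this is impossible. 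Since $|\{i:d_i=1\}\cup\{i:d'_i=1\}|\leq w(\vec{d})+w(\vec{d'})=m+m'<D$, such an index $i^*$ always exists. Taking unions over all pairs of unit regions $Q(\vec{d})\subseteq R_m$ and $Q(\vec{d'})\subseteq R_{m'}$ establishes $R_m\cap(R_{m'}+\vec{t})=\emptyset$.

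Since $\mathrm{supp}(\ell)$ and $\mathrm{supp}(r')$ are then disjoint sets of composite particles, $\ell$ and $r'$ commute as tensor products of Pauli operators on disjoint qubits, and hence $[\ell,r]=0$ by the equivalence $r\sim r'$. The main subtlety is finding a single translation vector $\vec{t}$ that uniformly separates every pair of unit regions from $R_m$ and $R_{m'}$; the uniform choice $\vec{t}=(1,\ldots,1)$ works precisely because $m+m'<D$ forces at least one coordinate to be ``fixed'' in both $Q(\vec{d})$ and $Q(\vec{d'})$, and shifting that coordinate by one suffices to displace the two slices. No analogous uniform separation exists when $m+m'\geq D$, which is consistent with the dimensional duality pairing $m$ with $D-m$ in Theorem~\ref{theorem_duality}.
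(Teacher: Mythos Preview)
Your proof is correct and follows essentially the same approach as the paper: both choose representatives supported in $R_m$ and $R_{m'}$, invoke translation equivalence, and observe that some translation makes the supports disjoint, forcing commutativity. The only difference is that the paper merely asserts the existence of a separating translation, whereas you exhibit the explicit choice $\vec{t}=(1,\ldots,1)$ and verify via the pigeonhole argument on the binary vectors $\vec{d},\vec{d'}$ that it works; this makes your write-up more self-contained but is not a different idea.
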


\begin{proof}
Consider a $m$-dimensional logical operator $\ell$ and a $m'$-dimensional logical operator $\ell'$ which is defined inside $R_{m}$ and $R_{m'}$ respectively. For $m + m'<D$, there exists a translation of $R_{m}$ such that $R_{m}$ and $R_{m'}$ have no overlap. Then, due to the translation equivalence of logical operators, some translation of $\ell$ do not have an overlap with $\ell'$, which leads to $[\ell,\ell']=0$.
\end{proof}

With this lemma, the proof of theorem~\ref{theorem_duality} is immediate by using lemma~\ref{lemma_duality}. For example, from the lemma, zero-dimensional logical operators may anti-commute only with $D$-dimensional logical operators. Since there are the same number of zero-dimensional and $D$-dimensional logical operators, there exists a canonical set of logical operators where $D$-dimensional logical operators can anti-commutes only with zero-dimensional logical operators. Similarly, one can show that there exists a set of $2k$ independent logical operators such that $m$-dimensional logical operators anti-commute only with $D-m$-dimensional logical operators for all $m$. 

\textbf{Generalization of the Toric code:}
Now, we give concrete examples of $D$-dimensional stabilizer codes which have continuously deformable logical operators. The model we present here is a straightforward generalization of two-dimensional Toric code to $D$-dimensional settings for arbitrary positive integer $D$. In particular, we illustrate the construction of $D$-dimensional Toric code with anti-commuting pairs of $m$-dimensional and $D-m$ dimensional logical operators for arbitrary positive integers $D$ and $m$ ($m \leq D$). 

\begin{figure}[htb!]
\centering
\includegraphics[width=0.50\linewidth]{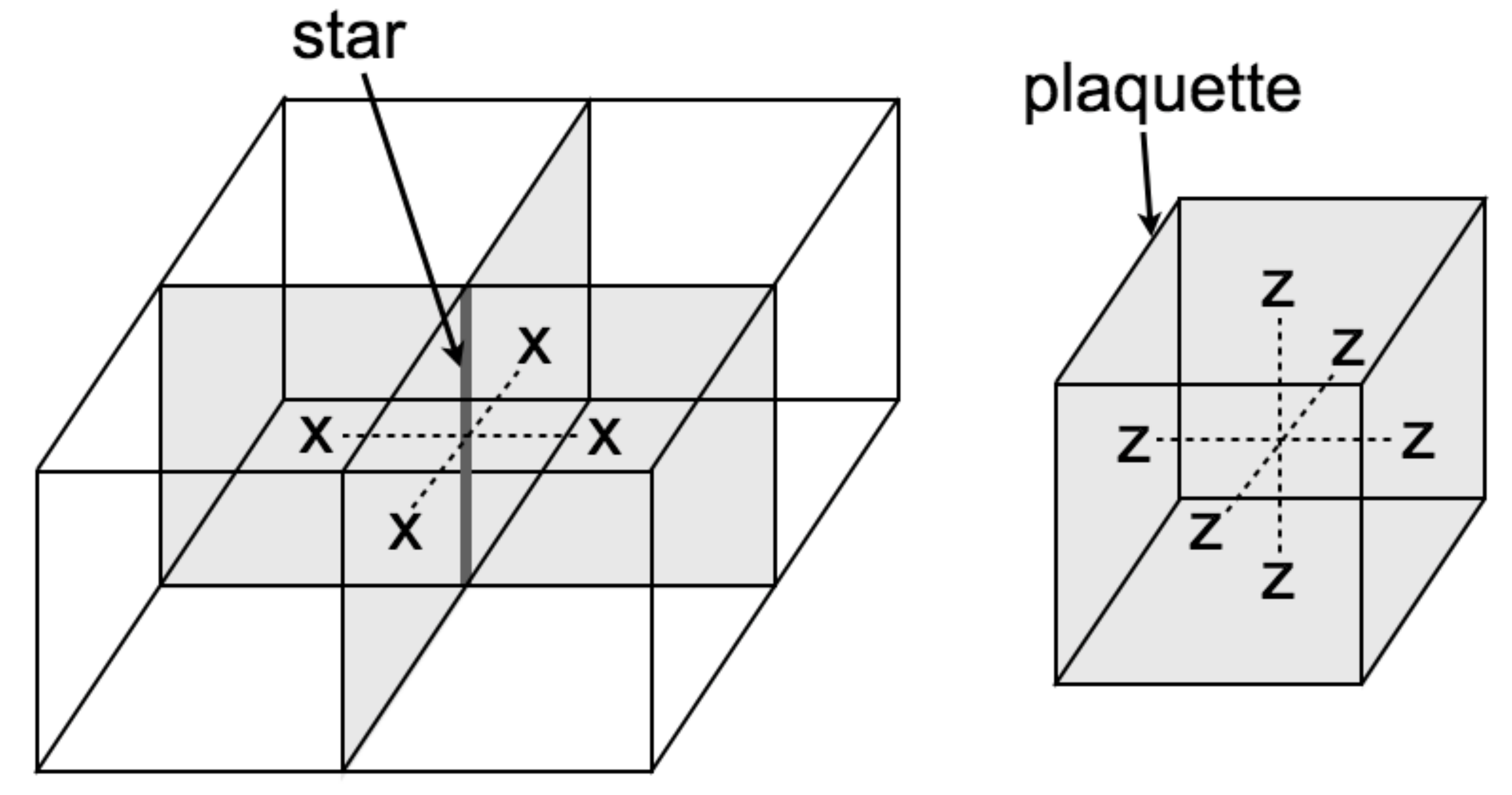}
\caption{An example of the construction for $D=3$ and $m=2$. 
} 
\label{D_dim}
\end{figure}

We first consider a $D$-dimensional hypercubic lattice which consists of $N=L\times \cdots \times L$ of $D$-dimensional unit hypercubes with periodic boundary conditions. We denote a set of $p$-dimensional unit hypercubes in this lattice as $h_{p}$. Note that $|h_{p}|  =  {}_D C_{p} \cdot N$
since one needs to specify $p$ directions from $D$ directions in defining $p$-dimensional unit hypercubes. We put qubits at the centers of $m$-dimensional hypercubes. Then, the total number of qubits is ${}_D C_{m} \cdot N$, and qubits are labeled by $m$-dimensional unit hypercubes in $h_{m}$. Fig.~\ref{D_dim} shows a construction of the model for $D=3$ and $m=2$ where qubits live at centers of two-dimensional unit squares. 

The Hamiltonian consists of plaquette terms and star terms as in the conventional two-dimensional Toric code:
\begin{align}
H \ = \  H_{plaquette} + H_{star}
\end{align}
Here, we call $(m+1)$-dimensional unit hypercubes in $h_{m+1}$ ``plaquettes''. Then, plaquette terms consist of Pauli $Z$ operators which act on qubits included in plaquettes:
\begin{align}
H_{plaquette} \ = \ - \sum_{p \in h_{m+1}} B_{p}, \qquad B_{p}\ = \ \prod_{r \subset p} Z_{r}
\end{align}
where $r$ represent $m$-dimensional unit hypercubes included inside a $m+1$-dimensional unit hypercube $p$. Note that there are ${}_D C_{m+1}\cdot N$ plaquette terms. Next, we call $(m-1)$-dimensional unit hypercubes in $h_{m-1}$ ``stars''. Then, star terms consist of Pauli $X$ operators which act on qubits neighboring to a star:
\begin{align}
H_{star} \ = \ - \sum_{s \in h_{m-1}}A_{s}, \qquad A_{s} \ = \  \prod_{s \subset r} X_{r}
\end{align}
where $s \subset r$ means that a star $s$ is included inside $m$-dimensional unit hypercube $r$. Note that there are ${}_D C_{m-1}\cdot N$ star terms. Noting that $[A_{s},B_{p}]=0$ since $A_{s}$ and $B_{p}$ share either zero or two qubits in common, the model is a stabilizer code. Fig.~\ref{D_dim} shows constructions of star terms and plaquette terms for $D=3$ and $m=2$.

One can verify that the model has $k = {}_D C_{m}$ logical qubits with $k$ anti-commuting pairs of $m$-dimensional and $(D-m)$-dimensional logical operators where a $m$-dimensional logical operator consists of Pauli $Z$ operators supported on a $m$-dimensional hyperplane, while a $(D-m)$-dimensional logical operator consists of Pauli $X$ operators supported on a $(D-m)$-dimensional hyperplane. Since they may share either zero or one qubit, they may commute or anti-commute with each other. One may easily see that the construction above reproduces two-dimensional, three-dimensional and four-dimensional Toric code for choices of $(D,m)=(2,1), (3,1), (4,2)$. For $m=0$ and $m=D$, the model is reduced to the $D$-dimensional Ising model. One can verify that logical operators arising in this model can be deformed continuously by using the bi-partition theorem (theorem~\ref{theorem_partition}). The construction above can be easily generalized to arbitrary $D$-dimensional graph embedded in a $D$-dimensional geometric manifold in a way similar to two-dimensional Toric code.

\subsection{$D$-dimensional STS model and $D+1$-dimensional TQFT}\label{sec:topology2}

The emergence of the notion of topology in geometric shapes of logical operators leads us to consider a possible relevance to topological quantum field theory (TQFT), which also deals with systems whose physical properties depend heavily on topological characteristics of the systems. Here, we make an attempt to establish the connection between stabilizer codes with continuously deformable logical operators and TQFT further.

Roughly speaking, TQFT is a field theory which is invariant under diffeomorphism (continuous deformation), and particularly suited for describing topologically ordered systems. The most important characteristic in systems described by TQFT is the invariance of all the correlation functions under diffeomorphism. Consider an arbitrary diffeomorphism through a continuous change of space-time coordinates $x \rightarrow x'$. Then, the correlation function of a scalar operator $\phi(x)$ satisfies
\begin{align}
\langle 0_{i} | \phi(x_{1}) \phi(x_{2})\cdots \phi(x_{n}) | 0_{j} \rangle = \langle 0_{i} | \phi(x_{1}') \phi(x_{2}')\cdots \phi(x_{n}') | 0_{j} \rangle
\end{align}
where $|0_{i}\rangle$ represent degenerate ground states. Therefore, the vacuum expectation value of any products of scalar operators is invariant under differmorphism, and only the topological properties of products may characterize their expectation values.

Below, in order to establish the connection between stabilizer codes and TQFT, we show that the braiding of anyonic excitations in $D$-dimensional stabilizer codes can be characterized by some topological invariant in a $D+1$-dimensional system. In particular, by characterizing propagations of anyonic excitations in a $D+1$-dimensional system, we demonstrate that the braiding of anyonic excitations corresponds to a configuration of $m$-dimensional and $D-m$-dimensional closed objects with a non-zero linking number. 

We begin by analyzing propagations of anyonic excitations in two-dimensional Toric code. Anyonic excitations can be created by applying a segment of a one-dimensional logical operator to a ground state of the Toric code Hamiltonian since endpoints of a segment of a logical operator $\ell^{seg}$ may not commute with interaction terms and create localized excitations (Fig.~\ref{fig_anyon1}(a)): $\ell^{seg}|\psi_{gs}\rangle$, and one may make anyons propagate along a geometric shape of a one-dimensional logical operator. Since one can deform a geometric shape of a one-dimensional logical operator continuously in the Toric code, anyons can propagate freely on the lattice by applying a segment of a deformed one-dimensional logical operator (Fig~\ref{fig_anyon1}(b)). Therefore, the continuous deformability of logical operators is the key to propagations of anyonic excitations. 

\begin{figure}[htb!]
\centering
\includegraphics[width=0.40\linewidth]{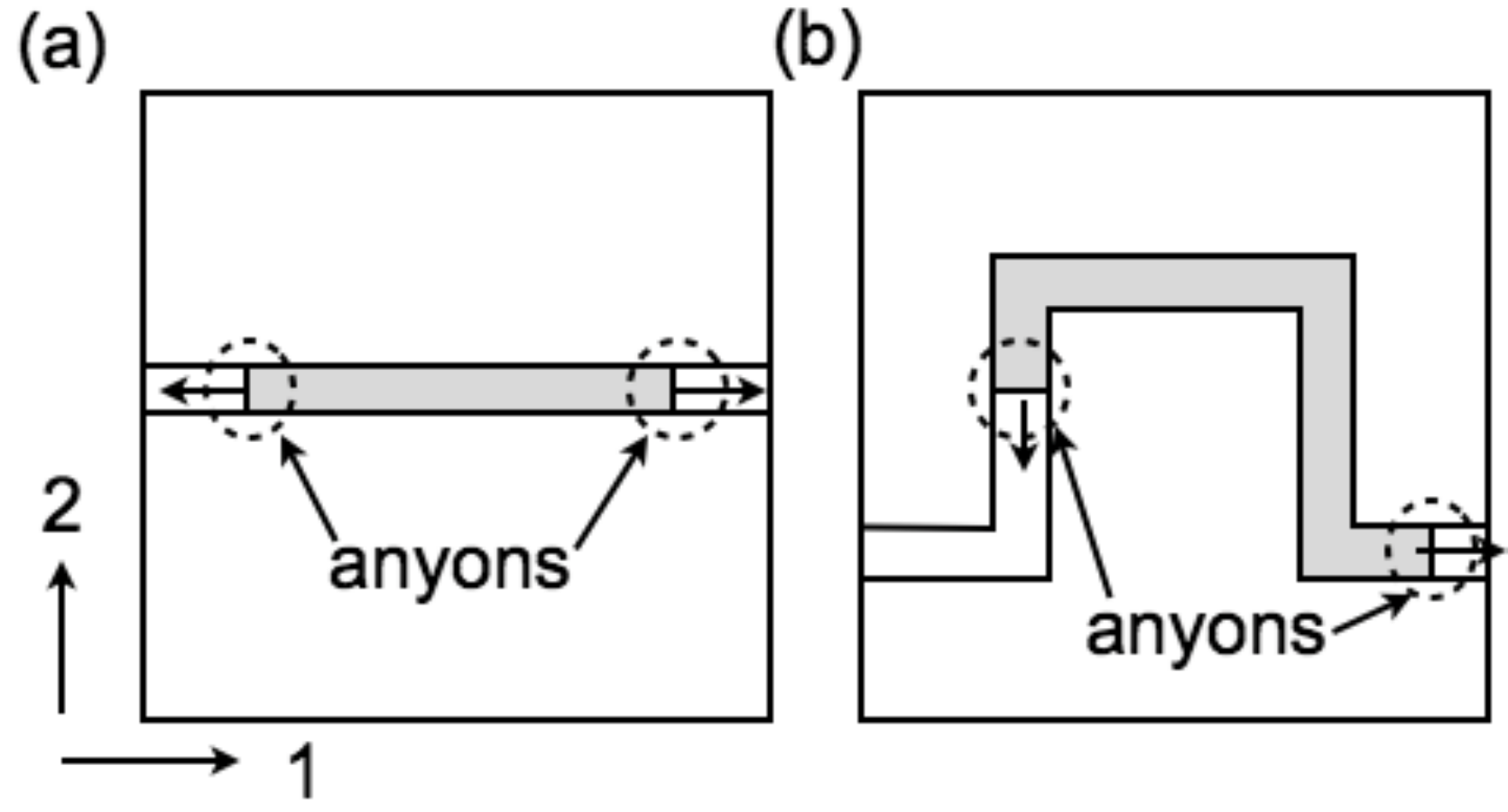}
\caption{Anyonic excitations created from segments of logical operators. (a) A one-dimensional logical operator. (b) A deformed one-dimensional logical operator. 
} 
\label{fig_anyon1}
\end{figure}

Propagations of anyonic excitations can be characterized by a one-dimensional closed loop drawn in a three-dimensional space. To illustrate this point, let us consider a process of creation, propagation, and annihilation of anyons, as described in Fig.~\ref{fig_anyon2}(a). By drawing propagations of anyonic excitations in a three-dimensional system by adding the time axis, the entire process can be represented as a one-dimensional closed loop as shown in Fig.~\ref{fig_anyon2}(b). In general, arbitrary one-dimensional closed loop in a three-dimensional system may characterize some propagations of anyonic excitations. See an example in Fig.~\ref{fig_anyon3} which involves creations of two pairs of anyonic excitations. 

\begin{figure}[htb!]
\centering
\includegraphics[width=0.60\linewidth]{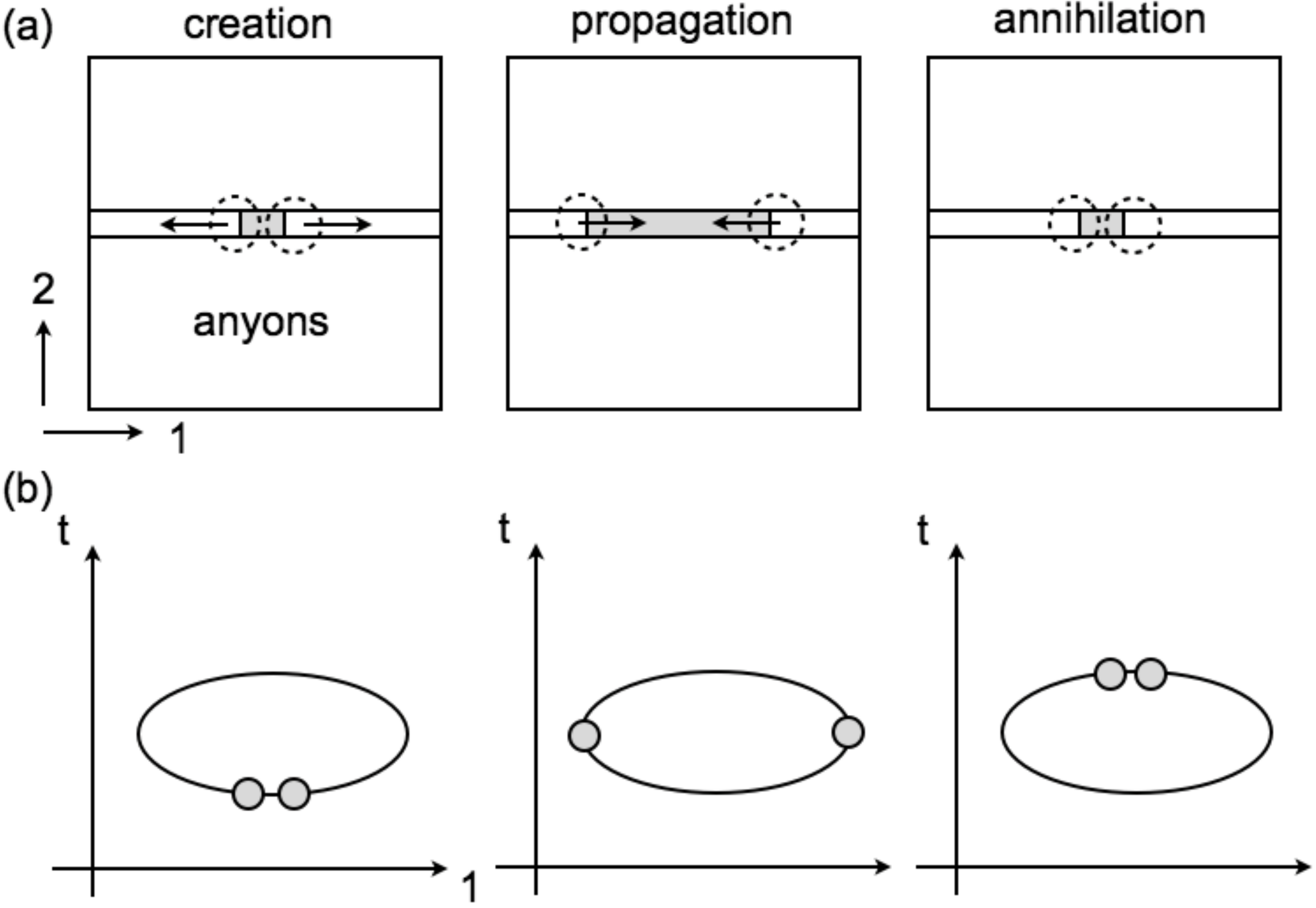}
\caption{The correspondence between anyonic excitations in two-dimensional Toric code and a closed loop in a three-dimensional system. (a) Creation, propagation and annihilation of anyonic excitations. (b) Anyonic excitations described in a three-dimensional system. 
} 
\label{fig_anyon2}
\end{figure}

\begin{figure}[htb!]
\centering
\includegraphics[width=0.35\linewidth]{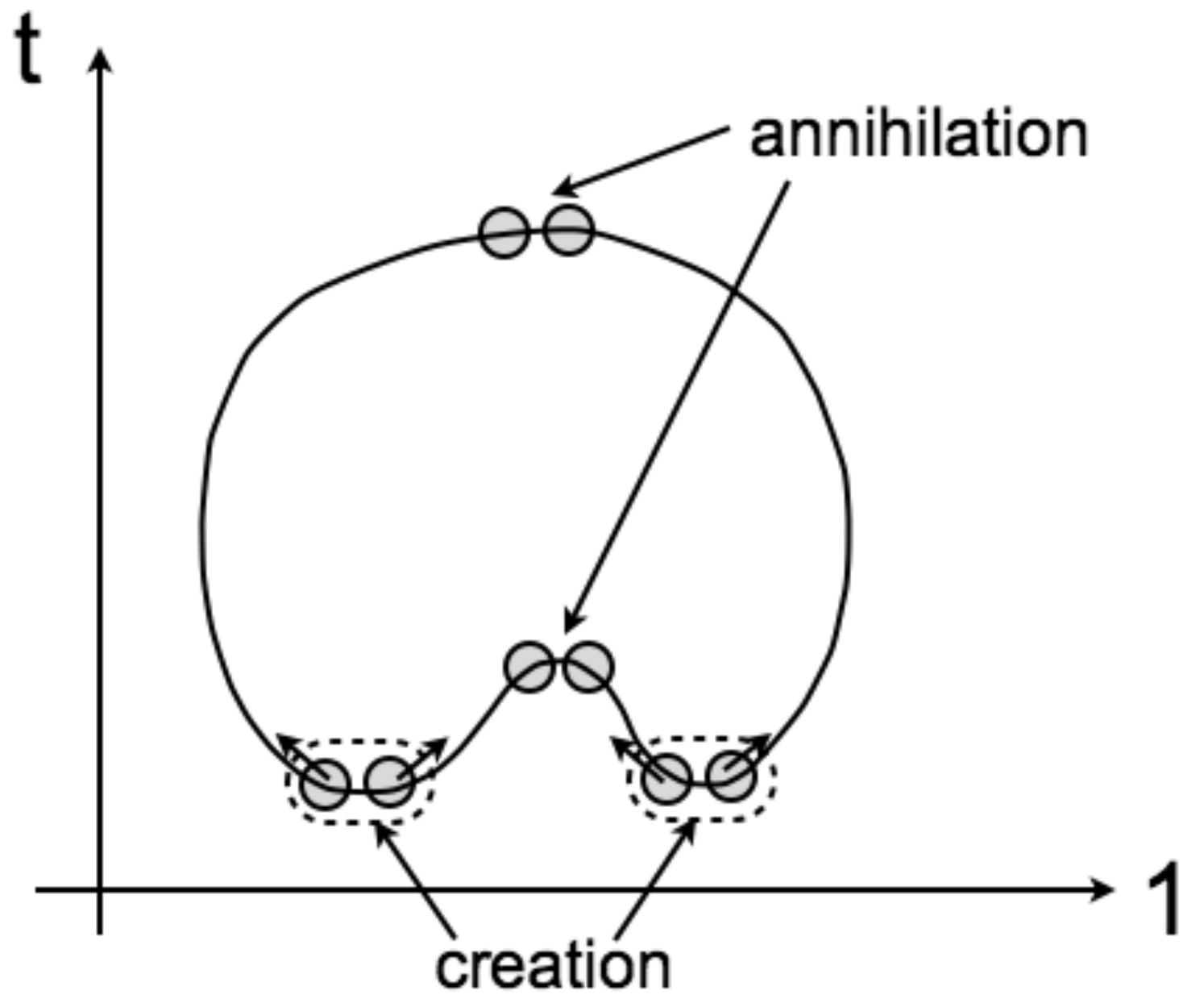}
\caption{Propagations of anyonic excitations and an associated closed loop in a three-dimensional system.
} 
\label{fig_anyon3}
\end{figure}

Next, let us consider the braiding between anyonic excitations in the Toric code. One interesting property of topologically ordered spin systems is the non-trivial braiding property between anyonic excitations. There are two distinct type of anyons in two-dimensional Toric code which are created by a pair of anti-commuting logical operators respectively, and the braiding of different types of anyonic excitations may give rise to non-trivial change inside the ground space. In two-dimensional Toric code, the braiding of different anyons give rise to an additional phase $-1$ to the original ground state $|\psi_{gs}\rangle \rightarrow - |\psi_{gs}\rangle$, which is a direct consequence of the anti-commutation between logical operators. 

One may understand this non-trivial braiding arising in two-dimensional Toric code as a topological invariant in a three-dimensional system. Let us consider the braiding of anyons described in Fig.~\ref{fig_anyon4}(a) where two types of pairs of anyonic excitations are created. One can characterize this braiding process as two closed loops in a three-dimensional system where loops are linked as described in Fig.~\ref{fig_anyon4}(b). In a more technical language, the braiding of anyons occurs only if the linking number between two loops has non-zero value. In particular, the final state is $|\psi_{gs}\rangle \rightarrow (-1)^{N_{link}} |\psi_{gs}\rangle$ where $N_{link}$ is the linking number of a given configuration of one-dimensional closed loops. This observation indicates that the braiding of anyonic excitations can be characterized by topological invariants, such as the linking number, in a ($2+1$)-dimensional system.

\begin{figure}[htb!]
\centering
\includegraphics[width=0.70\linewidth]{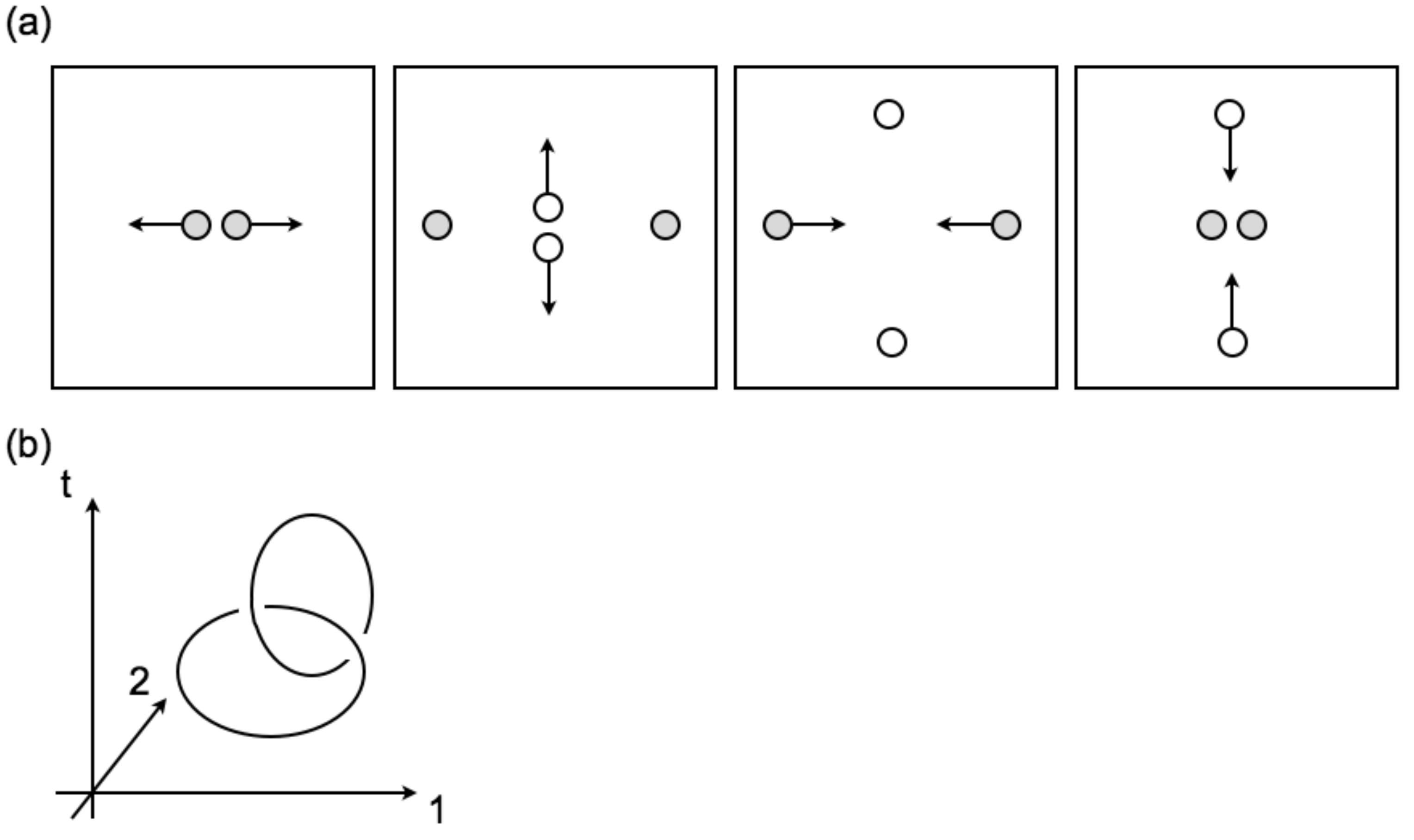}
\caption{The braiding as a topological invariant in a three-dimensional system. (a) A braiding of anyonic excitations. (b) Loops with non-zero linking number. 
} 
\label{fig_anyon4}
\end{figure}

A similar observation holds for $D$-dimensional Toric code with a pair of $m$-dimensional and $D-m$-dimensional logical operators. The first type of anyonic excitations can be created by a segment of $m$-dimensional logical operator, while the second type of anyonic excitations can be created by a segment of $D-m$-dimensional logical operator. The propagation of anyonic excitations of the first type can be characterized by a $m$-dimensional object in a $D+1$-dimensional systems, while the propagation of anyonic excitations of the second type can be characterized by a $D-m$-dimensional object. These anyonic excitations are braided when the linking number between $m$-dimensional and $D-m$-dimensional objects is non-zero. While we have discussed the braiding for $D$-dimensional Toric code, similar discussion holds for any stabilizer codes with continuously deformable logical operators. Therefore, one may expect that $D$-dimensional stabilizer codes with continuous deformability can be effectively described by $D+1$-dimensional TQFT.

\section{Decomposition of logical operators}\label{sec:decomposition}

We present the proof of theorem~\ref{theorem_3dim} in this and the next appendices. The goal of this appendix is to prove the following theorem which will be the key to the proof of theorem~\ref{theorem_3dim}. 

\begin{theorem}[Decomposition]\label{theorem_decomposition}
Consider a three-dimensional STS model with the system size $n_{1}= 2\cdot2^{2n_{2}v}!$, $n_{2}=2^{m}$ and arbitrary $n_{3}$ where $m$ is an arbitrary positive integer. For a given logical operator $\ell$ supported inside $P(n_{1},n_{2},1)$, one can decompose $\ell$ as a product of the following centralizer operators
\begin{align}
\ell \ \sim \ \ell_{a}\ell_{b}, \qquad \ell_{a},\ell_{b}\ \in \ \mathcal{C}_{P(n_{1},n_{2},1)}
\end{align}
where
\begin{align}
T_{1}^{\beta}(\ell_{b})\ = \ \ell_{b}, \qquad  \beta \ \leq \ 2^{2n_{2}v}
\end{align}
and $\ell_{a}$ is defined inside $P(2v, n_{2},1)$.
\end{theorem}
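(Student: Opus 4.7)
The overall strategy is a pigeonhole argument on the ``column signatures'' of $\ell$ along the $\hat{1}$ direction, exploiting translation equivalence (Theorem~\ref{theorem_TE}) and the factorially-large length $n_{1} = 2 \cdot 2^{2 n_{2} v}!$, which is chosen so that every integer $\beta \leq 2^{2 n_{2} v}$ divides $n_{1}$ and hence admits a consistent periodic tiling on the torus.

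First I would slice the slab $P(n_{1}, n_{2}, 1)$ into $n_{1}$ columns along $\hat{1}$, writing $\ell = c_{1} c_{2} \cdots c_{n_{1}}$ where $c_{i}$ is the restriction of $\ell$ to the composite particles at $x$-coordinate $i$. Each $c_{i}$ is a Pauli operator on $n_{2} v$ qubits, so the sequence takes values in an alphabet of size at most $N := 2^{2 n_{2} v}$ modulo global phase. More robustly, passing to the quotient by stabilizer operators supported in a single column still yields a finite set of size bounded by $N$, so the resulting ``signatures'' live in an alphabet of size $\leq N$ while carrying enough information to reconstruct $\ell$ up to stabilizer multiplications near column boundaries.

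Second I would extract a period. Translation equivalence gives $T_{1}^{a}(\ell) \sim \ell$ for every $a$, so $\ell \cdot T_{1}^{a}(\ell) \in \mathcal{S}$; in particular the translated signatures also lie in the same finite alphabet. Applying pigeonhole to the $N+1$ operators $\ell, T_{1}(\ell), \ldots, T_{1}^{N}(\ell)$ and comparing their signatures on a fixed reference window of bulk columns, one finds $0 \leq i < j \leq N$ whose signatures agree. Setting $\beta := j - i \leq N$, this forces the bulk of $\ell$ (away from a seam of width at most $2v$, which is the range over which a single stabilizer generator can couple adjacent columns) to be $\beta$-periodic as a Pauli operator, modulo local stabilizer corrections. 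Now choose $b$ to be the restriction of $\ell$ to one period $P(\beta, n_{2}, 1)$ taken from deep inside the bulk, and define $\ell_{b} := \prod_{k=0}^{n_{1}/\beta - 1} T_{1}^{k\beta}(b)$. This is well-defined on the torus because $\beta \mid n_{1}$, and it manifestly satisfies $T_{1}^{\beta}(\ell_{b}) = \ell_{b}$. The residue $\ell_{a} := \ell \cdot \ell_{b}$ then cancels with $\ell$ throughout the bulk by the periodicity established above, up to multiplication by stabilizer generators straddling consecutive tile boundaries; after absorbing these stabilizers, $\ell_{a}$ is supported inside a single $2v$-wide seam, i.e., inside $P(2v, n_{2}, 1)$.

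The principal technical obstacle is verifying that the bulk cancellation in $\ell_{a}$ can genuinely be carried out by stabilizer multiplications, and that $\ell_{b}$ itself lies in the centralizer $\mathcal{C}$. Both issues hinge on the $2 \times 2 \times 2$ locality of stabilizer generators: anti-commutations between $\ell_{b}$ and a stabilizer generator can only arise at the seams between consecutive tiles, and because adjacent tiles share the same local Pauli content by construction, these boundary defects cancel in pairs except at one global seam of width $2v$, which is precisely where $\ell_{a}$ lives. Making this precise requires careful bookkeeping of stabilizer generators crossing tile boundaries, of commutation with interaction terms extending into the $\hat{3}$ direction (the new feature relative to the 2D case, where $n_{3}$ is arbitrary and introduces additional generators whose commutation with the slab operator must be controlled), and of signs in the Pauli algebra. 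The bi-partition theorem (Theorem~\ref{theorem_partition}) plus the translation equivalence are the main tools for executing this bookkeeping, mirroring and generalizing the two-dimensional column-decomposition argument of~\cite{Beni10b}.
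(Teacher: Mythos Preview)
Your pigeonhole-on-columns argument is essentially the two-dimensional decomposition (the paper's Lemma~\ref{lemma_decomposition}) applied after coarse-graining each tube $P(1,n_{2},1)$ into a single composite particle with $n_{2}v$ qubits. That step is correct and is exactly the paper's Corollary~\ref{corollary_decomposition}: it gives a periodic $\ell_{b}$ with $\beta\leq 2^{2n_{2}v}$ and a residue $\ell_{a}$, but the residue lives in $P(2n_{2}v,\,n_{2},\,1)$, not in $P(2v,\,n_{2},\,1)$. The width bound you obtain this way scales with $n_{2}$, because the number of independent Pauli generators per coarse-grained column is $2n_{2}v$, and that is what controls the width in the shrinkage argument. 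Your claim that ``after absorbing these stabilizers, $\ell_{a}$ is supported inside a single $2v$-wide seam'' conflates the interaction range of the stabilizer generators (which is $2$ composite particles) with the bound $2v$ in the theorem; the latter has nothing to do with stabilizer locality and everything to do with the Pauli rank of a \emph{single} composite particle, not of a full $n_{2}$-tube.

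Closing that gap---getting from width $2n_{2}v$ down to width $2v$ independent of $n_{2}$---is the entire substance of the theorem and the reason the appendix exists. The paper does it by an iterative shrinkage (Lemma~\ref{lemma_decomposition_final}): whenever $x>2v$, one finds an \emph{identity generating matrix} $B$ (a binary $x\times 2^{m}$ array encoding a product of $\hat{1}$- and $\hat{2}$-translates of $\ell$ whose $x$th column is the identity), and if $B$ has a column of odd parity one can use it to produce an equivalent logical operator of width $x-1$ (Lemma~\ref{lemma_odd1}). The nontrivial part, which has no analogue in your sketch, is Lemma~\ref{lemma_existence}: proving that an \emph{odd} identity generating matrix always exists once $x>2v$. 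For $m=0$ every nonzero binary vector is odd and the claim is trivial, which is why the $2$D case goes through directly; for $m\geq 1$ the proof requires assigning to each column operator a ``characteristic vector'' in $\{0,1\}^{m}$ encoding its symmetry under $\hat{2}$-translations, together with a Lucas-theorem-type combinatorial argument on binomial parities, to manufacture the required odd matrix. Your proposal contains no mechanism that exploits the $\hat{2}$-translation structure inside the slab, and without it the width bound cannot be made independent of $n_{2}$.
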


Here, $\mathcal{C}_{R}$ represents the restriction of the centralizer group $\mathcal{C}$ onto a region of composite particles $R$, meaning that $\mathcal{C}_{R}$ is a subgroup of centralizer operators defined inside $R$. Therefore, $\ell_{a},\ell_{b}\in  \mathcal{C}_{P(n_{1},n_{2},1)}$ means that $\ell_{a}$ and $\ell_{b}$ are centralizer operators defined inside $P(n_{1},n_{2},1)$. We show the claim of the theorem graphically in Fig.~\ref{fig_decomposition3D}. The theorem claims that a two-dimensional logical operator defined inside $P(n_{1},n_{2},1)$ can be decomposed as a product of a one-dimensional centralizer operator $\ell_{a}$ and a two-dimensional centralizer operator $\ell_{b}$ which is periodic in the $\hat{1}$ direction. As we shall see later, ``$2v$'' comes from the number of independent generators for the Pauli group acting on a single composite particle.

Before starting the proof of theorem~\ref{theorem_decomposition}, let us describe the entire sketch of the proof of theorem~\ref{theorem_3dim}. As a simple extension of theorem~\ref{theorem_decomposition}, one can show that a one-dimensional logical operator defined inside $P(2v, n_{2},1)$ can be further decomposed as a product of a one-dimensional and a zero-dimensional centralizer operators. After these decompositions, one can classify geometric shapes of logical operators according to their dimensions and can find commutation relations between them. 

Although theorem~\ref{theorem_decomposition} is limited to some specially chosen system sizes: $n_{1}=2\cdot2^{2n_{2}v}!$ and $n_{2}=2^{m}$, one can construct logical operators for arbitrary system sizes from theorem~\ref{theorem_decomposition}. For example, due to scale symmetries, one can show that one-dimensional logical operators found in theorem~\ref{theorem_decomposition} are also logical operators for the systems with arbitrary $n_{1}$. In fact, one can find logical operators in the forms described in theorem~\ref{theorem_3dim}. These arguments will be presented in~\ref{sec:construction}.

\begin{figure}[htb!]
\centering
\includegraphics[width=0.65\linewidth]{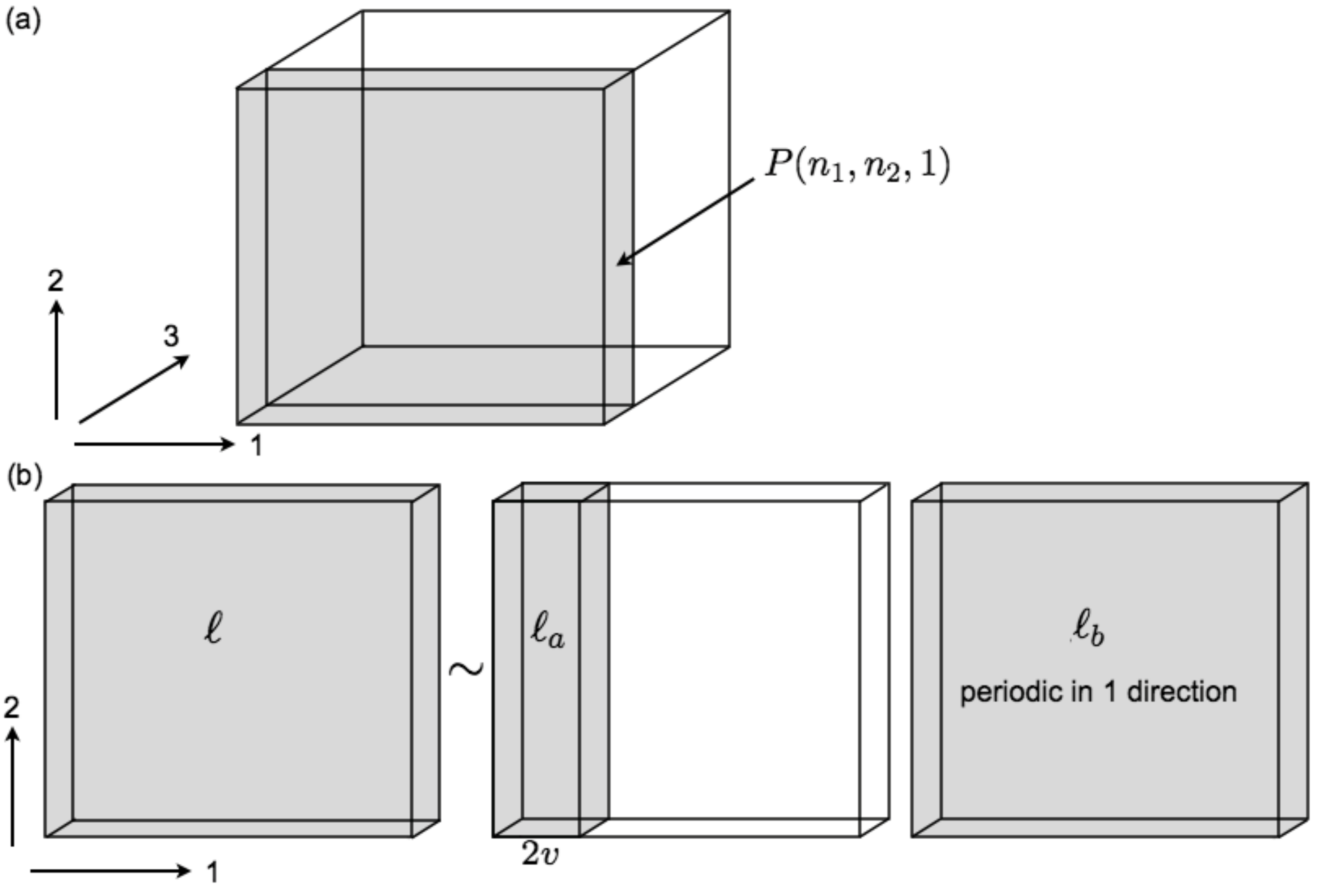}
\caption{The claim of theorem~\ref{theorem_decomposition}. One can decompose a two-dimensional logical operator as a product of a one-dimensional centralizer operator $\ell_{a}$ and a two-dimensional centralizer operator $\ell_{b}$.
} 
\label{fig_decomposition3D}
\end{figure}

\subsection{Sketch of proof of theorem~\ref{theorem_decomposition}}

First, we note that theorem~\ref{theorem_decomposition} was proven for $n_{2}=1$ ($m=0$) in~\cite{Beni10b} since such a system with $n_{2}=1$ can be considered as a two-dimensional system which extends only in the $\hat{1}$ and $\hat{3}$ directions. For a two-dimensional STS model, we have the following lemma.

\begin{lemma}\label{lemma_decomposition}
Consider a two-dimensional STS model where $n_{1}= 2\cdot2^{2v}!$ and arbitrary $n_{2}$. For a given logical operator $\ell$ supported inside $P(n_{1},1)$, one can decompose $\ell$ as a product of the following centralizer operators
\begin{align}
\ell \ \sim \ \ell_{a}\ell_{b}, \qquad \ell_{a},\ell_{b}\ \in \ \mathcal{C}_{P(n_{1},1)}
\end{align}
where
\begin{align}
T_{1}^{\beta}(\ell_{b})\ = \ \ell_{b}, \qquad  \beta \ \leq \ 2^{2v}
\end{align}
and $\ell_{a}$ is defined inside $P(2v,1)$.
\end{lemma}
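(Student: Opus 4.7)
The plan is to combine the translation equivalence of logical operators (Theorem~\ref{theorem_TE}) with a pigeonhole argument exploiting the finiteness of the Pauli group on a single composite particle, which has $2^{2v}$ elements modulo phase. The strip $P(n_1,1)$ is effectively a one-dimensional chain of $n_1$ composite particles, so I would write $\ell=\prod_{i=1}^{n_1}\ell_{(i)}$ with $\ell_{(i)}$ the restriction of $\ell$ to the $i$-th composite particle, and consider its partial products $L_j:=\prod_{i=1}^{j}\ell_{(i)}$. Because $\ell\in\mathcal{C}$, each $L_j$ commutes with every local stabilizer generator whose support lies entirely to the left of position $j$ or entirely to the right of position $j$; its only potential commutation failures are with the $O(1)$ generators straddling position $j$, and these failures are encoded in a \emph{right boundary fingerprint} $\sigma_j$ drawn from a set of size at most $2^{2v}$ (reflecting that the action of $L_j$ on a single composite particle near position $j$ determines the fingerprint).

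By pigeonhole on the $2^{2v}+1$ fingerprints $\sigma_0,\sigma_1,\ldots,\sigma_{2^{2v}}$, there exist $0\leq j_1<j_2\leq 2^{2v}$ with $\sigma_{j_1}=\sigma_{j_2}$; set $\beta:=j_2-j_1\leq 2^{2v}$, and note that $\beta$ divides $n_1=2\cdot 2^{2v}!$. Define $M:=L_{j_1}^{-1}L_{j_2}=\prod_{i=j_1+1}^{j_2}\ell_{(i)}$, a Pauli operator supported on a width-$\beta$ block whose left and right boundary fingerprints match. Tiling $M$ periodically along $\hat 1$ produces $\ell_b:=\prod_{k=0}^{n_1/\beta-1}T_1^{k\beta}(M)$; the matching fingerprints guarantee that every seam between adjacent tiles commutes with the straddling stabilizer generators, so $\ell_b\in\mathcal{C}_{P(n_1,1)}$ and $T_1^\beta(\ell_b)=\ell_b$ by construction. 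Setting $\ell_a:=\ell\cdot\ell_b$ gives $\ell=\ell_a\ell_b$ (since Paulis are self-inverse mod phase), with $\ell_a\in\mathcal{C}_{P(n_1,1)}$; by Theorem~\ref{theorem_TE} the translation-equivalence class is preserved, so $\ell\sim\ell_a\ell_b$ is automatic up to a stabilizer correction that can be absorbed.

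The main obstacle is confining $\ell_a$ to the narrow region $P(2v,1)$: a priori, $\ell_a$ vanishes on the block $[j_1+1,j_2]$ and on its $T_1^\beta$-translates but could still be spread over a ``seam'' neighborhood of total width up to $\beta\leq 2^{2v}$. Shrinking this support to width $2v$ requires an additional argument using scale symmetry---since the number of logical qubits does not grow with the system size, the equivalence classes of strip centralizers that are both $T_1^\beta$-covariant and localizable are constrained to a space of dimension at most $2v$ over $\mathbb{F}_2$, which can be realized on $P(2v,1)$. This dimension count is where the factor $2v$ comes from: exactly the number of independent Pauli generators on a single composite particle. The delicate piece is therefore twofold: (i) defining $\sigma_j$ precisely enough that matching fingerprints genuinely cancel every interaction-range defect at each seam when $M$ is tiled (needed so that $\ell_b\in\mathcal{C}$), and (ii) leveraging scale symmetry together with the bi-partition theorem (Theorem~\ref{theorem_partition}) to reduce the residue $\ell_a$ from an \emph{a priori} width $\beta$ down to $2v$.
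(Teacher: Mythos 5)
Your construction of the periodic factor is essentially sound: taking the fingerprint $\sigma_j$ to be the single-particle restriction $\ell_{(j)}$ modulo phase (a set of size $4^{v}=2^{2v}$), the pigeonhole yields $j_1<j_2$ with $\beta=j_2-j_1\leq 2^{2v}$ dividing $n_1=2\cdot 2^{2v}!$, and because every interaction term spans only two consecutive positions along $\hat 1$, matching fingerprints do make each seam of the tiled operator commute with the straddling generators. So $\ell_b\in\mathcal{C}_{P(n_1,1)}$ with $T_1^{\beta}(\ell_b)=\ell_b$, and $\ell_a:=\ell\ell_b\in\mathcal{C}_{P(n_1,1)}$. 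This is already a different route from the paper's, which never factors $\ell$ exactly: there one multiplies together \emph{translations of $\ell$} chosen according to an $\mathbb{F}_2$-dependency among the columns $U_1,\dots,U_x$ (such a dependency exists whenever $x>2v$), so that the product has an identity column; the identity column splits the product into two centralizers which recombine, via translation equivalence, into an \emph{equivalent} logical operator of length $x-1$, and iterating this shrinkage is what produces both the period $\beta$ and the width bound $2v$.

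The genuine gap is the localization of $\ell_a$, which is the actual content of the lemma. Your intermediate claim that $\ell_a$ vanishes on the block $[j_1+1,j_2]$ \emph{and on its $T_1^{\beta}$-translates} is false: those translates tile the whole chain, and on every block other than $[j_1+1,j_2]$ the restriction of $\ell$ is $\ell_{(i+k\beta)}$ while that of $\ell_b$ is the translated copy of $\ell_{(i)}$, so for a generic non-periodic $\ell$ the residue $\ell_a$ has support of order $n_1$, not of order $\beta$. At that point $\ell_a$ is just some centralizer operator on the full strip, possibly a nontrivial logical operator, and nothing in scale symmetry or Theorem~\ref{theorem_partition} forces an arbitrary such operator to be equivalent to one supported on $P(2v,1)$ --- string-like logical operators are precisely the standing counterexample, and invoking a ``dimension count'' here is circular, since proving that the non-periodic part can be confined to width $2v$ is the lemma itself. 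The mechanism that achieves this in the paper is exactly the piece your proposal omits: because $T_1(\ell)\sim\ell$, an odd product of translations of $\ell$ taken along an identity-generating dependency is still equivalent to $\ell$ but acquires an identity column, which permits a genuine shrinkage of the \emph{equivalence class} by one particle at a time (with a separate treatment of even-weight dependencies, which is where the periodic factor and the evenness of $n_1/\beta$ enter). Your exact factorization forgoes, at the outset, the freedom to modify $\ell$ by stabilizers, and that freedom is indispensable for the localization step.
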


We present the claim of the lemma graphically in Fig.~\ref{fig_decomposition2D}. The lemma claims that a one-dimensional logical operator defined inside $P(n_{1},1)$ can be decomposed as a product of a zero-dimensional centralizer operator $\ell_{a}$ and a one-dimensional centralizer operator $\ell_{b}$ which is periodic. 

\begin{figure}[htb!]
\centering
\includegraphics[width=0.65\linewidth]{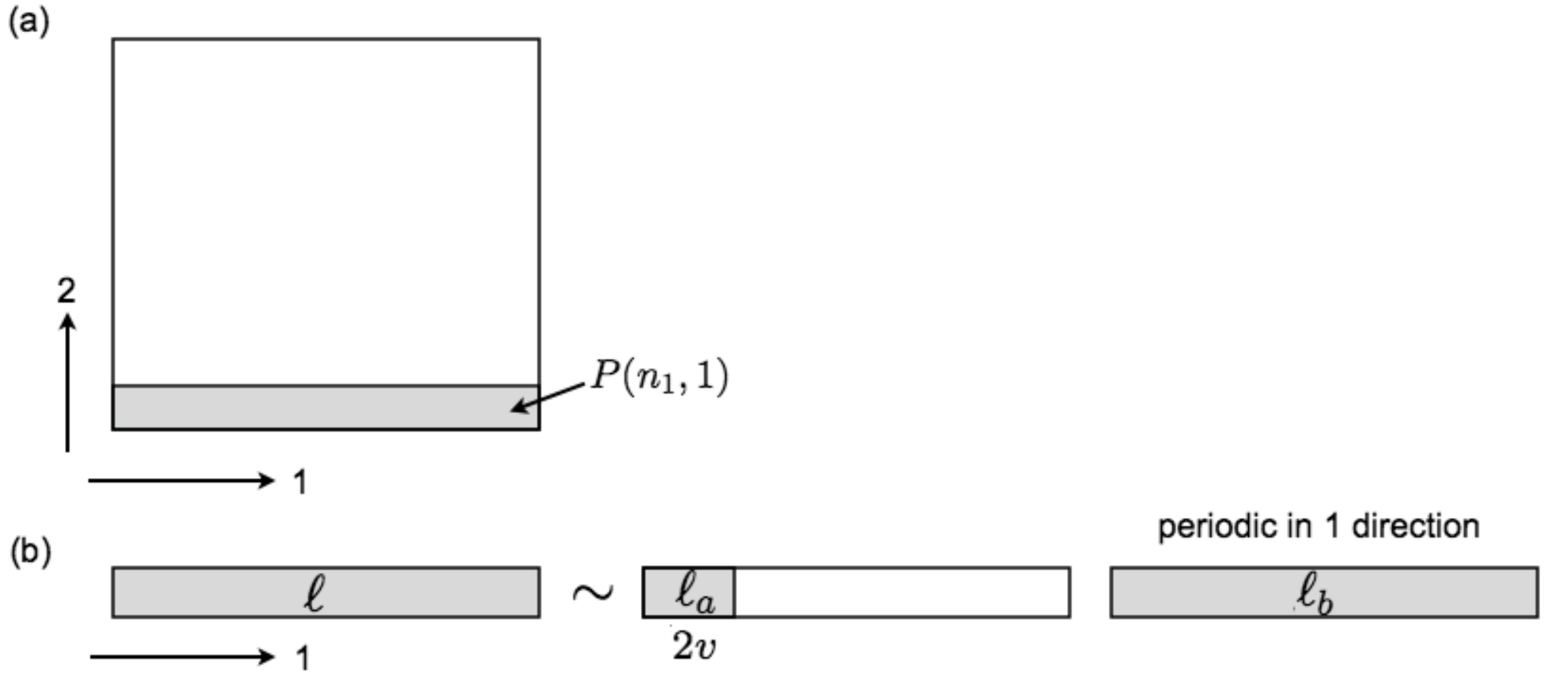}
\caption{The claim of lemma~\ref{lemma_decomposition}. One can decompose a one-dimensional logical operator as a product of a zero-dimensional centralizer operator $\ell_{a}$ and a one-dimensional centralizer operator $\ell_{b}$.
} 
\label{fig_decomposition2D}
\end{figure}

A three-dimensional STS model may be viewed as a two-dimensional system if one considers $1\times n_{2} \times 1$ composite particles as a single composite particle which consists of $vn_{2}$ qubits (see Fig.~\ref{fig_decomposition3Dsub}). In other words, we view the entire system as a two-dimensional lattice of one-dimensional tubes. Then, as a direct consequence of the lemma above, we notice the following corollary.

\begin{corollary}\label{corollary_decomposition}
A logical operator $\ell$ considered in theorem~\ref{theorem_decomposition} can be decomposed as a product of the following centralizer operators
\begin{align}
\ell \ \sim \ \ell_{a}\ell_{b}, \qquad \ell_{a},\ell_{b} \ \in \ \mathcal{C}_{P(n_{1},n_{2},1)}
\end{align}
where
\begin{align}
T_{1}^{\beta}(\ell_{b}) \ = \ \ell_{b}, \qquad \mbox{where}\quad \beta \ \leq \ 2^{2n_{2}v}
\end{align}
and $\ell_{a}$ is defined inside $P(2vn_{2}, n_{2},1)$.
\end{corollary}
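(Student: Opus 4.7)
The plan is to derive the corollary by a coarse-graining argument that reduces the three-dimensional setting to the two-dimensional lemma. Specifically, I would view the three-dimensional STS model as a two-dimensional STS model in which a column of $1\times n_{2}\times 1$ composite particles along the $\hat{2}$ direction has been collapsed into a single ``super'' composite particle. After this identification, the $\hat{2}$ direction disappears as a spatial direction and is absorbed into the internal Hilbert space of the new composite particles.

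First I would set up the coarse-graining. Group the $n_{2}$ composite particles at fixed $(r_{1},r_{3})$ into one new composite particle $\tilde{P}_{r_{1},r_{3}}$. The new inner dimension is $v' \equiv v n_{2}$, and the new lattice is two-dimensional with linear sizes $n_{1}$ and $n_{3}$. Since the original Hamiltonian has only local interactions and translation symmetries in $\hat{1}$, $\hat{2}$, $\hat{3}$, the coarse-grained Hamiltonian has local interactions and translation symmetries in $\hat{1}$, $\hat{3}$; scale symmetry and the STS structure are preserved in the $\hat{1}$, $\hat{3}$ directions. The system size condition $n_{1}=2\cdot2^{2n_{2}v}!=2\cdot 2^{2v'}!$ is exactly what Lemma~\ref{lemma_decomposition} requires in the new variables.

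Next I would translate the support of $\ell$. A region of the form $P(x_{1},n_{2},x_{3})$ in the original $3$-dimensional lattice corresponds precisely to $P(x_{1},x_{3})$ in the coarse-grained $2$-dimensional lattice. In particular, the logical operator $\ell$ supported inside $P(n_{1},n_{2},1)$ in the original system becomes a logical operator $\tilde{\ell}$ supported inside $P(n_{1},1)$ in the coarse-grained system. Applying Lemma~\ref{lemma_decomposition} yields a decomposition
\begin{equation}
\tilde{\ell} \ \sim \ \tilde{\ell}_{a}\tilde{\ell}_{b}, \qquad \tilde{\ell}_{a},\tilde{\ell}_{b}\ \in \ \mathcal{C}_{P(n_{1},1)},
\end{equation}
with $T_{1}^{\beta}(\tilde{\ell}_{b})=\tilde{\ell}_{b}$ for some $\beta\leq 2^{2v'}=2^{2n_{2}v}$, and $\tilde{\ell}_{a}$ supported inside $P(2v',1)$. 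Unfolding the coarse-graining, $\tilde{\ell}_{a}$ corresponds to a centralizer operator $\ell_{a}$ supported inside $P(2vn_{2},n_{2},1)$, while $\tilde{\ell}_{b}$ corresponds to a centralizer operator $\ell_{b}$ supported inside $P(n_{1},n_{2},1)$ and satisfying $T_{1}^{\beta}(\ell_{b})=\ell_{b}$. The equivalence $\tilde{\ell}\sim\tilde{\ell}_{a}\tilde{\ell}_{b}$ modulo the coarse-grained stabilizer group lifts back to the equivalence $\ell\sim\ell_{a}\ell_{b}$ modulo the original stabilizer group, which is the desired decomposition.

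The main obstacle, and the only nontrivial point worth checking carefully, is that the coarse-graining genuinely produces an STS model in two dimensions to which Lemma~\ref{lemma_decomposition} is applicable: in particular, that the translation equivalence in $\hat{1}$ and $\hat{3}$ survives the coarse-graining and that the number of logical qubits remains bounded so that scale symmetry in the new $(n_{1},n_{3})$ variables is retained. This is essentially the standard coarse-graining argument used to prove that systems with a bounded number of logical qubits can be turned into STS models (as cited in the text from~\cite{Beni10b}), specialized to collapsing a single direction of fixed length $n_{2}$; no new ingredients beyond that are required, and once this is verified the corollary is immediate.
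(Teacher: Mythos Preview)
Your proposal is correct and follows essentially the same approach as the paper: the paper derives this corollary by viewing the three-dimensional STS model as a two-dimensional system in which each $1\times n_{2}\times 1$ column of composite particles is treated as a single composite particle with inner dimension $v'=vn_{2}$, and then invoking Lemma~\ref{lemma_decomposition} directly. Your discussion of the coarse-graining, the translation of supports, and the verification that the resulting system is still an STS model matches the paper's reasoning (which is stated more tersely there).
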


We present the claim of the corollary graphically in Fig.~\ref{fig_decomposition3Dsub}. The corollary claims that a two-dimensional logical operator defined inside $P(n_{1},n_{2},1)$ can be decomposed as a product of a one-dimensional logical operator $\ell_{a}$ and a two-dimensional logical operator $\ell_{b}$ which is periodic in the $\hat{1}$ direction. 

However, a one-dimensional logical operator $\ell_{a}$ described in corollary~\ref{corollary_decomposition} is \emph{not one-dimensional in a strict sense} since it is defined inside $P(2^{m}\cdot 2v,2^{m},1)$, and its ``width'' $2^{m}\cdot 2v$ grows as $m$ increases. On the other hand, $\ell_{a}$ described in theorem~\ref{theorem_decomposition} is truly one-dimensional since its width is at most $2v$. Therefore, we need to show that a logical operator defined inside $P(2^{m}\cdot 2v,2^{m},1)$ have an equivalent logical operator defined inside $P(2v,2^{m},1)$. 

\begin{figure}[htb!]
\centering
\includegraphics[width=0.65\linewidth]{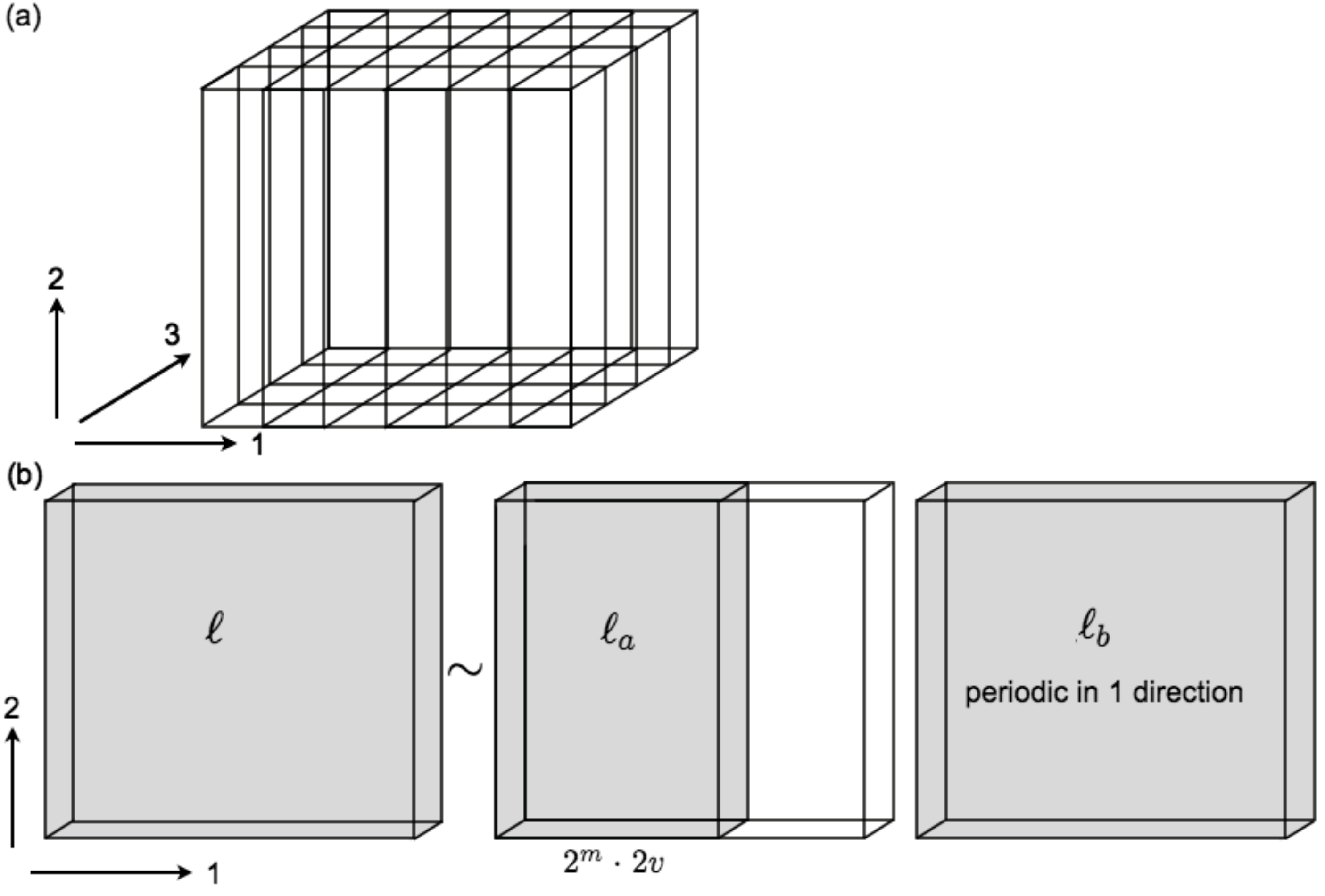}
\caption{The claim of corollary~\ref{corollary_decomposition}. The width of a one-dimensional logical operator $\ell_{a}$ increases as $m$ increases. 
} 
\label{fig_decomposition3Dsub}
\end{figure}

The rest of this appendix is dedicated to the proof of the following lemma.

\begin{lemma}[\textbf{Shrinkage}]\label{lemma_decomposition_final}
For system sizes considered in theorem~\ref{theorem_decomposition}, a logical operator operator $\ell$ defined inside $P(x,2^{m},1)$ always has an equivalent logical operator $\ell'$ which is defined inside $P(x-1,2^{m},1)$ when $2v < x < n_{1}$. 
\end{lemma}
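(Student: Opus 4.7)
The plan is to prove the shrinkage lemma by a cleaning argument: given $\ell \in P(x,2^m,1)$, produce an equivalent logical operator whose support avoids the rightmost column $C_x \equiv \{P_{x,r_2,1} : 1 \leq r_2 \leq 2^m\}$. By the bi-partition theorem (theorem~\ref{theorem_partition}), this is equivalent to the counting identity $g_{P(x,2^m,1)} = g_{P(x-1,2^m,1)}$, or equivalently $g_{\overline{P(x-1,2^m,1)}} = g_{\overline{P(x,2^m,1)}}$. So the task reduces to showing that attaching the single $1\times 2^m \times 1$ tube $C_x$ to a region contributes no new independent logical operators.

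The first step is to apply the 2D shrinkage lemma (lemma~\ref{lemma_decomposition}) after a coarse-graining: treat each vertical $\hat{2}$-column of $2^m$ composite particles as a single effective composite particle of inner dimension $v \cdot 2^m$, reducing the $r_3=1$ slab of the 3D system to an effective 2D STS model with $n_1$ tubes in the $\hat{1}$-direction. Applying lemma~\ref{lemma_decomposition} at this scale yields the decomposition $\ell \sim \ell_a \ell_b$ with $\ell_a$ of $\hat{1}$-width at most $2vn_2$ and $\ell_b$ a periodic centralizer (this is exactly corollary~\ref{corollary_decomposition}). Once the $\hat{1}$-width of $\ell_a$ is strictly smaller than $x$, a suitable translation $T_1^{s}(\ell_a)$ (equivalent to $\ell_a$ by theorem~\ref{theorem_TE}) can be absorbed into $\ell_b$ using $\ell_b$'s periodicity to produce an equivalent operator supported inside $P(x-1,2^m,1)$.

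The obstruction is that the naive coarse-graining gives only width $2vn_2$, whereas the lemma requires the cleaning to succeed already when $x > 2v$, so we need a sharper bound in the range $2v < x \leq 2vn_2$. The plan for the sharpening is an induction on $m$: assuming the shrinkage lemma for $n_2 = 2^{m-1}$, deduce it for $n_2 = 2^m$ by grouping adjacent pairs of $\hat{2}$-columns into blocks of two and invoking the $T_2$-translation equivalence to match the decompositions between the two halves of the coarse-grained tube. The base case $m=0$ (i.e.\ $n_2=1$) follows from lemma~\ref{lemma_decomposition} applied to the 2D slab in the $\hat{1}$–$\hat{3}$ plane, since a logical operator confined to $P(x,1,1)$ lives in a single $\hat{3}$-layer of an essentially 2D stabilizer code.

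The main obstacle will be making the inductive step go through without inconsistencies: the cleaning stabilizers produced at adjacent coarse-graining scales must be combined so that the net cleaning operator lives entirely in the single $r_3 = 1$ layer (preserving the geometric constraint on $\ell'$) and does not enlarge the $\hat{1}$-support beyond what is promised. The very specific system sizes $n_1 = 2\cdot 2^{2n_2 v}!$ and $n_2 = 2^m$ are engineered for exactly this purpose — the factorial structure of $n_1$ accommodates iterated pigeonhole arguments at every possible period divisor $\beta \leq 2^{2n_2 v}$ that can appear at each refinement level, and the power-of-two structure of $n_2$ keeps the halving step lattice-compatible. Verifying this compatibility across all inductive levels, while using $T_3$-translation equivalence to discard any stabilizer components straying outside $r_3 = 1$, is where essentially all of the technical work will lie.
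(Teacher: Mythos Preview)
Your proposal has a genuine gap at the inductive step, and the approach diverges sharply from what the paper actually does.

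The reduction via the bi-partition theorem and the coarse-graining to corollary~\ref{corollary_decomposition} are fine; you correctly identify that the naive bound $2vn_2$ must be sharpened to $2v$. But your proposed induction on $m$ does not close. If you pair adjacent $\hat{2}$-sites into blocks, the effective inner dimension becomes $v' = 2v$, and the inductive hypothesis at level $m-1$ only guarantees shrinkage for $x > 2v' = 4v$, not $x > 2v$. Each coarse-graining step doubles the threshold rather than preserving it, so the induction never reaches the target bound. Your phrase ``invoking the $T_2$-translation equivalence to match the decompositions between the two halves'' does not supply the missing mechanism: translation equivalence tells you $T_2(\ell)\sim\ell$, but gives no leverage on the $\hat{1}$-width, and the restriction of $\ell$ to either half of the tube is neither a logical operator of the full system nor of a well-defined smaller STS model (stabilizers straddle the cut). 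The remark about ``$T_3$-translation equivalence to discard stabilizer components straying outside $r_3=1$'' is also off: the cleaning in this lemma uses products of $T_1$- and $T_2$-translates of $\ell$ itself, all of which already live in the $r_3=1$ slab, and translation equivalence applies to logical operators, not to arbitrary stabilizers.

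The paper's route is entirely different and does not use induction on $m$. It introduces \emph{identity generating matrices}: $x\times 2^m$ binary matrices $B$ encoding which $T_1^a T_2^b$-translates of $\ell$ to multiply together so that the $x$-th column of the product is the identity. When such a $B$ has at least one column of odd parity (an ``odd'' matrix), the product can be split into two centralizers and recombined, via translation equivalence, into an equivalent logical operator of width $x-1$ (this is lemma~\ref{lemma_odd1}). The substantive work is lemma~\ref{lemma_existence}: for $x>2v$ an odd identity generating matrix always exists. This is proved by assigning to each column operator an $m$-bit \emph{characteristic vector} encoding its $\hat{2}$-symmetry, together with a nontrivial \emph{characteristic operator} in $\mathcal{P}^0_{col}$; one shows (using a Lucas-type parity fact about binomial coefficients) that characteristic vectors can be raised by multiplying with even-parity translation patterns, and then runs a contradiction argument that iteratively ``promotes'' all characteristic vectors to $(1,\dots,1)$, at which point an odd identity generating matrix is forced. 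None of this structure is visible in your plan; the threshold $2v$ emerges because there are only $2v$ independent single-particle Pauli generators for the characteristic operators, and this is what makes the argument independent of $n_2$.
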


By using this lemma, one can shorten the width of $\ell_{a}$ from $2^{m}\cdot 2v$ to $2v$.

\subsection{Identity generating matrix}

Here, we discuss how to shorten the width of $\ell_{a}$. In particular, we introduce a certain binary matrix which is essential in reducing the width of $\ell_{a}$. 

\textbf{Shrinkage in two dimensions:}
To give an intuition on how to shorten the width of $\ell_{a}$, let us first consider the case where $m=0$. (So, this is a two-dimensional system with $n_{2}=1$, and instead of the ``width'', we use the ``length''). As an example, consider the case when $m=0$, $x=4$ and $\ell$ is given by
\begin{align}
\ell \ = \ \begin{bmatrix}
A, & B,& C ,& AB
\end{bmatrix}
\end{align}
where $\ell$ is defined inside $P(4,1,1)$, and $A$, $B$ and $C$ are some Pauli operators. Then, consider the following logical operator:
\begin{align}
\ell'' \ &\equiv \ \ell T_{1}^{2}(\ell) T_{1}^{3}(\ell) \\
       &= \
\begin{bmatrix}
A, & B,& AC ,& I ,& BC ,& ABC ,& AB
\end{bmatrix}.
\end{align}
Note that $\ell'' \sim \ell$ since $\ell''$ is a product of three logical operators which are equivalent to each other due to the translation equivalence of logical operators. Then, we notice that following operators are centralizer operators:
\begin{align}
\ell_{1} \ = \ \begin{bmatrix}
A, & B,& AC 
\end{bmatrix},\qquad
\ell_{2} \ = \ \begin{bmatrix}
BC ,& ABC ,& AB
\end{bmatrix}
\end{align}
where $\ell'' = \ell_{1}T_{1}^{4}(\ell_{2})$ since stabilizers in STS models are defined inside $2\times 2$ composite particles and cannot overlap with $\ell_{1}$ and $T_{1}^{4}(\ell_{2})$ simultaneously. Now, due to the translation equivalence of logical operators, we have
\begin{align}
\ell \  \sim \ \ell' \ \equiv \ \ell_{1}\ell_{2} \ = \
\begin{bmatrix}
A, & B,& AC
\end{bmatrix}
\ \times \
\begin{bmatrix}
BC ,& ABC ,& AB
\end{bmatrix}
\ = \
\begin{bmatrix}
ABC, & AC,& BC
\end{bmatrix}.
\end{align}
Thus, a logical operator $\ell$, with the length $4$, is shrunk into an equivalent logical operator $\ell'$, with the length $3$.

An important observation is that one can form an identity operator $I$ by taking a product of Pauli operators in $\ell$. In general, if the length $x$ of $\ell$ is larger than $2v$, one can always form an identity operator $I$ by taking a product of some Pauli operators in $\ell$ since there are $2v$ independent generators for single Pauli operators acting on a single composite particle. Now, let us consider $\ell$ represented as 
\begin{align}
\ell \ =\ \begin{bmatrix}
U_{1}, & U_{2},& \cdots ,& U_{x}
\end{bmatrix}.
\end{align}
which is defined inside $P(x,1,1)$ where $x > 2v$. Then, there always exists a binary vector $B=(B_{1},\cdots,B_{x}) \not= (0,\cdots,0)$ which satisfies the following condition:
\begin{align}
\prod_{j=1}^{x} U_{j}^{B_{j}} \ = \ I.
\end{align}
Now, let us take the following product of translations of $\ell$:
\begin{align}
\prod_{j=1}^{x}T_{1}^{x-j}(\ell^{B_{j}}).
\end{align}
Then, one may readily know that the $x$th entry of the above operator is $I$. From this operator, one can find two centralizer operators. By using them, one can readily shrink the length of $\ell$ from $x$ to $x-1$. This trick is the key to the proof of lemma~\ref{lemma_decomposition}. Although the argument above works only when $B$ has an odd number of $1$ entries, a slight modification makes the shrinkage of $\ell$ possible when $B$ has an even number of $1$ entries.

\textbf{Identity generating matrix:}
Now, we consider more general cases with $m>0$. Let us represent a logical operator $\ell$ defined inside a region $P(x, 2^{m},1)$ as a $x\times 2^{m}$ matrix whose entries are single Pauli operators:
\begin{align}
\ell \ = \ \begin{bmatrix}
U_{1,1},& \cdots& U_{x,1} \\
\vdots     & \ddots& \vdots \\
U_{1,2^{m}},& \cdots& U_{x,2^{m}} 
\end{bmatrix}
\end{align}
where each Pauli operator $U_{i,j}$ acts on each composite particle. We also represent each column of $\ell$ as follows:
\begin{align}
U_{j} \ =\ \begin{bmatrix}
U_{j,1} \\
\vdots      \\
U_{j,2^{m}} 
\end{bmatrix}\qquad ( j\ =\ 1,\cdots,x  ).
\end{align}
Here, we denote a group of Pauli operators supported by a single column $P(1,2^{m},1)$ as $\mathcal{P}^{m}_{col}$ and call it the \emph{column operator group} and its elements \emph{column operators}. Note that $U_{j}\in \mathcal{P}^{m}_{col}$, and $\mathcal{P}^{m}_{col}$ has $2^{m}\cdot 2v$ independent generators: $G(\mathcal{P}^{m}_{col}) = 2^{m}\cdot 2v$.

When $m=0$ and $x >2v$, we found a binary vector $B$ with $x$ components which characterizes how to form an identity operator from $\ell$. When $m>0$ and $x >2v$, we can find an $x \times 2^{m}$ binary matrix $B$ which characterizes how to form an identity operator from $\ell$. Here, we introduce the \emph{identity generating matrices} as follows.

\begin{definition}
Consider a logical operator $\ell$ defined inside $P(x,2^{m},1)$.
\begin{itemize}
\item For a $x\times 2^{m}$ binary matrix $B$
\begin{align}
B \ = \ \begin{bmatrix}
B_{1,1}    ,& \cdots & B_{x,1}   \\
\vdots    & \ddots &  \vdots   \\  
B_{1,2^{m}} ,& \cdots & B_{x,2^{m}} 
\end{bmatrix}, \qquad B_{i,j} \ = \ 0,1, 
\end{align}
we define the following operations:
\begin{align}
\ell(B) \ &\equiv \ \prod_{i,j} T_{1}^{x -i}T_{2}^{j-1} ( \ell^{B_{i,j}} ) \\
\ell(B)_{x}\ &\equiv \ \prod_{i=1}^{x}\prod_{j=1}^{2^{m}}T_{2}^{j-1}(U_{i}^{B_{i,j}})\  \in \ \mathcal{P}^{m}_{col}
\end{align}
where $\ell(B)$ is a product of translations of $\ell$ taken according to $B$ while $\ell(B)_{x}$ is the $x$th column of $\ell(B)$. 
\item We call a binary matrix $B$ \emph{identity generating matrix} if and only if 
\begin{align}
\ell(B)_{x} \ = \ I \quad \mbox{and}\quad B \ \not= \ \begin{bmatrix}
0, & \cdots & 0 \\
\vdots  & \ddots  & \vdots \\
0, & \cdots  & 0
\end{bmatrix}.
\end{align}
\item We assign parities to each column of a binary matrix $B$ as follows:
\begin{align}
Par(B)_{i}\ \equiv \ \sum_{j} B_{i,j} \qquad ( \mbox{mod 2})
\end{align}
where $i = 1,\cdots,x$. We call a binary matrix $B$ \emph{odd} if and only if 
\begin{align}
\exists i \quad  \mbox{s.t} \quad Par(B)_{i} \ =\ 1.
\end{align}
\end{itemize} 
\end{definition}

Therefore, when we form an identity operator, we considered translations of $\ell$ both in the $\hat{1}$ and $\hat{2}$ directions. The identity generating matrix is said to be odd when there exists a column with an odd parity. 

\textbf{Shrinkage through odd matrices:}
Note that there always exists an identity generating matrix when $x >2v$. Then, with the existence of identity generating matrices, one might hope that the width of $\ell_{a}$ can be reduced until it becomes $2v$ in a way similar to the cases where $m=0$. However, there is a caveat. In fact, only identity generating matrices with some special properties can be used for shrinking. In particular, we have the following lemma.

\begin{lemma}[\textbf{Shrinkage through odd matrices}]\label{lemma_odd1}
If there exists an odd identity generating matrix for $\ell$ defined inside $P(x, 2^{m}, 1)$, $\ell$ has an equivalent logical operator $\ell'$ defined inside $P(x-1, 2^{m}, 1)$.
\end{lemma}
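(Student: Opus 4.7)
My plan is to follow the shrinkage argument of Lemma \ref{lemma_decomposition} (the two-dimensional case) and adapt it to the three-dimensional setting, where each $\hat{1}$-column of $\ell$ now lives in $P(1, 2^m, 1)$ and carries a cyclic $\hat{2}$-direction structure of length $2^m$. Two ingredients are essential: the translation equivalence of logical operators (Theorem \ref{theorem_TE}) in both the $\hat{1}$- and $\hat{2}$-directions, and the strict locality of stabilizer generators, which are supported in $2\times 2 \times 2$ blocks of composite particles.

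First I would form, from the odd identity generating matrix $B$, the operator $\ell(B)=\prod_{i,j} T_1^{x-i}T_2^{j-1}(\ell^{B_{i,j}})$, whose $x$-th $\hat{1}$-column $\ell(B)_x$ is trivial by hypothesis. Grouping factors by $i$, the partial products $\prod_j T_2^{j-1}(\ell^{B_{i,j}})$ are $\sim \ell^{Par(B)_i}$ by $\hat{2}$-translation equivalence, so that $\ell(B)\sim \ell^{n_{\mathrm{odd}}}$ with $n_{\mathrm{odd}}=|\{i : Par(B)_i=1\}|$; the odd hypothesis gives $n_{\mathrm{odd}}\geq 1$. When $n_{\mathrm{odd}}$ is itself odd, one obtains $\ell(B)\sim \ell$ directly. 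When it is even, I plan to exploit that the identity generating matrices form an $\mathbb{F}_2$-vector space under entrywise addition (since $\ell(B+B')_x=\ell(B)_x\ell(B')_x=I$ because $\ell^2=I$) to combine $B$ with another odd element of this space so as to force $n_{\mathrm{odd}}$ to become odd.

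Next, because $\ell(B)_x = I$, the operator $\ell(B)$ factors as $\ell(B)=\ell_1\cdot\ell_2$ with $\ell_1$ supported in columns $[1,x-1]$ and $\ell_2$ in columns $[x+1,2x-1]$. Any stabilizer $S$ that could link these two parts must straddle columns $\{x-1,x\}$ or $\{x,x+1\}$, and since column $x$ is trivial the identity $[\ell(B),S]=0$ reduces to $[\ell_1,S]=0$ or $[\ell_2,S]=0$ respectively; hence $\ell_1,\ell_2\in\mathcal{C}_{P(n_1,2^m,1)}$. Applying $T_1^{-x}$ to $\ell_2$ produces a centralizer operator in columns $[1,x-1]$, and $T_1^{-x}(\ell_2)\sim \ell_2$ by translation invariance of $\mathcal{S}$ together with the equivalence classes of logical operators (Theorem \ref{theorem_TE}). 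Setting $\ell':=\ell_1\cdot T_1^{-x}(\ell_2)$ yields a centralizer operator inside $P(x-1,2^m,1)$ with $\ell'\sim \ell(B)\sim \ell$, which is the desired shrunk representative.

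The main obstacle is the parity matching in the second paragraph. The locality-and-splitting portion is essentially a three-dimensional rewriting of Lemma \ref{lemma_decomposition}, but the definition of odd only guarantees $n_{\mathrm{odd}}\geq 1$ rather than $n_{\mathrm{odd}}$ odd, and the naive construction makes $\ell(B)$ a stabilizer (rather than something equivalent to $\ell$) whenever $n_{\mathrm{odd}}$ is even. The technical heart of the proof is therefore the auxiliary combinatorial step of producing an odd identity generating matrix whose column-parity count is itself odd; checking that this is always possible under the system-size assumptions of Theorem \ref{theorem_decomposition} is where the bulk of the work will lie, while the remaining splitting and translation arguments are bookkeeping.
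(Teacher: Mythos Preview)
Your splitting and translation-equivalence bookkeeping (extracting $\ell_1,\ell_2$ from $\ell(B)$ and arguing that $\ell_1\cdot T_1^{-x}(\ell_2)\sim\ell(B)$) is correct and matches the paper. The gap is precisely where you locate it, and your proposed fix does not work.

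The map $B\mapsto n_{\mathrm{odd}}(B)\pmod 2=\sum_{i,j}B_{i,j}\pmod 2$ is $\mathbb{F}_2$-linear on the space of identity-generating matrices. Hence if two odd identity-generating matrices both have even $n_{\mathrm{odd}}$, so does their sum (which need not even be odd). The lemma's hypothesis hands you a single odd $B$, and nothing excludes the possibility that \emph{every} identity-generating matrix has even total weight; in that case no amount of combining produces $\ell(B)\sim\ell$. So ``combine with another odd element'' cannot close the gap.

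The paper avoids this problem by never requiring $\ell(B)\sim\ell$. It uses $\ell(B)$ only to harvest the centralizer piece $U_b$ (the restriction of $\ell(B)$ to columns $[x{+}1,2x{-}1]$), and builds a \emph{separate} representative of $\ell$ as follows. Let $i'$ be the smallest index with $Par(B)_{i'}=1$, and let $B'$ agree with $B$ on columns $1,\dots,i'$ and vanish on columns $i'{+}1,\dots,x$. Then every column of $B'$ with $i<i'$ has even weight and column $i'$ has odd weight, so the total weight of $B'$ is odd and $\ell(B')\sim\ell$ automatically, with no extra combinatorial input. Moreover, since $B$ and $B'$ agree on columns $\le i'$, the operators $\ell(B')$ and $U_b$ coincide on the rightmost $i'$ columns; multiplying $\ell(B')$ by $U_b$ cancels that tail and leaves an operator of width $x{-}1$. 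To keep the result equivalent to $\ell(B')$ one multiplies in addition by the translate $T_1^{-i'}(U_b)$, so that the total correction $U_b\cdot T_1^{-i'}(U_b)$ is a stabilizer (by the translation-equivalence argument you already gave); this translate is supported within the same width-$(x{-}1)$ window, so the final operator still has width $x{-}1$.

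The missing idea, then, is: truncate $B$ at its first odd-parity column to obtain $B'$ with guaranteed odd total weight, and use the centralizer pieces of $\ell(B)$ as surgical tools on $\ell(B')$ rather than trying to make $\ell(B)$ itself equivalent to $\ell$.
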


Therefore, if there exists an odd identity generating matrix for any $x$ with $n_{1}>x>2v$, one can complete the proof of lemma~\ref{lemma_decomposition_final}. Below, we present the proof of lemma~\ref{lemma_odd1}. The existence of an odd identity generating matrix will be proven later.

\begin{proof}
Assume that $B$ is an odd identity generating matrix for $\ell$. Assume that for some $i'$ ($1 \leq i' \leq x$), we have
\begin{align}
Par(B)_{i'} \ = \  1 \quad \mbox{and} \quad  Par(B)_{i} \ = \ 0 \quad \mbox{for} \ i \ < \ i'.
\end{align}
So, $i'$ is the smallest integer such that $i'$th column has an odd parity.
Here, we define the following binary matrix $B'$ (see Fig.~\ref{fig_proof_aid}):
\begin{align}
B'_{i,j} \ &\equiv \ B_{i,j} \qquad & \ (\ i \ \leq \ i'\ )\\
        &\equiv \ 0      \qquad & \  (\ i \ > \ i' \ ).  
\end{align}
Note that $B'$ consists of $i$th columns of $B$ with $i\leq i'$. Based on $B'$, we consider the following logical operator $\ell'$:
\begin{align}
\ell' \ \equiv \ \ell(B') \ \sim \ \ell.
\end{align}
Note that $\ell'$ is equivalent to $\ell$ since $\ell'$ is a product of an odd number of translations of $\ell$. (Note that $\ell(B)$ may not be equivalent to $\ell$ since the number of $1$ entries in $B$ may be even).

See Fig~\ref{fig_proof_aid} for graphical representations of $B$, $B'$, $\ell(B)$ and $\ell(B')$. Note that $\ell(B)$ has an identity operator at $x$th column. Note that $\ell(B')$ has identity operators in the first $x - i'$ columns since $B_{i,j}=0$ for $i >i'$.

\begin{figure}[htb!]
\centering
\includegraphics[width=0.75\linewidth]{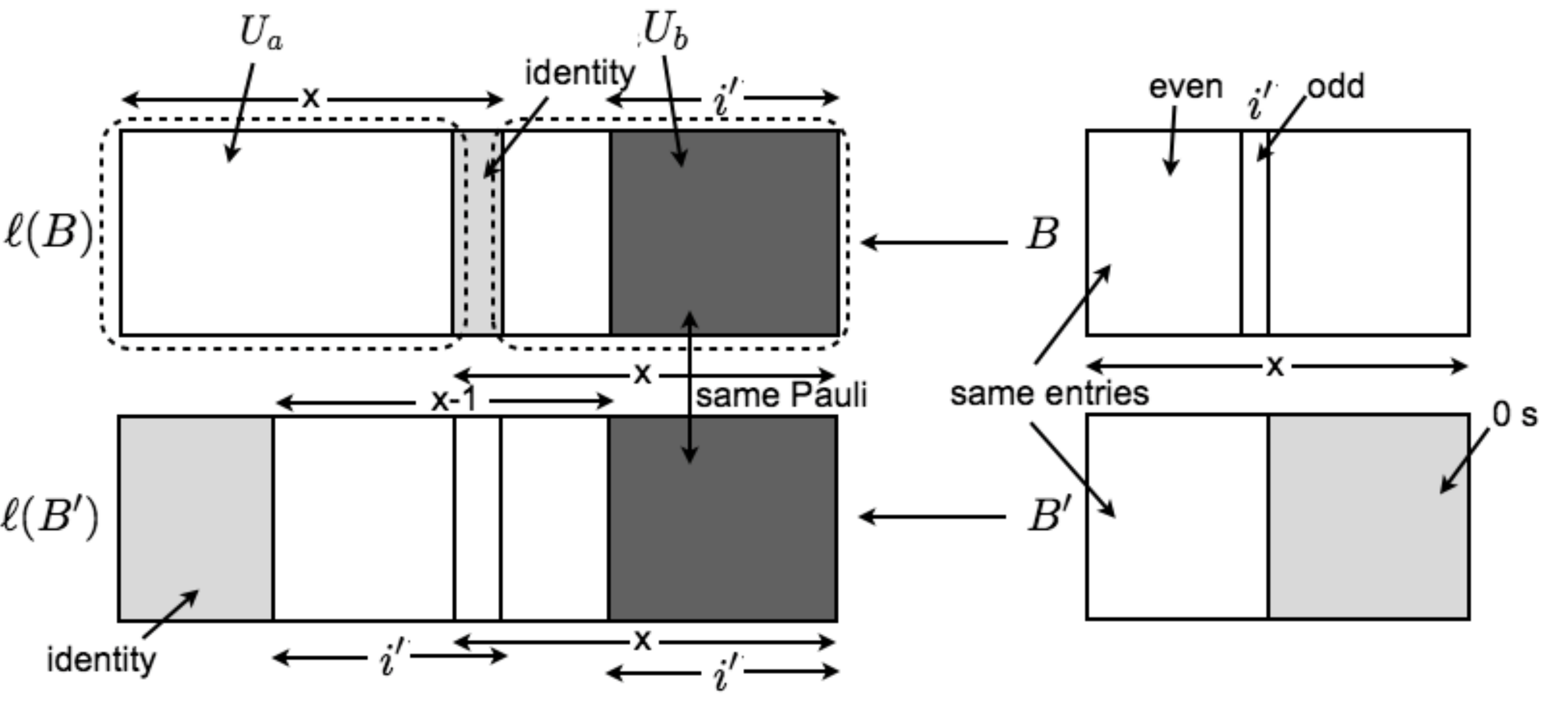}
\caption{Constructions of $B$, $B'$, $\ell(B)$ and $\ell(B')$.
} 
\label{fig_proof_aid}
\end{figure}

Since $\ell(B)$ has an identity operator at $x$th column, we can decompose it as a product of two centralizer operators whose lengths are at most $x-1$. Let us denote these centralizer operators as $U_{a}$ and $U_{b}$:
\begin{align}
\ell(B) \ = \ U_{a}U_{b}
\end{align}
where $U_{a}$ is the centralizer on the left hand side and $U_{b}$ is the centralizer on the right hand side (See Fig.~\ref{fig_proof_aid}). $U_{a}$ is defined from $1$st column to $x-1$th column, and $U_{b}$ is defined from $x+1$th column to $2x-1$th column.

Since $B$ and $B'$ have the same entries from $1$st column to $i'$th column, we notice that $\ell(B')$ and $U_{b}$ have the same Pauli operators from $2x-2-i'$st column to $2x-1$th column as shown in Fig.~\ref{fig_proof_aid}. Then, by applying $U_{b}$ to $\ell(B')$, one can shrink the size of $\ell(B')$. In particular, $U_{b} \ell(B')$ has the length at most $x-1$. Although $U_{b}\ell(B')$ may not be equivalent to $\ell(B')$ as $U_{b}$ may not be a stabilizer, one may consider the following logical operator:
\begin{align}
U_{b} \ \times \ \ell(B') \ \times \ T_{1}^{- i'}(U_{b}) \ \sim \ \ell(B') \ \sim \ \ell
\end{align}
which is equivalent to $\ell(B')$ due to the translation equivalence of logical operators, and is defined inside a region with $x-1 \times 2^{m} \times 1$ composite particles. Then, due to the translation equivalence of logical operators, there exists a logical operator $\ell'' \sim \ell$ which is defined inside $P(x-1, 2^{m}, 1)$. This completes the proof.
\end{proof}

\subsection{Existence of an odd matrix for $m=1$}

Next, we present a proof of the existence of an odd identity generating matrix for $x > 2v$, in order to complete the proof of lemma~\ref{lemma_decomposition_final} and theorem~\ref{theorem_decomposition}. In particular, we shall prove the following lemma.

\begin{lemma}\label{lemma_existence}
When $x > 2v$, there always exist an odd identity generating matrix. 
\end{lemma}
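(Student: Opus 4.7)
The plan is to prove the lemma by a clean dimension count, reducing the existence of an odd identity generating matrix to bounding the dimension of the $T_2$-invariant part of $\mathrm{Im}(\Phi)$, where $\Phi:\mathbb{F}_2^{x\cdot 2^m}\to\mathcal{P}^{m}_{col}$ is the $\mathbb{F}_2$-linear map $B\mapsto\ell(B)_x$. I would let $K=\ker\Phi$, so that identity generating matrices are exactly the nonzero elements of $K$, and let $E\subset\mathbb{F}_2^{x\cdot 2^m}$ denote the subspace of even matrices, of dimension $x(2^m-1)$; the lemma then amounts to showing $K\not\subset E$. With $r=\dim\mathrm{Im}(\Phi)$ and $r_E=\dim\mathrm{Im}(\Phi|_E)$, two applications of rank--nullity give $\dim K-\dim(K\cap E)=x-(r-r_E)$, so everything reduces to proving $r-r_E<x$.

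The key computation is to identify $\mathrm{Im}(\Phi|_E)$ with $(T_2+1)\mathcal{I}$, where $\mathcal{I}:=\mathrm{Im}(\Phi)=\mathrm{span}\{T_2^a(U_i)\}_{i,a}$. For the inclusion $\subseteq$, I would use the evenness condition $B_{i,2^m}=\sum_{j<2^m}B_{i,j}$ to rewrite $\Phi(B)$ in terms of the combinations $T_2^{j-1}(U_i)+T_2^{2^m-1}(U_i)$, which factor through $T_2+1$ in $\mathbb{F}_2[T_2]/(T_2^{2^m}+1)$. For the reverse inclusion $\supseteq$, I would realize each generator $(T_2+1)T_2^a(U_i)$ as $\Phi(B)$ for a $B$ having exactly two adjacent $1$'s in column $i$, which is manifestly even.

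Once $\mathrm{Im}(\Phi|_E)=(T_2+1)\mathcal{I}$ is in hand, I would apply rank--nullity a second time, now to the endomorphism $T_2+1$ of the $T_2$-stable subspace $\mathcal{I}$: its image is $(T_2+1)\mathcal{I}=\mathrm{Im}(\Phi|_E)$ and its kernel is $\mathcal{I}^{T_2}$, so $r-r_E=\dim\mathcal{I}^{T_2}$. Since $\mathcal{I}^{T_2}$ embeds into the $T_2$-invariants of the whole column operator group $\mathcal{P}^{m}_{col}$---namely the ``diagonal'' subspace of column operators whose $2^m$ entries coincide, canonically isomorphic to the single-composite-particle Pauli group of dimension $2v$---I would conclude $r-r_E\leq 2v<x$, hence $K\not\subset E$ and an odd identity generating matrix exists.

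The main obstacle will be the second inclusion $\mathrm{Im}(\Phi|_E)\supseteq(T_2+1)\mathcal{I}$: while ``$\subseteq$'' follows from a direct characteristic-$2$ calculation, ``$\supseteq$'' requires producing explicit even preimages for each generator, tying the abstract $\mathbb{F}_2[T_2]$-module language back to the combinatorics of the matrix $B$. The remaining steps are essentially formal, and the threshold $x>2v$ emerges sharply from the fact that the $T_2$-invariants of a column of $2^m$ composite particles form a $2v$-dimensional space, independent of $m$.
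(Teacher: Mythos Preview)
Your argument is correct and substantially cleaner than the paper's. The paper proceeds by contradiction and builds an elaborate combinatorial machinery: it assigns to each column operator a \emph{characteristic vector} $\vec{b}\in\{0,1\}^m$ encoding its symmetry under $T_2$, proves several lemmas about how these vectors behave under products and under the action of special ``characteristic columns'' $B(\vec{b})$, and then runs an iterative ``update'' procedure that repeatedly increases characteristic vectors until a contradiction is reached. This occupies several pages and requires auxiliary results (including a sublemma on the parity of binomial coefficients via their binary expansions).

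Your route bypasses all of this by recognizing the $\mathbb{F}_2[T_2]/(T_2^{2^m}+1)$-module structure directly: the parity condition $\sum_j B_{i,j}=0$ is exactly $p_i(1)=0$ for the polynomial $p_i(T_2)=\sum_j B_{i,j}T_2^{j-1}$, i.e.\ divisibility by $T_2+1$, which immediately gives $\mathrm{Im}(\Phi|_E)=(T_2+1)\mathcal{I}$. The two rank--nullity steps then collapse the whole question to $\dim\mathcal{I}^{T_2}\le\dim(\mathcal{P}^m_{col})^{T_2}=2v$, and the threshold $x>2v$ falls out with no case analysis and no dependence on $m$. What the paper's approach buys is a more hands-on, quasi-constructive picture of \emph{how} to build the odd matrix from the characteristic data of the columns $U_i$; what yours buys is a one-paragraph proof that makes the origin of the bound $2v$ completely transparent as the dimension of the $T_2$-fixed subspace. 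Since the characteristic-vector machinery is not reused elsewhere in the paper, your argument would genuinely shorten the exposition.
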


We start by discussing cases with $m=0$ and $m=1$ before presenting general discussion. First of all, when $m=0$, identity generating matrices are always odd since all the binary matrices are odd except $B=(0,\cdots,0)$. Therefore, we consider the cases where $m=1$ below. 

\textbf{Characteristic value:}
Recall that we represented $\ell$ as a $x \times 2$ binary matrix:
\begin{align}
\ell \ =\ \begin{bmatrix}
U_{1,1},& \cdots& U_{x,1} \\
U_{1,2},& \cdots& U_{x,2} 
\end{bmatrix}
\end{align}
and each column of $\ell$ as follows:
\begin{align}
U_{j} \ = \ \begin{bmatrix}
U_{j,1} \\
U_{j,2} 
\end{bmatrix}\qquad ( j \ = \ 1,\cdots,x  ).
\end{align}
Now, \emph{we suppose that there is no odd identity generating matrix} for $\ell$, in order to use the contradiction for the proof of lemma~\ref{lemma_existence}. 

First, it is convenient to classify column operators in $\mathcal{P}_{col}^{1}$ into two types as follows. For a column operator $U$ represented as
\begin{align}
U = \begin{bmatrix}
U_{1,1} \\
U_{1,2} 
\end{bmatrix},
\end{align}
we assign a \emph{characteristic value} $b$ and \emph{characteristic operator} $V$ as follows:
\begin{itemize}
\item If $U_{1,1}=U_{1,2}$, we assign a characteristic value $b = 1$ and a characteristic operator $V=U_{1,1}$.
\item If $U_{1,1}\not=U_{1,2}$, we assign a characteristic value $b= 0$ and a characteristic operator $V=U_{1,1}U_{1,2}$.
\end{itemize}
One may easily understand this classification by representing $U$ explicitly. A column operator $U$ with $b=1$ is
\begin{align}
U \ = \ \begin{bmatrix}
V \\
V 
\end{bmatrix}
\end{align}
and a column operator $U$ with $b=0$ is
\begin{align}
U \ = \ \begin{bmatrix}
VU_{1,2} \\
U_{1,2}
\end{bmatrix}.
\end{align}
So, a column operator with $b=1$ is symmetric while a column operator with $b=0$ is not. Note that a characteristic operator is not an identity operator $I$ except when $U=I$.

Next, we introduce the following $x \times 2$ binary matrices:
\begin{align}
E(i';0) \ \equiv \ \begin{bmatrix}
E_{1,1} ,& \cdots ,& E_{x,1} \\
E_{1,2} ,& \cdots ,& E_{x,2}
\end{bmatrix}
\end{align}
such that
\begin{align}
E_{i,1} \ &= \ 1 \qquad ( i \ = \ i')\\
E_{i,j} \ &= \ 0 \qquad \mbox{otherwise}
\end{align}
and 
\begin{align}
E(i';1) \ \equiv \ \begin{bmatrix}
E_{1,1} ,& \cdots ,& E_{x,1} \\
E_{1,2} ,& \cdots ,& E_{x,2}
\end{bmatrix}
\end{align}
such that
\begin{align}
E_{i,1} \ &= \ E_{i,2} \ = \  1 \qquad ( i \ = \ i')\\
E_{i,j} \ &= \ 0 \qquad \ \ \quad \qquad \mbox{otherwise}
\end{align}
For example, 
\begin{align}
E(2;0) \ = \ \begin{bmatrix}
0, & 1 ,& 0 ,& \cdots ,& 0 \\
0, & 0 ,& 0 ,& \cdots ,& 0
\end{bmatrix}
\end{align}
and 
\begin{align}
E(2;1) \ = \ \begin{bmatrix}
0 , & 1 ,& 0 ,& \cdots ,& 0 \\
0 , & 1 ,& 0 ,& \cdots ,& 0
\end{bmatrix}.
\end{align}

Now, let us represent characteristic values and characteristic operators for each column of $\ell$ as follows:
\begin{align}
\begin{array}{ccc}
U_{1}\ \rightarrow\ & b_{1}, & V_{1} \\
U_{2}\ \rightarrow\ & b_{2},& V_{2} \\
\vdots & \vdots & \vdots \\
U_{x}\ \rightarrow\  & b_{x},& V_{x}.
\end{array}
\end{align}
Then one can establish a connection between binary matrices $E(i;0)$ and $E(i;1)$, and characteristic values $b_{i}$ and operators $V_{j}$ as follows:
\begin{align}
\ell(E(i;0))_{x} \ &= \  \begin{bmatrix}
V_{i} \\
V_{i} 
\end{bmatrix}\qquad \mbox{when} \quad b_{i}\ = \ 1\\
\ell(E(i;1))_{x} \ &= \  \begin{bmatrix}
V_{i} \\
V_{i} 
\end{bmatrix}\qquad \mbox{when} \quad b_{i}\ = \ 0.
\end{align}

\textbf{Proof of lemma~\ref{lemma_existence}:}
Now, let us proceed to the proof of lemma~\ref{lemma_existence} for $m=1$. Without loss of generality, we can assume that
\begin{align}
b_{1}\ = \ \cdots\ =\ b_{x_{0}}\ =\ 0, \qquad b_{x_{0}+1}\ =\ \cdots \ = \ b_{x}\ =\ 1
\end{align}
for $x_{0}\leq x$ since permutations of columns do not affect the parities of identity generating matrices. (We will justify this later). We define the following sets of integers:
\begin{align}
\textbf{b}(0)\ \equiv \ \{ 1,\cdots, x_{0} \}, \qquad \textbf{b}(1)\ \equiv \{ x_{0}+1, \cdots, x \}.
\end{align}
We denote groups of Pauli operators generated by $V_{j}$ with $b_{j}=0$ and $V_{j}$ with $b_{j}=1$ as $\mathcal{V}_{0}$ and $\mathcal{V}_{1}$:
\begin{align}
\mathcal{V}_{0} \ &\equiv \ \left\langle \ \{ \ V_{i} \ : \ i \ \in \ \textbf{b}(0) \ \} \ \right\rangle \\
\mathcal{V}_{1} \ &\equiv \ \left\langle \ \{ \ V_{i} \ : \ i \ \in \ \textbf{b}(1) \ \} \ \right\rangle. 
\end{align}

Let us show that a set of characteristic operators $\{ V_{i} \}$ for $i \in \textbf{b}(1)$ is independent : $G(\mathcal{V}_{1})=x - x_{0}$. For this purpose, we suppose that there exists some set of integers $\textbf{A}\subseteq \textbf{b}(1)$ such that
\begin{align}
\prod_{i \in \textbf{A}} V_{i} \ = \ I.
\end{align}
Then, the following binary matrix is an identity generating matrix:
\begin{align}
B \ = \ \sum_{i \in \textbf{A}}E(i;0)  
\end{align}
since 
\begin{align}
\ell(B)_{x} \ = \ \prod_{i \in \textbf{A}} \begin{bmatrix}
V_{i}\\
V_{i}
\end{bmatrix}\ = \ \begin{bmatrix}
I\\
I
\end{bmatrix}.
\end{align}
However, since $B$ is odd with $Par(B)_{i}=1$ for $i \in \textbf{A}$, this leads to a contradiction.

Next, let us analyze $\mathcal{V}_{0}$. For simplicity of discussion, we \emph{first assume that $\{V_{i}\}$ for $i \in \textbf{b}(0)$ are independent}. We consider more general cases where $\{V_{i}\}$ for $i \in \textbf{b}(0)$ are over complete later. Then, we have $G(\mathcal{V}_{0})=x_{0}$. Now, we define the following operators for $i\in \textbf{b}(0)$:
\begin{equation}
\begin{split}
U_{i}'\ &\equiv \ U_{i}T_{2}(U_{i}) \\
       &= \ \begin{bmatrix}
       V_{i} \\
       V_{i}
       \end{bmatrix}. 
\end{split}
\end{equation}
Note that $U_{i}'$ has a characteristic value $b_{i}'=1$ and a characteristic operator $V_{i}$. Notice that
\begin{align}
U_{i}' \ = \ \ell(E(i;1))_{x}.
\end{align}

Since $x > 2v$, there exists a set of integer $\textbf{A}$ such that 
\begin{align}
\prod_{i \in \textbf{A}} V_{i} \ = \ I, \qquad \textbf{A} \ \not\subseteq \ \textbf{b}(1)\quad \mbox{and} \quad \textbf{A} \ \not\subseteq \ \textbf{b}(0).
\end{align}
Note that $\textbf{A}$ includes integers both from $\textbf{b}(0)$ and $\textbf{b}(1)$. Then, one notices that the following matrix $B$ is an identity generating matrix:
\begin{align}
B \ = \ \sum_{i \in \textbf{A} \cap \textbf{b}(0)} E(i;1) + \sum_{i \in \textbf{A} \cap \textbf{b}(1)} E(i;0)
\end{align}
since
\begin{equation}
\begin{split}
\ell(B)_{x}\ &= \ \prod_{\{ i \in \textbf{A}\cap \textbf{b}(0) \} } U_{i}' \prod_{\{ i \in \textbf{A} \cap \textbf{b}(1) \} } U_{i} \\
        &= \ \prod_{i \in \textbf{A}} \begin{bmatrix}
         V_{i} \\
         V_{i}
        \end{bmatrix} \ = \ 
        \begin{bmatrix}
         I \\
         I
        \end{bmatrix}.
\end{split}
\end{equation}
However, for an integer $i$ in $\textbf{A}\cap \textbf{b}(0)$, $Par(B)_{i}=1$, and $B$ is an odd matrix. This leads to a contradiction. Note that discussion above is valid under the permutations of columns. 

\textbf{Proof of lemma~\ref{lemma_existence}, continued:}
Next, let us consider the case where $\{ V_{i} \}$ for $i \in \textbf{b}(0)$ are not independent. Let us denote the number of generators for $\mathcal{V}_{0}$ as $\bar{x_{0}}=G(\mathcal{V}_{0})$ ($\bar{x_{0}}<x_{0}$). Without loss of generality, we may assume that $V_{1},\cdots,V_{\bar{x_{0}}}$ are independent since permutations do not affect parities of identity generating matrices. Here, \emph{we change our notations slightly}:
\begin{align}
\textbf{b}(0)\ \equiv \ \{ 1,\cdots, \bar{x_{0}} \}, \qquad \textbf{b}(0)'\ \equiv \ \{ \bar{x_{0}}+1,\cdots, x_{0} \}, \qquad \textbf{b}(1)\ \equiv \{ x_{0}+1, \cdots, x \}.
\end{align}
and
\begin{align}
\mathcal{V}_{0} \ &\equiv \ \left\langle \ \{ \ V_{i} \ : \ i \ \in \ \textbf{b}(0) \ \} \ \right\rangle  \qquad G(\mathcal{V}_{0}) \ = \ \bar{x_{0}}\\
\mathcal{V}_{1} \ &\equiv \ \left\langle \ \{ \ V_{i} \ : \ i \ \in \ \textbf{b}(1) \ \} \ \right\rangle  \qquad G(\mathcal{V}_{1}) \ = \ x - x_{0}. 
\end{align}

Since a set $\{V_{i}\}$ for $b_{i}=0$ is over complete, there are $x_{0}-\bar{x_{0}}$ sets of integers $\textbf{A}_{i}$ ($i=\bar{x_{0}}+1,\cdots,x_{0}$) such that
\begin{align}
\prod_{i' \in \textbf{A}_{i}}V_{i'}\ = \ I\quad \mbox{where} \quad i' \ \leq \ i,\ \forall i' \ \in \ \textbf{A}_{i} \quad \mbox{and}\quad i \ \in \ \textbf{A}_{i}
\end{align}
where the largest integer in $\textbf{A}_{i}$ is $i$.
For $i = \bar{x_{0}}+1,\cdots, x_{0}$, we form the following operator:
\begin{align}
U_{i}' \ \equiv \ \prod_{j \in \textbf{A}_{i}} U_{j} 
       \  \equiv \ \begin{bmatrix}
         V_{i}' \\
         V_{i}'
         \end{bmatrix}
\end{align}
which has a characteristic value $b_{i}'=1$ and a characteristic operator $V_{i}'$. Here, we denote a group of $\{ V_{i}'\}$ for $i \in \textbf{b}(0)'$ as 
\begin{align}
\mathcal{V}_{0}'\  = \ \left\langle \ \{ \ V_{i}' \ : \ i \ \in \ \textbf{b}(0)' \ \} \ \right\rangle.
\end{align}
Now, we show that $\{V_{i}'\}$ for $i\in \textbf{b}(0)'$ are independent: $G(\mathcal{V}_{0}')= x_{0}-\bar{x_{0}}$. If there exists $\textbf{A} \subseteq \textbf{b}(0)'$ such that
\begin{align}
\prod_{i \in \textbf{A}} V_{i}' \ = \ I,
\end{align}
we have the following identity generating matrix:
\begin{align}
B \ = \ \sum_{i \in \textbf{A}}\sum_{j \in \textbf{A}_{i}} E(j;0) \qquad \mbox{(mod 2)}
\end{align}
since
\begin{align}
\ell(B)_{x} \ = \ \prod_{i \in \textbf{A}}\begin{bmatrix}
V_{i}' \\
V_{i}'
\end{bmatrix}
\ = \ \begin{bmatrix}
I \\
I
\end{bmatrix}.
\end{align}
Recall that the largest integer in $\textbf{A}_{i}$ is $i$. Let the largest integer in $\textbf{A}$ be $i_{max}$. Then, we have $Par(B)_{i_{max}}=1$, and thus, $B$ is odd. This leads to a contradiction.

So far, we have shown that
\begin{align}
G(\mathcal{V}_{0}) \ = \ \bar{x_{0}},\qquad  G(\mathcal{V}_{0}') \ = \ x_{0}-\bar{x_{0}},\qquad G(\mathcal{V}_{1}) \ = \ x- x_{1}.
\end{align}
Since $x > 2v$, there exists a set of integers $\textbf{A}$ such that 
\begin{align}
\prod_{\{ i \in \textbf{A} \cap \textbf{b}(1) \} } V_{i} 
\prod_{\{ i \in \textbf{A} \cap \textbf{b}(1)' \} } V_{i}'
\prod_{\{ i \in \textbf{A} \cap \textbf{b}(0) \} } V_{i}  \ = \ I.
\end{align}
The following matrix is the identity generating matrix:
\begin{align}
B \ = \ \sum_{i \in \textbf{A} \cap \textbf{b}(0)} E(i;1) + 
\sum_{i \in \textbf{A} \cap \textbf{b}(0)'}  \sum_{i' \in \textbf{A}_{i}}  E(i';0) + 
\sum_{i \in \textbf{A} \cap \textbf{b}(1)} E(i;0) \qquad \mbox{(mod 2)}
\end{align}
since
\begin{align}
\ell(B)_{x} \ = \ \prod_{\{ i \in \textbf{A} : i \in \textbf{b}(0) \} } 
\begin{bmatrix}
V_{i}\\
V_{i}
\end{bmatrix}
\prod_{\{ i \in \textbf{A} : i \in \textbf{b}(0)' \} }
\begin{bmatrix}
V_{i}'\\
V_{i}'
\end{bmatrix}
\prod_{\{ i \in \textbf{A} : i \in \textbf{b}(1) \} } 
\begin{bmatrix}
V_{i}\\
V_{i}
\end{bmatrix} \ = \ \begin{bmatrix}
I\\
I
\end{bmatrix}.
\end{align}
Since $\textbf{A} \not\subseteq \textbf{b}(0)$, $\textbf{A}$ has some element in $\textbf{b}(0)' \cup \textbf{b}(1)$. Let the largest integer in $\textbf{A}$ be $i_{max}$. Then, we have $Par(B)_{i_{max}}=1$, and thus, $B$ is odd. This leads to a contradiction. Again, permutations of columns do not affect this discussion. This completes the proof of lemma~\ref{lemma_existence} for $m=1$.

\subsection{Characteristic vectors}

Let us proceed to the proof for the cases where $m>1$. When $m=1$, we assigned characteristic values $0$ and $1$ to each column operator according to its symmetry. For $m>1$, we will assign a ``binary vector'' with $m$ components to each column operator, which we will call a \emph{characteristic vector}. We encode ``symmetries'' of a column operator on these characteristic vectors. 

\textbf{Characteristic vector:}
We first define two maps $f_{0}$ and $f_{1}$ from $\mathcal{P}^{m'}_{col}$ to $\mathcal{P}^{m'-1}_{col}$ for $m' >0$ as follows. For a given column operator $U \in \mathcal{P}^{m'}_{col}$ which is represented as
\begin{align}
U = \begin{bmatrix} 
U_{1}\\
\vdots \\
U_{2^{m'}}
\end{bmatrix},
\end{align}
we define $f_{0}(U), f_{1}(U) \in \mathcal{P}^{m'-1}_{col}$ as follows:
\begin{align}
f_{0}(U)_j \equiv U_{j}U_{j+ 2^{m'-1}}, \qquad f_{1}(U)_j \equiv U_{j} \qquad (j=1,\cdots,2^{m'-1}).
\end{align}
One may represent $f_{0}(U)$ and $f_{1}(U)$ more explicitly:
\begin{align}
f_{0}(U) = \begin{bmatrix} 
U_{1}U_{2^{m'-1}+1} \\
\vdots \\
U_{2^{m'-1}}U_{2^{m'}} 
\end{bmatrix}, \qquad 
f_{1}(U) = \begin{bmatrix} 
U_{1}\\
\vdots \\
U_{2^{m'-1}} 
\end{bmatrix}
\end{align}
Note that $f_{0}$ and $f_{1}$ decrease the length of the column by half.

Let us denote a set of all the $m$ component binary vectors as $\textbf{B}_{vec}^{m}$. Now, for a column operator $U \in \mathcal{P}^{m}_{col}$, we assign an $m$ component binary vector $\vec{b} \in \textbf{B}^{m}_{vec}$ through the following rule.
\begin{itemize}
\item If $f_{0}(U)=I$, take $b_{m}=1$ and define $U^{(1)} \equiv f_{1}(U)$.
\item If $f_{0}(U)\not= I$, take $b_{m}=0$ and define $U^{(1)} \equiv f_{0}(U)$.
\end{itemize}
and, iterate this procedure:
\begin{itemize}
\item If $f_{0}(U^{(j)})=I$, $b_{m-j}=1$ and define $U^{(j+1)}\equiv f_{1}(U^{(j)})$.
\item If $f_{0}(U^{(j)})\not= I$, $b_{m-j}=0$ and define $U^{(j+1)} \equiv f_{0}(U^{(j)})$.
\end{itemize}
for $1 \leq j \leq m-1$. We define a characteristic operator $V$ of $U$ as follows:
\begin{align}
V \ \equiv \ U^{(m)} \ = \ f_{b_{1}} f_{b_{2}} \cdots f_{b_{m}} (U) \ \in \ \mathcal{P}^{0}_{col}.
\end{align}
Note that $V \not= I$ when $U \not=I$. It is worth presenting examples of characteristic vectors here ($m=2$):
\begin{align}
\begin{bmatrix} 
V \\
V\\
V\\
V
\end{bmatrix} \ \rightarrow \ (1,1), \ \
\begin{bmatrix} 
V \\
I \\
V \\
I
\end{bmatrix} \ \rightarrow \ (0,1), \ \
\begin{bmatrix} 
V \\
V \\
I \\
I
\end{bmatrix} \ \rightarrow \ (1,0), \ \
\begin{bmatrix} 
V \\
I \\
I \\
I
\end{bmatrix} \ \rightarrow \ (0,0).
\end{align}
Thus, \emph{symmetries of column operators are encoded in characteristic vectors.}

Next, we introduce an order between binary vectors in $\textbf{B}^{m}_{vec}$. We define 
\begin{align}
g(\vec{b}) \ \equiv \ \sum_{j=1}^{m} b_{j}2^{j-1}
\end{align}
where $\vec{b}$ is like a binary representation of an integer $g(\vec{b})$. For a given pair of $m$ component binary vectors $\vec{b}$ and $\vec{b'}$, we denote
\begin{align}
\vec{b} \ < \ \vec{b'}
\end{align}
if and only if 
\begin{align}
g(\vec{b}) \ < \ g(\vec{b'}).
\end{align}
For example, for $m=3$, we have the following relations between binary vectors:
\begin{align}
(0,0,0) \ <\ (1,0,0)\ <\ (0,1,0)\ <\ (1,1,0)\ <\ (0,0,1)\ <\ (1,0,1)\ <\ (0,1,1)\ <\ (1,1,1).
\end{align}
Below, we shall see that a column operator with a larger characteristic vector is ``more symmetric'' than a column vector with a smaller characteristic vector.

\textbf{Property of characteristic vectors:}
Let us briefly recall the proof of lemma~\ref{lemma_existence} for $m=1$. In the proof, we constructed a column operator with $b=1$ from a column operator with $b=0$. In particular, if $U$ is a column operator with a characteristic value $b=0$ and a characteristic operator $V$:
\begin{align}
U \ = \ \begin{bmatrix}
VV'\\
V'
\end{bmatrix}
\end{align}
where $V'$ is some Pauli operator, we have
\begin{align}
UT_{2}(U) \ = \ \begin{bmatrix}
V \\
V
\end{bmatrix}
\end{align}
which is a column operator with a characteristic value $b=1$ and a characteristic operator $V$. Thus, \emph{we can create a column operator with a larger characteristic value from a column operator with a smaller characteristic value}. 

In a way similar to this, one can construct a column operator with $\vec{b'}$ from a column operator with $\vec{b}$ as long as $\vec{b'}>\vec{b}$. Let us represent a binary column $B$ as follows:
\begin{align}
B \ = \ \begin{bmatrix}
B_{1,1}\\
\vdots \\
B_{1,2^{m}}
\end{bmatrix},
\end{align}
and denote a set of all the binary columns as $\textbf{B}^{m}_{col}$. Here, we define a parity of $B$ as 
\begin{align}
Par(B) \ \equiv \ Par(B)_{1}
\end{align}
by viewing a binary column $B$ as a binary matrix. For a column operator $U\in \mathcal{P}_{col}^{m}$, we define $U(B)$ as follows:
\begin{align}
U(B)\ \equiv \ \prod_{j=1}^{2^{m}}T_{2}^{j-1}(U^{B_{1,j}})\  \in \ \mathcal{P}^{m}_{col}
\end{align}
just like $\ell(B)$. Note that $U(B)$ is a product of translations of $U$ taken according to a binary column $B$. Then, the following lemma holds.

\begin{lemma}\label{lemma_construction}
Consider an arbitrary pair of $m$ component binary vectors $\vec{b}, \vec{b'} \in \textbf{B}^{m}_{vec}$ such that $\vec{b}<\vec{b'}$. For a given column operator $U$ which has a characteristic vector $\vec{b}$ and a characteristic operator $V$, there always exists some binary column $B \in \textbf{B}^{m}_{col}$ with an even parity $Par(B)= 0$ such that $U(B)$ has a characteristic vector $\vec{b'}$ and a characteristic operator $V$.
\end{lemma}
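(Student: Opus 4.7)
The plan is to recast the problem algebraically, viewing binary columns as elements of a truncated polynomial ring and tracking the characteristic data via an $s$-adic leading term. Identify $B \in \textbf{B}^{m}_{col}$ with the polynomial $B(T_2) = \sum_{j=1}^{2^m} B_{1,j} T_2^{j-1}$ in the ring
\[
R_m \ = \ \mathbb{F}_2[T_2]\big/(T_2^{2^m}-1) \ = \ \mathbb{F}_2[s_m]\big/s_m^{2^m}, \qquad s_m \ = \ T_2+1,
\]
where the equality of rings uses characteristic two. Under this identification, the operation $U(B)$ is just multiplication $B(T_2)\cdot U$, treating $\mathcal{P}^{m}_{col}$ modulo phases as a free $R_m$-module of rank $2v$, and the parity condition $Par(B)=0$ becomes $B(1)=0$, i.e., $s_m\mid B(T_2)$.

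The core of the proof is a dictionary between the characteristic data of $U$ and its $s_m$-adic leading term. Expanding $U=\sum_{i=0}^{2^m-1} c_i s_m^i$ with $c_i\in\mathcal{P}^0_{col}$, I would prove by induction on $m$ that $U$ has characteristic vector $\vec{b}$ and characteristic operator $V$ iff the smallest index $i$ with $c_i\neq I$ equals $g(\vec{b})$ and $c_{g(\vec{b})}=V$. The base $m=1$ is immediate from $U_1+U_2 T_2=(U_1+U_2)+U_2\, s_1$. For the induction from $m$ to $m+1$, split on $b_{m+1}$: the map $f_0$ is precisely the ring quotient $\pi: R_{m+1}\twoheadrightarrow R_m$ killing the ideal $s_{m+1}^{2^m}R_{m+1}$, so $b_{m+1}=0$ reduces directly to the inductive hypothesis on $\pi(U)=U^{(1)}$. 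The case $b_{m+1}=1$ forces $U\in s_{m+1}^{2^m}R_{m+1}$, and an explicit computation yields $f_1(s_{m+1}^{i+2^m})=s_m^i$ for $0\le i<2^m$, which identifies $U^{(1)}$ with the ``tail'' $\sum_{j=0}^{2^m-1} c_{j+2^m}\, s_m^j$ of $U$ viewed in $R_m$; the shift of $2^m$ in the exponent accounts precisely for the contribution $b_{m+1}\cdot 2^m$ to $g(\vec{b})$.

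With the dictionary in hand, take
\[
B(T_2) \ \equiv \ s_m^{\,g(\vec{b'})-g(\vec{b})} \ = \ (1+T_2)^{g(\vec{b'})-g(\vec{b})}.
\]
Since $\vec{b}<\vec{b'}$ the exponent is a positive integer, so $B(1)=0$ and $Par(B)=0$. Multiplying out,
\[
B(T_2)\cdot U \ = \ V\, s_m^{\,g(\vec{b'})} \ + \ \sum_{i>g(\vec{b})} c_i\, s_m^{\,i+g(\vec{b'})-g(\vec{b})},
\]
whose leading $s_m$-coefficient is $V$ at position $g(\vec{b'})$. The dictionary then hands us that $U(B)$ has characteristic vector $\vec{b'}$ and characteristic operator $V$, which is the claim.

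The main obstacle is the inductive verification of the dictionary in the $b_{m+1}=1$ case. What one really needs there is the identity $f_1(s_{m+1}^{i+2^m})=s_m^i$, which in turn rests on the factorization $s_{m+1}^{i+2^m}=s_{m+1}^i\cdot(T_2^{2^m}+1)$ together with Lucas's theorem giving $s_{m+1}^i=\sum_{k\subseteq i} T_2^k$ in $\mathbb{F}_2$. Once this identification is in place, everything else is formal ring arithmetic and the lemma falls out.
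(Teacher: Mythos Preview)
Your proof is correct, and while it ultimately selects the same binary column as the paper (in your notation $s_m^{\,g(\vec{b'})-g(\vec{b})}$ is precisely the paper's characteristic column $B(\Delta\vec b)$, since $B(\vec c)=\sum_{p\subseteq g(\vec c)}T_2^p=(1+T_2)^{g(\vec c)}$ by Lucas), the route is genuinely different in framing. The paper never identifies the characteristic data with an $s_m$-adic leading term; instead it proves a convolution identity $B(\vec a)*B(\vec b)=B(\vec a+\vec b)$ directly via combinatorics of the sets $\mathbf J_{\vec b}$ and Lucas's theorem, proves a separate case analysis on how characteristic vectors behave under products (their lemma on $UU'$), and then decomposes $U$ as a product $V(\vec b)V'(\vec{b'})\cdots$ of ``standard'' operators with strictly increasing characteristic vectors before applying $B(\Delta\vec b)$ term by term.

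Your passage through the local ring $R_m=\mathbb F_2[s_m]/s_m^{2^m}$ collapses all of that: once the dictionary ``characteristic vector $=$ $s_m$-valuation, characteristic operator $=$ leading coefficient'' is established, the convolution identity becomes $s_m^a\cdot s_m^b=s_m^{a+b}$, the multiplication lemma becomes the ultrametric property of valuations, and the final decomposition step is unnecessary. The only place you actually need Lucas is in the inductive verification that $f_1(s_{m+1}^{i+2^m})=s_m^i$, which you correctly flagged. What you gain is conceptual transparency and brevity; what the paper's approach gains is that each step is elementary and self-contained without invoking module language.
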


In other words, from a column operator $U$ with a characteristic vector $\vec{b}$, one can always create a column operator $U'$ with larger characteristic vector $\vec{b'}$ by taking a product of translations of $U$. On the other hand, it is impossible to create a column operator with a smaller characteristic vector from an operator with a larger characteristic vector. Therefore, \emph{one can create a column operator with higher symmetries (a larger characteristic vector), but cannot create a column operator with lower symmetries (a smaller characteristic vector).} 

\subsection{Proof of lemma~\ref{lemma_construction}}
 
Now, we prove lemma~\ref{lemma_construction} by explicitly finding a binary column $B \in \textbf{B}^{m}_{col}$ for creating $U(B)$ for every pair of $\vec{b}$ and $\vec{b'}$. In order to derive such binary matrices, we introduce a certain binary column $B(\vec{b})$, called a \emph{characteristic column}, which can be used to change the characteristic vector of a column operator.

\textbf{Characteristic column:}
Given an integer $p \in \mathbb{Z}_{2^{m}}$, one may have its binary representation by considering the inverse of $g$ denoted as $g^{-1}$:
\begin{align}
\vec{p} \ \equiv \ (p_{1},\cdots,p_{m}) \ \equiv \ g^{-1}(p) 
\end{align}
where $p = g(\vec{p}) = \sum_{j=1}^{m-1} p_{j}2^{j-1}$. Now, we define the following sets:
\begin{align}
\textbf{J}_{\vec{b}} \ = \ \{ \ \vec{a}\ \in \ \textbf{B}^{m}_{vec}  \ : \ a_{j} \ \leq \ b_{j} \ \mbox{for all} \ j \ \}.
\end{align}
For example, $\textbf{J}_{(1,0)} = \{(0,0),(1,0)\}$ and $\textbf{J}_{(0,0,1)} = \{(0,0,0),(0,0,1)\}$. 

Based on $\textbf{J}_{\vec{b}}$, we define the \emph{characteristic binary column} $B(\vec{b}) \in \textbf{B}^{m}_{col}$ as follows:
\begin{align}
B(\vec{b})_{1,p+1} \ &= \ 1 \qquad \vec{p}\ \in \ \textbf{J}_{\vec{b}}  \\
B(\vec{b})_{1,p+1} \ &= \ 0 \qquad \mbox{otherwise}.
\end{align}
Here, we give some examples:
\begin{equation}
\begin{split}
&\textbf{J}_{(0,0)} \ = \ \{ (0,0) \}, \quad \textbf{J}_{(1,0)} \ = \ \{ (0,0), (1,0) \},  \quad \textbf{J}_{(0,1)} \ = \ \{ (0,0),(0,1) \}\\ &\textbf{J}_{(1,1)} \ = \ \{ (0,0),(1,0),(0,1),(1,1) \}
\end{split}
\end{equation}
and
\begin{align}
B(0,0) \ = \ \begin{bmatrix}
1 \\
0 \\
0 \\
0
\end{bmatrix}, \quad
B(1,0) \ = \ \begin{bmatrix}
1 \\
1 \\
0 \\
0
\end{bmatrix}, \quad
B(0,1) \ = \ \begin{bmatrix}
1 \\
0 \\
1 \\
0
\end{bmatrix}, \quad
B(1,1) \ = \ \begin{bmatrix}
1 \\
1 \\
1 \\
1
\end{bmatrix}.
\end{align}
One may see the relation between a characteristic column and a characteristic vector: 
\begin{align}
\begin{bmatrix} 
V \\
I \\
I \\
I
\end{bmatrix} \ \rightarrow \ (0,0), \quad
\begin{bmatrix} 
V \\
V \\
I \\
I
\end{bmatrix} \ \rightarrow \ (1,0), \quad
\begin{bmatrix} 
V \\
I \\
V \\
I
\end{bmatrix} \ \rightarrow \ (0,1), \quad
\begin{bmatrix} 
V \\
V\\
V\\
V
\end{bmatrix} \ \rightarrow \ (1,1).
\end{align}
Thus, if we replace $1$ entries in $B(\vec{b})$ with $V$ and create a column operator, it has a characteristic vector $\vec{b}$. 

In order to discuss changes of characteristic vectors, let us introduce the summation rule between binary vectors. We denote a summation of binary vectors $\vec{a}, \vec{b}\in \textbf{B}^{m}_{vec}$ as $\vec{a} + \vec{b}\in \textbf{B}^{m}_{vec}$ and define it as follows:
\begin{align}
g(\vec{a}) + g(\vec{b}) \ = \ g(\vec{a} + \vec{b})
\end{align}
when $g(\vec{a} + \vec{b}) \ \leq 2^{m}-1$. Therefore, $\vec{a} + \vec{b}$ is just like a summation of two binary ``numbers'' $\vec{a}$ and $\vec{b}$.

The characteristic columns defined above can be used to change a characteristic vector of a column operator, as summarized in the following lemma.

\begin{lemma}\label{lemma_property}
Let $\vec{b} < \vec{b'}$. When $U$ has a characteristic value $\vec{b}$ and a characteristic operator $V$, 
\begin{align}
U' \ = \ U(B(\Delta\vec{b})) \quad \mbox{where} \quad \vec{b} + \Delta\vec{b} \ = \ \vec{b'}
\end{align}
has a characteristic value $\vec{b'}$ and a characteristic operator $V$.
\end{lemma}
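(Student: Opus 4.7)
The plan is to induct on $m$. The base case $m=1$ reduces to a single direct computation: the only nontrivial instance has $\vec{b}=(0)$, $\vec{b'}=(1)$, $\Delta\vec{b}=(1)$, and $B((1))=(1,1)^{T}$, so $U'=U\cdot T_{2}(U)$. Writing $U=(U_{1},U_{2})^{T}$ with $U_{1}\neq U_{2}$ and $U_{1}U_{2}=V$, we get $U'=(U_{1}U_{2},U_{2}U_{1})^{T}=(V,V)^{T}$ in the symplectic $\mathbb{F}_{2}^{2v}$ representation of the Pauli group, which has characteristic vector $(1)$ and characteristic operator $V$ as desired. This is essentially the same manipulation that already appeared in the $n_{2}=1$ analysis of~\cite{Beni10b}.

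For the inductive step, the key is to exploit the recursive structure of the support set $\textbf{J}_{\vec{b}}$. Writing $\Delta\vec{b}=(\Delta\vec{c},\Delta b_{m})$ with $\Delta\vec{c}\in\textbf{B}^{m-1}_{vec}$, one checks directly that $\textbf{J}_{\Delta\vec{b}}=\textbf{J}_{\Delta\vec{c}}\times\{0\}$ when $\Delta b_{m}=0$, and $\textbf{J}_{\Delta\vec{b}}=\textbf{J}_{\Delta\vec{c}}\times\{0,1\}$ when $\Delta b_{m}=1$. Each position in $\textbf{J}_{\Delta\vec{c}}\times\{0\}$ contributes a translation $T_{2}^{g(\vec{p})}$ with $g(\vec{p})<2^{m-1}$, which preserves the upper/lower half splitting of the length-$2^{m}$ column. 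Any position whose last bit is $1$ contributes an additional factor of $T_{2}^{2^{m-1}}(U(B(\Delta\vec{c},0)))$, and $T_{2}^{2^{m-1}}$ swaps the two halves. Thus $U'=U(B(\Delta\vec{b}))$ either preserves the upper/lower splitting entirely or is forced to be symmetric across this splitting, depending on $\Delta b_{m}$.

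I would then analyze $U'$ in cases according to the four combinations of $(b_{m},\Delta b_{m})\in\{0,1\}^{2}$, noting that the constraint $g(\vec{b})+g(\Delta\vec{b})=g(\vec{b'})\leq 2^{m}-1$ rules out the overflow case. In each case, I would apply the appropriate $f_{b'_{m}}$ to $U'$ and use the commutation identity $f_{i}\circ T_{2}^{j}=T_{2}^{j}\circ f_{i}$ (valid for $j<2^{m-1}$) together with the identity $f_{0}(W\cdot T_{2}^{2^{m-1}}(W))=I$ for the symmetric case, to reduce to a length-$2^{m-1}$ column. A short computation then identifies this reduced column as $U^{(1)}(B(\Delta\vec{c}'))$ for a suitable $\Delta\vec{c}'$, where $U^{(1)}$ has characteristic vector $(b_{1},\ldots,b_{m-1})$ and characteristic operator $V$; the inductive hypothesis finishes the argument.

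The main obstacle will be the careful bookkeeping of binary carries in the integer sum $g(\vec{b})+g(\Delta\vec{b})=g(\vec{b'})$: the bit $b'_{j}$ is not simply $b_{j}\oplus\Delta b_{j}$ but depends on carries from lower-order bits, so the reduced problem at length $2^{m-1}$ involves a shifted increment $\Delta\vec{c}'$ that is not quite $\Delta\vec{c}$. I expect the cleanest way to handle this is to phrase the inductive hypothesis slightly more generally, allowing an arbitrary starting characteristic vector at length $2^{m-1}$, so that the carry merely adjusts where the induction is applied. A secondary technical caveat, already invoked in the base case, is that products of translations of Pauli operators must be interpreted in the symplectic representation so that global signs drop out and all products commute; once this convention is fixed, the argument becomes a purely combinatorial identity among the sub-hypercubes $\textbf{J}_{\vec{b}}$, $\textbf{J}_{\Delta\vec{b}}$ and $\textbf{J}_{\vec{b'}}$ of $\{0,1\}^{m}$.
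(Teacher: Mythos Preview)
Your approach is genuinely different from the paper's. The paper does \emph{not} induct on $m$. Instead it proves two auxiliary lemmas: a combinatorial identity $B(\vec{a})*B(\vec{b})=B(\vec{a}+\vec{b})$ (essentially Lucas's theorem on binomial coefficients mod~$2$), and a multiplication rule describing how characteristic vectors behave under products of column operators. It then decomposes an arbitrary $U$ as a product of canonical pieces $U=V(\vec{b})V'(\vec{b'})V''(\vec{b''})\cdots$ with $\vec{b}<\vec{b'}<\cdots$, applies $B(\Delta\vec{b})$ piecewise via the first lemma to get $U(B(\Delta\vec{b}))=V(\vec{b}+\Delta\vec{b})V'(\vec{b'}+\Delta\vec{b})\cdots$, and reads off the result via the second lemma. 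The carries are absorbed entirely into the integer sum inside $B(\cdot)$ and never need to be tracked by hand.

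Your direct induction is plausible, and your identification of the carry problem as the main obstacle is exactly right --- but your proposed fix does not close the gap. Concretely, in the case $(b_{m},\Delta b_{m})=(0,0)$ you correctly obtain $f_{0}(U')=U^{(1)}(B(\Delta\vec{c}))$, and when there is \emph{no} carry into bit~$m$ the inductive hypothesis applies and you are done. But when there \emph{is} a carry, you have $g(\vec{c})+g(\Delta\vec{c})\geq 2^{m-1}$, which is precisely the overflow regime at level $m-1$; the inductive hypothesis says nothing here, and what you need is the separate claim that $U^{(1)}(B(\Delta\vec{c}))=I$ in this regime (so that $b'_{m}=1$). This claim is true, but it is not the lemma you are inducting on, and ``allowing an arbitrary starting characteristic vector'' does not supply it. To make your scheme work you would have to strengthen the inductive hypothesis to the pair of statements: (a) if $g(\vec{b})+g(\Delta\vec{b})<2^{m}$ then $U(B(\Delta\vec{b}))$ has characteristic vector $\vec{b}+\Delta\vec{b}$ and characteristic operator $V$; and (b) if $g(\vec{b})+g(\Delta\vec{b})\geq 2^{m}$ then $U(B(\Delta\vec{b}))=I$. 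Both parts then have to be carried through all four $(b_{m},\Delta b_{m})$ cases (and the further sub-cases induced by the incoming carry), which is doable but noticeably messier than the paper's decomposition argument.
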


Below, we present a proof of this lemma by finding some property of characteristic columns and a certain rule on multiplications of column operators.

\textbf{Property of characteristic columns:}
There is a useful relation between a summation of vectors and characteristic columns, as summarized in the following lemma.

\begin{lemma}\label{lemma_summation}
Let 
\begin{align}
B(\vec{a})*B(\vec{b}) \ \equiv \ \sum_{j=1}^{2^{m}} T_{2}(B(\vec{a})^{B(\vec{b})_{1,j}})^{j-1} \qquad \mbox{(mod 2)}.
\end{align}
Then, 
\begin{align}
B(\vec{a} + \vec{b}) \ = \ B(\vec{a})*B(\vec{b}).
\end{align}
\end{lemma}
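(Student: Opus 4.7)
\medskip

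\textbf{Proof plan.} The plan is to lift the entire problem from binary columns to polynomials in $R := \mathbb{F}_{2}[x]/(x^{2^{m}}-1)$. I identify each binary column $B \in \textbf{B}^{m}_{col}$ with its generating polynomial $\hat{B}(x) := \sum_{j=1}^{2^{m}} B_{1,j}\,x^{j-1} \in R$. Since the translation $T_{2}$ is cyclic with period $n_{2}=2^{m}$, it corresponds to multiplication by $x$ in $R$, and the definition of the operation $*$ unwinds, entry by entry mod~$2$, to
\begin{align*}
\widehat{B(\vec{a})*B(\vec{b})}(x) \;=\; \sum_{j=1}^{2^{m}} B(\vec{b})_{1,j}\, x^{j-1}\, \hat{B}(\vec{a})(x) \;=\; \hat{B}(\vec{a})(x)\cdot \hat{B}(\vec{b})(x) \quad\text{in } R.
\end{align*}
So the lemma reduces to the purely algebraic claim $\hat{B}(\vec{a})\cdot\hat{B}(\vec{b}) = \hat{B}(\vec{a}+\vec{b})$ in $R$.

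The main step is then to prove the key identity $\hat{B}(\vec{b})(x) = (1+x)^{g(\vec{b})}$ as elements of $\mathbb{F}_{2}[x]$. Using the Frobenius relation $(1+x)^{2^{j-1}} = 1+x^{2^{j-1}}$ in characteristic $2$, and writing $g(\vec{b}) = \sum_{j} b_{j}\,2^{j-1}$, one gets
\begin{align*}
(1+x)^{g(\vec{b})} \;=\; \prod_{j:\,b_{j}=1}(1+x)^{2^{j-1}} \;=\; \prod_{j:\,b_{j}=1}\bigl(1+x^{2^{j-1}}\bigr).
\end{align*}
Expanding the right-hand side gives exactly $\sum_{\vec{p}\in \textbf{J}_{\vec{b}}} x^{g(\vec{p})}$, which matches the definition of $B(\vec{b})$ by construction. (Equivalently, this is Lucas' theorem: $\binom{g(\vec{b})}{p}$ is odd iff $p$ is a binary submask of $g(\vec{b})$, iff $g^{-1}(p)\in \textbf{J}_{\vec{b}}$.)

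With these two steps in hand, the lemma is essentially a one-line computation:
\begin{align*}
\hat{B}(\vec{a})\cdot\hat{B}(\vec{b}) \;=\; (1+x)^{g(\vec{a})}\cdot(1+x)^{g(\vec{b})} \;=\; (1+x)^{g(\vec{a})+g(\vec{b})} \;=\; (1+x)^{g(\vec{a}+\vec{b})} \;=\; \hat{B}(\vec{a}+\vec{b}),
\end{align*}
where the third equality is the hypothesis that $\vec{a}+\vec{b}$ is well-defined, i.e.\ $g(\vec{a}+\vec{b}) = g(\vec{a})+g(\vec{b})$.

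The one subtlety worth flagging is the interplay between the cyclic ring $R$ and ordinary polynomial multiplication. A priori the product $\hat{B}(\vec{a})\cdot\hat{B}(\vec{b})$ lives in $R$, whereas the final identification $\hat{B}(\vec{a}+\vec{b}) = (1+x)^{g(\vec{a}+\vec{b})}$ uses plain $\mathbb{F}_{2}[x]$. The hypothesis $g(\vec{a}+\vec{b}) \le 2^{m}-1$ is precisely what makes these agree: the polynomial $(1+x)^{g(\vec{a})+g(\vec{b})}$ has degree $g(\vec{a})+g(\vec{b}) < 2^{m}$, so no reduction modulo $x^{2^{m}}-1$ is needed, and the computation above is valid in $R$. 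This is the only point in the argument that requires genuine care; everything else is formal manipulation.
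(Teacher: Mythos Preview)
Your proof is correct and takes a genuinely different route from the paper's. The paper works combinatorially: it recasts the claim as the set identity $\textbf{J}_{\vec{a}} + \textbf{J}_{\vec{b}} = \textbf{J}_{\vec{a}+\vec{b}}$ (where ``$+$'' on sets counts decompositions mod~$2$), proves Lucas' theorem from scratch as a sublemma, uses it to show $\textbf{J}_{\vec{e_{1}}}+\cdots+\textbf{J}_{\vec{e_{1}}}=\textbf{J}_{\vec{\alpha}}$, and then invokes the Vandermonde identity $\binom{a+b}{c}=\sum_{i}\binom{a}{i}\binom{b}{c-i}$ to reach the general case. Your argument sidesteps all of this by passing to $\mathbb{F}_{2}[x]$: the single identification $\hat{B}(\vec{b})=(1+x)^{g(\vec{b})}$ via Frobenius absorbs both Lucas' theorem and Vandermonde into the trivial exponent law $(1+x)^{a}(1+x)^{b}=(1+x)^{a+b}$. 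What the paper's approach buys is self-containment (no polynomial machinery, just parity counting); what yours buys is brevity and conceptual clarity --- in particular, it makes transparent \emph{why} the characteristic columns form a multiplicative monoid, and it handles the cyclic-versus-polynomial subtlety cleanly via the degree bound $g(\vec{a})+g(\vec{b})<2^{m}$, which the paper leaves implicit.
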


The proof involves some exercises on elementary math. 

\begin{proof}
We begin by defining a summation of sets of binary vectors. For $\textbf{B}_{1},\textbf{B}_{2},\cdots,\textbf{B}_{\alpha} \ \subseteq \ \textbf{B}^{m}_{vec}$ where $\alpha$ is some positive integer, and for $\vec{b} \in \textbf{B}^{m}_{vec}$, consider decompositions:
\begin{align}
\vec{b} \ = \ \vec{b_{1}} + \cdots + \vec{b_{\alpha}}, \qquad \vec{b_{i}} \ \in \ \textbf{B}_{i} \quad \mbox{for all} \ i
\end{align}
and denote the number of different decompositions of $\vec{b}$ as $N(\vec{b};\textbf{B}_{1},\textbf{B}_{2},\cdots,\textbf{B}_{\alpha})$. Then, we define the following summation:
\begin{align}
\textbf{B}_{1} + \textbf{B}_{2} + \cdots + \textbf{B}_{\alpha} \ \equiv \ \{ \ \vec{b} \ \in \ \textbf{B}^{m}_{vec} \ : \ N(\vec{b};\textbf{B}_{1},\textbf{B}_{2},\cdots, \textbf{B}_{\alpha}) \ = \ \mbox{odd}\ \}.
\end{align}

With the summation defined above, the claim of the lemma can be written as follows. By setting $B = B(\vec{a})*B(\vec{b})$, we may notice that
\begin{align}
B_{1, p+1} \ &= \ 1 \qquad \vec{p} \ \in \ \textbf{J}_{\vec{b}} + \textbf{J}_{\vec{b'}}\\
B_{1, p+1} \ &= \ 0 \qquad \mbox{otherwise}
\end{align}
from a direct calculation. Therefore, we need to show that
\begin{align}
\textbf{J}_{\vec{a}} + \textbf{J}_{\vec{b}} \ = \ \textbf{J}_{\vec{a} + \vec{b}}.
\end{align}

The proof relies on the following sublemma.

\begin{sublemma}
Consider
\begin{align}
{}_\alpha C_{\beta} \ \equiv \ \frac{\alpha!}{\beta!(\alpha-\beta)!}
\end{align}
for ($2^{m} > \alpha \geq \beta \geq 1$). Let the binary representations of $\alpha$ and $\beta$ be
\begin{align}
\vec{\alpha} \ = \ (\alpha_{1},\cdots, \alpha_{m}), \qquad  \vec{\beta} \ = \ (\beta_{1},\cdots, \beta_{m})
\end{align}
where $\alpha = \sum_{i=1}^{m}\alpha_{i}2^{i-1}$ and $\beta = \sum_{i=1}^{m}\beta_{i}2^{i-1}$. Then, ${}_\alpha C_{\beta}$ is odd if and only if
\begin{align}
\beta_{i} \ \leq \ \alpha_{i} \qquad \mbox{for all} \ i. 
\end{align}
\end{sublemma}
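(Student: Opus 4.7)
The plan is to recognize this sublemma as Lucas' theorem specialized to the prime $p=2$, and to prove it via the generating function $(1+x)^\alpha$ reduced modulo 2. First I would establish the ``Freshman's dream'' identity
\begin{align}
(1+x)^{2^k} \ \equiv \ 1 + x^{2^k} \pmod 2
\end{align}
by induction on $k$: squaring both sides of $(1+x)^{2^k} \equiv 1 + x^{2^k}$ and using the fact that the cross term $2 x^{2^k}$ vanishes modulo 2 yields the case $k+1$.

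Next, I would use the binary representation $\alpha = \sum_{i=1}^{m} \alpha_i 2^{i-1}$ to factor
\begin{align}
(1+x)^\alpha \ = \ \prod_{i=1}^{m} (1+x)^{\alpha_i 2^{i-1}} \ \equiv \ \prod_{i=1}^{m} \bigl(1 + x^{2^{i-1}}\bigr)^{\alpha_i} \pmod 2.
\end{align}
Only indices $i$ with $\alpha_i = 1$ contribute a non-trivial factor, so expanding the product gives a sum of monomials $x^{\sum_{i \in S} 2^{i-1}}$ indexed by subsets $S \subseteq \{ i : \alpha_i = 1\}$, each appearing with coefficient $1$ modulo $2$. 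By the uniqueness of binary expansion, the monomial $x^\beta$ appears in this sum if and only if $\beta = \sum_{i \in S} 2^{i-1}$ for some such $S$, which is precisely the condition $\beta_i \leq \alpha_i$ for every $i$.

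On the other hand, the standard binomial expansion gives $(1+x)^\alpha = \sum_{\beta=0}^{\alpha} {}_\alpha C_\beta \, x^\beta$. Reducing both expressions modulo $2$ and comparing the coefficients of $x^\beta$ immediately yields $\,{}_\alpha C_\beta \equiv 1 \pmod 2\,$ iff $\beta_i \leq \alpha_i$ for all $i$, and $\,{}_\alpha C_\beta \equiv 0 \pmod 2\,$ otherwise. The only delicate point is the uniqueness-of-binary-expansion step in matching coefficients, but this is immediate because the exponents $\sum_{i \in S} 2^{i-1}$ for distinct subsets $S$ are all distinct. Hence no genuine obstacle arises, and the proof is essentially a direct computation once the Frobenius identity is in hand.
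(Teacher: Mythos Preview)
Your proof is correct; it is the standard generating-function proof of Lucas' theorem specialized to the prime $2$, and all steps (the Frobenius/Freshman's dream identity, the factorization along binary digits, and the coefficient comparison via uniqueness of binary expansion) go through without issue.

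The paper, however, takes a different route. It works with the $2$-adic valuation of factorials: defining $h(p)$ as the largest power of $2$ dividing $p!$, it uses the explicit formula $h(p) = \sum_i (2^{i-1}-1)p_i$ (equivalently, Legendre's formula $v_2(p!) = p - s_2(p)$ where $s_2$ is the binary digit sum). The binomial coefficient is odd precisely when $h(\alpha) = h(\beta) + h(\alpha-\beta)$, which reduces to $s_2(\alpha) = s_2(\beta) + s_2(\alpha-\beta)$, i.e.\ the addition $\beta + (\alpha-\beta)$ in binary has no carries, which is exactly the digitwise condition $\beta_i \le \alpha_i$. This is essentially Kummer's theorem for $p=2$. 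Your generating-function argument is arguably cleaner and more self-contained, reading off the parity directly without ever computing a valuation; the paper's approach has the mild advantage that the intermediate quantity $h(\alpha)-h(\beta)-h(\alpha-\beta)$ actually gives the exact power of $2$ dividing ${}_\alpha C_\beta$, not just its parity, though that extra information is not used here.
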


We suspect that the sublemma above has been proven somewhere else as it seems elementary. Yet, we could not find a reference, and thus, we present a proof here.

\begin{proof}
For a given integer $p \in \mathbb{Z}_{2^{m}}$, let $h(p)$ be the largest integer such that $\frac{p!}{2^{h(p)}}$ is an integer. Then, with some speculations, one may notice that
\begin{align}
h(p) \ = \ \sum_{i=1}^{m} (2^{i-1}-1)p_{i}
\end{align}
where $\vec{p} = (p_{1},\cdots,p_{m})$. Here,  
\begin{align}
{}_\alpha C_{\beta} \ = \ \frac{\alpha!}{\beta!(\alpha-\beta)!},
\end{align}
and 
\begin{align}
h({}_\alpha C_{\beta}) \ = \ h(\alpha) - h(\beta)  -  h(\alpha - \beta) \ \geq \ 0.
\end{align}
Then, ${}_\alpha C_{\beta}$ is odd if and only if
\begin{align}
h(\beta) + h(\alpha-\beta) \ = \ h(\alpha).
\end{align}
Let $\alpha - \beta \equiv \gamma$. Then, we have 
\begin{align}
\sum_{i=1}^{m} (2^{i-1}-1)\beta_{i} + \sum_{i=1}^{m} (2^{i-1}-1)\gamma_{i} \ = \ \sum_{i=1}^{m}  (2^{i-1}-1)\alpha_{i}
\end{align}
and $\beta_{i} + \gamma_{i}=\alpha_{i}$ for all $i$. This is true if and only if
\begin{align}
\beta_{i} \ \leq \ \alpha_{i} \qquad \mbox{for all} \quad i.
\end{align}
This completes the proof of the sublemma.
\end{proof}

Now, let us return to the proof of the lemma. Below, we prove $\textbf{J}_{\vec{a}} + \textbf{J}_{\vec{b}} \ = \ \textbf{J}_{\vec{a} + \vec{b}}$. Let $\vec{e_{1}} \equiv (1,0,\cdots,0)$. First, we show that 
\begin{align}
\underbrace{\textbf{J}_{\vec{e_{1}}} +\cdots + \textbf{J}_{\vec{e_{1}}}}_{\alpha} \ = \ \textbf{J}_{\vec{\alpha}}
\end{align}
where
\begin{align}
\vec{\alpha} \ \equiv \ \underbrace{ \vec{e_{1}} +\cdots + \vec{e_{1}} }_{\alpha}.
\end{align}
Here, notice that 
\begin{align}
\vec{\beta} \ \in \  \underbrace{ \textbf{J}_{\vec{e_{1}}} +\cdots + \textbf{J}_{\vec{e_{1}}} }_{\alpha}
\end{align}
if and only if ${}_{\alpha} C_{\beta}$ is odd since $\textbf{J}_{ \vec{e_{1}} }$ has two elements: $(1,0,\cdots,0)$ and $(0,\cdots,0)$. Thus, $\vec{\beta} \in \textbf{J}_{\vec{\alpha}}$ from the sublemma, and we have
\begin{align}
\underbrace{\textbf{J}_{\vec{e_{1}}} +\cdots + \textbf{J}_{\vec{e_{1}}}}_{\alpha} \ = \ \textbf{J}_{\vec{\alpha}}.
\end{align}

Now, let us show that
\begin{align}
\textbf{J}_{\vec{a}} + \textbf{J}_{\vec{b}} \ = \ \textbf{J}_{\vec{a} + \vec{b}}.
\end{align}
Note that $\vec{c} \in \textbf{J}_{\vec{a} + \vec{b}}$ if and only if ${}_{a+b} C_c$ is odd. Here, we have
\begin{align}
{}_{a+b} C_c \ = \ \sum_{i=0}^{c} {}_{a} C_i \cdot  {}_{b} C_{c-i}
\end{align}
where ${}_{x} C_{y}=0$ when $y>x$. Let us assume that ${}_{a} C_i \cdot  {}_{b} C_{c-i}$ is odd for $i = \alpha_{1},\cdots,\alpha_{p}$. Notice that ${}_{a} C_i \cdot  {}_{b} C_{c-i}$ is odd if and only if both ${}_{a} C_i$ and ${}_{b} C_{c-i}$ are odd. Then, ${}_{a+b} C_c$ is odd if and only if $p$ is odd. Notice that $p$ is the number of decompositions of $\vec{c}$ such that
\begin{align}
\vec{c} \ = \ \vec{c_{1}} + \vec{c_{2}}\qquad (\vec{c_{1}} \ \in \ \textbf{J}_{\vec{a}} \quad \mbox{and} \quad \vec{c_{2}} \ \in \ \textbf{J}_{\vec{b}}).
\end{align}
Therefore, $p$ is odd if and only if $\vec{c} \in \textbf{J}_{\vec{a}} + \textbf{J}_{\vec{b}}$. This completes the proof of the lemma.
\end{proof} 

\textbf{Multiplication of column operators:} Next, let us consider how characteristic vectors change under multiplications of column operators.

\begin{lemma}\label{lemma_multiplication}
Consider the following column operators:
\begin{align}
U '' \ = \ UU'\ \not= \ I
\end{align}
where
\begin{align}
\begin{array}{cccc}
U   & \rightarrow &  V , &  \vec{b}\\
U'  & \rightarrow &  V' , & \vec{b'}\\
U'' & \rightarrow &  V'' , &  \vec{b''}
\end{array}.
\end{align}
Then,
\begin{itemize}
\item If $\vec{b} > \vec{b'}$, $\vec{b''}=\vec{b'}$ and $V'' = V'$.
\item If $\vec{b'} = \vec{b}$ and $V \not= V'$, $\vec{b''}=\vec{b}$ and $V'' = VV'$.
\item If $\vec{b'} = \vec{b}$ and $V = V'$, $\vec{b''} > \vec{b}$.
\end{itemize}
\end{lemma}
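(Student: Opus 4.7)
The plan is to proceed by induction on $m$, leveraging the crucial algebraic fact that $f_0$ and $f_1$ are group homomorphisms from $\mathcal{P}_{col}^{m'}$ to $\mathcal{P}_{col}^{m'-1}$. This is immediate from the definitions $f_0(U)_j = U_j U_{j+2^{m'-1}}$ and $f_1(U)_j = U_j$, which are both component-wise products in the Pauli group (modulo phase). Consequently $f_0(UU') = f_0(U)\,f_0(U')$ and $f_1(UU') = f_1(U)\,f_1(U')$. Since the characteristic vector and operator of a column operator are defined recursively by choosing between $f_0$ and $f_1$ at each stage (depending on whether $f_0$ annihilates the running operator), the characteristic data of $U'' = UU'$ at level $m$ can be related to that of $f_0(U)f_0(U')$ or $f_1(U)f_1(U')$ at level $m-1$, at which point the induction hypothesis can be invoked directly.

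For the base case $m = 0$ the characteristic vector is empty and the characteristic operator is the column itself, so Case~2 reads $V'' = VV'$ tautologically and Case~3 is vacuous (since $V = V'$ forces $U'' = I$, contradicting the assumption). For the inductive step I split into subcases according to the pair $(b_m, b'_m)$ and trace through the effect on the top coordinate $b''_m$. The three recurring patterns are: (i) when $b_m \neq b'_m$, one of $f_0(U), f_0(U')$ is identity and the other is not, so the nonvanishing one determines $f_0(U'')$ and $b''_m = 0$; (ii) when $b_m = b'_m = 1$, both $f_0(U), f_0(U') = I$, forcing $b''_m = 1$, and one continues with $f_1(UU')$; (iii) when $b_m = b'_m = 0$, both $f_0$-images are nonzero, and $b''_m$ is determined by whether their product is identity or not. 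In every subcase the claim either falls out immediately or follows from the inductive hypothesis applied to the appropriate $(m{-}1)$-level image; crucially, the ordering $\vec{b} < \vec{b'} \Leftrightarrow g(\vec{b}) < g(\vec{b'})$, together with the fact that $b_m$ is the most significant bit, ensures that the inductive reductions preserve the hypotheses of the correct case (for instance in Case~1 with $b_m = b'_m$, the strict inequality passes to $\vec{b}|_{<m} > \vec{b'}|_{<m}$, and in Case~3 with $b_m = b'_m = 0$ the equality $f_0(U) = f_0(U')$ modulo phase would propagate to force $\vec{b} = \vec{b'}$ and $V = V'$, making the vanishing subcase consistent).

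The main obstacle is Case~3, where the conclusion $\vec{b''} > \vec{b}$ is weaker than the sharp identities in Cases~1 and~2 and must be established through two genuinely distinct mechanisms: either the $(m{-}1)$-level image $f_0(UU')$ or $f_1(UU')$ collapses to $I$ (upgrading $b''_m$ from $0$ to $1$, already giving $\vec{b''} > \vec{b}$), or it remains non-identity and by induction has a strictly increased characteristic vector at some coordinate below $m$. A secondary subtlety to address is that ``$f_0(U)f_0(U') = I$'' must be read modulo global phase; since the Pauli operators appearing in $\ell$ are Hermitian up to sign, every characteristic operator encountered is Hermitian up to sign, so $VV' = I$ (modulo phase) is equivalent to $V = V'$ (modulo phase), which is exactly what the case distinctions require.
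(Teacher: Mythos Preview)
Your approach is correct and essentially the same as the paper's: both exploit that $f_0$ and $f_1$ are homomorphisms and proceed level-by-level through the recursive definition of the characteristic data, splitting on the top bit $(b_m,b'_m)$. The only cosmetic difference is in Case~3, where the paper locates the critical index $i_{\max}$ (the last level at which $f_{b_i}\cdots f_{b_m}(U'')=I$) in one shot rather than recursing, but this is just your induction unrolled.
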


Below, we present the proof.

%\begin{lemma}\label{lemma_multiplication}
%\begin{itemize}
%\item For a column operator $U$ with a characteristic vector $\vec{b}$ and a characteristic operator $V$, consider the following column operator:
%\begin{align}
%U' \ \equiv \ UV(\vec{b}).
%\end{align}
%When $U'\not=I$, $U'$ has a characteristic vector $\vec{b'}$ such that $\vec{b'} > \vec{b}$. 
%\item  When $U'$ has a characteristic vector $\vec{b'}$ such that $\vec{b'} > \vec{b}$, consider the following operator:
%\begin{align}
%U \ = \ U'V(\vec{b}).
%\end{align}
%Then, $U$ has a characteristic vector $\vec{b}$. 
%\item When $U$ has a characteristic vector $\vec{b}$ and a charactertistic 
%\end{itemize}
%\end{lemma}

\begin{proof}
We start with the first claim. When $b_{m}'=1$, $b_{m}=1$ since $\vec{b}>\vec{b'}$. Then, $f_{0}(U)=f_{0}(U')=I$, and $f_{0}(U'')=I$. Thus, $b_{m}''=1$. When $b_{m}'=0$ and $b_{m}=0$, we have $f_{0}(U)\not=I$ and $f_{0}(U')\not=I$. Suppose $f_{0}(U'')=I$. Then, $f_{0}(U)=f_{0}(U')$, and $U^{(1)}=U'^{(1)}$. This means $\vec{b}=\vec{b'}$ which contradicts with $\vec{b}>\vec{b'}$. Thus, $f_{0}(U'')\not=I$, and $b_{m}''=0$. When $b_{m}'=0$ and $b_{m}=1$, we have $f_{0}(U)=I$ and $f_{0}(U')\not=I$, and $f_{0}(U'')\not=I$, and $b_{m}''=0$. In summary, $b_{m}'=b_{m}''$. One can repeat the same discussion and show $\vec{b}' = \vec{b}''$ and $V''=V$.

The second claim is easy to prove, so we shall skip the proof. Let us move to the third claim. Since 
\begin{align}
f_{b_{1}}\cdots f_{b_{m}}(U) \ = \ f_{b_{1}}\cdots f_{b_{m}}(U') \ = \ V,
\end{align}
we have $f_{b_{1}}\cdots f_{b_{m}}(U'')=  I$. Let the largest integer $i$ such that $f_{b_{i}}\cdots f_{b_{m}}(U'') = I$ be $i_{max}$. Then, $f_{b_{i_{max}+1}}\cdots f_{b_{m}}(U'')  \not= I$ and $b_{i_{max}}''=1$. If $b_{i_{max}}=1$, $f_{b_{i_{max}+1}}\cdots f_{b_{m}}(U'')  =  I$ which leads to a contradiction. Thus, $b_{i_{max}}=0$. 

Since $f_{b_{i_{max}+1}}\cdots f_{b_{m}}(U'')  \not= I$, we have 
\begin{align}
b_{i} \ &= \ b_{i}'' \qquad (i_{max}+1 \ \leq \ i)\\
b_{i_{max}} \ &= \ 0, \qquad b_{i_{max}}'' \ = \ 1. 
\end{align}
Thus, $\vec{b''}>\vec{b}$.
\end{proof}

\textbf{Proof of lemma~\ref{lemma_property}:}
Finally, let us finish the proof of lemma~\ref{lemma_property}. Consider the following column operator $V(\vec{b})$ which replaces $1$ entries in $B(\vec{b})$ with $V$ and $0$ entries with $I$:
\begin{align}
V(\vec{b})_{j} \ \equiv \ V^{B(\vec{b})_{1,j}}. 
\end{align}
Then, one may easily notice that $V(\vec{b})$ has a characteristic vector $\vec{b}$ with a characteristic operator $V$. Therefore, from lemma~\ref{lemma_summation}, we have
\begin{align}
V(\vec{b})(B(\vec{a})) \ = \ V(\vec{a}+\vec{b}).
\end{align}

For $U$ with $\vec{b}$ and $V$, we decompose $U$ with $V(\vec{b})$ as follows:
\begin{align}
U \ = \ V(\vec{b})U'.
\end{align}
When $U\not=V(\vec{b})$, $U'$ has a characteristic vector $\vec{b'}$ with $\vec{b'}>\vec{b}$ from lemma~\ref{lemma_multiplication}. One can repeat the same decomposition and obtain:
\begin{align}
U \ = \ V(\vec{b})V'(\vec{b'})V''(\vec{b''})\cdots
\end{align}
where $\vec{b}<\vec{b'}<\vec{b''}$. Then, from lemma~\ref{lemma_summation}, we have
\begin{align}
U(B(\vec{a})) \ = \ V(\vec{a}+\vec{b})V'(\vec{a}+\vec{b'})V''(\vec{a}+\vec{b''})\cdots.
\end{align}
Note that $V(\vec{c})(B(\vec{a}))=I$ if $a + c >2^{m}-1$. From lemma~\ref{lemma_multiplication}, $U(B(\vec{a}))$ is a column operator with a characteristic vector $\vec{a}+\vec{b}$ and a characteristic operator $V$. Note that $B(\vec{a})$ is an even column when $\vec{a}\not=(0,\cdots,0)$. This completes the proof of lemma~\ref{lemma_property}.

\subsection{The existence of an odd matrix for $m>1$}

Finally, let us proceed to the proof of lemma~\ref{lemma_existence}, to complete the proof of theorem~\ref{theorem_decomposition}. 

\textbf{Procedure:}
Consider a logical operator $\ell$ defined inside $P(x,2^{m},1)$ with $x >2v$. Let us represent characteristic values and characteristic operators for each column of $\ell$ as follows:
\begin{align}
\begin{array}{ccc}
U_{1}\ \rightarrow\ & \vec{b_{1}}, & V_{1} \\
U_{2}\ \rightarrow\ & \vec{b_{2}},& V_{2} \\
\vdots & \vdots & \vdots \\
U_{x}\ \rightarrow\  & \vec{b_{x}},& V_{x}.
\end{array}
\end{align}
Without loss of generality, we may assume that $\vec{b_{i}} \leq \vec{b_{i+1}}$ for all $i$ since permutations of columns do not affect parities of binary matrices. We define a set of integers such that $\vec{b_{i}}=\vec{b}$ as $\textbf{b}(\vec{b})$:
\begin{align}
\textbf{b}(\vec{b}) \ \equiv \ \{ \ i \ : \ \vec{b_{i}} \ = \ \vec{b} \ \}.
\end{align}
We denote a group generated by $\{V_{i}\}$ for $i \in \textbf{b}(\vec{b})$ as $\mathcal{V}_{\vec{b}}$:
\begin{align}
\mathcal{V}_{\vec{b}} \ = \ \left\langle \ \{ \ V_{i} \ : \ i \ \in \ \textbf{b}(\vec{b}) \ \} \ \right\rangle.
\end{align}

Since $x > 2v$, there always exists a set of integers $\textbf{A}$ such that 
\begin{align}
\prod_{i \in \textbf{A}} V_{i}\ = \ I. 
\end{align}
We denote the largest vector in $\{\vec{b_{i}}\}_{i\in \textbf{A}}$ as $\vec{b_{\alpha}}$ and the largest integer $i$ with $\vec{b_{i}} = \vec{b_{\alpha}}$ as $i = \alpha$. Here, we define the following binary matrix $E(i')$
\begin{align}
E(i') \ \equiv \ \begin{bmatrix}
E_{1,1} ,& \cdots ,& E_{x,1} \\
 \vdots  & \vdots  & \vdots \\
E_{1,2^{m}} ,& \cdots ,& E_{x,2^{m}}
\end{bmatrix}
\end{align}
such that
\begin{align}
E_{i,1} \ &= \ 1 \qquad (i \ = \ i' )\\
E_{i,j} \ &= \ 0 \qquad \mbox{otherwise}.
\end{align}
For example, 
\begin{align}
E(1) \ = \ \begin{bmatrix}
1 ,& 0 ,& \cdots ,& 0 \\
 0 ,& 0 ,& \cdots ,& 0 \\
 \vdots & \vdots & \vdots  & \vdots \\
0  ,& 0,& \cdots ,& 0 
\end{bmatrix},\qquad E(2) \ = \ \begin{bmatrix}
0 ,& 1 ,& 0,& \cdots ,& 0 \\
 0 ,& 0 ,& 0,& \cdots ,& 0 \\
 \vdots & \vdots & \vdots & \vdots  & \vdots \\
0  ,& 0,& 0 ,& \cdots ,& 0 
\end{bmatrix}.
\end{align}
Notice that
\begin{align}
Par(E(i))_{j} \ = \ \delta_{i,j}. 
\end{align}

Now, we define the following operation between a binary column $A \in \textbf{B}^{m}_{col}$ and a binary matrix $B$:
\begin{align}
B * A \ \equiv \ \sum_{j=1}^{2^{m}} T_{2}(B^{A_{1,j}})^{j-1}  \qquad ( \mbox{mod 2}).
\end{align}
Then, consider the following binary matrix:
\begin{align}
E \ = \ \sum_{i \in \textbf{A}} E(i) * B(\Delta \vec{b_{i}}) 
\end{align}
where $\vec{b_{i}} + \Delta \vec{b_{i}} = \vec{b_{\alpha}}$. \emph{Note that $E(i) * B(\Delta \vec{b_{i}}) $ is odd if and only if $\Delta \vec{b_{i}} = (0,\cdots,0)$}. This matrix can generate the following column operator:
\begin{equation}
\begin{split}
\ell(E)_{x} \ &= \ \ell\left(\sum_{i \in \textbf{A}} E(i) * B(\Delta \vec{b_{i}}) \right)_{x}\\ 
          &= \ \prod_{i \in \textbf{A} } U_{i} \left( B(\Delta \vec{b_{i}})  \right).
\end{split}
\end{equation}
Note that 
\begin{align}
U_{i} \left( B(\Delta \vec{b_{i}})  \right) 
\end{align}
has a characteristic vector $\vec{b_{\alpha}}$ and a characteristic operator $V_{i}$ from lemma~\ref{lemma_summation}. Here, we notice that 
\begin{align}
Par(E)_{\alpha} \ = \ 1
\end{align}
and $E$ is an odd matrix since $\Delta \vec{b_{\alpha}} = (0,\cdots,0)$. Then, $\ell(E)_{x}\not=I$ since there is no odd identity generating matrix. Now, we notice that $\ell(E)_{x}$ is a column operator with a characteristic vector $\vec{b_{\alpha}'} > \vec{b_{\alpha}}$ from lemma~\ref{lemma_multiplication}. 

We summarize the discussion so far as follows.
\begin{itemize}
\item From $\textbf{A}$ such that $\prod_{i \in \textbf{A}}V_{i}=I$, one can form a column operator $\ell(E)_{x}$ which has a characteristic vector $\vec{b_{\alpha}'} > \vec{b_{\alpha}}$ and a characteristic operator $V'_{\alpha}$ where $E$ is an odd matrix which satisfies
\begin{align}
Par(E)_{\alpha} \ = \ 1 \quad \mbox{and} \quad
Par(E)_{j} \ = \ 0 \quad (j \ > \ \alpha).
\end{align}
\end{itemize}

\textbf{Update:}
Next, we ``update'' $U_{\alpha}$, $\vec{b_{\alpha}}$, $V_{\alpha}$ and $E(\alpha)$ to $\ell(E)_{x}$, $\vec{b_{\alpha}'}$, $V_{\alpha}'$ and $E$:
\begin{align}
\begin{array}{ccc}
U_{\alpha}         & \rightarrow & \ell(E)_{x}           \\
\vec{b_{\alpha}}   & \rightarrow & \vec{b_{\alpha}'}\\
V_{\alpha}         & \rightarrow & V_{\alpha}' \\
E(\alpha)          & \rightarrow & E
\end{array}
\end{align}
In other words, we replace $U_{\alpha}$, $\vec{b_{\alpha}}$, $V_{\alpha}$ and $E(\alpha)$ with $E(\alpha)$, $\ell(E)_{x}$, $\vec{b_{\alpha}'}$, $V_{\alpha}'$ and $E$, and rename them as $U_{\alpha}$, $\vec{b_{\alpha}}$, $V_{\alpha}$ and $E(\alpha)$. Note that these ``updated'' $E(i)$ satisfy
\begin{align}
Par(E(i))_{i} \ = \ 1 \quad \mbox{and} \quad Par(E(i))_{j} \ = \ 0\quad (j \ > \ i).
\end{align}

Then, one may repeat the discussion above. Since $x> 2v$, there is a set $\textbf{A}'$ such that
\begin{align}
\prod_{i \in \textbf{A}'} V_{i}\ = \ I.
\end{align}
We denote the largest vector in $\{\vec{b_{i}}\}_{i\in \textbf{A}'}$ as $\vec{b_{\beta}}$ and the largest integer $i$ with $\vec{b_{i}} = \vec{b_{\beta}}$ as $i = \beta$. Then, consider the following binary matrix:
\begin{align}
E' \ = \ \sum_{i \in \textbf{A}} E(i) * B(\Delta \vec{b_{i}}) \qquad \mbox{(mod 2)}
\end{align}
where $\vec{b_{i}} + \Delta \vec{b_{i}} = \vec{b_{\beta}}$. This matrix $E'$ can generate the following column operator:
\begin{equation}
\begin{split}
\ell(E')_{x} \ &= \ \ell\left(\sum_{i \in \textbf{A}'} E(i) * B(\Delta \vec{b_{i}}) \right)_{x} \\
&= \ \prod_{i \in \textbf{A}' } U_{i} \left( B(\Delta \vec{b_{i}})  \right).
\end{split}
\end{equation}
Here, we notice that 
\begin{align}
Par(E')_{\beta} \ = \ 1
\end{align}
since $\Delta\vec{b_{\beta}} = (0,\cdots,0)$ and $E'$ is an odd matrix. Then, $\ell(E')_{x}\not=I$. Note that 
\begin{align}
U_{i} \left( B(\Delta \vec{b_{i}}) \right) 
\end{align}
has a characteristic vector $\vec{b_{\beta}}$ and a characteristic operator $V_{i}$ from lemma~\ref{lemma_summation}. Then, we notice that $\ell(E')_{x}$ is a column vector with a characteristic vector $\vec{b_{\beta}'} > \vec{b_{\beta}}$. Note that 
\begin{align}
Par(E')_{\beta} \ = \ 0 \quad (j>\beta).
\end{align}

Then, we obtain the following observation.
\begin{itemize}
\item From $\textbf{A}'$ such that $\prod_{i \in \textbf{A'}'}V_{i}=I$ for ``updated'' $V_{i}$, one can form a column operator $\ell(E')_{x}$ which has a characteristic vector $\vec{b_{\beta}'} > \vec{b_{\beta}}$ and a characteristic operator $V'_{\beta}$. $E'$ is an odd matrix which satisfies
\begin{align}
Par(E')_{\beta} \ = \ 1 \quad \mbox{and} \quad
Par(E')_{j} \ = \ 0 \quad (j \ > \ \beta).
\end{align}
\end{itemize}

Here, we again ``update'' $U_{\beta}$, $\vec{b_{\beta}}$, $V_{\beta}$ and $E(\beta)$ to $\ell(E')_{x}$, $\vec{b_{\beta}'}$, $V_{\beta}'$ and $E'$. Then, since updated $E(i)$ always satisfy
\begin{align}
Par(E(i))_{i} \ = \ 1 \quad \mbox{and} \quad Par(E(i))_{j} \ = \ 0\quad (j>i)
\end{align}
one can repeat the same discussion again. In each update, characteristic vectors $\vec{b_{i}}$ increase, and at the end, one ends up with the following column operators
\begin{align}
\begin{array}{cccc}
U_{1}\ \rightarrow\ & \vec{1}, & V_{1} ,& E(1) \\
U_{2}\ \rightarrow\ & \vec{1},& V_{2} ,& E(2) \\
\vdots & \vdots & \vdots & \vdots \\
U_{x}\ \rightarrow\  & \vec{1},& V_{x} ,& E(x)
\end{array}
\end{align}
where $\vec{1} \equiv (1,\cdots,1)$ and 
\begin{align}
Par(E(i))_{i} \ = \ 1 \quad \mbox{and} \quad Par(E(i))_{j} \ = \ 0\quad (j>i)
\end{align}
Then, there exists a set $\textbf{A}$ such that 
\begin{align}
\prod_{i \in \textbf{A}}V_{i} \ = \ I
\end{align}
and, the following matrix is an identity generating matrix
\begin{align}
E \ = \  \sum_{i \in \textbf{A}} E(i) \qquad \mbox{(mod 2)}.
\end{align}
Let the largest integer in $\textbf{A}$ be $i_{max}$. Then, $E$ is odd since $Par(E)_{i_{max}}=1$. However, this contradicts with our original assumption that there is no odd identity generating matrix. This completes the proof of lemma~\ref{lemma_existence}, lemma~\ref{lemma_decomposition_final} and theorem~\ref{theorem_decomposition}.

\section{Derivation of logical operators}\label{sec:construction}

Having showed that two-dimensional logical operators can be decomposed as a product of two-dimensional and one-dimensional centralizer operators, let us proceed to the proof of theorem~\ref{theorem_3dim}. The proof owes a lot to arguments presented in~\cite{Beni10b}. For simplicity of presentation and in order to avoid making the paper unnecessarily long, we shall skip some parts of the derivation. However, we believe that interested readers can easily construct rigorous proofs.

\textbf{Preliminaries:} We begin by providing some corollaries and lemma which are useful in the derivations of logical operators. Let us first generalize theorem~\ref{theorem_decomposition} for any $n_{2}$. 

\begin{corollary}\label{corollary_decomposition}
Consider a three-dimensional STS model with the system size $n_{1}= 2\cdot2^{2n_{2}v}!$, arbitrary $n_{2}$ and $n_{3}>1$. 
For a given logical operator $\ell$ supported inside $P(n_{1},n_{2},1)$, one can decompose $\ell$ as a product of the following centralizer operators
\begin{align}
\ell \ \sim \ \ell_{a}\ell_{b}, \qquad \ell_{a},\ell_{b}\ \in \ \mathcal{C}_{P(n_{1},n_{2},1)}
\end{align}
where
\begin{align}
T_{1}^{\beta}(\ell_{b})\ = \ \ell_{b}, \qquad \mbox{where}\quad \beta \ \leq \ 2^{2n_{2}v}
\end{align}
and $\ell_{a}$ is defined inside $P(2v, n_{2},1)$.
\end{corollary}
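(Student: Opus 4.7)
The plan is to extend the proof of Theorem~\ref{theorem_decomposition} from $n_2 = 2^m$ to arbitrary $n_2$. A naive approach --- viewing the $1 \times n_2 \times 1$ columns as single composite particles of inner dimension $v' = n_2 v$ and invoking Lemma~\ref{lemma_decomposition} in 2D --- would place $\ell_a$ inside $P(2v', 1, 1) = P(2 n_2 v, n_2, 1)$, which is too wide by a factor of $n_2$. Recovering the sharper width $2v$ therefore requires re-running the $\hat{2}$-direction-aware shrinkage argument of Theorem~\ref{theorem_decomposition}, and the only genuine obstruction to doing so for arbitrary $n_2$ lies in the classification of column operators by binary characteristic vectors $\vec{b} \in \mathbb{Z}_2^m$ together with the halving maps $f_0, f_1$.

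My plan is to replace this classification with one indexed by the divisor lattice of $n_2$. To each $U \in \mathcal{P}^{n_2}_{col}$ I associate the divisor $d \mid n_2$ for which $\{\, j \in \mathbb{Z}_{n_2} : T_2^{j}(U) = U \,\} = d\mathbb{Z}_{n_2}$; the partial order ``more symmetric'' becomes reverse divisibility $d' \mid d$. For each prime $p \mid n_2$, the maps $U \mapsto (U_j \, U_{j + n_2/p})_j$ and $U \mapsto (U_j)_{j \leq n_2/p}$ play the roles of $f_0$ and $f_1$, and the characteristic column $B(\vec{b})$ is replaced by the indicator column of the subgroup $d\mathbb{Z}_{n_2} \subseteq \mathbb{Z}_{n_2}$. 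With these substitutions, Lemmas~\ref{lemma_odd1}, \ref{lemma_construction}, and~\ref{lemma_multiplication} should generalize almost mechanically, since their arguments rely only on the existence and order-compatibility of the symmetrization operations, not on the total ordering of $\mathbb{Z}_2^m$.

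The main obstacle will be the analogue of Lemma~\ref{lemma_summation}, whose proof invoked Kummer's theorem for binomial coefficients modulo $2$. For general $n_2$ I plan to reduce to the prime-power case via the Chinese Remainder Theorem $\mathbb{Z}_{n_2} \cong \prod_{p \mid n_2} \mathbb{Z}_{p^{a_p}}$ and handle each prime separately using the base-$p$ version of Kummer's theorem, combining the results componentwise. Once this is in place, the existence of an odd identity-generating matrix whenever $x > 2v$ (Lemma~\ref{lemma_existence}) follows by the same inductive update procedure: the column-operator space on $n_2$ stacked composite particles has only $2 v n_2$ independent single-Pauli generators, so once $x > 2v$ columns are combined via translations in $\hat{2}$ (which absorb the remaining factor $n_2$), the resulting characteristic operators are over-complete and must admit a nontrivial relation with an odd witness. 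The width $2v$ for $\ell_a$ and the periodicity bound $\beta \leq 2^{2 n_2 v}$ for $\ell_b$ then follow exactly as in Theorem~\ref{theorem_decomposition}.
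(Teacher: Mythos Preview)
Your approach is genuinely different from the paper's, and far more laborious than necessary. The paper does not rebuild the characteristic-vector machinery for general $n_2$ at all; it reduces to the already-proven case $n_2 = 2^m$ in one step. Write $n_2 = 2^m n_2'$ with $n_2'$ odd, and replace $\ell$ by
\[
\ell' \;=\; \prod_{j=1}^{n_2'} T_2^{(j-1)2^m}(\ell).
\]
Since $n_2'$ is odd and each translate is equivalent to $\ell$, one has $\ell' \sim \ell$; and by construction $T_2^{2^m}(\ell') = \ell'$. Now $\ell'$ behaves, for the purposes of forming identity-generating matrices in the $\hat 2$ direction, exactly like a logical operator in a system with $n_2 = 2^m$, so Theorem~\ref{theorem_decomposition} applies verbatim and gives $\ell_a$ inside $P(2v,n_2,1)$ and $\ell_b$ with $T_1^\beta(\ell_b)=\ell_b$ for some $\beta \le 2^{2n_2 v}$. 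That is the entire argument.

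Your proposed route, besides being unnecessary, has a real gap. The whole identity-generating-matrix calculus is inherently mod~$2$: Pauli operators square to the identity, so products of translates are controlled by binary data, and the crucial ``odd column'' criterion of Lemma~\ref{lemma_odd1} is a parity condition. The structures you propose to import from odd primes $p \mid n_2$ --- base-$p$ Kummer, subgroup indicator columns --- live in mod-$p$ arithmetic and do not interact correctly with this binary framework. Concretely, your $f_0$ analogue $U_j \mapsto U_j U_{j+n_2/p}$ for odd $p$ does not reduce the column length by a factor of $p$, and the original characteristic columns $B(\vec b)$ are \emph{not} subgroup indicators (e.g.\ $B(1,0)$ in $\mathbb{Z}_4$ indicates $\{0,1\}$), so the replacement you suggest is already an imprecise analogy. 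The convolution identity in Lemma~\ref{lemma_summation} relies on Lucas' theorem mod~$2$ in an essential way; its would-be mod-$p$ version yields information in $\mathbb{Z}_p$, not in $\mathbb{Z}_2$, and hence cannot feed back into the binary parity bookkeeping that drives Lemma~\ref{lemma_odd1}. The paper's trick sidesteps all of this by killing the odd part of $n_2$ up front.
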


The proof relies on the fact that one can make a logical operator $\ell_{a}$ ``quasi-periodic''.

\begin{proof}
Let us represent $n_{2}$ as $n_{2}=2^{m}\cdot n_{2}'$ where $n_{2}'$ is some odd integer. Then, for a logical operator $\ell$ defined inside $P(n_{1},n_{2},1)$, one can see that the following logical operator
\begin{align}
\ell' \ = \ \prod_{j=1}^{n_{2}'}T_{2}^{(j-1)2^{m}}(\ell) \ \sim \ \ell
\end{align}
is equivalent to $\ell$ since $\ell'$ is a product of an odd number of translations of $\ell$. Notice that $\ell'$ is periodic in the $\hat{2}$ direction:
\begin{align}
T_{2}^{2^{m}}(\ell') \ = \ \ell'
\end{align}
with the periodicity $2^{m}$. Because of this periodicity, one can form identity generating matrices in a way similar to the cases when $n_{2}=2^{m}$. Therefore, one can see that theorem~\ref{theorem_decomposition} holds for any $n_{2}$.
\end{proof}

We have seen that a two-dimensional logical operator can be decomposed as a product of a one-dimensional centralizer operator and a two-dimensional centralizer operator, as summarized in theorem~\ref{theorem_decomposition}. One can further decompose a one-dimensional logical operator as a product of a one-dimensional centralizer operator and a zero-dimensional centralizer operator, as summarized in the following lemma.

\begin{lemma}\label{lemma_decomposition_1dim}
Consider a three-dimensional STS model with the system size $n_{1}= 2\cdot2^{2n_{2}v}!$, $n_{2}=2\cdot 2^{(2v)^{2}}!$ and $n_{3}>1$ where $m$ is an arbitrary positive integer. For a given logical operator $\ell$ supported inside $P(2v,n_{2},1)$, one can decompose $\ell$ as a product of the following centralizer operators
\begin{align}
\ell \ \sim \ \ell_{a}\ell_{b}, \qquad \ell_{a},\ell_{b}\ \in \ \mathcal{C}_{P(2v,n_{2},1)}
\end{align}
where
\begin{align}
T_{1}^{\beta}(\ell_{b})\ = \ \ell_{b}, \qquad \mbox{where}\quad \beta \ \leq \ 2^{(2v)^{2}}
\end{align}
and $\ell_{a}$ is defined inside $P(2v, (2v)^{2},1)$.
\end{lemma}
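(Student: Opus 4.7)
The statement is the $\hat{2}$-direction counterpart of Theorem~\ref{theorem_decomposition}, applied to the thin slab $P(2v, n_2, 1)$ treated as a one-dimensional chain along $\hat{2}$. My plan is to coarse-grain the slab into such a chain and then invoke Lemma~\ref{lemma_decomposition} (which already handles the chain case by setting its parameter to $1$ in the non-chain direction).

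First, I would group each column $P(2v, 1, 1)$ at $r_3 = 1$ into a single super-composite particle indexed by its $\hat{2}$-coordinate. Each super-composite particle contains $2v$ original composite particles, hence $v' \equiv 2v \cdot v = 2v^2$ qubits, so its local Pauli group has $2v' = (2v)^2$ independent generators. The slab becomes a chain of length $n_2$ along $\hat{2}$ whose $T_2$ translation symmetry and geometric locality are inherited from the original 3D Hamiltonian: any stabilizer touching the slab involves at most two neighboring super-composite particles in $\hat{2}$. In this coarse-grained picture, $\ell$ is a logical operator supported along the entire chain.

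Next, I would apply Lemma~\ref{lemma_decomposition} with effective inner dimension $v'$. The hypothesis $n_2 = 2 \cdot 2^{(2v)^2}! = 2 \cdot 2^{2v'}!$ matches the length requirement in that lemma. The whole proof machinery of Appendix~\ref{sec:decomposition} --- the identity-generating-matrix construction and the characteristic-vector shrinkage --- depends only on the inner-dimension parameter, so it transfers without change. This yields $\ell \sim \ell_a \ell_b$ with $\ell_b$ periodic in $\hat{2}$ of period $\beta \leq 2^{2v'} = 2^{(2v)^2}$ and $\ell_a$ supported on at most $2v' = (2v)^2$ consecutive super-composite particles, which in original coordinates is the region $P(2v, (2v)^2, 1)$. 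The centralizer conditions $\ell_a, \ell_b \in \mathcal{C}_{P(2v, n_2, 1)}$ follow because commutation with the coarse-grained interactions is equivalent to commutation with the original stabilizers that touch the slab.

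The main obstacle is the bookkeeping of the coarse-graining: verifying that the resulting chain genuinely satisfies the STS hypotheses of Lemma~\ref{lemma_decomposition} despite the fact that some original stabilizers reach into composite particles at $r_3 = 2$ outside the slab. Those reaching interactions do not obstruct the decomposition itself, since they act trivially on operators supported strictly in the slab at $r_3 = 1$, but one must still check that Theorem~\ref{theorem_TE} can be invoked along $\hat{2}$ at the level of the coarse-grained chain so that the identity-generating matrix argument produces genuine centralizer equivalences rather than equivalences only modulo out-of-slab operators. Once this point is handled, the remainder of the argument is essentially a renaming of $(v, n_1, \hat{1})$ to $(v', n_2, \hat{2})$ in the already-proven two-dimensional case.
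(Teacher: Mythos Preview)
Your proposal is correct and matches the paper's approach. The paper does not actually give a proof of this lemma; it simply states ``One can prove the lemma through discussion similar to the one used in the proof of lemma~\ref{lemma_decomposition}. So, we shall skip the proof.'' Your coarse-graining of $P(2v,1,1)$ blocks into super-composite particles with $v' = 2v^{2}$ qubits (hence $2v' = (2v)^{2}$ Pauli generators) is precisely the natural way to make that remark rigorous, and it reproduces the stated bounds $(2v)^{2}$ and $2^{(2v)^{2}}$ exactly. The subtlety you flag about stabilizers reaching outside the slab at $r_{3}=2$ is handled in the paper's framework not by verifying that the coarse-grained chain is itself an STS model, but by invoking the translation equivalence of logical operators (Theorem~\ref{theorem_TE}) directly in the ambient three-dimensional STS model, so the identity-generating-matrix machinery of Appendix~\ref{sec:decomposition} carries over verbatim with $(\hat{1}, v)$ replaced by $(\hat{2}, v')$.
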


We show the claim of lemma~\ref{lemma_decomposition_1dim} graphically in Fig.~\ref{fig_decomposition1D}. One can prove the lemma through discussion similar to the one used in the proof of lemma~\ref{lemma_decomposition}. So, we shall skip the proof.

\begin{figure}[htb!]
\centering
\includegraphics[width=0.45\linewidth]{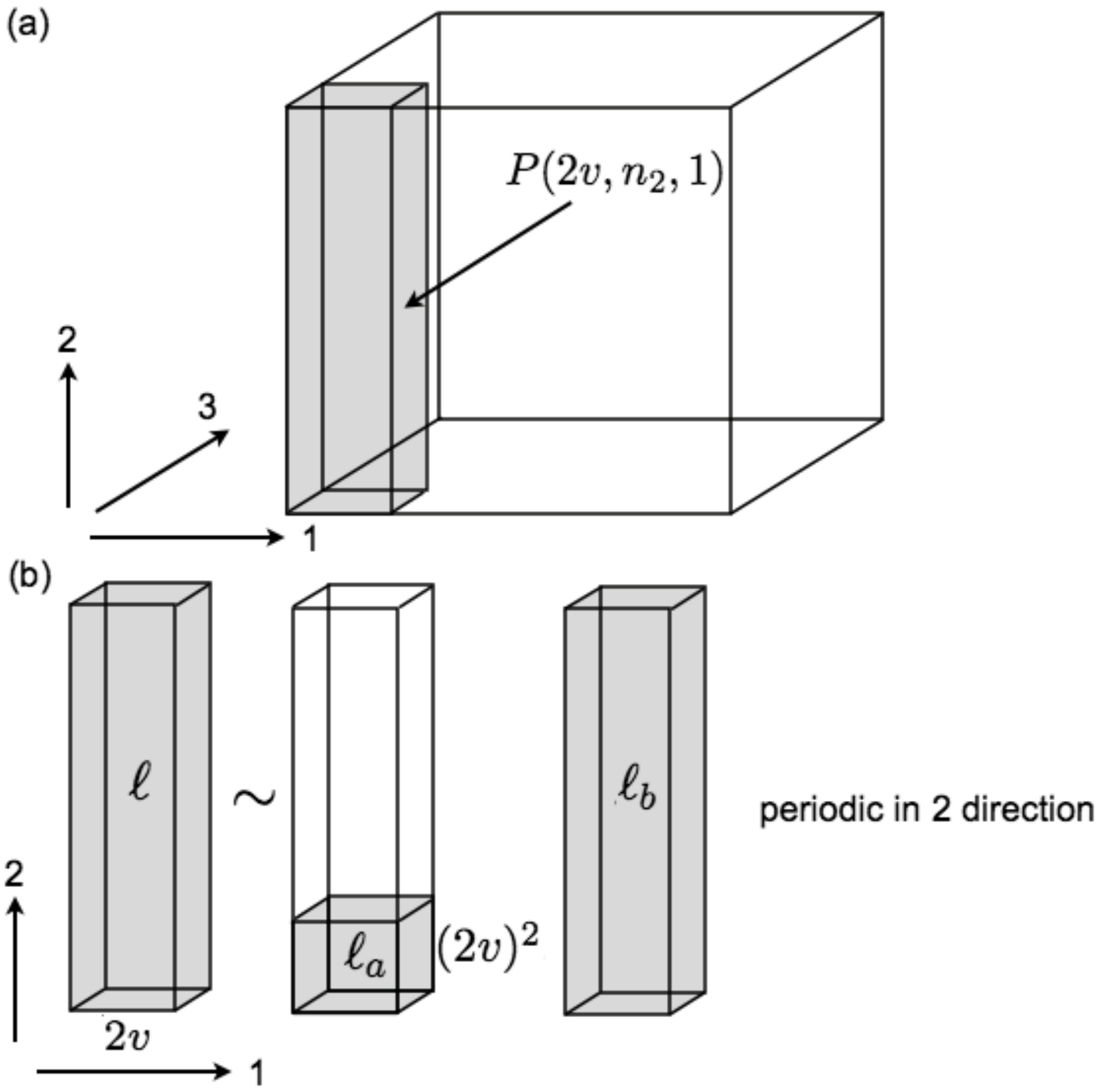}
\caption{The claim of lemma~\ref{lemma_decomposition_1dim}. 
} 
\label{fig_decomposition1D}
\end{figure}

Finally, let us extend the claim of theorem~\ref{theorem_2dim} slightly. 

\begin{corollary}\label{corollary_2dim}
Consider a two-dimensional STS model with even $n_{1}$ and $n_{2}$.
\begin{itemize}
\item Let $\ell$ be a logical operator which is periodic in the $\hat{1}$ and $\hat{2}$ directions:
\begin{align}
T_{1}(\ell) \ = \ T_{2}(\ell) \ = \ \ell.
\end{align}
Then, $\ell$ is a two-dimensional logical operator, and there exists a zero-dimensional logical operator $r$ which is defined inside $P(1,2v)$ and anti-commutes with $\ell$.
\item Let $\ell$ be a logical operator which is defined inside $P(1,n_{2})$ and periodic in the $\hat{2}$ direction:
\begin{align}
T_{2}(\ell) \ = \ \ell.
\end{align}
Then, $\ell$ is a one-dimensional logical operator, and there exists another one-dimensional logical operator $r$ which is defined inside $P(n_{1},1)$, anti-commutes with $\ell$: $\{\ell,r\}=0$ and periodic in the $\hat{1}$ direction:
\begin{align}
T_{1}(r) \ = \ r.
\end{align}
\end{itemize}
\end{corollary}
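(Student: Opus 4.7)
The plan is to lift Theorem~\ref{theorem_2dim} from equivalence classes to the operator level by exploiting the translation invariance of $\ell$. A Pauli operator invariant under $T_1$ and $T_2$ must act identically on every composite particle, so the hypothesis in Part (i) forces $\ell$ to have the form $U^{\otimes N}$ for a single composite-particle Pauli $U$, while the hypothesis in Part (ii) forces $\ell$ to have the form $V^{\otimes n_2}$ along its column. The same observation applied to the canonical representatives supplied by Theorem~\ref{theorem_2dim} shows that every two-dimensional periodic generator is of the form $U_\alpha^{\otimes N}$, every one-dimensional column generator is of the form $V_\alpha^{\otimes n_2}$, and every one-dimensional row generator is of the form $W_\alpha^{\otimes n_1}$.

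With these uniform tensor-product expressions in hand, every commutator between $\ell$ and a canonical generator reduces to a product of single-particle anti-commutation signs over the overlap of their supports, i.e.\ to a sign of the form $(\pm 1)^m$ with $m \in \{N, n_1, n_2\}$. Because $n_1$ and $n_2$ are both even (hence $N = n_1 n_2$ is even as well), every such power equals $+1$, which wipes out three of the four classes of anti-commutations at once.

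For Part (i), this leaves $\ell$ equivalent to a product of the two-dimensional periodic generators only; the non-triviality of $\ell$ as a logical operator then forces at least one coefficient to be non-zero, and the corresponding zero-dimensional partner in $P(1,2v)$ is the desired $r$. Uniqueness of the canonical expansion also implies that $[\ell]$ has no representative in $R_1$, so $\ell$ is two-dimensional. For Part (ii), the same commutator computation together with the hypothesis that $\ell$ has a representative in $R_1 \supset P(1,n_2)$ (which forces the coefficients of the two-dimensional generators to vanish by uniqueness of the canonical expansion) leaves $\ell$ equivalent to a product of column generators only; non-triviality supplies a non-zero coefficient, and the matching row generator in $P(n_1,1)$ is the sought $r$, automatically periodic in $\hat{1}$. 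Uniqueness simultaneously rules out a zero-dimensional representative of $[\ell]$, so $\ell$ is genuinely one-dimensional.

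The main point to verify with care is the structural step itself: translation invariance of a Pauli operator at the operator level does force the uniform tensor-product form, but this is strictly stronger than the equivalence-class translation invariance guaranteed by Theorem~\ref{theorem_TE}, and the two notions must not be conflated. Once that observation is secured, the remainder of the argument is a straightforward parity count in the Pauli group combined with the canonical-basis bookkeeping inherited from Theorem~\ref{theorem_2dim}.
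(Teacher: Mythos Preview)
Your argument is correct and follows the same approach as the paper: use the operator-level periodicity of $\ell$ together with the evenness of $n_1,n_2$ to force commutation with the appropriate canonical generators, then read off the dimensional type of $\ell$ and its anti-commuting partner from Theorem~\ref{theorem_2dim}. Your write-up is more explicit than the paper's two-line sketch (in particular, you correctly supplement the parity count in Part~(ii) with the support constraint $\ell\in R_1$ to kill the two-dimensional component), but the underlying mechanism is identical.
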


In other words, if we find a logical operator which is periodic in both $\hat{1}$ and $\hat{2}$ directions, we readily know that it is a two-dimensional logical operator. Similarly, if we find a logical operator which is defined inside $P(1,n_{2})$ and periodic in the $\hat{2}$ direction, we readily know that it is a one-dimensional logical operator. Note that the corollary holds only for the cases where both $n_{1}$ and $n_{2}$ are even. 

\begin{proof}
Since $\ell$ is periodic and system sizes are even, $\ell$ commutes with all the one-dimensional logical operators and all the two-dimensional logical operators. Therefore, $\ell$ can anti-commute only with zero-dimensional logical operators. This means that $\ell$ is a two-dimensional logical operator. The second claim can be proven in a similar way.
\end{proof}

Now, we derive all the logical operators in a three-dimensional STS models, as described in theorem~\ref{theorem_3dim}. Consider the system size analyzed in lemma~\ref{lemma_decomposition_1dim}: $(n_{1},n_{2},n_{3})= ( 2\cdot2^{2n_{2}v}!, 2\cdot 2^{(2v)^{2}}!, n_{3})$ where $n_{3}>1$ is some fixed integer. Here, we view the entire system as a two-dimensional system by considering $P(1, n_{2},1)$ as a single composite particle (Fig.~\ref{fig_2Dview}). (So, the entire system is viewed as a two-dimensional lattice of one-dimensional tubes). 

For a two-dimensional STS model, we already know the geometric shapes of all the logical operators, as summarized in theorem~\ref{theorem_2dim}. When viewed as a two-dimensional system, ``zero-dimensional'' logical operators are defined inside $P(2vn_{2},n_{2},1)$. From theorem~\ref{theorem_decomposition}, we notice that these ``zero-dimensional'' logical operators can be actually defined inside $P(2v,n_{2},1)$. Then, we have the following logical operators (Fig.~\ref{fig_2Dview}). 

\begin{figure}[htb!]
\centering
\includegraphics[width=0.50\linewidth]{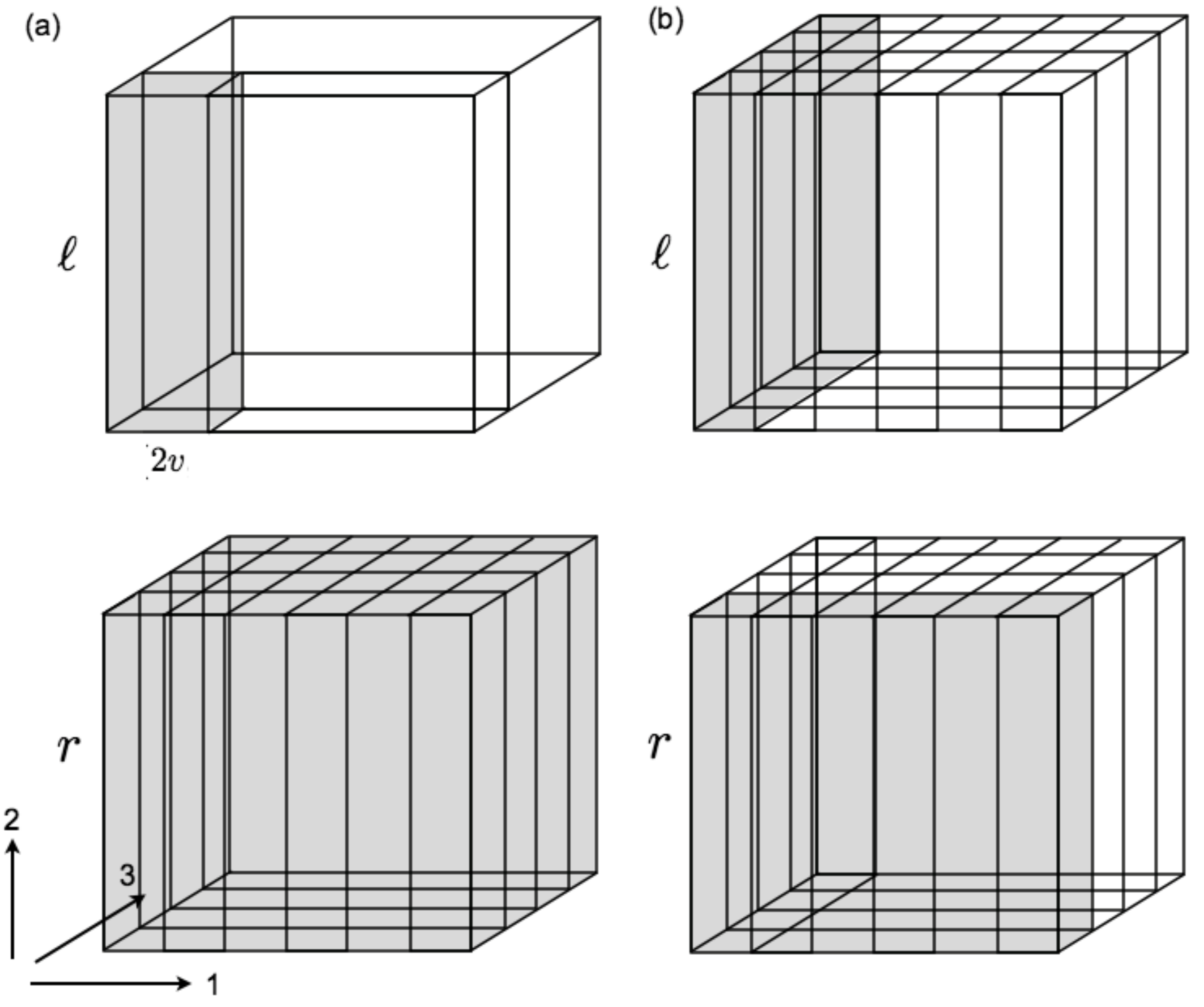}
\caption{Viewed as a two-dimensional system.
} 
\label{fig_2Dview}
\end{figure}

\begin{itemize}
\item An anti-commuting pair of logical operators in Fig.~\ref{fig_2Dview}(a) where $\ell$ is defined inside $P(2v,n_{2},1)$ and $r$ is periodic:
\begin{align}
T_{1}(r) \ = \ T_{3}(r) \ = \ r. 
\end{align}
\item  An anti-commuting pair of logical operators in Fig.~\ref{fig_2Dview}(b) where $\ell$ is defined inside $P(1,n_{2},n_{3})$ and periodic:
\begin{align}
T_{3}(\ell) \ = \ \ell
\end{align}
and $r$ is defined inside $P(n_{1},n_{2},1)$ and periodic:
\begin{align}
T_{1}(r) \ = \ r.
\end{align}
\end{itemize}

Below, we analyze anti-commuting pairs of logical operators in Fig.~\ref{fig_2Dview}(a) and Fig.~\ref{fig_2Dview}(b), and derive logical operators. 

%Because of the periodicities of logical operators described above, constructions of these logical operators can be extended to logical operators for other system sizes. For example, for a logical operator $r$ with $T_{1}(r) = T_{3}(r) = r$, one can extend its construction by keeping its periodicities. Then, we notice that logical operators described above are also logical operators for $n_{2}=2\cdot2^{2v}!$ and arbitrary $n_{1}$ and $n_{3}$.

\textbf{Pairs in (a):}
Below, we analyze properties of logical operators described above. We start with anti-commuting pairs described in Fig.~\ref{fig_2Dview}$(a)$. We stop viewing the system as a two-dimensional system for the moment. Logical operators defined inside $P(2v,n_{2},1)$ consists of periodic one-dimensional logical operators and zero-dimensional logical operators defined inside $P(2v, (2v)^{2},1)$ due to lemma~\ref{lemma_decomposition_1dim}. Let us first analyze a zero-dimensional logical operator $\ell$ defined inside $P(2v, (2v)^{2},1)$. 

From theorem~\ref{theorem_2dim}, $\ell$ is also a logical operators for arbitrary $n_{1}$ and $n_{3}$. Consider the case when $n_{3}=1$ (Fig.~\ref{fig_proof_aid1}). Then, by viewing the system as a two-dimensional system which extends only in the $\hat{1}$ and $\hat{2}$ directions, one notices that there exists a two-dimensional logical operator $r$ which is periodic in both $\hat{1}$ and $\hat{2}$ directions:
\begin{align}
T_{1}(r) \ = \ T_{2}(r) \ = \ r
\end{align}
and anti-commutes with $\ell$: $\{ \ell,r\}=0$. Next, let us consider the case when $n_{3}>1$. Then, one may extend the construction of $r$ as follows (Fig.~\ref{fig_proof_aid1}):
\begin{align}
r \ \rightarrow \ r' \ =  \ \prod_{x=1}^{n_{3}}T_{3}^{x}(r).
\end{align}
In other words, we put $r$ in a periodic way in the $\hat{3}$ direction to form $r'$. We shall call such an extension the \emph{periodic extension}. The three-dimensional logical operator $r'$ obtained after the periodic extension of $r$ is periodic in all the directions:
\begin{align}
T_{1}(r') \ = \ T_{2}(r') \ = \ T_{3}(r') \ = \ r'
\end{align}
and anti-commutes with $\ell$ : $\{r',\ell \}=0$. Then, one may notice that $r'$ and $\ell$ form a pair of anti-commuting logical operators for any system size $\vec{n}$. From this discussion, we obtain the following observation (Fig.~\ref{fig_proof_aid1}).

\begin{itemize}
\item For zero-dimensional logical operators defined inside $P(2v, (2v)^{2},1)$, there always exists a three-dimensional logical operator $r$ which is periodic:
\begin{align}
T_{1}(r) \ = \ T_{2}(r) \ = \ T_{3}(r) \ = \ r
\end{align}
and anti-commutes with $\ell$: $\{\ell,r\}=0$. $\ell$ and $r$ are logical operators for any system size $\vec{n}$.
\end{itemize}

\begin{figure}[htb!]
\centering
\includegraphics[width=0.55\linewidth]{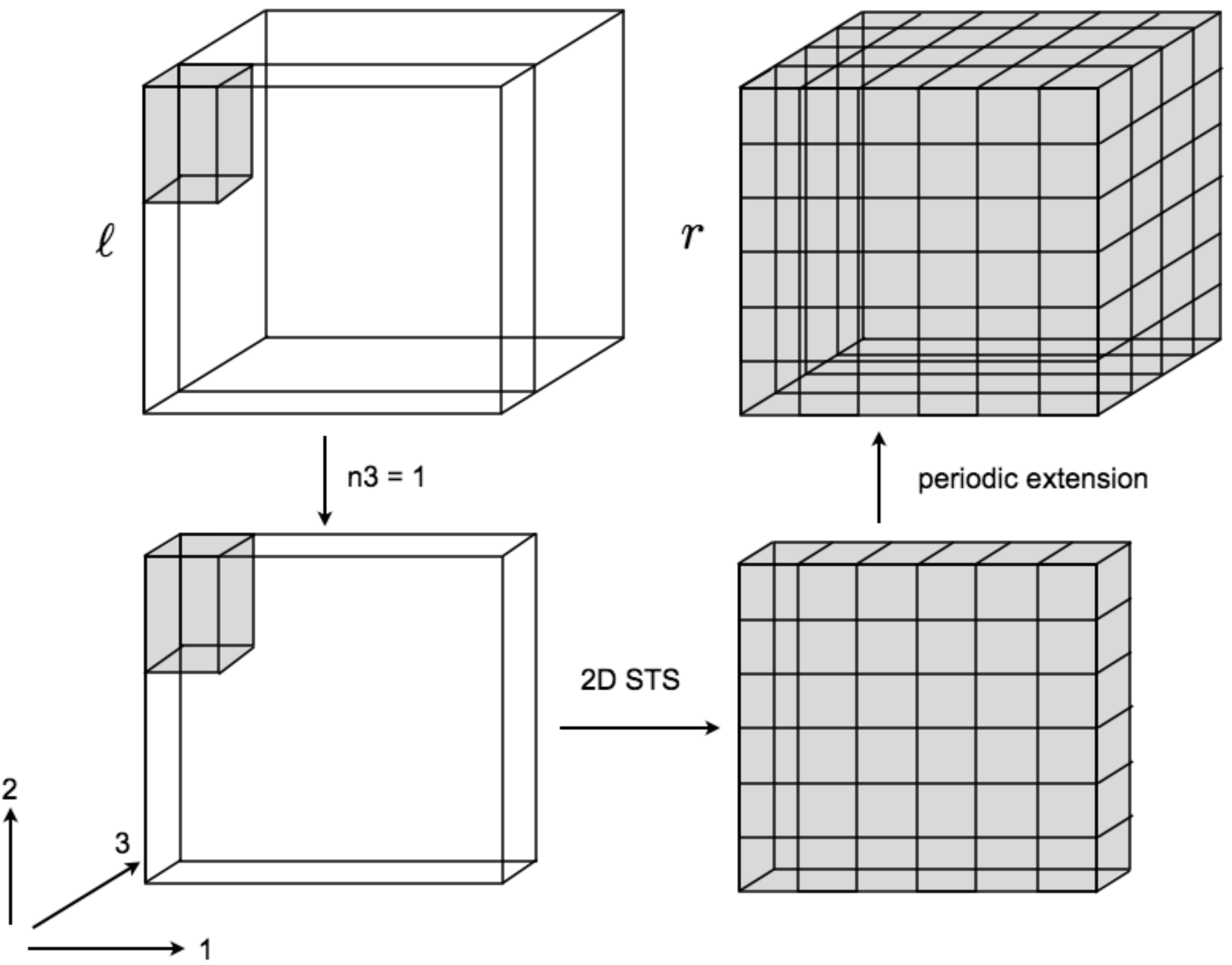}
\caption{Constructions of zero-dimensional and three-dimensional logical operators.
} 
\label{fig_proof_aid1}
\end{figure}

Next, let us consider a one-dimensional logical operator $\ell$ defined inside $P(2v,n_{2},1)$ which is periodic in the $\hat{2}$ direction:
\begin{align}
T_{2}^{\beta}(\ell) \ = \ \ell
\end{align}
for $\beta \leq 2^{(2v)^{2}}$. Recall that $r$ is periodic:
\begin{align}
T_{1}(r) \ = \ T_{3}(r) \ = \ r
\end{align}
and anti-commutes with $\ell$ : $\{r,\ell\}=0$. Then, one can decompose $r$ as a product of two centralizer operators from lemma~\ref{lemma_decomposition}:
\begin{align}
r \ \sim \ r_{a}r_{b}
\end{align}
where 
\begin{align}
T_{1}(r_{a}) \ = \ T_{3}(r_{a}) \ = \ r_{a}, \qquad T_{1}(r_{b}) \ = \ T_{3}(r_{b}) \ = \ r_{b}
\end{align}
and $r_{a}$ is defined inside $P(n_{1},2v,n_{3})$, and 
\begin{align}
T_{2}^{\beta'}(r_{b}) \ = \ r_{b}
\end{align}
for $\beta' \leq 2^{2v}$. Then, we notice that
\begin{align}
[r_{b},\ell] \ = \ 0
\end{align}
since $T_{2}^{\beta}(\ell) = \ell$ and $T_{2}^{\beta'}(r_{b}) = r_{b}$, and $n_{2}/\beta\beta'$ is an even integer for $n_{2}=2\cdot 2^{(2v)^{2}}!$. Thus, we have
\begin{align}
\{r_{a},\ell\} \ = \ 0.
\end{align}

Since $r_{a}$ is periodic in the $\hat{1}$ and $\hat{3}$ directions, one can periodically extend its construction for arbitrary $n_{1}$ and $n_{3}$. Now, let us consider the system size such that $n_{1}$ and $n_{3}$ are odd. Here, note that $r_{a}$ has some equivalent logical operator $r_{a}'$ defined inside $P(n_{1},1,n_{3})$~\cite{Beni10b}. Then, the following operator
\begin{align}
r_{a}'' \ = \ \prod_{i,j}T_{1}^{i}T_{3}^{j}(r_{a}') \ \sim \ r_{a} 
\end{align}
is equivalent to $r_{a}$ (Fig.~\ref{fig_proof_aid2}). Note that $r_{a}''$ is defined inside $P(n_{1},1,n_{3})$ and periodic in the $\hat{1}$ and $\hat{3}$ directions:
\begin{align}
T_{1}(r_{a}'') \ = \ T_{3}(r_{a}'') \ = \ r_{a}''.
\end{align}
Finally, we show that $\ell$ can be periodic in the $\hat{2}$ direction. Consider the case when $n_{1}$ and $n_{3}$ are even, and $n_{2}=1$. Then, $r_{a}''$ is also a logical operators. Now, there always exists some logical operator $\ell'$ defined inside $P(1,2v,1)$ which anti-commutes with $r_{a}''$ from corollary~\ref{corollary_2dim}. One can periodically extend its construction to arbitrary $n_{2}$. We denote it $\ell''$. Then, $\ell''$ and $r_{a}''$ are logical operators for any system size. From this discussion, we obtain the following observation (Fig.~\ref{fig_proof_aid2}).

\begin{itemize}
\item For one-dimensional logical operators defined inside $P(2v, n_{2},1)$, there exists a two-dimensional logical operator $r$ defined inside $P(n_{1},1,n_{3})$ which is periodic:
\begin{align}
T_{1}(r) \ = \ T_{3}(r) \ = \ r
\end{align}
and anti-commutes with $\ell$: $\{\ell,r\}=0$. $\ell$ can be also periodic:
\begin{align}
T_{2}(\ell) \ = \ \ell,
\end{align}
and, $\ell$ and $r$ are logical operators regardless of the system size $\vec{n}$.
\end{itemize}

\begin{figure}[htb!]
\centering
\includegraphics[width=0.55\linewidth]{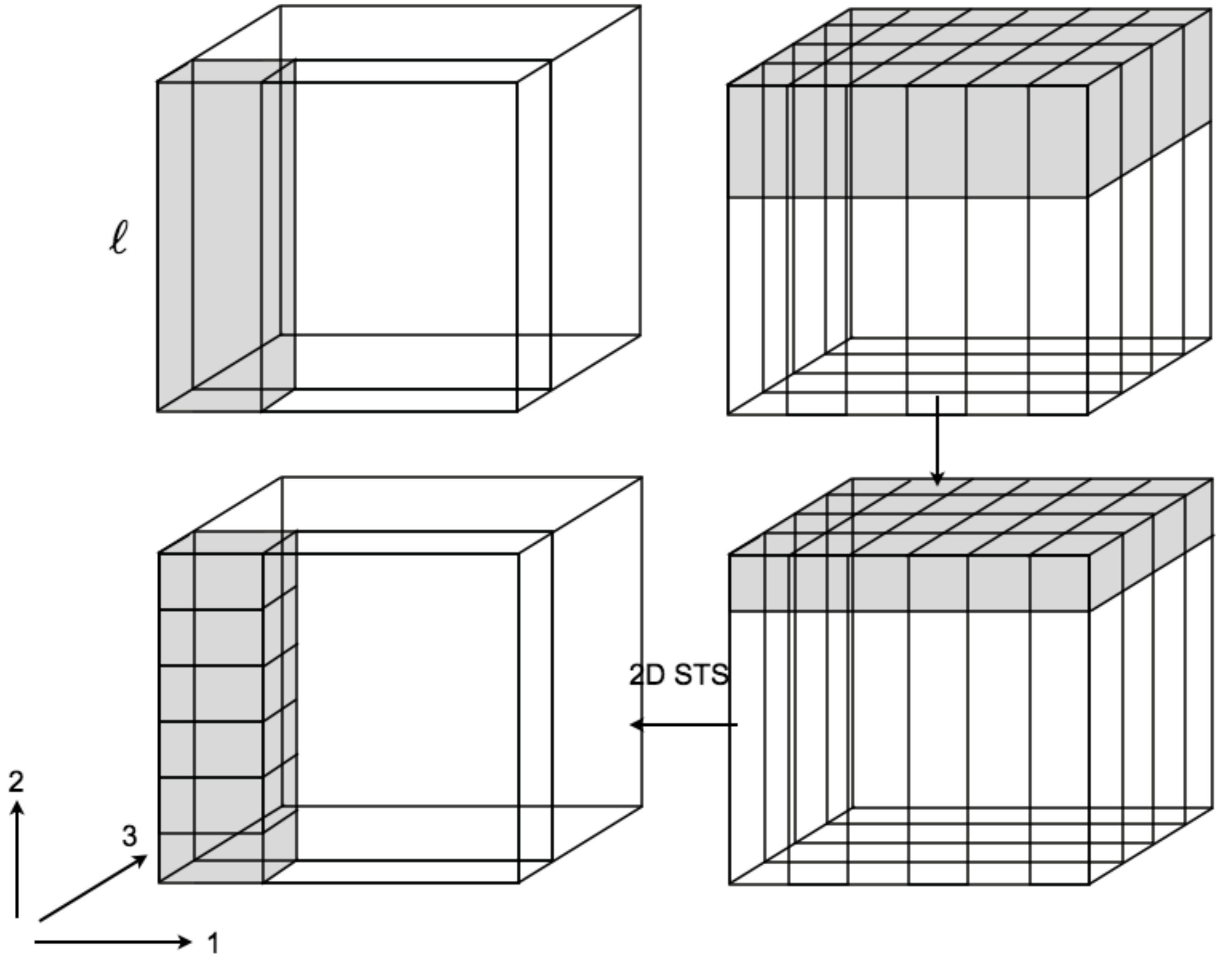}
\caption{Constructions of one-dimensional and two-dimensional logical operators.
} 
\label{fig_proof_aid2}
\end{figure}

\textbf{Pairs in (b):}
Let us proceed to the analysis on pairs of logical operators in Fig.~\ref{fig_2Dview}(b). We consider the following anti-commuting logical operators $\ell$ and $r$.

\begin{itemize}
\item  $\ell$ is defined inside $P(1,n_{2},n_{3})$ and periodic: $T_{3}(\ell) = \ell$.
\item  $r$ is defined inside $P(n_{1},n_{2},1)$ and periodic: $T_{1}(r) = r$.
\end{itemize}

Since $\ell$ is periodic in the $\hat{3}$ direction, one can decompose $\ell$ as follows:
\begin{align}
\ell \ \sim \ \ell_{a} \ell_{b}, \qquad T_{3}(\ell_{a}) \ = \ \ell_{a} \quad \mbox{and} \quad  T_{3}(\ell_{b}) \ = \ \ell_{b}
\end{align}
where $\ell_{a}$ is defined inside $P(1,2v,n_{3})$, and $\ell_{b}$ is defined inside $P(1,n_{2},n_{3})$ and periodic:
\begin{align}
T_{2}^{\beta}(\ell_{b}) \ = \ \ell_{b}
\end{align}
where $\beta \leq 2^{2v}$. Thus, logical operators defined inside $P(1,n_{2},n_{3})$ consist of two-dimensional logical operators and one-dimensional logical operators defined inside $P(1,2v,n_{3})$.

Let us analyze a one-dimensional logical operator $\ell$ defined inside $P(1,2v,n_{3})$ first. We decompose $r$ defined inside $P(n_{1},n_{2},1)$ as follows:
\begin{align} 
r \ \sim \ r_{a} r_{b}
\end{align}
where $r_{a}$ is defined inside $P(n_{1},2v,1)$, and $r_{b}$ is defined inside $P(n_{1},n_{2},1)$ and periodic:
\begin{align}
T_{2}^{\beta'}(r_{b}) \ = \ r_{b}
\end{align}
where $\beta'\leq 2^{2v}$. Then, we notice that
\begin{align}
[\ell,r_{a}] \ = \ 0
\end{align}
since there exists a translation of $r_{a}$ which does not overlap with $\ell$. Thus, we have
\begin{align}
\{\ell,r_{b}\} \ = \ 0.
\end{align}

Let us consider the case where $n_{1}=1$, and $n_{3}$ is even. Note that $\ell$ and $r_{b}$ are both logical operators. Then, from corollary~\ref{corollary_2dim}, there exists a one-dimensional logical operator $r'$ which is defined inside $P(1,n_{2},1)$, anti-commutes with $\ell$ and is periodic in the $\hat{2}$ direction:
\begin{align}
T_{2}(r') \ = \ r'.
\end{align}
Then, we periodically extend $r'$ in the $\hat{1}$ direction and define $r''$. Then, we notice that $r''$ is defined inside  $P(n_{1},n_{2},1)$ and periodic in the $\hat{1}$ and $\hat{2}$ directions. From this discussion, we obtain the following observation.

\begin{itemize}
\item For a one-dimensional logical operator $\ell$ defined inside $P(1,2v,n_{3})$, there exists a two-dimensional logical operator $r$ defined inside $P(n_{1},n_{2},1)$ which is periodic:
\begin{align}
T_{1}(r) \ = \ T_{2}(r) \ = \ r
\end{align}
and anti-commutes with $\ell$: $\{\ell,r\}=0$. $\ell$ can be also periodic:
\begin{align}
T_{3}(\ell) \ = \ \ell,
\end{align}
and, $\ell$ and $r$ are logical operators regardless of the system size $\vec{n}$.
\end{itemize}

Finally, let us analyze a two-dimensional logical operator $\ell$ defined inside $P(1,n_{2},n_{3})$ with
\begin{align}
T_{2}^{\beta}(\ell) \ = \ \ell. 
\end{align}
Since $r$ is periodic in the $\hat{1}$ direction, one can decompose it as follows:
\begin{align}
r \ \sim \ r_{a} r_{b}, \qquad T_{1}(r_{a}) \ = \ r_{a} \quad \mbox{and} \quad  T_{1}(r_{b}) \ = \ r_{b}
\end{align}
where $r_{a}$ is defined inside $P(n_{1},2v,1)$ and $r_{b}$ are defined inside $P(n_{1},n_{2},1)$ and periodic:
\begin{align}
T_{2}^{\beta'}(r_{b}) \ = \ r_{b}
\end{align}
where $\beta' \leq 2^{2v}$. Then, one may notice that
\begin{align}
\{ \ell, r_{b} \} \ = \ 0.
\end{align}
Then, the rest is immediate, and we obtain the following observation.

\begin{itemize}
\item For a two-dimensional logical operator $\ell$ defined inside $P(1,n_{2},n_{3})$, there exists a one-dimensional logical operator $r$ defined inside $P(n_{1},2v,1)$ which is periodic:
\begin{align}
T_{1}(r) \ = \ r
\end{align}
and anti-commutes with $\ell$: $\{\ell,r\}=0$. $\ell$ can be also periodic:
\begin{align}
T_{2}(\ell) \ = \ T_{3}(\ell) \ = \ \ell,
\end{align}
and, $\ell$ and $r$ are logical operators regardless of the system size $\vec{n}$.
\end{itemize}

Let us recall the discussion so far. We started our analysis with the system size considered in lemma~\ref{lemma_decomposition_1dim}. Then, for each pair of anti-commuting logical operators, we found logical operators whose geometric shapes are the same as the ones described as in theorem~\ref{theorem_3dim}. These logical operators are also logical operators for other system sizes because of their periodic structures and scale symmetries of the system. 

It remains to sort these logical operators in a canonical form by analyzing their commutation relations. However, we shall skip this process since it is straightforward. See~\cite{Beni10b} for a similar discussion. This completes the proof of theorem~\ref{theorem_3dim}. 

\section*{Acknowledgments}

I thank Eddie Farhi and Peter Shor for support at MIT. I thank Sergey Bravyi, Jeongwan Haah, Isaac Kim, Zhenghang Wang and Sam Ocko for stimulating and fruitful discussion. 
I thank Masahito Ueda and Spiros Michalakis for valuable comments. 
I gratefully acknowledge the hospitality of Caltech IQI where I enjoyed interesting discussion with group members, and particularly thank John Preskill. 
This work is supported by the U.S. Department of Energy under cooperative research agreement Contract Number DE-FG02-05ER41360, and by  the Nakajima Foundation.


\begin{thebibliography}{67}
\expandafter\ifx\csname natexlab\endcsname\relax\def\natexlab#1{#1}\fi
\providecommand{\bibinfo}[2]{#2}
\ifx\xfnm\relax \def\xfnm[#1]{\unskip,\space#1}\fi
%Type = Article
\bibitem[{Bravyi et~al.(2006)Bravyi, Hastings, and Verstraete}]{Bravyi06}
\bibinfo{author}{S.~Bravyi}, \bibinfo{author}{M.~B. Hastings},
  \bibinfo{author}{F.~Verstraete},
\newblock \bibinfo{title}{Lieb-robinson bounds and the generation of
  correlations and topological quantum order},
\newblock \bibinfo{journal}{Phys. Rev. Lett.} \bibinfo{volume}{97}
  (\bibinfo{year}{2006}) \bibinfo{pages}{050401}.
%Type = Article
\bibitem[{Vidal et~al.(2003)Vidal, Latorre, Rico, and Kitaev}]{Vidal03}
\bibinfo{author}{G.~Vidal}, \bibinfo{author}{J.~I. Latorre},
  \bibinfo{author}{E.~Rico}, \bibinfo{author}{A.~Kitaev},
\newblock \bibinfo{title}{Entanglement in quantum critical phenomena},
\newblock \bibinfo{journal}{Phys. Rev. Lett.} \bibinfo{volume}{90}
  (\bibinfo{year}{2003}) \bibinfo{pages}{227902}.
%Type = Article
\bibitem[{Fradkin and Moore(2006)}]{Fradkin06}
\bibinfo{author}{E.~Fradkin}, \bibinfo{author}{J.~E. Moore},
\newblock \bibinfo{title}{Entanglement entropy of 2d conformal quantum critical
  points: Hearing the shape of a quantum drum},
\newblock \bibinfo{journal}{Phys. Rev. Lett.} \bibinfo{volume}{97}
  (\bibinfo{year}{2006}) \bibinfo{pages}{050404}.
%Type = Article
\bibitem[{Osborne and Nielsen(2002)}]{Osborne02}
\bibinfo{author}{T.~J. Osborne}, \bibinfo{author}{M.~A. Nielsen},
\newblock \bibinfo{title}{Entanglement in a simple quantum phase transition},
\newblock \bibinfo{journal}{Phys. Rev. A} \bibinfo{volume}{66}
  (\bibinfo{year}{2002}) \bibinfo{pages}{032110}.
%Type = Article
\bibitem[{Li and Haldane(2008)}]{Li08}
\bibinfo{author}{H.~Li}, \bibinfo{author}{F.~D.~M. Haldane},
\newblock \bibinfo{title}{Entanglement spectrum as a generalization of
  entanglement entropy: Identification of topological order in non-abelian
  fractional quantum hall effect states},
\newblock \bibinfo{journal}{Phys. Rev. Lett.} \bibinfo{volume}{101}
  (\bibinfo{year}{2008}) \bibinfo{pages}{010504}.
%Type = Article
\bibitem[{Kitaev and Preskill(2006)}]{Kitaev06}
\bibinfo{author}{A.~Kitaev}, \bibinfo{author}{J.~Preskill},
\newblock \bibinfo{title}{Topological entanglement entropy},
\newblock \bibinfo{journal}{Phys. Rev. Lett.} \bibinfo{volume}{96}
  (\bibinfo{year}{2006}) \bibinfo{pages}{110404}.
%Type = Article
\bibitem[{Levin and Wen(2006)}]{Levin06}
\bibinfo{author}{M.~Levin}, \bibinfo{author}{X.-G. Wen},
\newblock \bibinfo{title}{Detecting topological order in a ground state wave
  function},
\newblock \bibinfo{journal}{Phys. Rev. Lett.} \bibinfo{volume}{96}
  (\bibinfo{year}{2006}) \bibinfo{pages}{110405}.
%Type = Article
\bibitem[{Kitaev(2003)}]{Kitaev03}
\bibinfo{author}{A.~Y. Kitaev},
\newblock \bibinfo{title}{Fault-tolerant quantum computation by anyons},
\newblock \bibinfo{journal}{Ann. Phys.} \bibinfo{volume}{303}
  (\bibinfo{year}{2003}) \bibinfo{pages}{2--30}.
%Type = Article
\bibitem[{Eisert et~al.(2010)Eisert, Cramer, and Plenio}]{Eisert10}
\bibinfo{author}{J.~Eisert}, \bibinfo{author}{M.~Cramer},
  \bibinfo{author}{M.~B. Plenio},
\newblock \bibinfo{title}{Colloquium: Area laws for the entanglement entropy},
\newblock \bibinfo{journal}{Rev. Mod. Phys.} \bibinfo{volume}{82}
  (\bibinfo{year}{2010}) \bibinfo{pages}{277}.
%Type = Article
\bibitem[{Vidal(2007)}]{Vidal07}
\bibinfo{author}{G.~Vidal},
\newblock \bibinfo{title}{Entanglement renormalization},
\newblock \bibinfo{journal}{Phys. Rev. Lett.} \bibinfo{volume}{99}
  (\bibinfo{year}{2007}) \bibinfo{pages}{220405}.
%Type = Article
\bibitem[{Gu et~al.(2008)Gu, Levin, and Wen}]{Gu08}
\bibinfo{author}{Z.-C. Gu}, \bibinfo{author}{M.~Levin}, \bibinfo{author}{X.-G.
  Wen},
\newblock \bibinfo{title}{Tensor-entanglement renormalization group approach as
  a unified method for symmetry breaking and topological phase transitions},
\newblock \bibinfo{journal}{Phys. Rev. B} \bibinfo{volume}{78}
  (\bibinfo{year}{2008}) \bibinfo{pages}{205116}.
%Type = Article
\bibitem[{Kitaev(1997)}]{Kitaev97}
\bibinfo{author}{A.~Y. Kitaev},
\newblock \bibinfo{title}{Quantum computations: algorithms and error
  correction},
\newblock \bibinfo{journal}{Russ. Math. Surv.} \bibinfo{volume}{52}
  (\bibinfo{year}{1997}) \bibinfo{pages}{1191}.
%Type = Article
\bibitem[{Raussendorf et~al.(2003)Raussendorf, Browne, and
  Briegel}]{Raussendorf03}
\bibinfo{author}{R.~Raussendorf}, \bibinfo{author}{D.~E. Browne},
  \bibinfo{author}{H.~J. Briegel},
\newblock \bibinfo{title}{Measurement-based quantum computation on cluster
  states},
\newblock \bibinfo{journal}{Phys. Rev. A} \bibinfo{volume}{68}
  (\bibinfo{year}{2003}) \bibinfo{pages}{022312}.
%Type = Article
\bibitem[{Hein et~al.(2004)Hein, Eisert, and Briegel}]{Hein04}
\bibinfo{author}{M.~Hein}, \bibinfo{author}{J.~Eisert}, \bibinfo{author}{H.~J.
  Briegel},
\newblock \bibinfo{title}{Multiparty entanglement in graph states},
\newblock \bibinfo{journal}{Phys. Rev. A} \bibinfo{volume}{69}
  (\bibinfo{year}{2004}) \bibinfo{pages}{062311}.
%Type = Article
\bibitem[{Bombin et~al.(2009)Bombin, Kargarian, and Martin-Delgado}]{Bombin09b}
\bibinfo{author}{H.~Bombin}, \bibinfo{author}{M.~Kargarian},
  \bibinfo{author}{M.~A. Martin-Delgado},
\newblock \bibinfo{title}{Quantum 2-body hamiltonian for topological color
  codes},
\newblock \bibinfo{journal}{Fortschr. Phys.} \bibinfo{volume}{57}
  (\bibinfo{year}{2009}) \bibinfo{pages}{1103--1110}.
%Type = Article
\bibitem[{Bacon(2006)}]{Bacon06}
\bibinfo{author}{D.~Bacon},
\newblock \bibinfo{title}{Operator quantum error-correcting subsystems for
  self-correcting quantum memories},
\newblock \bibinfo{journal}{Phys. Rev. A} \bibinfo{volume}{73}
  (\bibinfo{year}{2006}) \bibinfo{pages}{012340}.
%Type = Article
\bibitem[{Yoshida(2011)}]{Beni10b}
\bibinfo{author}{B.~Yoshida},
\newblock \bibinfo{title}{Classification of quantum phases and topology of
  logical operators in an exactly solved model of quantum codes},
\newblock \bibinfo{journal}{Ann. Phys.} \bibinfo{volume}{326}
  (\bibinfo{year}{2011}) \bibinfo{pages}{15--95}.
%Type = Article
\bibitem[{Dennis et~al.(2002)Dennis, Kitaev, Landahl, and Preskill}]{Dennis02}
\bibinfo{author}{E.~Dennis}, \bibinfo{author}{A.~Kitaev},
  \bibinfo{author}{A.~Landahl}, \bibinfo{author}{J.~Preskill},
\newblock \bibinfo{title}{Topological quantum memory},
\newblock \bibinfo{journal}{J. Math. Phys.} \bibinfo{volume}{43}
  (\bibinfo{year}{2002}) \bibinfo{pages}{4452--4505}.
%Type = Article
\bibitem[{Nussinov and Ortiz(2008)}]{Nussinov08}
\bibinfo{author}{Z.~Nussinov}, \bibinfo{author}{G.~Ortiz},
\newblock \bibinfo{title}{Autocorrelations and thermal fragility of anyonic
  loops in topologically quantum ordered systems},
\newblock \bibinfo{journal}{Phys. Rev. B} \bibinfo{volume}{77}
  (\bibinfo{year}{2008}) \bibinfo{pages}{064302}.
%Type = Article
\bibitem[{Bravyi and Terhal(2009)}]{Bravyi09}
\bibinfo{author}{S.~Bravyi}, \bibinfo{author}{B.~Terhal},
\newblock \bibinfo{title}{A no-go theorem for a two-dimensional self-correcting
  quantum memory based on stabilizer codes},
\newblock \bibinfo{journal}{New. J. Phys.} \bibinfo{volume}{11}
  (\bibinfo{year}{2009}) \bibinfo{pages}{043029}.
%Type = Article
\bibitem[{Alicki et~al.(2009)Alicki, Fannes, and Horodecki}]{Alicki09}
\bibinfo{author}{R.~Alicki}, \bibinfo{author}{M.~Fannes},
  \bibinfo{author}{M.~Horodecki},
\newblock \bibinfo{title}{On thermalization in kitaev's 2d model},
\newblock \bibinfo{journal}{J. Phys. A: Math. Gen.} \bibinfo{volume}{42}
  (\bibinfo{year}{2009}) \bibinfo{pages}{065303}.
%Type = Article
\bibitem[{Alicki et~al.(2010)Alicki, Horodecki, Horodecki, and
  Horodecki}]{Alicki10}
\bibinfo{author}{R.~Alicki}, \bibinfo{author}{M.~Horodecki},
  \bibinfo{author}{P.~Horodecki}, \bibinfo{author}{R.~Horodecki},
\newblock \bibinfo{title}{On thermal stability of topological qubit in kitaev's
  4d model},
\newblock \bibinfo{journal}{Open Syst. Inf. Dyn.} \bibinfo{volume}{17}
  (\bibinfo{year}{2010}) \bibinfo{pages}{1}.
%Type = Article
\bibitem[{Shor(1995)}]{Shor95}
\bibinfo{author}{P.~W. Shor},
\newblock \bibinfo{title}{Scheme for reducing decoherence in quantum computer
  memory},
\newblock \bibinfo{journal}{Phys. Rev. A} \bibinfo{volume}{52}
  (\bibinfo{year}{1995}) \bibinfo{pages}{2493}.
%Type = Article
\bibitem[{Laflamme et~al.(1996)Laflamme, Miquel, Paz, and Zurek}]{Laflamme96}
\bibinfo{author}{R.~Laflamme}, \bibinfo{author}{C.~Miquel},
  \bibinfo{author}{J.~P. Paz}, \bibinfo{author}{W.~H. Zurek},
\newblock \bibinfo{title}{Perfect quantum error correcting code},
\newblock \bibinfo{journal}{Phys. Rev. Lett.} \bibinfo{volume}{77}
  (\bibinfo{year}{1996}) \bibinfo{pages}{198}.
%Type = Article
\bibitem[{Bennett et~al.(1996)Bennett, DiVincenzo, Smolin, and
  Wootters}]{Bennett96}
\bibinfo{author}{C.~H. Bennett}, \bibinfo{author}{D.~P. DiVincenzo},
  \bibinfo{author}{J.~A. Smolin}, \bibinfo{author}{W.~K. Wootters},
\newblock \bibinfo{title}{Mixed-state entanglement and quantum error
  correction},
\newblock \bibinfo{journal}{Phys. Rev. A} \bibinfo{volume}{54}
  (\bibinfo{year}{1996}) \bibinfo{pages}{3824}.
%Type = Article
\bibitem[{Calderbank and Shor(1996)}]{Calderbank96}
\bibinfo{author}{A.~R. Calderbank}, \bibinfo{author}{P.~W. Shor},
\newblock \bibinfo{title}{Good quantum error-correcting codes exist},
\newblock \bibinfo{journal}{Phys. Rev. A} \bibinfo{volume}{54}
  (\bibinfo{year}{1996}) \bibinfo{pages}{1098--1105}.
%Type = Article
\bibitem[{Gottesman(1996)}]{Gottesman96}
\bibinfo{author}{D.~Gottesman},
\newblock \bibinfo{title}{Class of quantum error-correcting codes saturating
  the quantum hamming bound},
\newblock \bibinfo{journal}{Phys. Rev. A} \bibinfo{volume}{54}
  (\bibinfo{year}{1996}) \bibinfo{pages}{1862}.
%Type = Inproceedings
\bibitem[{Shor(1996)}]{Shor96}
\bibinfo{author}{P.~W. Shor},
\newblock \bibinfo{title}{Fault-tolerant quantum computation},
\newblock in: \bibinfo{booktitle}{Proceedings of the 37th Annual Symposium on
  Foundations of Computer Science (FOCS)}, \bibinfo{publisher}{IEEE Computer
  Society, Los Alamitos, CA}, \bibinfo{year}{1996}, p.~\bibinfo{pages}{56}.
%Type = Article
\bibitem[{Pastawski et~al.(2010)Pastawski, Kay, Schuch, and
  Cirac}]{Pastawski09}
\bibinfo{author}{F.~Pastawski}, \bibinfo{author}{A.~Kay},
  \bibinfo{author}{N.~Schuch}, \bibinfo{author}{I.~Cirac},
\newblock \bibinfo{title}{Limitations of passive protection of quantum
  information},
\newblock \bibinfo{journal}{Quantum Inf. Comput.} \bibinfo{volume}{10}
  (\bibinfo{year}{2010}) \bibinfo{pages}{580}.
%Type = Article
\bibitem[{Takeda and Nishimori(2004)}]{Takeda04}
\bibinfo{author}{K.~Takeda}, \bibinfo{author}{H.~Nishimori},
\newblock \bibinfo{title}{Self-dual random-plaquette gauge model and the
  quantum toric code},
\newblock \bibinfo{journal}{Nucl. Phys. B} \bibinfo{volume}{686}
  (\bibinfo{year}{2004}) \bibinfo{pages}{377--396}.
%Type = Article
\bibitem[{Hamma et~al.(2009)Hamma, Castelnovo, and Chamon}]{Hamma09}
\bibinfo{author}{A.~Hamma}, \bibinfo{author}{C.~Castelnovo},
  \bibinfo{author}{C.~Chamon},
\newblock \bibinfo{title}{Toric-boson model: Toward a topological quantum
  memory at finite temperature},
\newblock \bibinfo{journal}{Phys. Rev. B} \bibinfo{volume}{79}
  (\bibinfo{year}{2009}) \bibinfo{pages}{245122}.
%Type = Article
\bibitem[{Haah(2011)}]{Haah11}
\bibinfo{author}{J.~Haah},
\newblock \bibinfo{title}{Local stabilizer codes in three dimensions without
  string logical operators},
\newblock \bibinfo{journal}{Phys. Rev. A} \bibinfo{volume}{83}
  (\bibinfo{year}{2011}) \bibinfo{pages}{042330}.
%Type = Misc
\bibitem[{Kay and Colbeck(2008)}]{Kay08}
\bibinfo{author}{A.~Kay}, \bibinfo{author}{R.~Colbeck}, \bibinfo{title}{Quantum
  self-correcting stabilizer codes}, \bibinfo{howpublished}{arXiv:0810.3557},
  \bibinfo{year}{(2008)}.
%Type = Article
\bibitem[{Laughlin(1983)}]{Laughlin83}
\bibinfo{author}{R.~B. Laughlin},
\newblock \bibinfo{title}{Anomalous quantum hall effect: An incompressible
  quantum fluid with fractionally charged excitations},
\newblock \bibinfo{journal}{Phys. Rev. Lett.} \bibinfo{volume}{50}
  (\bibinfo{year}{1983}) \bibinfo{pages}{1395}.
%Type = Article
\bibitem[{Wen and Niu(1990)}]{Wen90}
\bibinfo{author}{X.~G. Wen}, \bibinfo{author}{Q.~Niu},
\newblock \bibinfo{title}{Ground-state degeneracy of the fractional quantum
  hall states in the presence of a random potential and on high-genus riemann
  surfaces},
\newblock \bibinfo{journal}{Phys. Rev. B} \bibinfo{volume}{41}
  (\bibinfo{year}{1990}) \bibinfo{pages}{9377}.
%Type = Article
\bibitem[{Misguich et~al.(2002)Misguich, Serban, and Pasquier}]{Misguich02}
\bibinfo{author}{G.~Misguich}, \bibinfo{author}{D.~Serban},
  \bibinfo{author}{V.~Pasquier},
\newblock \bibinfo{title}{Quantum dimer model on the kagome lattice: Solvable
  dimer-liquid and ising gauge theory"},
\newblock \bibinfo{journal}{Phys. Rev. Lett.} \bibinfo{volume}{89}
  (\bibinfo{year}{2002}) \bibinfo{pages}{137202}.
%Type = Article
\bibitem[{Kane and Mele(2005)}]{Kane05}
\bibinfo{author}{C.~L. Kane}, \bibinfo{author}{E.~J. Mele},
\newblock \bibinfo{title}{Z{\_}{\{}2{\}} topological order and the quantum spin
  hall effect},
\newblock \bibinfo{journal}{Phys. Rev. Lett.} \bibinfo{volume}{95}
  (\bibinfo{year}{2005}) \bibinfo{pages}{146802}.
%Type = Article
\bibitem[{Kitaev(2006)}]{Kitaev06b}
\bibinfo{author}{A.~Kitaev},
\newblock \bibinfo{title}{Anyons in an exactly solved model and beyond},
\newblock \bibinfo{journal}{Ann. Phys.} \bibinfo{volume}{321}
  (\bibinfo{year}{2006}) \bibinfo{pages}{2--111}.
%Type = Article
\bibitem[{Fu et~al.(2007)Fu, Kane, and Mele}]{Fu07}
\bibinfo{author}{L.~Fu}, \bibinfo{author}{C.~L. Kane}, \bibinfo{author}{E.~J.
  Mele},
\newblock \bibinfo{title}{Topological insulators in three dimensions},
\newblock \bibinfo{journal}{Phys. Rev. Lett.} \bibinfo{volume}{98}
  (\bibinfo{year}{2007}) \bibinfo{pages}{106803}.
%Type = Article
\bibitem[{Bravyi et~al.(2010)Bravyi, Poulin, and Terhal}]{Bravyi10}
\bibinfo{author}{S.~Bravyi}, \bibinfo{author}{D.~Poulin},
  \bibinfo{author}{B.~Terhal},
\newblock \bibinfo{title}{Tradeoffs for reliable quantum information storage in
  2d systems},
\newblock \bibinfo{journal}{Phys. Rev. Lett.} \bibinfo{volume}{104}
  (\bibinfo{year}{2010}) \bibinfo{pages}{050503}.
%Type = Article
\bibitem[{Castelnovo and Chamon(2007)}]{Castelnovo07}
\bibinfo{author}{C.~Castelnovo}, \bibinfo{author}{C.~Chamon},
\newblock \bibinfo{title}{Entanglement and topological entropy of the toric
  code at finite temperature},
\newblock \bibinfo{journal}{Phys. Rev. B} \bibinfo{volume}{76}
  (\bibinfo{year}{2007}) \bibinfo{pages}{184442}.
%Type = Misc
\bibitem[{Isakov et~al.(2011)Isakov, Hastings, and Melko}]{Isakov11}
\bibinfo{author}{S.~V. Isakov}, \bibinfo{author}{M.~B. Hastings},
  \bibinfo{author}{R.~G. Melko}, \bibinfo{title}{Topological entanglement
  entropy of a bose-hubbard spin liquid},
  \bibinfo{howpublished}{arxiv:1102.1721}, \bibinfo{year}{(2011)}.
%Type = Article
\bibitem[{Castelnovo and Chamon(2008)}]{Castelnovo08}
\bibinfo{author}{C.~Castelnovo}, \bibinfo{author}{C.~Chamon},
\newblock \bibinfo{title}{Topological order in a three-dimensional toric code
  at finite temperature},
\newblock \bibinfo{journal}{Phys. Rev. B} \bibinfo{volume}{78}
  (\bibinfo{year}{2008}) \bibinfo{pages}{155120}.
%Type = Article
\bibitem[{Nussinov and Ortiz(2009)}]{Nussinov09}
\bibinfo{author}{Z.~Nussinov}, \bibinfo{author}{G.~Ortiz},
\newblock \bibinfo{title}{A symmetry principle for topological quantum order},
\newblock \bibinfo{journal}{Ann. Phys. (NY)} \bibinfo{volume}{324}
  (\bibinfo{year}{2009}) \bibinfo{pages}{977--1057}.
%Type = Article
\bibitem[{Knill and Laflamme(1997)}]{Knill97}
\bibinfo{author}{E.~Knill}, \bibinfo{author}{R.~Laflamme},
\newblock \bibinfo{title}{Theory of quantum error-correcting codes},
\newblock \bibinfo{journal}{Phys. Rev. A} \bibinfo{volume}{55}
  (\bibinfo{year}{1997}) \bibinfo{pages}{900--911}.
%Type = Incollection
\bibitem[{Freedman et~al.(2002)Freedman, Meyer, and Luo}]{Freedman02}
\bibinfo{author}{M.~Freedman}, \bibinfo{author}{D.~Meyer},
  \bibinfo{author}{F.~Luo},
\newblock \bibinfo{title}{Z2-systolic freedom and quantum codes},
\newblock in: \bibinfo{booktitle}{Mathematics of quantum computation},
  \bibinfo{publisher}{Chapman and Hall / CRC}, \bibinfo{year}{2002}, p.
  \bibinfo{pages}{287}.
%Type = Article
\bibitem[{Yoshida and Chuang(2010)}]{Beni10}
\bibinfo{author}{B.~Yoshida}, \bibinfo{author}{I.~L. Chuang},
\newblock \bibinfo{title}{Framework for classifying logical operators in
  stabilizer codes},
\newblock \bibinfo{journal}{Phys. Rev. A} \bibinfo{volume}{81}
  (\bibinfo{year}{2010}) \bibinfo{pages}{052302}.
%Type = Misc
\bibitem[{Fattal et~al.(2004)Fattal, Cubitt, Yamamoto, Bravyi, and
  Chuang}]{Fattal04}
\bibinfo{author}{D.~Fattal}, \bibinfo{author}{T.~S. Cubitt},
  \bibinfo{author}{Y.~Yamamoto}, \bibinfo{author}{S.~Bravyi},
  \bibinfo{author}{I.~L. Chuang}, \bibinfo{title}{Entanglement in the
  stabilizer formalism}, \bibinfo{howpublished}{arXiv:0406168},
  \bibinfo{year}{(2004)}.
%Type = Article
\bibitem[{Wilde and Fattal(2010)}]{Wilde10}
\bibinfo{author}{M.~Wilde}, \bibinfo{author}{D.~Fattal},
\newblock \bibinfo{title}{Nonlocal quantum information in bipartite quantum
  error correction},
\newblock \bibinfo{journal}{Quant. Inf. Proc.} \bibinfo{volume}{9}
  (\bibinfo{year}{2010}) \bibinfo{pages}{591}.
%Type = Article
\bibitem[{Bravyi(2011)}]{Bravyi11}
\bibinfo{author}{S.~Bravyi},
\newblock \bibinfo{title}{Subsystem codes with spatially local generators},
\newblock \bibinfo{journal}{Phys. Rev. A} \bibinfo{volume}{83}
  (\bibinfo{year}{2011}) \bibinfo{pages}{012320}.
%Type = Misc
\bibitem[{Haah and Preskill(2011)}]{Haah10}
\bibinfo{author}{J.~Haah}, \bibinfo{author}{J.~Preskill},
  \bibinfo{title}{Logical operator tradeoff for local quantum codes},
  \bibinfo{howpublished}{arXiv:1011.3529}, \bibinfo{year}{(2011)}.
%Type = Article
\bibitem[{Brey and Prados(1996)}]{Brey96}
\bibinfo{author}{J.~J. Brey}, \bibinfo{author}{A.~Prados},
\newblock \bibinfo{title}{Low-temperature relaxation in the one-dimensional
  ising model},
\newblock \bibinfo{journal}{Phys. Rev. E} \bibinfo{volume}{53}
  (\bibinfo{year}{1996}) \bibinfo{pages}{458}.
%Type = Article
\bibitem[{Chesi et~al.(2010)Chesi, Loss, Bravyi, and Terhal}]{Chesi10}
\bibinfo{author}{S.~Chesi}, \bibinfo{author}{D.~Loss},
  \bibinfo{author}{S.~Bravyi}, \bibinfo{author}{B.~M. Terhal},
\newblock \bibinfo{title}{Thermodynamic stability criteria for a quantum memory
  based on stabilizer and subsystem codes},
\newblock \bibinfo{journal}{New. J. Phys.} \bibinfo{volume}{12}
  (\bibinfo{year}{2010}) \bibinfo{pages}{025013}.
%Type = Article
\bibitem[{Bravyi et~al.(2010)Bravyi, Hastings, and Michalakis}]{Bravyi10b}
\bibinfo{author}{S.~Bravyi}, \bibinfo{author}{M.~Hastings},
  \bibinfo{author}{S.~Michalakis},
\newblock \bibinfo{title}{Topological quantum order: Stability under local
  perturbations},
\newblock \bibinfo{journal}{J. Math. Phys} \bibinfo{volume}{51}
  (\bibinfo{year}{2010}) \bibinfo{pages}{093512}.
%Type = Article
\bibitem[{Hastings and Wen(2005)}]{Hastings05}
\bibinfo{author}{M.~B. Hastings}, \bibinfo{author}{X.-G. Wen},
\newblock \bibinfo{title}{Quasiadiabatic continuation of quantum states: The
  stability of topological ground-state degeneracy and emergent gauge
  invariance},
\newblock \bibinfo{journal}{Phys. Rev. B} \bibinfo{volume}{72}
  (\bibinfo{year}{2005}) \bibinfo{pages}{045141}.
%Type = Article
\bibitem[{Hastings(2006)}]{Hastings06}
\bibinfo{author}{M.~Hastings},
\newblock \bibinfo{title}{Solving gapped hamiltonians locally},
\newblock \bibinfo{journal}{Phys. Rev. B} \bibinfo{volume}{73}
  (\bibinfo{year}{2006}) \bibinfo{pages}{085115}.
%Type = Article
\bibitem[{Chen et~al.(2010)Chen, Gu, and Wen}]{Chen10}
\bibinfo{author}{X.~Chen}, \bibinfo{author}{Z.-C. Gu}, \bibinfo{author}{X.-G.
  Wen},
\newblock \bibinfo{title}{Local unitary transformation, long-range quantum
  entanglement, wave function renormalization, and topological order},
\newblock \bibinfo{journal}{Phys. Rev. B} \bibinfo{volume}{82}
  (\bibinfo{year}{2010}) \bibinfo{pages}{155138}.
%Type = Article
\bibitem[{Chen et~al.(2011)Chen, Gu, and Wen}]{Chen11}
\bibinfo{author}{X.~Chen}, \bibinfo{author}{Z.-C. Gu}, \bibinfo{author}{X.-G.
  Wen},
\newblock \bibinfo{title}{Classification of gapped symmetric phases in
  one-dimensional spin systems},
\newblock \bibinfo{journal}{Phys. Rev. B} \bibinfo{volume}{83}
  (\bibinfo{year}{2011}) \bibinfo{pages}{035107}.
%Type = Article
\bibitem[{Levin and Wen(2005)}]{Levin05}
\bibinfo{author}{M.~A. Levin}, \bibinfo{author}{X.-G. Wen},
\newblock \bibinfo{title}{String-net condensation: A physical mechanism for
  topological phases},
\newblock \bibinfo{journal}{Phys. Rev. B} \bibinfo{volume}{71}
  (\bibinfo{year}{2005}) \bibinfo{pages}{045110}.
%Type = Article
\bibitem[{Chamon(2005)}]{Chamon05}
\bibinfo{author}{C.~Chamon},
\newblock \bibinfo{title}{Quantum glassiness in strongly correlated clean
  systems: An example of topological overprotection},
\newblock \bibinfo{journal}{Phys. Rev. Lett.} \bibinfo{volume}{94}
  (\bibinfo{year}{2005}) \bibinfo{pages}{040402}.
%Type = Article
\bibitem[{Newman and Moore(1999)}]{Newman99}
\bibinfo{author}{M.~E.~J. Newman}, \bibinfo{author}{C.~Moore},
\newblock \bibinfo{title}{Glassy dynamics and aging in an exactly solvable spin
  model},
\newblock \bibinfo{journal}{Phys. Rev. E} \bibinfo{volume}{60}
  (\bibinfo{year}{1999}) \bibinfo{pages}{5068}.
%Type = Article
\bibitem[{Bravyi et~al.(2011)Bravyi, Leemhuis, and Terhal}]{Bravyi11b}
\bibinfo{author}{S.~Bravyi}, \bibinfo{author}{B.~Leemhuis},
  \bibinfo{author}{B.~M. Terhal},
\newblock \bibinfo{title}{Topological order in an exactly solvable 3d spin
  model},
\newblock \bibinfo{journal}{Ann. Phys.} \bibinfo{volume}{326}
  (\bibinfo{year}{2011}) \bibinfo{pages}{839--866}.
%Type = Article
\bibitem[{Wilson(1971)}]{Wilson71}
\bibinfo{author}{K.~G. Wilson},
\newblock \bibinfo{title}{Renormalization group and strong interactions},
\newblock \bibinfo{journal}{Phys. Rev. D} \bibinfo{volume}{3}
  (\bibinfo{year}{1971}) \bibinfo{pages}{1818}.
%Type = Article
\bibitem[{Wen(2002)}]{Wen02}
\bibinfo{author}{X.-G. Wen},
\newblock \bibinfo{title}{Quantum orders and symmetric spin liquids},
\newblock \bibinfo{journal}{Phys. Rev. B} \bibinfo{volume}{65}
  (\bibinfo{year}{2002}) \bibinfo{pages}{165113}.
%Type = Article
\bibitem[{Witten(1989)}]{Witten89}
\bibinfo{author}{E.~Witten},
\newblock \bibinfo{title}{Quantum field theory and the jones polynomial},
\newblock \bibinfo{journal}{Commun. Math. Phys.} \bibinfo{volume}{121}
  (\bibinfo{year}{1989}) \bibinfo{pages}{351--399}.
%Type = Article
\bibitem[{Birmingham et~al.(1991)Birmingham, Blau, Rakowski, and
  Thompson}]{Birmingham91}
\bibinfo{author}{D.~Birmingham}, \bibinfo{author}{M.~Blau},
  \bibinfo{author}{M.~Rakowski}, \bibinfo{author}{G.~Thompson},
\newblock \bibinfo{title}{Topological field theory},
\newblock \bibinfo{journal}{Phys. Rep.} \bibinfo{volume}{209}
  (\bibinfo{year}{1991}) \bibinfo{pages}{129--340}.
%Type = Article
\bibitem[{Nayak et~al.(2008)Nayak, Simon, Stern, Freedman, and
  Das~Sarma}]{Nayak08}
\bibinfo{author}{C.~Nayak}, \bibinfo{author}{S.~H. Simon},
  \bibinfo{author}{A.~Stern}, \bibinfo{author}{M.~Freedman},
  \bibinfo{author}{S.~Das~Sarma},
\newblock \bibinfo{title}{Non-abelian anyons and topological quantum
  computation},
\newblock \bibinfo{journal}{Rev. Mod. Phys.} \bibinfo{volume}{80}
  (\bibinfo{year}{2008}) \bibinfo{pages}{1083}.

\end{thebibliography}
\end{document}